\crefname{thm}{Theorem}{Theorems}
\crefname{cor}{Corollary}{Corollary}
\crefname{lem}{Lemma}{Lemmas}
\crefname{asu}{Assumption}{Assumptions}
\crefname{rmk}{Remark}{Remarks}
\crefname{defn}{Definition}{Definitions}
\crefname{thmlisti}{Theorem}{Theorems}
\crefname{asulisti}{Assumption}{Assumptions}
\newlist{thmlist}{enumerate}{1}
\setlist[thmlist]{label=(\roman*), ref=\thethm\,(\roman*)}
\newlist{asulist}{enumerate}{1}
\setlist[asulist]{label=(\roman*), ref=\theasu\,(\roman*)}
\declaretheorem[
	name=Theorem,
	Refname={Theorem,Theorems},
	numberwithin=section]{thm}
\declaretheorem[
	name=Lemma,
	Refname={Lemma,Lemmas},
	sibling=thm]{lem}
\declaretheorem[
	name=Corollary,
	Refname={Corollary,Corollarys},
	sibling=thm]{cor}
\declaretheorem[
	name=Assumption,
	Refname={Assumption,Assumptions},
	numberwithin=section]{asu}
\declaretheorem[
	name=Remark,
	Refname={Remark,Remarks}]{rmk}
\declaretheorem[
	name=Definition,
	Refname={Definition,Definitions},
	numberwithin=section]{defn}
\numberwithin{equation}{section}
\definecolor{mypurple}{RGB}{203, 66, 244}
\definecolor{mygray}{RGB}{145,145,145}
\definecolor{mygreen}{RGB}{0,100,0}
\newcommand{\tq}[1]{\textcolor{mygreen}{[TQ:\ #1]}}
\newcommand{\indic}{{\mathds 1}}
\newcommand{\ba}{\bar{a}}
\newcommand{\bA}{\bar{A}}
\newcommand{\cM}{\mathcal{M}}
\newcommand{\cT}{\mathcal{T}}
\newcommand{\cO}{\mathcal{O}}
\newcommand{\tp}{\tilde{p}}
\newcommand{\td}{\tilde{d}}
\newcommand{\PP}{\mathbb{P}}
\newcommand{\EE}{\mathbb{E}}
\newcommand{\RR}{\mathbb{R}}
\newcommand{\pto}{\stackrel{p}{\to}}
\newcommand{\dto}{\stackrel{d}{\to}}
\newcommand{\mb}{m^\mathrm{(b)}}
\newcommand{\mc}{m^\mathrm{(c)}}
\newcommand{\phic}{\phi^\mathrm{(c)}}
\newcommand{\cee}{\mathrm{CEE}}
\newcommand{\betainithat}{\hat\beta^{\mathrm{init}}}
\newcommand{\betainit}{\hat\beta^{\mathrm{init}}}
\begin{document}

\begin{frontmatter}
\title{Efficient and Globally Robust Causal Excursion Effect Estimation}
\runtitle{Causal Excursion Effect Estimation}

\begin{aug}
\author[A]{\fnms{Zhaoxi}~\snm{Cheng}\ead[label=e1]{zcheng@hsph.harvard.edu}},
\author[B]{\fnms{Lauren}~\snm{Bell}\ead[label=e2]{L.M.Bell@leeds.ac.uk}}
\and
\author[C]{\fnms{Tianchen}~\snm{Qian}\ead[label=e3]{t.qian@uci.edu}}
\address[A]{Department of Biostatistics, Harvard University\printead[presep={,\ }]{e1}}

\address[B]{Leeds Institute of Clinical Trials Research, University of Leeds\printead[presep={,\ }]{e2}}

\address[C]{Department of Statistics, University of California, Irvine\printead[presep={,\ }]{e3}}
\end{aug}

\begin{abstract}
Causal excursion effect (CEE) characterizes the effect of an intervention under policies that deviate from the experimental policy. It is widely used to study the effect of time-varying interventions that have the potential to be frequently adaptive, such as those delivered through smartphones. We study the semiparametric efficient estimation of CEE and we derive a semiparametric efficiency bound for CEE with identity or log link functions under working assumptions, in the context of micro-randomized trials. We propose a class of two-stage estimators that achieve the efficiency bound and are robust to misspecified nuisance models. In deriving the asymptotic property of the estimators, we establish a general theory for globally robust Z-estimators with either cross-fitted or non-cross-fitted nuisance parameters. We demonstrate substantial efficiency gain of the proposed estimator compared to existing ones through simulations and a real data application using the Drink Less micro-randomized trial.
\end{abstract}

\begin{keyword}[class=MSC]
\kwd[Primary ]{???}
\kwd{???}
\kwd[; secondary ]{???}
\end{keyword}

\begin{keyword}
\kwd{causal inference}
\kwd{longitudinal data}
\kwd{double/debiased machine learning}
\kwd{micro-randomized trials}
\kwd{semiparametric efficiency}
\end{keyword}

\end{frontmatter}


\tableofcontents

\section{Introduction}
\label{sec:introduction}

Digital behavior change interventions, such as push notifications delivered through smartphones and wearables, are being developed in various domains including mobile health, education, and social justice \citep{bell2020notifications,pham2016effects,bhatti2017moving}. Micro-randomized trial (MRT) is the state-of-the-art experimental design for developing and optimizing such interventions \citep{klasnja2015,dempsey2015,liao2016sample}. In an MRT, each individual is repeatedly randomized among treatment options hundreds or thousands of times, giving rise to longitudinal data with time-varying treatments.

Causal excursion effect (CEE) was proposed as the key quantity of interest for primary and secondary analyses of MRTs \citep{boruvka2018,qian2021estimating,dempsey2020stratified}. CEE captures whether an intervention is effective and how the intervention effect changes over time or interacts with the individual's context, which are central questions to the optimization of the interventions. Existing semiparametric models for CEE typically involve multiple nuisance function such as outcome regression models and stablizing weights \citep{boruvka2018,qian2021estimating}. It is shown that the choice of the nuisance model can substantially impact the efficiency of the CEE estimation \citep{qian2022microrandomized}. However, the current practice of modeling the nuisance functions is ad hoc and is often fbased on statistical significance from exploratory regression analysis \citep{klasnja2018efficacy,bell2023notifications}. Therefore, there is a critical need to develop more principled and efficient methods for CEE inference in order to improve estimation precision and reduce the sample size required in an MRT.

We focus on CEE defined with identify or log link, which encompasses common longitudinal outcome types including continuous, binary, and count.
We derive the semiparametric efficiency bound for estimating a parameterized CEE under working assumptions. We propose a class of two-stage estimators that achieve the efficiency bound, where in the first stage an optimal nuisance function is identified and fitted and in the second stage the causal parameter is estimated. We establish the consistency and asymptotic normality of the proposed estimators, and we show that the proposed estimator is robust to misspecified nuisance models. In addition, we show that due to its global robustness property, which we define in \cref{sec:theory}, the proposed estimator allows extremely flexible nuisance parameter estimators that can be non-Donsker and converge at an arbitrarily slow rate.

In simulation studies, the proposed estimator shows substantial efficiency gain compared to existing ones in the literature with a relative efficiency of up to 3. In real data application to the Drink Less MRT, we apply the proposed estimator to assess the effect of messages on user engagement, and it shows a relative efficiency of up to 1.27 against existing ones. Thus, the proposed method will enable domain scientists to more precisely answer critical questions regarding time-varying effect moderation with fewer samples, and therefore ascertain optimal timing, context, and content for individualized, highly effective interventions.

We also make the following technical contribution. We establish a general asymptotic theory for a class of two-stage globally robust Z-estimators, where the first stage estimation may have arbitrarily slow rate of convergence. We derive parallel results for estimators that do and do not incorporate cross-fitting. The theory is potentially applicable to other settings where nuisance parameters are used for efficiency improvement but not parameter identification, such as in randomized or sequentially randomized experiments.

The paper is organized as follows. \cref{sec:notation} defines notation and the causal quantity of interest. \cref{sec:literature} reviews related literature. \cref{sec:methods} presents the semiparametric efficiency bound and the efficient estimators. \cref{sec:theory} establishes the general asymptotic theory for globally robust Z-estimators. \cref{sec:simulation} presents simulation studies to illustrate the robustness and improved efficiency. \cref{sec:application} presents the real data illustration. \cref{sec:discussion} concludes with discussion.

\section{Notation and Definition}
\label{sec:notation}

\subsection{MRT Data Structure}
\label{subsec:mrt-data-structure}

Assume an MRT has $n$ individuals, each enrolled for $T$ decision points at which treatments will be randomized. Variables without subscript $i$ represents observations from a generic individual. $A_t$ represents treatment assignment at decision point $t$, where $A_t = 1$ indicates treatment, and $A_t = 0$ indicates no treatment. $X_t$ represents observations between decision points $t-1$ and $t$. $Y_{t+1}$ denotes the proximal outcome following decision point $t$, which is typically the direct target of the intervention. An individual's trajectory of observed information is $O = (X_1,A_1,Y_2, X_2,A_2,Y_3,\ldots,X_T,A_T,Y_{T+1})$. Data from individuals are assumed to be independent, identically distributed samples from distribution $P_0$. Expectations are taken with respect to $P_0$, unless otherwise stated.

In intervention design, researchers may consider it unsafe or unethical to send push notifications at certain decision points, such as during driving. Following the MRT literature, we consider settings where an individual may be unavailable for randomization at certain decision points \citep{boruvka2018, qian2022microrandomized}. Formally, $X_t$ includes an indicator $I_t$, with $I_t = 1$ denoting being available for randomization at decision point $t$, and $I_t = 0$ otherwise. If $I_t = 0$, then $A_t = 0$ deterministically.

An overbar denotes a sequence of variables, e.g., $\bA_t=(A_1,A_2,\ldots,A_t)$. Information gathered from an individual up to decision point $t$ is $H_t = (X_1,A_1,Y_2, \ldots,X_{t-1},A_{t-1},Y_t, X_t)$. At each $t$, $A_t$ is randomized with probability $p_t(H_t) := P(A_t = 1 \mid H_t)$, and we sometimes notationally omit $H_t$ to write $p_t$. We use policy to refer to any decision rule of assigning $A_t$ , and we refer to the particular dynamic rule of randomly assigning $A_t$ as the MRT treatment policy.

We use $\PP$ to denote expectation and $\PP_n$ to denote the empirical average over all individuals. For a positive integer $k$, define $[k] := \{1, 2, \ldots, k\}$. For a column vector $\alpha$, define $\alpha^{\otimes 2} := \alpha \alpha^T$. We use superscript $\star$ to denote quantities corresponding to the true data generating distribution, $P_0$. We use $\| \cdot \|$ to denote the $L_2$ norm, i.e., $\| f(O) \| = \{\int f(o)^2 P(\mathrm{d}o)\}^{1/2}$ for any function $f$ of the observed data $O$. For a vector $\alpha$ and a vector-valued function $f(\alpha)$, $\partial_\alpha f(\alpha) := \partial f(\alpha) / \partial \alpha^T$ denotes the matrix where the $(i,j)$-th entry is the partial derivative of the $i$-th entry of $f$ with respect to the $j$-th entry of $\alpha$.


\subsection{Potential Outcome and Causal Excursion Effect}
\label{subsec:def-cee}

We focus on the estimation of the causal excursion effect (CEE). CEE is the quantity of interest in most MRT primary and secondary analyses, because it provides a parsimonious summary for whether / when / under what context an intervention is effective, which can be used to further optimize the intervention \citep{klasnja2018efficacy,battalio2021sense2stop,nahum2021translating,bell2023notifications}. CEE characterizes the contrast between the potential outcomes under two hypothetical treatment policies, which are called ``excursions'' from the MRT treatment policy. To define CEE precisely we use the potential outcomes framework \citep{rubin1974estimating, robins1986new}. Lower case letters represent instantiations (non-random values) of the corresponding capital letter random variable. For example, $a_t$ is an instantiation of treatment $A_t$. For every individual, denote by $X_t(\ba_{t-1})$ (or $Y_t(\ba_{t-1})$) the $X_t$ (or $Y_t$) that would have been observed at decision point $t$ if the individual were assigned a treatment sequence of $\ba_{t-1}$ prior to $t$. The potential outcome of $H_t$ under $\ba_{t-1}$ is
\begin{align*}
    H_t(\ba_{t-1}) = \{X_1, a_1, Y_2(a_1), X_2(a_1), a_2, Y_3(\ba_2), \ldots, X_{t-1}(\ba_{t-2}), a_{t-1}, Y_t(\ba_{t-1}), X_t(\ba_{t-1})\}.
\end{align*}

Let $S_t$ denote a subset of $H_t$, which are the moderators of interest based on the scientific question. For example, one can set $S_t = \emptyset$ for a fully marginal effect that averages over all moderators, $S_t = A_{t-1}$ for effect moderation by previous intervention, and $S_t = Y_t$ for effect moderation by previous outcome. Given a link function $g(\cdot)$, define the CEE of $A_t$ on $Y_{t+1}$ moderated by $S_t$ as
\begin{align}
    \cee_t \left\{S_t(\bA_{t-1}) \right\} & ~ := ~ g[\EE\{ Y_{t+1}(\bA_{t-1},1) \mid S_t(\bA_{t-1}), I_t(\bA_{t-1}) = 1 \}] \nonumber \\ & ~~~~ - g[\EE\{ Y_{t+1}(\bA_{t-1},0) \mid S_t(\bA_{t-1}), I_t(\bA_{t-1}) = 1 \}].
    \label{eq:def-cee}
\end{align}
When $g(\cdot)$ is the identity function or the log function, \eqref{eq:def-cee} agrees with the CEE for continuous outcome or binary outcome, respectively \citep{boruvka2018,qian2021estimating}.
In this paper, we restrict the link function $g(\cdot)$ to be either the identity function or the log function, but we do not restrict the type of the outcome variable. This yields a more general CEE definition than considered in \citet{boruvka2018} and \citet{qian2021estimating}: for example, \eqref{eq:def-cee} can be used for count outcomes or other non-negative outcomes. The theory and methodology will be derived in a unified way for the two link functions. We discuss the generalization to other link functions and survival outcomes in \cref{sec:discussion}.

Expression \eqref{eq:def-cee} exhibits three defining characteristics that distinguish CEE from other causal effects reviewed in \cref{sec:literature}: (i) the excursion aspect, (ii) the marginalization aspect, and (iii) the conditioning on availability. For (i), the CEE is a contrast in the potential outcome under two ``excursions'', i.e., policies that deviates from the MRT policy. In particular, $(\bA_{t-1},1)$ and $(\bA_{t-1},0)$ represent following the MRT policy up to decision point $t-1$ then deviate to assign deterministically $A_t = 1$ and $A_t = 0$, respectively. For (ii), the CEE is marginalized over variables in $H_t \setminus \{S_t, I_t\}$, such as past treatment assignments $\bA_{t-1}$ unless they are included in $S_t$. For (iii), the CEE is conditional on a participant being available in the moment ($I_t = 1$), because randomization is only feasible at available decision points.

These characteristics make the CEE particularly suitable for MRT analysis because of the following. First, the marginalization aspect and the researcher-chosen $S_t$ allows a hierarchy of analyses from the fully marginal primary analysis to the secondary analyses that explore the effect by various potential moderators. Second, because the CEE is usually marginalized over past treatment assignments, it depends on the MRT policy. This dependence enables CEE to provide information about directions to improve the MRT policy (formally, the CEE can be viewed as approximating a functional derivative, \citealt{qian2021rejoinder}). Third, the CEE allows more effiicent use of the data as the treatment trajectories being compared are not static. One potential criticism for CEE is that its dependence on the MRT policy may hinder generalizability. However, this concern is alleviated by the fact that the MRT policy is typically designed with real-world considerations in mind (such as availability and burden), not just for experimentation or data collection. Thus, the CEE would resemble the effect of the intervention in a reasonable real-world implementation.

To express the CEE using observed data, we make standard causal assumptions.
\begin{asu}
	\label{asu:causal-assumptions}
	\normalfont
	\begin{asulist}
		\item \label{asu:consistency} (SUTVA.) There is no interference across individuals and the observed data equals the potential outcome under the observed treatment. Specifically, $X_t = X_t(\bA_{t-1})$ and $Y_{t+1} = Y_{t+1}(\bA_t)$ for $t\in[T]$.
		\item \label{asu:positivity} (Positivity.) There exists a positive constant $\tau > 0$, such that if $P(H_t = h_t, I_t = 1) > 0$ then $\tau < P(A_t = a \mid H_t = h_t, I_t = 1) < 1-\tau$ for $a \in \{0, 1\}$.
		\item \label{asu:sequential-ignorability} (Sequential ignorability.) For $1 \leq t \leq T$, the potential outcomes $\{X_{t+1}(\ba_t)$, $X_{t+2}(\ba_{t+1})$, $\ldots, X_{T+1}(\ba_T):$ $\ba_T \in \{0,1\}^{\otimes T}\}$ are independent of $A_t$ conditional on $H_t$.
	\end{asulist}
\end{asu}
Positivity and sequential ignorability are guaranteed by the MRT design. SUTVA is violated if interference is present, i.e., if the treatment assigned to one participant affects the potential outcome of another participant. In such settings, a framework that incorporates causal interference is needed \citep{hudgens2008toward,shi2022assessing}. We do not consider such settings here. Under \cref{asu:causal-assumptions}, the CEE can be expressed as
\begin{align}
	\cee_t (S_t) & = g[\EE\{ \EE(Y_{t+1} \mid H_t, A_t = 1) \mid S_t, I_t = 1\}] \nonumber \\
	& ~~~~~~~~~~~~~~~~ - g[\EE\{ \EE(Y_{t+1} \mid H_t, A_t = 0) \mid S_t, I_t = 1\}]. \label{eq:identify-cee}
\end{align}

\section{Literature}
\label{sec:literature}

CEE is a type of causal effects for longitudinal data with time-varying treatments \citep{boruvka2018,qian2021estimating}. As discussed in \cref{sec:notation}, its defining characteristics, including the excursion aspect and the marginalization aspect, make it particularly suitable for assessing the effect moderation of time-varying adaptive interventions with long time horizon. CEE can be viewed as the marginalization of a treatment blip in a structural nested mean model \citep[SNMM,][]{robins1994snmm}, and is related to the optimal SNMM with a coarsened history \citep{robins2004optimal}, except that the latter considers increasing, nested conditional sets. This marginal aspect is also related to two variations of marginal structural models (MSM) for the mean potential outcome: the history-restricted MSM \citep{neugebauer2007causal} and the history-adjusted MSM \citep{joffe2001partially,van2005history,petersen2007history}, where the former considers all possible future treatment trajectories and the latter further considers all possible time window lengths. In addition, the two characteristics of CEE have connections with the target trial methodology in epidemiology and medicine \citep{hernan2005discussion,hernan2008observational}, which analyzes longitudinal data with time-varying treatments as multiple randomized controlled trials. \citet{guo2021discussion} provided a nice illustration for the connections and differences between CEE and other longitudinal causal models.

In terms of CEE estimation, the pioneering work are the weighted and centered least squares \citep[WCLS,][]{boruvka2018} for continuous proximal outcomes and the estimator of the marginal excursion effect \citep[EMEE,][]{qian2021estimating} for binary proximal outcomes, both of which are now widely used in MRT analysis. WCLS assumes a linear model $f_t(S_t)^T \beta$ for \eqref{eq:def-cee} with identify link and solves the estimating equation
\begin{align}
	\PP_n \sum_{t} I_t W_t [ Y_{t+1} - b_t(H_t)^T \alpha - \{A_t - \tp_t(S_t)\} f_t(S_t)^T \beta ] \left[ \begin{matrix} b_t(H_t) \\ \{A_t - \tp_t(S_t)\} f_t(S_t) \end{matrix} \right] = 0, \label{eq:ee-wcls}
\end{align}
where $W_t = A_t \tp_t(S_t) / p_t(H_t) + (1-A_t) \{1 - \tp_t(S_t)\} / \{ 1 - p_t(H_t) \}$ is the stabilized inverse probability weight and $\tp_t(S_t)$ is the numerator probability. EMEE assumes a linear model $f_t(S_t)^T \beta$ for \eqref{eq:def-cee} with log link and solves
\begin{align}
	\PP_n \sum_{t} I_t W_t \left\{ e^{-A_t f_t(S_t)^T \beta} Y_{t+1} - e^{b_t(H_t)^T \alpha} \right\} \left[ \begin{matrix} b_t(H_t) \\ \{A_t - \tp_t(S_t)\} f_t(S_t) \end{matrix} \right] = 0, \label{eq:ee-emee}
\end{align}
In both methods, a linear nuisance model $b_t(H_t)^T \alpha$ with ad hoc selected features is used to fit the exposure-free outcome to reduce variance. The estimators we propose achieve higher efficiency partly through identifying and fitting optimal nuisance parameters.

Fitting nuisance parameters using flexible, nonparametric models has been shown to improve estimation robustness and efficiency in other problems \citep{hirano2003efficient,hill2011bayesian,vanderlaan2011targeted,benkeser2017doubly,farrell2021deep}. The double/debiased machine learning framework (DML) integrates nonparametric estimation of nuisance parameters, Neyman-orthogonalized scores, and ``cross-fitting'', which is an efficient use of sample splitting that improves robustness \citep{chernozhukov2018double}. Simultaneous to our work, \citet{shi2023meta} and \citet{liu2023incorporating} applied the DML framework to CEE estimation and showed improved efficiency. We study the semiparametric efficient estimation of CEE, which concerns the regular and asymptotically linear estimators with minimal variance \citep{newey1990}. This advances understanding of the optimality and foundamental limits of CEE estimation, which distinguishes it from previous work on CEE estimation.

Finally, we note that non-causal longitudinal analysis methods such as generalized estimating equations and random effects models are not suitable for estimating CEE due to the well-known treatment-confounder feedback bias \citep{pepeanderson1994}.


\section{Robust and Efficient Estimators for CEE}
\label{sec:methods}

\subsection{The Efficiency Bound Under Working Assumptions}
\label{subsec:efficiency-bound}

Consider the estimation of an unknown parameter $\beta$ whose true value $\beta^\star \in \Theta \subset \RR^p$ satisfies
\begin{align}
	& g[\EE\{ \EE(Y_{t+1} \mid H_t, A_t = 1)  \mid S_t, I_t = 1\}] - g[\EE\{ \EE(Y_{t+1} \mid H_t, A_t = 0) \mid S_t, I_t = 1\}] \nonumber \\
	& \qquad \qquad \qquad = \gamma_t(S_t; \beta^\star) \label{eq:linear-model-cee}
\end{align}
for $t \in [T]$, where $\gamma_t(S_t; \beta)$ is a pre-specified function. This model allows for linear and nonlinear effects; for example, $\gamma_t(S_t; \beta)$ may include a linear combination of basis functions of $t$ or $S_t$. Estimating $\beta^*$ is focus of primary and secondary analyses for MRTs \citep{klasnja2018efficacy,klasnja2020barifit,nahum2021translating,bell2023notifications}. 

Let $U_t(\beta) := Y_{t+1} - A_t \gamma_t(S_t;\beta)$ if $g$ is the identity link, and $U_t(\beta) := Y_{t+1}\exp\{-A_t \gamma_t(S_t;\beta)\}$ if $g$ is the log link. $U_t$ is similar to the exposure-free outcome in the SNMM literature \citep{robins1994snmm,vansteelandt2003causal}, except that here we are removing only the immediate marginal effect from $Y_{t+1}$.
\cref{thm:efficiency-bound} gives the semiparametric efficiency bound for $\beta$ under working assumptions, with proof in Appendix \ref{A-sec:proof-efficiency-bound}.

\begin{thm}
	\label{thm:efficiency-bound}
	Consider the semiparametric model \eqref{eq:linear-model-cee}. Let $\psi(\beta) = \sum_{t=1}^T d_t^\star(S_t) \phi_t(\beta)$ with
	\begin{align}
		& \phi_t(\beta) := \frac{A_t - p_t}{p_t(1-p_t)} I_t  \nonumber \\
		& ~~~~~~~~~~~~~ \times \Big[U_t(\beta) - (1-p_t)\EE\{ U_t(\beta) \mid H_t, A_t = 1 \} - p_t\EE\{ U_t(\beta) \mid H_t, A_t = 0 \}\Big] S_t, \label{eq:phi-definition} \\
		& d_t^\star(S_t) = \EE\{\partial_\beta \phi_t(\beta^\star) \mid S_t \} ~ [\EE\{ \phi_t(\beta^\star)^{\otimes 2} \mid S_t \}]^{-1}. \label{eq:dt-definition}
	\end{align}
	Consider the following two working assumptions: 
	\begin{align*}
		& \text{\normalfont (WA-1)} \quad \{S_t, I_t, A_t, p_t(H_t), Y_{t+1}\} \perp A_u \mid H_u \quad \text{for all } 1 \leq u < t \leq T, \\
		& \text{\normalfont (WA-2)} \quad \EE\{ \phi_t(\beta^\star) \phi_u(\beta^\star)^T\} = 0 \quad \text{for all } 1 \leq u < t \leq T.
	\end{align*}
	If it happens that (WA-1) and (WA-2) hold, then the semiparametric efficiency bound for $\beta^\star$ is $\EE \{\partial_\beta \psi(\beta^\star)\} [\EE\{ \psi(\beta^\star)^{\otimes 2} \}]^{-1} \EE \{\partial_\beta \psi(\beta^\star)\}^T$.
\end{thm}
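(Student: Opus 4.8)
The plan is to establish the bound by the standard tangent-space calculus: factorize the observed-data likelihood, describe the model tangent space $\mathcal{T}$ and the nuisance tangent space $\mathcal{T}_{\mathrm{nuis}}$, identify the efficient influence function (EIF) with a suitable normalization of $\psi(\beta^\star)=\sum_t d_t^\star(S_t)\phi_t(\beta^\star)$, and compute its variance. Because the randomization kernels $p_t(H_t)$ are fixed by the MRT design, the density of $O$ factors over $t\in[T]$ into the conditional law of $X_{t+1}$ given $(H_t,A_t,Y_{t+1})$, the known Bernoulli kernel $p_t(H_t)$, and the conditional law of $Y_{t+1}$ given $(H_t,A_t)$; the only modeling restriction beyond the known $p_t$ is \eqref{eq:linear-model-cee}, one constraint per $t$ indexed by the value of $S_t$. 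Two features must be tracked with care: first, the known $p_t$ means the ``randomization directions'' $\{(A_t-p_t)c_t(H_t)\}$ do not belong to $\mathcal{T}$; second, \eqref{eq:linear-model-cee} couples the $X$-transition and the $Y$-regression components, so $\mathcal{T}_{\mathrm{nuis}}$ is a strict subspace of the sum of all $X$-transition and all $Y$-regression scores — a perturbation of an $X$-transition that alters the left side of \eqref{eq:linear-model-cee} must be compensated elsewhere to keep $\beta$ fixed. The availability indicator enters only through the support of $A_t$ and positivity.

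I would then verify by guess-and-verify that $\psi(\beta^\star)$, up to a fixed nonsingular linear transformation, is the EIF; the computation is uniform over the identity and log links. The needed algebra, all obtained by first conditioning on $(H_t,A_t)$, is: (a) $\EE\{\phi_t(\beta^\star)\}=0$ — the weight $(A_t-p_t)/\{p_t(1-p_t)\}$ from \eqref{eq:phi-definition} kills the centering terms, and (using $A_t^2=A_t$) reduces $\EE\{\phi_t(\beta^\star)\mid H_t\}$ to $I_tS_t[\,\EE\{U_t(\beta^\star)\mid H_t,A_t=1\}-\EE\{U_t(\beta^\star)\mid H_t,A_t=0\}\,]$, whose conditional mean given $(S_t,I_t=1)$ vanishes by \eqref{eq:linear-model-cee}, which moreover gives $\EE\{\phi_t(\beta^\star)\mid S_t\}=0$; (b) $\EE\{\phi_t(\beta^\star)\mid H_t,A_t\}$ does not depend on $A_t$, because the scalar factor collapses via the deterministic identity $(A_t-p_t)\{A_t-(1-p_t)\}=p_t(1-p_t)$, so $\phi_t(\beta^\star)$ splits into a function of $H_t$ plus a function of $Y_{t+1}$ that is centered given $(H_t,A_t)$. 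From (a)--(b) and the fact that $d_t^\star(S_t)$ is $H_t$-measurable one checks that $\psi(\beta^\star)$ is orthogonal to $\mathcal{T}_{\mathrm{nuis}}$ (the key being $\EE\{\phi_t(\beta^\star)\mid S_t\}=0$, which forces orthogonality to the $S_t$-conditionally-centered nuisance perturbations) and carries no randomization component, hence $\psi(\beta^\star)\in\mathcal{T}$; the cross-time contributions in both checks are precisely what (WA-1) annihilates, since it makes the relevant later-time summaries of $(S_t,I_t,Y_{t+1})$ (and $A_t,p_t$) independent of earlier treatments given the earlier history. Finally, among estimating functions of the form $\sum_t d_t(S_t)\phi_t(\beta^\star)$ the asymptotic variance of the resulting regular asymptotically linear estimator is minimized at $d_t=d_t^\star$ of \eqref{eq:dt-definition}, by completing the square in $d_t$ against the conditional variance $\EE\{\phi_t(\beta^\star)^{\otimes 2}\mid S_t\}$.

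It then remains to read off the variance of the EIF under the working assumptions. (WA-2) makes $\EE\{\psi(\beta^\star)^{\otimes 2}\}$ block-diagonal across $t$, so it equals $\sum_t\EE\{d_t^\star\,\EE(\phi_t(\beta^\star)^{\otimes 2}\mid S_t)\,d_t^{\star\top}\}$, which under \eqref{eq:dt-definition} reduces to $\sum_t\EE\{\EE(\partial_\beta\phi_t(\beta^\star)\mid S_t)^\top[\EE(\phi_t(\beta^\star)^{\otimes 2}\mid S_t)]^{-1}\EE(\partial_\beta\phi_t(\beta^\star)\mid S_t)\}$; the same quantity equals $\EE\{\partial_\beta\psi(\beta^\star)\}$, an information identity forced by the optimal $d_t^\star$. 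Substituting these into the general variance-of-the-EIF expression produces $\EE\{\partial_\beta\psi(\beta^\star)\}[\EE\{\psi(\beta^\star)^{\otimes 2}\}]^{-1}\EE\{\partial_\beta\psi(\beta^\star)\}^\top$, which is the asserted bound.

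The main obstacle, I expect, is the second paragraph — and within it the accurate description of $\mathcal{T}$ and $\mathcal{T}_{\mathrm{nuis}}$ in the presence of the known propensities, the availability indicator, and the coupling induced by \eqref{eq:linear-model-cee}, together with the bookkeeping that pins down exactly where (WA-1) is used to kill the cross-time terms (both in the orthogonality to $\mathcal{T}_{\mathrm{nuis}}$ and in membership in $\mathcal{T}$). I would isolate the (WA-1) step as a lemma on how conditional independence of future relevant summaries from past treatments propagates through the longitudinal likelihood, since it is this step that separates the clean additive bound from the considerably more intricate generic one; a secondary subtlety is verifying that replacing the true nuisance functions by estimators does not alter the EIF (Neyman orthogonality: $\EE\{\partial_{\mu}\phi_t(\beta^\star)\}=0$ for the conditional-mean nuisances, and $\EE\{\phi_t(\beta^\star)\mid S_t\}=0$ for $d_t^\star$), which is what makes the candidate a bona fide gradient.
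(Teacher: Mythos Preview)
Your route differs from the paper's. The paper never attempts to describe $\mathcal{T}$ or $\mathcal{T}_{\mathrm{nuis}}$; instead it (i) recasts \eqref{eq:linear-model-cee} as the conditional moment restrictions $\EE\{\tfrac{A_t-p_t}{p_t(1-p_t)}I_tU_t(\beta^\star)\mid S_t,I_t=1\}=0$, (ii) takes the primitive estimating function $\xi(\beta)=\sum_t\tfrac{A_t-p_t}{p_t(1-p_t)}I_tU_t(\beta)\,\partial_\beta\gamma_t(S_t;\beta)$ and subtracts its projection onto the treatment-assignment score space via the longitudinal formula $\sum_u[\EE(\xi\mid H_u,A_u)-\EE(\xi\mid H_u)]$ (citing Robins), showing under (WA-1) that the result is exactly $\sum_t\phi_t(\beta)$, and (iii) invokes Theorem~1 of Chamberlain (1992) under (WA-2) for the optimal $S_t$-measurable weights $d_t^\star$ and the bound. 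So the paper outsources both the tangent-space geometry and the optimal-instrument argument to cited results; your completing-the-square over $d_t$ is Chamberlain's derivation specialized to (WA-2)-uncorrelated moment functions, done by hand. The trade-off is that the paper's argument is short but opaque about why the bound is sharp, while yours would be more transparent if completed.

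Your algebra in (a)--(b) is correct, and your location of (WA-1) and (WA-2) matches the paper's. But the step ``carries no randomization component, hence $\psi(\beta^\star)\in\mathcal{T}$'' has a real gap: because \eqref{eq:linear-model-cee} is a genuine restriction (not every law with the given $p_t$ has CEE of the form $\gamma_t(\cdot;\beta)$ for some $\beta$), the model tangent space is a \emph{strict} subspace of the orthocomplement of the randomization scores, so absence of a randomization component does not by itself place $\psi$ in $\mathcal{T}$. You correctly flag this as the main obstacle, but more (WA-1) bookkeeping will not close it; what is needed is the structural fact that every influence function for this model equals, modulo an element of $\mathcal{T}^\perp$, some $\sum_t d_t(S_t)\phi_t(\beta^\star)$. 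Once that completeness-of-the-class statement is in hand, minimizing variance over $d_t$ gives the bound without ever locating $\psi$ inside $\mathcal{T}$ --- and the paper's moment-restriction reformulation plus the Robins projection plus Chamberlain is precisely that statement, packaged as citations.
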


\begin{rmk}
	\normalfont
	\cref{thm:efficiency-bound} presents a tractable efficiency bound when working assumptions happen to hold. A more general efficiency bound without the working assumptions may potentially be derived by combining the multinomial distribution approximation technique and Gram-Schmidt orthogonalization \citep{chamberlain1987asymptotic,chamberlain1992comment}. However, we found the orthogonalization step not straightforward here due to the non-nested $\{S_t\}_{t\in[T]}$. Furthermore, we conjecture that the form of the general efficiency bound will involve across-time correlation for many $(t,u)$ combinations, and the need to model these correlations will make the corresponding estimator implementation challenging with small to moderate sample sizes.
\end{rmk}

\begin{rmk}
	\normalfont
	The working assumptions (WA-1) and (WA-2) concern the serial dependence in certain aspects of the data. A sufficient condition for (WA-1) is when $A_t$ has no delayed effect on future covariates and outcomes and $p_t(H_t)$ is not impacted by past $A_u$ for $u < t$. Another sufficient condition for (WA-1) is when $g$ is identity, $(S_t, I_t, A_t)$ are completely exogenous process, and $A_u$ does not moderate $A_t$'s effect on $Y_{t+1}$ after conditional on $H_u$ for $u < t$. A sufficient condition for (WA-2) is when $(X_t, I_t, A_t, p_t(H_t), Y_{t+1}) \perp (X_u, I_u, A_u, p_u(H_u), Y_{u+1})$ for $u \neq t$. We note that even though \cref{thm:efficiency-bound} requires these working assumptions, the estimators in \cref{subsec:proposed-estimator} improves efficiency over WCLS and EMEE even when none of the working assumptions holds, as will be shown in \cref{sec:simulation}.
\end{rmk}

\subsection{Estimators Achieving The Efficiency Bound}
\label{subsec:proposed-estimator}

Before describing the estimators, we introduce additional notation to emphasize the dependence of the estimating function on certain nuisance parameters. Let $\mu_t^\star(H_t, A_t) = \EE(Y_{t+1} \mid H_t, A_t, I_t = 1)$. $\phi_t(\beta)$ defined in \eqref{eq:phi-definition} equals
\begin{align*}
	\phi_t(\beta) = \begin{cases}
		\frac{A_t - p_t}{p_t(1-p_t)} I_t \{Y_{t+1} - (A_t + p_t - 1)\gamma_t(S_t;\beta) - (1-p_t)\mu_t^\star(H_t, 1) \\ ~~~~~~~~~~~~~~~ - p_t \mu_t^\star(H_t, 0) \} \partial_\beta \gamma_t(S_t;\beta) & \text{if $g$ is identity}, \\
		\frac{A_t - p_t}{p_t(1-p_t)} I_t \{e^{-A_t \gamma_t(S_t;\beta)}Y_{t+1} - (1-p_t)e^{-\gamma_t(S_t;\beta)}\mu_t^\star(H_t, 1) \\~~~~~~~~~~~~~~~ - p_t \mu_t^\star(H_t, 0) \} \partial_\beta \gamma_t(S_t;\beta) & \text{if $g$ is log}.
	\end{cases} \nonumber
\end{align*}
We use $\phi_t(\beta, \mu_t)$ to denote the above $\phi_t(\beta)$ with $\mu_t^\star$ replaced by some function $\mu_t(h_t, a_t)$. In addition, define
\begin{align}
	d_t(S_t; \mu_t, \beta) = \EE\{\partial_\beta \phi_t(\beta, \mu_t) \mid S_t \} ~ [\EE\{ \phi_t(\beta, \mu_t)^{\otimes 2} \mid S_t \}]^{-1}. \label{eq:dt-definition-working}
\end{align}
It follows from \eqref{eq:dt-definition} that $d_t(S_t; \mu_t^\star, \beta^\star) = d_t^\star(S_t)$. In Algorithms \ref{algo:estimator-ncf} and \ref{algo:estimator-cf} we construct two estimators for $\beta$, one with and one without cross-fitting. Additional details about the implementation, such as the modeling choice of $\hat\mu_t$ and $\hat{d}_t$, are included in Appendix \ref{A-sec:implementation}.

\begin{algorithm}
	\caption{Estimator $\hat\beta$ (without cross-fitting)}
	\label{algo:estimator-ncf}
	\textbf{Step 1:} Fit $\EE (Y_{t+1} \mid H_t, A_t, I_t = 1)$ for $t \in [T]$. Denote the fitted model by $\hat\mu_t(H_t, A_t)$.

	\textbf{Step 2:} Obtain $\betainit$ by solving $\PP_n \{\sum_{t=1}^T \phi_t (\beta, \hat\mu_t)\} = 0$.

	\textbf{Step 3:} Fit $\EE\{\partial_\beta \phi_t(\betainit, \hat\mu_t) \mid S_t \}$ and $\EE\{ \phi_t(\betainit, \hat\mu_t)^{\otimes 2} \mid S_t \}$. Plug the fitted expectations in \eqref{eq:dt-definition-working} to obtain $\hat{d}_t(S_t; \hat\mu_t, \betainit)$.

	\textbf{Step 4:} Obtain $\hat\beta$ by solving $\PP_n \sum_{t=1}^T \hat{d}_t(S_t; \hat\mu_t, \betainit) \phi_t (\beta, \hat\mu_t) = 0$.
\end{algorithm}

\begin{algorithm}
	\caption{Estimator $\check\beta$ (with cross-fitting)}
	\label{algo:estimator-cf}
	\textbf{Step 1:} Take a $K$-fold equally-sized random partition $(B_k)_{k=1}^K$ of observation indices $[n] = \{1,\ldots,n\}$. Define $B_k^c = [n] \setminus B_k$ for $k \in [K]$.

	\textbf{Step 2:} For each $k \in [K]$, use solely observations from $B_k^c$ and apply Steps 1-3 in Algorithm \ref{algo:estimator-ncf}. The fitted models using $B_k^c$ are denoted with subscript $k$, i.e., $\hat\mu_{kt}$, $\betainit_k$, and $\hat{d}_{kt}$.

	\textbf{Step 3:} Obtain $\check\beta$ by solving $K^{-1} \sum_{k=1}^K \PP_{n,k}  \sum_{t=1}^T \hat{d}_{kt}(S_t; \hat\mu_{kt}, \betainit_k) \phi_t (\beta, \hat\mu_{kt}) = 0$. Here $\PP_{n,k}$ denotes empirical average over observations from $B_k$.
\end{algorithm}


The asymptotic normality of the proposed estimators are established in \cref{thm:normality-ncf,thm:normality-cf}.

\begin{thm}[Asymptotic normality of $\hat\beta$]
	\label{thm:normality-ncf}
	Suppose \eqref{eq:linear-model-cee} holds. In addition, suppose the following hold.
	\begin{enumerate}
		\item[(a)] For each $t \in [T]$, $\hat\mu_t$ converges to some limit in $L_2$. In other words, there exists a function $\mu_t'(h_t, a_t)$ such that $\|\hat\mu_t - \mu_t'\| = o_P(1)$.
		\item[(b)] For each $t \in [T]$, $\hat{d}_t(s_t; \mu_t', \beta^\star)$ converges to some limit in $L_2$. In other words, there exists a function $d_t'(s_t)$ such that $\|\hat{d}_t - d_t'\| = o_P(1)$.
		\item[(c)] Given the limit functions $\{\mu_t'(h_t, a_t): t \in [T]\}$ and $\{d_t'(s_t): t \in [T]\}$, $\PP \sum_{t=1}^T d_t'(S_t) \phi_t(\beta, \mu_t') = 0$ as a function of $\beta$ has a unique zero.
		\item[(d)] Suppose the parameter space $\Theta$ of $\beta$ is compact, the support of $O$ is bounded, $\mu_t$ for each $t \in [T]$ is bounded, and $\PP\{\sum_{t=1}^T d_t'(S_t) \partial_\beta \phi_t(\beta^\star, \mu_t') \}$ is invertible.
		\item[(e)] For each $t$, the estimators $\hat\mu_t$ and $\hat{d}_t$ take value in a Donsker class.
	\end{enumerate}
	Then we have $\sqrt{n}(\hat\beta - \beta^\star) \dto N ( 0, V)$ as $n\to\infty$, where $V$ equals
	\begin{align}
		\PP\Big\{\sum_{t=1}^T d_t'(S_t) \partial_\beta \phi_t(\beta^\star, \mu_t') \Big\}^{-1} \PP \Big[ \Big\{ \sum_{t=1}^T d_t'(S_t) \phi_t(\beta^\star, \mu_t') \Big\}^{\otimes 2} \Big] \PP\Big\{\sum_{t=1}^T d_t'(S_t) \partial_\beta \phi_t(\beta^\star, \mu_t') \Big\}^{-1, T},\label{eq:avar} 
	\end{align}
	and $V$ can be consistently estimated by
	\begin{align*}
		& ~~ \bigg[\PP_n\Big\{\sum_{t=1}^T \hat{d}_t(S_t; \hat\mu_t, \betainit) \partial_\beta \phi_t(\hat\beta, \hat\mu_t) \Big\}\bigg]^{-1} \PP_n \Big[ \Big\{ \sum_{t=1}^T \hat{d}_t(S_t; \hat\mu_t, \betainit) \phi_t(\hat\beta, \hat\mu_t) \Big\}^{\otimes 2} \Big] \\
		& \times \bigg[\PP_n\Big\{\sum_{t=1}^T \hat{d}_t(S_t; \hat\mu_t, \betainit) \partial_\beta \phi_t(\hat\beta, \hat\mu_t) \Big\}\bigg]^{-1, T}.
	\end{align*}
\end{thm}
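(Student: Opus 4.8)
The plan is to prove Theorem~\ref{thm:normality-ncf} by the standard two-step M/Z-estimator argument: first establish consistency $\hat\beta \pto \beta^\star$, then linearize and apply a uniform central limit theorem. Write $\Psi_n(\beta) := \PP_n \sum_{t=1}^T \hat{d}_t(S_t; \hat\mu_t, \betainit) \phi_t(\beta, \hat\mu_t)$ for the empirical estimating function solved in Step~4, and $\Psi_0(\beta) := \PP \sum_{t=1}^T d_t'(S_t) \phi_t(\beta, \mu_t')$ for its population limit (using the $L_2$ limits from assumptions (a) and (b)). The key observation, which makes this a ``globally robust'' problem, is that $\beta^\star$ is a zero of $\Psi_0$ \emph{regardless} of whether $\mu_t' = \mu_t^\star$ or $d_t' = d_t^\star$: because $\EE\{\phi_t(\beta^\star, \mu_t) \mid H_t\} = 0$ for \emph{any} bounded $\mu_t$ (the weighting by $(A_t - p_t)/\{p_t(1-p_t)\}$ annihilates everything not depending on $A_t$, and the remaining $A_t$-dependent terms are constructed so that the conditional mean vanishes under \eqref{eq:linear-model-cee}), it follows that $\PP\{d_t'(S_t)\phi_t(\beta^\star, \mu_t')\} = \PP[d_t'(S_t)\EE\{\phi_t(\beta^\star,\mu_t')\mid H_t\}] = 0$ for each $t$. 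So $\Psi_0(\beta^\star) = 0$, and by assumption (c) $\beta^\star$ is its unique zero.

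First I would establish consistency. By assumptions (a), (b), (e), boundedness in (d), and a standard Glivenko--Cantelli/Donsker argument, $\sup_{\beta \in \Theta} \|\Psi_n(\beta) - \Psi_0(\beta)\| = o_P(1)$: the Donsker (hence Glivenko--Cantelli) property of the classes containing $\hat\mu_t$ and $\hat d_t$, combined with the $L_2$-convergence of $\hat\mu_t \to \mu_t'$ and $\hat d_t \to d_t'$ and boundedness of all quantities, gives uniform convergence of the empirical process, and $\PP_n \to \PP$ handles the mean. Together with the unique-zero condition (c), compactness of $\Theta$ in (d), and continuity of $\Psi_0$, the standard argument for Z-estimators (e.g.\ van der Vaart, Theorem~5.9) yields $\hat\beta \pto \beta^\star$.

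Next I would derive asymptotic normality. Expand $0 = \Psi_n(\hat\beta)$ around $\beta^\star$. The main step is to show
\[
\sqrt n\, \Psi_n(\beta^\star) = \sqrt n\, \PP_n \sum_{t=1}^T d_t'(S_t)\phi_t(\beta^\star, \mu_t') + o_P(1),
\]
i.e.\ that replacing $(\hat\mu_t, \betainit, \hat d_t)$ by their limits $(\mu_t', \beta^\star, d_t')$ in the score at $\beta^\star$ costs only $o_P(1)$. This is where global robustness does the work: because $\EE\{\phi_t(\beta^\star,\mu_t)\mid H_t\} = 0$ identically in $\mu_t$, the usual first-order drift terms from perturbing $\hat\mu_t$ around $\mu_t'$ \emph{and} the drift from perturbing $\betainit$ around $\beta^\star$ both vanish at the population level (their leading terms are inner products against a mean-zero-given-$H_t$ quantity), so no rate condition like $\|\hat\mu_t - \mu_t^\star\|\cdot\|\text{something}\| = o_P(n^{-1/2})$ is needed; the Donsker assumption (e) controls the empirical-process remainder and ordinary $L_2$-consistency (a)--(b) kills the bias. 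With this, $\sqrt n\,\Psi_n(\beta^\star) \dto N(0, \PP[\{\sum_t d_t'(S_t)\phi_t(\beta^\star,\mu_t')\}^{\otimes 2}])$ by the ordinary CLT. A uniform stochastic-equicontinuity argument on $[\beta^\star, \hat\beta]$ gives $\partial_\beta \Psi_n(\tilde\beta) \pto \PP\{\sum_t d_t'(S_t)\partial_\beta\phi_t(\beta^\star,\mu_t')\}$, which is invertible by (d); solving yields $\sqrt n(\hat\beta - \beta^\star) \dto N(0,V)$ with $V$ as in \eqref{eq:avar}. Consistency of the sandwich variance estimator follows from plugging in $\hat\beta \pto \beta^\star$, the $L_2$-consistency of $\hat\mu_t, \hat d_t$, and another application of the Donsker/continuous-mapping machinery.

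The main obstacle is the empirical-process control in the linearization step without cross-fitting: one must argue carefully that the term $\sqrt n (\PP_n - \PP)[\hat d_t(S_t;\hat\mu_t,\betainit)\phi_t(\beta^\star,\hat\mu_t) - d_t'(S_t)\phi_t(\beta^\star,\mu_t')]$ is $o_P(1)$, which is precisely where assumption (e) (Donsker classes) is indispensable --- the map $(\mu,d)\mapsto d\,\phi_t(\beta^\star,\mu)$ applied to a Donsker class yields a Donsker class (by boundedness and Lipschitz-in-$(\mu,d)$ preservation results), and combined with $L_2$-convergence of the arguments this gives asymptotic equicontinuity and hence the desired negligibility. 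I would also need to verify that $\betainit$ from Step~2 is itself consistent for \emph{some} limit (a separate but entirely analogous Z-estimator argument, again using global robustness so the limit need not equal $\beta^\star$ if $\mu_t' \neq \mu_t^\star$, though one checks it does under \eqref{eq:linear-model-cee}), and that $\hat d_t(s_t;\mu_t',\beta^\star)$ converging implies $\hat d_t(s_t;\hat\mu_t,\betainit)$ converges to the same $d_t'$ by continuity, which requires mild regularity (denominators bounded away from zero) folded into (d).
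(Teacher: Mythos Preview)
Your overall architecture matches the paper's: establish global robustness, prove consistency via uniform convergence plus a well-separated zero, Taylor-expand, use Donsker plus $L_2$-convergence (the paper invokes van der Vaart's Lemma~19.24 exactly as you describe) to replace $(\hat d_t,\hat\mu_t)$ by $(d_t',\mu_t')$ in the score at $\beta^\star$, and finish with the ordinary CLT and a derivative-convergence lemma. The paper packages this as a general theorem for ``globally robust Z-estimators'' and then verifies its hypotheses for the CEE estimating function; your sketch does the same work inline.

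There is, however, a concrete error in your justification of global robustness. You claim $\EE\{\phi_t(\beta^\star,\mu_t)\mid H_t\}=0$ for any $\mu_t$. This is false: a short calculation gives
\[
\EE\{\phi_t(\beta^\star,\mu_t)\mid H_t\} \;=\; I_t\,\big\{\mu_t^\star(H_t,1)-\mu_t^\star(H_t,0)-\gamma_t(S_t;\beta^\star)\big\}\,\partial_\beta\gamma_t(S_t;\beta^\star),
\]
which is generally nonzero because model~\eqref{eq:linear-model-cee} only asserts that the \emph{further} conditional expectation given $(S_t,I_t=1)$ of the bracketed term is zero. The correct statement (and what the paper proves) is the weaker marginal identity $\PP\{d_t(S_t)\phi_t(\beta^\star,\mu_t)\}=0$ for \emph{all} $d_t,\mu_t$, obtained by one more tower step onto $(S_t,I_t)$. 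This matters for your linearization paragraph: the reason no drift appears is not that first-order perturbation terms are ``mean-zero given $H_t$,'' but simply that $\PP\{m(\beta^\star,\hat\eta)\}=0$ \emph{exactly} for the random $\hat\eta$ --- there is literally no bias term to bound, and one passes directly to $(\PP_n-\PP)m(\beta^\star,\hat\eta)$ and applies the Donsker/$L_2$ lemma. Once you correct this, your argument goes through and coincides with the paper's.
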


\begin{thm}[Asymptotic normality of $\check\beta$]
	\label{thm:normality-cf}
	Suppose \eqref{eq:linear-model-cee} holds. In addition, suppose the following hold.
	\begin{enumerate}
		\item[(a)] For each $t \in [T]$, $\hat\mu_{kt}$ converges to some limit in $L_2$ for all $k \in [K]$. In other words, there exists a function $\mu_t'(h_t, a_t)$ such that $\|\hat\mu_{kt} - \mu_t'\| = o_P(1)$ for all $k \in [K]$.

		\item[(b)] For each $t \in [T]$, $\hat{d}_{kt}(s_t; \mu_t', \beta^\star)$ converges to some limit in $L_2$ for all $k \in [K]$. In other words, there exists a function $d_t'(s_t)$ such that $\|\hat{d}_{kt} - d_t'\| = o_P(1)$ for all $k \in [K]$.

		\item[(c)] Given the limit functions $\{\mu_t'(h_t, a_t): t \in [T]\}$ and $\{d_t'(s_t): t \in [T]\}$, $\PP \sum_{t=1}^T d_t'(S_t) \phi_t(\beta, \mu_t') = 0$ as a function of $\beta$ has a unique zero.

		\item[(d)] Suppose the parameter space $\Theta$ of $\beta$ is compact, the support of $O$ is bounded, $\mu_t$ for each $t \in [T]$ is bounded, and $\PP\{\sum_{t=1}^T d_t'(S_t) \partial_\beta \phi_t(\beta^\star, \mu_t') \}$ is invertible.
	\end{enumerate}
	Then we have $\sqrt{n}(\check\beta - \beta_0) \dto N ( 0, V)$ as $n\to\infty$, where $V$ is defined in \eqref{eq:avar} and it can be consistently estimated by
	\begin{align*}
		& \bigg[\frac{1}{K}\sum_{k=1}^K \PP_{n,k}\Big\{\sum_{t=1}^T \hat{d}_{kt}(S_t; \hat\mu_{kt}, \betainit_k) \partial_\beta \phi_t(\check\beta, \hat\mu_{kt}) \Big\} \bigg]^{-1} \bigg(\frac{1}{K}\sum_{k=1}^K \PP_{n,k} \Big[ \Big\{ \sum_{t=1}^T \hat{d}_{kt}(S_t; \hat\mu_{kt}, \betainit_k) \phi_t(\check\beta, \hat\mu_{kt}) \Big\}^{\otimes 2} \Big] \bigg) \\
		\times & \bigg[\frac{1}{K}\sum_{k=1}^K \PP_{n,k}\Big\{\sum_{t=1}^T \hat{d}_{kt}(S_t; \hat\mu_{kt}, \betainit_k) \partial_\beta \phi_t(\check\beta, \hat\mu_{kt}) \Big\} \bigg]^{-1, T}.
	\end{align*}
\end{thm}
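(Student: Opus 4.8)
This is the cross-fitted instance of the globally robust Z-estimator theory of \cref{sec:theory}, so the plan is to carry out the usual Z-estimator expansion for $\check\beta$, which solves $\Psi_n(\beta) := K^{-1}\sum_{k=1}^K \PP_{n,k}\sum_{t=1}^T \hat d_{kt}(S_t;\hat\mu_{kt},\betainit_k)\,\phi_t(\beta,\hat\mu_{kt}) = 0$, using two structural facts. The first, which I would establish at the outset, is \emph{global robustness}: for every fixed bounded $\mu_t$, $\EE\{\phi_t(\beta^\star,\mu_t)\mid S_t\} = 0$, hence $\PP\{d_t(S_t)\phi_t(\beta^\star,\mu_t)\} = 0$ for all $d_t,\mu_t$. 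This is proved by iterated conditioning: conditioning $\phi_t(\beta^\star,\mu_t)$ on $(H_t,A_t,I_t=1)$ replaces $Y_{t+1}$ by $\mu_t^\star(H_t,A_t)$, then averaging $A_t$ over $\{0,1\}$ against $p_t,1-p_t$ cancels the $\mu_t$-terms and collapses the bracket to $\mu_t^\star(H_t,1)-\mu_t^\star(H_t,0)-\gamma_t(S_t;\beta^\star)$ for the identity link and to $e^{-\gamma_t(S_t;\beta^\star)}\mu_t^\star(H_t,1)-\mu_t^\star(H_t,0)$ for the log link; a final conditioning on $S_t$ together with model \eqref{eq:linear-model-cee} makes either expression vanish. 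In particular $\beta^\star$ is a zero of $\Psi_0(\beta) := \PP\sum_t d_t'(S_t)\phi_t(\beta,\mu_t')$, so by (c) it is the unique zero and $\beta_0 = \beta^\star$. The second fact is that, because $\Theta$ is compact, the support of $O$ bounded, the $\mu_t$ bounded, $p_t$ bounded away from $0$ and $1$ by positivity, and $\gamma_t(\cdot\,;\beta)$ smooth, the maps $\beta\mapsto\sum_t \hat d_{kt}(S_t)\phi_t(\beta,\hat\mu_{kt})$ and $\beta\mapsto\sum_t \hat d_{kt}(S_t)\partial_\beta\phi_t(\beta,\hat\mu_{kt})$ are, conditionally on $B_k^c$, uniformly bounded and uniformly Lipschitz in $\beta$, so the associated classes are conditionally Glivenko--Cantelli. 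This is what replaces the Donsker hypothesis (e) of \cref{thm:normality-ncf}: cross-fitting lets us work conditionally on $B_k^c$ with a fixed, data-free function class, so no complexity or rate restriction on $\hat\mu_{kt},\hat d_{kt}$ beyond $L_2$-consistency is required.

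With these in hand I would first prove consistency. Write $\Psi_n(\beta)-\Psi_0(\beta) = K^{-1}\sum_k(\PP_{n,k}-\PP)\sum_t \hat d_{kt}\phi_t(\beta,\hat\mu_{kt}) + \big\{K^{-1}\sum_k\PP\sum_t \hat d_{kt}\phi_t(\beta,\hat\mu_{kt})-\Psi_0(\beta)\big\}$. The first bracket is $o_P(1)$ uniformly in $\beta$ by the conditional Glivenko--Cantelli property; the second is $o_P(1)$ uniformly in $\beta$ by (a)--(b), the $L_2$-continuity of $(\mu_t,d_t)\mapsto d_t\phi_t(\cdot\,,\mu_t)$ (immediate from boundedness and the explicit form of $\phi_t$), and uniform Lipschitzness in $\beta$. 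Hence $\sup_\beta\|\Psi_n(\beta)-\Psi_0(\beta)\|\pto0$; since $\beta^\star$ is the well-separated unique zero of $\Psi_0$ --- uniqueness from (c), separation from invertibility of $\PP\sum_t d_t'(S_t)\partial_\beta\phi_t(\beta^\star,\mu_t')$ in (d) --- standard Z-estimator consistency gives $\check\beta\pto\beta^\star$.

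Next, linearize: a componentwise mean-value expansion of $0=\Psi_n(\check\beta)$ about $\beta^\star$ gives $0 = \Psi_n(\beta^\star) + M_n(\check\beta-\beta^\star)$ with $M_n = K^{-1}\sum_k\PP_{n,k}\sum_t \hat d_{kt}\partial_\beta\phi_t(\bar\beta,\hat\mu_{kt})$ at intermediate points $\bar\beta\pto\beta^\star$. The decomposition above, applied to $\partial_\beta\phi_t$, together with continuity of $\beta\mapsto\PP\sum_t d_t'(S_t)\partial_\beta\phi_t(\beta,\mu_t')$, gives $M_n\pto M := \PP\sum_t d_t'(S_t)\partial_\beta\phi_t(\beta^\star,\mu_t')$, invertible by (d), so $\sqrt n(\check\beta-\beta^\star) = -M^{-1}\sqrt n\,\Psi_n(\beta^\star)+o_P(1)$. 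Because the folds partition $[n]$, $\Psi_n(\beta^\star) = \PP_n\sum_t d_t'(S_t)\phi_t(\beta^\star,\mu_t') + R_n$ with $R_n = K^{-1}\sum_k\PP_{n,k}\sum_t\{\hat d_{kt}\phi_t(\beta^\star,\hat\mu_{kt})-d_t'(S_t)\phi_t(\beta^\star,\mu_t')\}$. Splitting each fold's term as $(\PP_{n,k}-\PP)+\PP$: conditionally on $B_k^c$ the centered piece is $(\PP_{n,k}-\PP)f_k$ for a fixed $f_k$ with $\|f_k\|=o_P(1)$ (by (a)--(b) and $L_2$-continuity), so $\EE\big[\|\sqrt n(\PP_{n,k}-\PP)f_k\|^2\mid B_k^c\big]\le C\|f_k\|^2 = o_P(1)$; and the $\PP$ piece is \emph{exactly zero} by global robustness, since $\PP\{\hat d_{kt}(S_t)\phi_t(\beta^\star,\hat\mu_{kt})\} = \PP\{d_t'(S_t)\phi_t(\beta^\star,\mu_t')\} = 0$ --- so, crucially, only $L_2$-consistency of the nuisance estimators is needed, not any product rate. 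Hence $\sqrt n\,R_n=o_P(1)$. By the ordinary CLT for the iid, bounded, mean-zero summands $\sum_t d_t'(S_t)\phi_t(\beta^\star,\mu_t')$, $\sqrt n\,\PP_n\sum_t d_t'(S_t)\phi_t(\beta^\star,\mu_t')\dto N(0,\Sigma)$ with $\Sigma = \PP[\{\sum_t d_t'(S_t)\phi_t(\beta^\star,\mu_t')\}^{\otimes2}]$, and Slutsky gives $\sqrt n(\check\beta-\beta^\star)\dto N(0, M^{-1}\Sigma M^{-1,T}) = N(0,V)$ with $V$ as in \eqref{eq:avar}. Consistency of the proposed sandwich estimator for $V$ follows from the same conditional Glivenko--Cantelli and $L_2$-continuity arguments applied to each of its three factors (using $\check\beta\pto\beta^\star$ and (a)--(b)): the middle factor converges to $\Sigma$ and the outer factors to $M$.

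The main obstacle is controlling the uniform-in-$\beta$ empirical-process terms in the consistency and linearization steps, and the centered remainder in the last step, without assuming $\hat\mu_{kt},\hat d_{kt}$ lie in a Donsker class. Cross-fitting is exactly what resolves it: conditioning on the held-out fold turns the relevant classes into fixed bounded-Lipschitz classes on the compact $\Theta$, hence conditionally Glivenko--Cantelli, and the centered stochastic term is handled by a conditional second-moment bound. Everything downstream --- the linearization, the exact vanishing of the nuisance-drift term via global robustness, the CLT, and the plug-in variance --- is then routine, and mirrors the non-cross-fitted argument behind \cref{thm:normality-ncf}.
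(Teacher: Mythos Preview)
Your proposal is correct and follows essentially the same route as the paper: the paper proves a general cross-fitted globally robust Z-estimator theorem (\cref{thm:general-normality-cf}) via exactly your steps---well-separated zero plus conditional uniform LLN for consistency, mean-value linearization, the conditional second-moment bound on $(\PP_{n,k}-\PP)f_k$ to kill the nuisance remainder, global robustness to make the $\PP$-piece exactly zero, then CLT and Slutsky---and then verifies its hypotheses for $m(\beta,\eta)=\sum_t d_t\phi_t(\beta,\mu_t)$ through a sequence of lemmas matching your $L_2$-continuity and boundedness claims. One small imprecision: well-separation of $\beta^\star$ comes from compactness of $\Theta$, continuity of $\Psi_0$, and uniqueness (c), not from invertibility in (d); invertibility is used only in the linearization step.
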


\begin{rmk}[Robustness]
	\normalfont
	The estimators $\hat\beta$ and $\check\beta$ are robust to misppecification of the nuisance parameter models in that the consistency and asymptotic normality does not require the nuisance parameter estimators $\hat\mu_t$ and $\hat{d}_t$ to converge to $\mu_t^\star$ and $d_t^\star$. Furthermore, there is no requirement on the rate of convergence for the nuisance parameter estimators (as long as they converge at $o_P(1)$), which allows a wider range of nonparametric methods to be used such as kernel or nearest neighbor estimators. This is less stringent than the literature on doubly-robust average treatment effect estimation, where a faster than $n^{-1/4}$ rate is typically required for nuisance parameter estimators \citep{van2000asymptotic,kennedy2016semiparametric}. The reason is that the estimating equations we constructed are ``globally robust'' (see \cref{sec:theory}) and the randomization probabilities are known in MRTs. Finally, the incorporation of cross-fitting in Algorithm \ref{algo:estimator-cf} allows an even wider class of possibly non-Donsker models to be used for estimating the nuisance parameters; this is evident as condition (e) of \cref{thm:normality-ncf} is absent in \cref{thm:normality-cf}.
\end{rmk}


\begin{cor}
	\label{cor:achieving-efficiency-bound}
	Suppose for each $t\in[T]$, $\hat\mu_t(h_t, a_t)$ and $\hat{d}_t(s_t; \mu_t^\star, \beta^\star)$ converge to $\mu_t^\star(h_a, a_t)$ and $d_t^\star(s_t)$ in $L_2$, respectively, then $\hat\beta$ achieves the efficiency bound stated in \cref{thm:efficiency-bound}. If for all $k \in [K]$ $\hat\mu_{kt}(h_t, a_t)$ and $\hat{d}_{kt}(s_t; \mu_t^\star, \beta^\star)$ converge to $\mu_t^\star(h_a, a_t)$ and $d_t^\star(s_t)$ in $L_2$, respectively, then $\check\beta$ achieves this efficiency bound.
\end{cor}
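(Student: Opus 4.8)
The plan is to obtain \cref{cor:achieving-efficiency-bound} as a direct consequence of \cref{thm:normality-ncf} (for $\hat\beta$) and \cref{thm:normality-cf} (for $\check\beta$): under the stated $L_2$-consistency, the limiting nuisance functions appearing in those theorems are the truth, and the only substantive work is to check that the resulting sandwich variance \eqref{eq:avar} simplifies to the efficiency bound of \cref{thm:efficiency-bound}. I describe the argument for $\hat\beta$; the argument for $\check\beta$ is word-for-word the same with \cref{thm:normality-cf} replacing \cref{thm:normality-ncf}, since cross-fitting changes neither the limiting variance nor the simplification below.

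First I would pin down the limiting nuisances. Because $L_2$ limits are almost surely unique, the hypothesis $\|\hat\mu_t - \mu_t^\star\| = o_P(1)$ forces the limit $\mu_t'$ of condition (a) to equal (a version of) $\mu_t^\star$, and $\|\hat d_t(\cdot\,;\mu_t^\star,\beta^\star) - d_t^\star\| = o_P(1)$ forces $d_t' = d_t^\star$. With $\mu_t' = \mu_t^\star$ one has $\phi_t(\beta,\mu_t') = \phi_t(\beta)$ and $d_t(\cdot\,;\mu_t^\star,\beta^\star) = d_t^\star$, so condition (c) concerns the zeros of $\beta\mapsto \PP\sum_{t=1}^T d_t^\star(S_t)\phi_t(\beta) = \PP\psi(\beta)$. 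That $\beta^\star$ is a zero is a short computation: the factor $(A_t-p_t)/\{p_t(1-p_t)\}$ in $\phi_t(\beta^\star)$ has mean zero given $H_t$, so $\EE\{\phi_t(\beta^\star)\mid H_t\}$ collapses to $I_t\,\partial_\beta\gamma_t(S_t;\beta^\star)$ times $\mu_t^\star(H_t,1)-\mu_t^\star(H_t,0)-\gamma_t(S_t;\beta^\star)$ for the identity link (and to $I_t\,\partial_\beta\gamma_t(S_t;\beta^\star)$ times $e^{-\gamma_t(S_t;\beta^\star)}\mu_t^\star(H_t,1)-\mu_t^\star(H_t,0)$ for the log link), whose conditional expectation given $(S_t,I_t=1)$ vanishes by \eqref{eq:linear-model-cee}; hence $\EE\{\phi_t(\beta^\star)\}=0$ and $\PP\psi(\beta^\star)=0$, so $\beta_0=\beta^\star$ under the assumed uniqueness. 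Conditions (b), (d), (e) are already in force when \cref{thm:normality-ncf} is invoked.

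Applying \cref{thm:normality-ncf} then gives $\sqrt n(\hat\beta-\beta^\star)\dto N(0,V)$ with $V$ equal to \eqref{eq:avar} evaluated at $\mu_t'=\mu_t^\star$, $d_t'=d_t^\star$. Since $\phi_t(\beta^\star,\mu_t^\star)=\phi_t(\beta^\star)$ and $d_t^\star=d_t(\cdot\,;\mu_t^\star,\beta^\star)$, the two factors in \eqref{eq:avar} are precisely $\PP\{\partial_\beta\psi(\beta^\star)\}$ and $\PP\{\psi(\beta^\star)^{\otimes 2}\}$, i.e.\ $V=\PP\{\partial_\beta\psi(\beta^\star)\}^{-1}\PP\{\psi(\beta^\star)^{\otimes 2}\}\PP\{\partial_\beta\psi(\beta^\star)\}^{-1,T}$. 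It remains to identify this with the bound $\PP\{\partial_\beta\psi(\beta^\star)\}[\PP\{\psi(\beta^\star)^{\otimes 2}\}]^{-1}\PP\{\partial_\beta\psi(\beta^\star)\}^T$, and this is exactly the simplification carried out inside the proof of \cref{thm:efficiency-bound}: substituting the optimal instrument $d_t^\star(S_t)=\EE\{\partial_\beta\phi_t(\beta^\star)\mid S_t\}[\EE\{\phi_t(\beta^\star)^{\otimes 2}\mid S_t\}]^{-1}$ and conditioning on $S_t$ rewrites $\PP\{\partial_\beta\psi(\beta^\star)\}$ as $\sum_t\EE\big[\EE\{\partial_\beta\phi_t(\beta^\star)\mid S_t\}\,[\EE\{\phi_t(\beta^\star)^{\otimes 2}\mid S_t\}]^{-1}\,\EE\{\partial_\beta\phi_t(\beta^\star)\mid S_t\}^T\big]$, while (WA-2) kills the cross-time contributions to $\PP\{\psi(\beta^\star)^{\otimes 2}\}$ so that it reduces to the same sum; the two expressions for $V$ then collapse to the matrix displayed in \cref{thm:efficiency-bound}. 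Thus $\hat\beta$ attains the bound, and the identical chain with \cref{thm:normality-cf} yields the claim for $\check\beta$.

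The hard part is the last step, the cross-time cancellation in $\PP\{\psi(\beta^\star)^{\otimes 2}\}$: (WA-2) by itself only asserts $\EE\{\phi_t(\beta^\star)\phi_u(\beta^\star)^T\}=0$, not that the reweighted products $\EE\{d_t^\star(S_t)\phi_t(\beta^\star)\phi_u(\beta^\star)^T d_u^\star(S_u)^T\}$ vanish, so one must additionally invoke (WA-1) — via the fact that $\EE\{\phi_t(\beta^\star)\mid H_t\}$ does not involve the past treatments $A_u$, $u<t$ — to pass the instruments through an iterated-expectation argument. Since this step is already executed in the appendix proof of \cref{thm:efficiency-bound}, I would simply cite it. Everything else — uniqueness of $L_2$ limits, $\beta^\star$ solving the population estimating equation, and the regularity conditions (compactness, boundedness, Donsker class or cross-fitting) — is routine bookkeeping inherited from the hypotheses of \cref{thm:normality-ncf,thm:normality-cf}.
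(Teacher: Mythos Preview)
The paper offers no separate proof of the corollary; it is presented as immediate from \cref{thm:normality-ncf,thm:normality-cf}, and your plan—specialize those theorems to $\mu_t'=\mu_t^\star$, $d_t'=d_t^\star$ and then match the resulting sandwich \eqref{eq:avar} to the bound—is exactly the intended argument.

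There is one slip in the final identification. Writing $A:=\PP\{\partial_\beta\psi(\beta^\star)\}$ and $B:=\PP\{\psi(\beta^\star)^{\otimes2}\}$, the sandwich is $V=A^{-1}BA^{-1,T}$ whereas the expression displayed in \cref{thm:efficiency-bound} is $AB^{-1}A^T$; these are matrix \emph{inverses}, not equal. The displayed quantity is the semiparametric information in Chamberlain's sense, and ``achieving the bound'' means $V$ equals its inverse. What makes this go through is not that $A=B$ per se but that $A$ is \emph{symmetric}: since $\phi_t=R_t\,\partial_\beta\gamma_t$ with $R_t$ scalar, $\partial_\beta R_t$ proportional to $\partial_\beta\gamma_t$ for both links, and $\EE\{R_t(\beta^\star)\mid S_t\}=0$, the matrix $J_t:=\EE\{\partial_\beta\phi_t(\beta^\star)\mid S_t\}$ is a scalar multiple of $(\partial_\beta\gamma_t)(\partial_\beta\gamma_t)^T$; hence $A=\sum_t\EE[J_tV_t^{-1}J_t]$ is symmetric, and the pure matrix identity $(AB^{-1}A^T)^{-1}=A^{-T}BA^{-1}=A^{-1}BA^{-1,T}=V$ holds for \emph{arbitrary} $B$.

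This also dissolves the cross-term worry you flag. The cancellation $\EE\{d_t^\star\phi_t\phi_u^T(d_u^\star)^T\}=0$ is needed inside the proof of \cref{thm:efficiency-bound} (to identify the displayed expression as the true bound), but once that theorem is taken as given, the corollary needs only $A=A^T$, not a termwise decomposition of $B$. Your proposed appeal to (WA-1) is therefore unnecessary here—and, as you half-suspected, would not quite work on its own anyway, since (WA-1) constrains dependence on past $A_u$ while $\phi_t$ depends on all of $H_t$ through $\mu_t^\star(H_t,\cdot)$, not merely on the variables listed in (WA-1).
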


\begin{rmk}[Improving efficiency over existing methods]
	\label{rmk:improved-efficiency}
	\normalfont
	Because \cref{thm:normality-cf,thm:normality-ncf} allow extremely flexible nonparametric models to be used for estimating $\hat\mu_t$ and $\hat{d}_t$ without any convergence rate requirements, the conditions in \cref{cor:achieving-efficiency-bound} for $\hat\beta$ and $\check\beta$ to achieve the efficiency bound are not difficult to satisfy. This implies that under the assumptions of \cref{thm:efficiency-bound}, the proposed estimators will asymptotically be more efficient than existing estimators including WCLS and EMEE. In particular, the proposed estimators replace the nuisance model $b_t(H_t)^T\alpha$ in WCLS \eqref{eq:ee-wcls} with $(\tp_t + p_t - 1)f_t(S_t)^T\beta + (1-p_t)\mu_1 + p_t \mu_0$, and it replaces $\exp\{b_t(H_t)^T\alpha\}$ in EMEE \eqref{eq:ee-emee} with $(1-p_t)\exp\{-f_t(S_t)^T\beta\}\mu_1 + p_t \mu_0$.
\end{rmk}

\begin{rmk}[Limit of the estimators when CEE model is wrong]
	\label{rmk:dt-determines-projection}
	\normalfont
	The CEE model \eqref{eq:linear-model-cee} itself may be misspecified: for example, one may set $S_t = \emptyset$ and $\gamma_t(S_t;\beta) = \beta_0$ for $t\in[T]$ in order to estimate a single marginal effect, but the truth CEE may vary across $t$. In such settings, the form of $d_t$ in Algorithms \ref{algo:estimator-ncf} and \ref{algo:estimator-cf} affects the probability limit of $\hat\beta$ and $\check\beta$. In the case of misspecified CEE, setting $d_t = 1$ for $t\in[T]$ results in an average-over-time CEE as the probability limit; Remark 2 of \citet{boruvka2018} and Remark 5 of \citet{qian2021estimating} discussed this for WCLS and EMEE methods. Modeling $d_t$ may result in a different probability limit. Therefore, assessing whether different choices of $d_t$ leads to drastically different $\hat\beta$ or $\check\beta$ is a way to assess whether the CEE model is misspecified. If CEE misspecification is a concern, setting $d_t = 1$ in Step 3 of Algorithm \ref{algo:estimator-ncf} can lead to a more interpretable estimand while still achieving efficiency improvement over existing methods in the literature.
\end{rmk}

\section{Asymptotic Theory for Globally Robust Z-Estimators}
\label{sec:theory}

We develop the asymptotic theory for a general class of two-stage estimators that encompasses the estimators in \cref{sec:methods}. Consider an estimating function $m(\theta, \eta)$, where $\theta \in \Theta$ is a finite-dimensional parameter of interest with an unknown true value $\theta^\star$, and $\eta \in \cT$ is a nuisance parameter. We consider a two-stage estimator for $\beta$, where in the first stage an estimator $\hat\eta$ for the nuisance parameter $\eta$ is obtained, and in the second stage $\PP_n\{ m(\theta, \hat\eta) \} = 0$ is solved to obtain $\hat\theta$. The $\hat\beta$ in Algorithm \ref{algo:estimator-ncf} is such an estimator with Steps 1-3 being the first stage and Step 4 being the second stage. Later we will also consider a cross-fitting version of the two-stage estimator, of which $\check\beta$ in Algorithm \ref{algo:estimator-cf} is a special case.

We focus on the scenario where the estimating function $m(\theta, \eta)$ is globally robust:

\begin{defn}
	\label{def:globally-robust}
	$m(\theta, \eta)$ is globally robust if $\PP\{ m(\theta^\star, \eta) \} = 0$ for all $\eta \in \cT$.
\end{defn}

\begin{rmk}
	\normalfont
	Globally robust estimating functions are a subset of Neyman orthogonalized or locally robust estimating functions, which require that the Gateaux derivative of $\PP\{ m(\theta^\star, \eta) \}$ with respect to $\eta$ evaluates to 0 at some $\eta^\star$ \citep{chernozhukov2018double,chernozhukov2022locally}. Globally robust estimating functions often arise when the nuisance function $\eta$ is not needed for identifiability and only included to increase precision \citep{lok2021estimating}. The estimating function $\sum_{t=1}^T d_t(S_t) \phi_t(\beta, \mu_t)$ in \cref{sec:methods} is such an example, where both $d_t$ and $\mu_t$ are included to increase precision only and can be arbitrarily misspecified without hindering the asymptotic normality of the estimator for $\beta$ (\cref{thm:normality-ncf,thm:normality-cf}). The global robustness of $\sum_{t=1}^T d_t(S_t) \phi_t(\beta, \mu_t)$ is proven in Appendix. In the MRT literature, WCLS \citep{boruvka2018}, EMEE \citep{qian2021estimating}, and R-WCLS and DR-WCLS \citep{shi2023meta} are all globally robust when the randomization probability is known. Other notable examples include regression adjustment estimators for randomized experiments \citep{lin2013agnostic}.
\end{rmk}

We make the following assumptions for establishing the asymptotic normality of the two-stage estimator $\hat\theta$.

\begin{asu}
	\label{asu:unique-zero}
	For any $\eta \in \cT$, $\theta^\star$ is the unique solution to $\PP\{m(\theta,\eta)\} = 0$.
\end{asu}

\begin{asu}
	\label{asu:nuisance-conv-general}
	Suppose there exists $\eta' \in \cT$ such that the following hold.

	\begin{asulist}
		\item \label{asu:nuisance-conv-PPee-sup} $\sup_{\theta \in \Theta} | \PP m(\theta, \hat\eta) - \PP m(\theta, \eta') | = o_P(1)$;
		\item \label{asu:nuisance-conv-ee-l2} $\| m(\theta^\star, \hat\eta) - m(\theta^\star, \eta') \|^2  := \int |m(\theta^\star, \hat\eta) - m(\theta^\star, \eta')|^2 dP = o_P(1)$;
		\item \label{asu:nuisance-conv-PPee-deriv} $| \PP \{\partial_\theta m(\theta^\star, \hat\eta)\} - \PP \{\partial_\theta m(\theta^\star, \eta')\} | = o_P(1)$;
		\item \label{asu:nuisance-conv-PPee-meat} $| \PP \{m(\theta^\star, \hat\eta) m(\theta^\star, \hat\eta)^T\} - \PP \{m(\theta^\star, \eta') m(\theta^\star, \eta')^T \} | = o_P(1)$.
	\end{asulist}
\end{asu}

\begin{asu}
	\label{asu:reg-general}
	Suppose the following hold: (i) The parameter space $\Theta$ of $\theta$ is compact. (ii) The support of $O$ is bounded. (iii) $\PP\{m(\theta, \eta')\}$ is a continuous function in $\theta$. (iv) Suppose $m(\theta,\eta)$ is continuously differentiable in $\theta$, and the class $\{m(\theta,\eta): \theta \in \Theta, \eta \in \cT\}$ is uniformly bounded and bounded by an integrable function. (v) $\partial_\theta m(\theta,\eta)$ is bounded by an integrable function. (vi) $m(\theta,\eta) m(\theta,\eta)^T$ is bounded by an integrable function. (vii) $\PP\{\partial_\theta m(\theta^\star,\eta')\}$ is invertible. (viii) $\partial_\theta m(\theta,\eta)$ and $m(\theta,\eta) m(\theta,\eta)^T$ are uniformly bounded. 
\end{asu}

\begin{asu}
	\label{asu:donsker}
	Suppose $\{m(\theta, \eta): \theta \in \Theta, \eta \in \cT\}$ and $\{\partial_\theta m(\theta,\eta): \theta \in \Theta, \eta \in \cT\}$ are $P$-Donsker classes.
\end{asu}

\cref{asu:unique-zero} implies that $m(\theta,\eta)$ is globally robust. \cref{asu:nuisance-conv-general} characterizes the limiting behavior of the nuisance parameter estimator through the rates that are connected to the particular statistical problem, which will be further discussed in \cref{rmk:estimating-eta-not-affecting-variance} below. \cref{asu:reg-general,asu:donsker} are standard regularity assumptions for Z-estimators \citep{van2000asymptotic}.

\begin{thm}[Asymptotic normality of $\hat\theta$]
	\label{thm:general-normality}
	Suppose \cref{asu:unique-zero,asu:nuisance-conv-general,asu:donsker,asu:reg-general} hold. Then $\sqrt{n}(\hat\theta - \theta^\star) \dto N ( 0, \Sigma)$ as $n\to\infty$, where the asymptotic variance is
	\begin{align}
		\Sigma = \PP\{\partial_\theta m(\theta^\star, \eta')\}^{-1} ~ \PP\{ m(\theta^\star, \eta') m(\theta^\star, \eta')^T \} ~ \PP\{\partial_\theta m(\theta^\star, \eta')\}^{-1, T}. \label{eq:def-asymp-var}
	\end{align}	
	In addition, $\Sigma$ can be consistently estimated by
	\begin{align*}
		\PP_n\{\partial_\theta m(\hat\theta, \hat\eta)\}^{-1} ~ \PP_n\{ m(\hat\theta, \hat\eta) m(\hat\theta, \hat\eta)^T \} ~ \PP_n\{\partial_\theta m(\hat\theta, \hat\eta)\}^{-1, T}.
	\end{align*}
\end{thm}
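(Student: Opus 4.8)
The plan is the classical two-step Z-estimation argument — consistency of $\hat\theta$, then a one-term linearization of the estimating equation — with every appearance of the first-stage estimator $\hat\eta$ at the $\sqrt n$-scale absorbed using the $P$-Donsker hypothesis (\cref{asu:donsker}), the convergence statements of \cref{asu:nuisance-conv-general}, and, decisively, global robustness: \cref{asu:unique-zero} forces $\PP\{m(\theta^\star,\hat\eta)\}=0$ along every realization of $\hat\eta$, so the ``nuisance drift'' term that ordinarily demands a convergence rate on $\hat\eta$ is identically zero here.

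\emph{Step 1 (consistency).} I would establish $\sup_{\theta\in\Theta}\lVert \PP_n m(\theta,\hat\eta) - \PP m(\theta,\eta')\rVert = o_P(1)$ by bounding it by $\sup_{\theta\in\Theta}\lVert(\PP_n-\PP)m(\theta,\hat\eta)\rVert + \sup_{\theta\in\Theta}\lVert\PP m(\theta,\hat\eta)-\PP m(\theta,\eta')\rVert$. The first term is $o_P(1)$ because $\{m(\theta,\eta):\theta\in\Theta,\eta\in\cT\}$ is $P$-Donsker and uniformly bounded (\cref{asu:donsker,asu:reg-general}), hence $P$-Glivenko--Cantelli; the second term is \cref{asu:nuisance-conv-PPee-sup}. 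Since $\Theta$ is compact, $\theta\mapsto\PP m(\theta,\eta')$ is continuous (\cref{asu:reg-general}) and has $\theta^\star$ as its unique zero (\cref{asu:unique-zero} with $\eta=\eta'$), the zero is well-separated, so the usual Z-estimator consistency lemma \citep{van2000asymptotic} gives $\hat\theta\pto\theta^\star$.

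\emph{Step 2 (linearization and CLT).} Starting from $0=\PP_n m(\hat\theta,\hat\eta)$, I would decompose
\[
0 \;=\; (\PP_n-\PP)\{m(\hat\theta,\hat\eta)-m(\theta^\star,\eta')\} \;+\; (\PP_n-\PP)m(\theta^\star,\eta') \;+\; \PP m(\hat\theta,\hat\eta).
\]
For the first term I would check $\lVert m(\hat\theta,\hat\eta)-m(\theta^\star,\eta')\rVert = o_P(1)$ — combining \cref{asu:nuisance-conv-ee-l2} with Lipschitzness of $m$ in $\theta$ (bounded, integrable derivative, \cref{asu:reg-general}) and $\hat\theta\pto\theta^\star$ — and then invoke asymptotic equicontinuity of the Donsker class to get it is $o_P(n^{-1/2})$. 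For the third term, a first-order Taylor/fundamental-theorem-of-calculus expansion in $\theta$ at fixed $\hat\eta$, together with $\PP m(\theta^\star,\hat\eta)=0$ (global robustness), gives $\PP m(\hat\theta,\hat\eta)=\bar V_n(\hat\theta-\theta^\star)$ with $\bar V_n=\int_0^1\PP\partial_\theta m(\theta^\star+s(\hat\theta-\theta^\star),\hat\eta)\,ds$; I would then show $\bar V_n\pto V_0:=\PP\partial_\theta m(\theta^\star,\eta')$ by splitting $\bar V_n-V_0$ into a $\theta$-shift piece, controlled by the uniform boundedness and continuity of $\partial_\theta m$ (\cref{asu:reg-general}) and dominated convergence since $\hat\theta\pto\theta^\star$, and an $\hat\eta$-shift piece, which is \cref{asu:nuisance-conv-PPee-deriv}. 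Putting the pieces together yields $0=\sqrt n(\PP_n-\PP)m(\theta^\star,\eta')+\{V_0+o_P(1)\}\sqrt n(\hat\theta-\theta^\star)+o_P(1)$; invertibility of $V_0$ (\cref{asu:reg-general}) first gives $\sqrt n(\hat\theta-\theta^\star)=O_P(1)$ and then the representation $\sqrt n(\hat\theta-\theta^\star)=-V_0^{-1}\sqrt n(\PP_n-\PP)m(\theta^\star,\eta')+o_P(1)$. Since $\PP m(\theta^\star,\eta')=0$ and $m(\theta^\star,\eta')$ is square-integrable (bounded support, \cref{asu:reg-general}), the Lindeberg--L\'evy CLT and Slutsky deliver $\sqrt n(\hat\theta-\theta^\star)\dto N(0,\Sigma)$ with $\Sigma$ as in \eqref{eq:def-asymp-var}.

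\emph{Step 3 (variance estimation) and the main obstacle.} The sandwich estimator is consistent once $\PP_n\partial_\theta m(\hat\theta,\hat\eta)\pto V_0$ and $\PP_n\{m(\hat\theta,\hat\eta)m(\hat\theta,\hat\eta)^T\}\pto\PP\{m(\theta^\star,\eta')m(\theta^\star,\eta')^T\}$; each is handled by splitting off a Glivenko--Cantelli empirical-process piece (the class $\partial_\theta m$ is $P$-Donsker by \cref{asu:donsker}, and the product class $mm^T$ is uniformly bounded hence $P$-GC) and a population piece treated as in Step 2 via \cref{asu:nuisance-conv-PPee-deriv,asu:nuisance-conv-PPee-meat} together with Lipschitzness in $\theta$, after which continuity of matrix inversion closes the argument. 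I expect the main obstacle to be the two places where $\hat\eta$ enters at the $\sqrt n$-scale: the empirical-process remainder $(\PP_n-\PP)\{m(\hat\theta,\hat\eta)-m(\theta^\star,\eta')\}$, which leans on the Donsker property paired with the $L_2$-consistency of \cref{asu:nuisance-conv-ee-l2}, and the convergence $\bar V_n\pto V_0$, which needs the regularity in \cref{asu:reg-general} to be strong enough to furnish equicontinuity of $\theta\mapsto\PP\partial_\theta m(\theta,\eta)$ uniformly over $\eta\in\cT$. No convergence \emph{rate} for $\hat\eta$ is required anywhere, precisely because global robustness (\cref{def:globally-robust}) makes $\PP m(\theta^\star,\hat\eta)$ vanish identically rather than contribute a bias of order $\lVert\hat\eta-\eta'\rVert$.
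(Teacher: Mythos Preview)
Your proposal is correct and uses the same key ingredients as the paper --- consistency via Glivenko--Cantelli plus a well-separated zero, global robustness to annihilate the nuisance drift $\PP m(\theta^\star,\hat\eta)$, the Donsker property together with \cref{asu:nuisance-conv-ee-l2} to control the empirical-process remainder at the $\sqrt n$ scale, and \cref{asu:nuisance-conv-PPee-deriv,asu:nuisance-conv-PPee-meat} for the sandwich pieces --- but the linearization is organized differently. The paper applies the mean value theorem directly to the \emph{empirical} average, writing $0=\PP_n m(\theta^\star,\hat\eta)+\PP_n\{\partial_\theta m(\tilde\theta,\hat\eta)\}(\hat\theta-\theta^\star)$, shows $\PP_n\{\partial_\theta m(\tilde\theta,\hat\eta)\}\pto V_0$, and then handles $\sqrt n\,\PP_n m(\theta^\star,\hat\eta)$ by noting $\PP m(\theta^\star,\hat\eta)=0$ and invoking Lemma~19.24 of \citet{van2000asymptotic} on the Donsker class with the fixed argument $\theta^\star$. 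Your route instead isolates the population drift $\PP m(\hat\theta,\hat\eta)$ and Taylor-expands \emph{that}, leaving an empirical-process remainder at the random pair $(\hat\theta,\hat\eta)$. The paper's choice is marginally cleaner here because its Donsker step only needs $\|m(\theta^\star,\hat\eta)-m(\theta^\star,\eta')\|=o_P(1)$ (exactly \cref{asu:nuisance-conv-ee-l2}), whereas yours also requires the $\theta$-Lipschitz bound to absorb $\hat\theta-\theta^\star$; conversely, your integral-form Jacobian $\bar V_n$ avoids the paper's appeal to a dominated-convergence-in-probability lemma. One small imprecision: ``uniformly bounded hence $P$-GC'' for the product class $mm^T$ is not a valid inference on its own; the paper obtains the Glivenko--Cantelli property by noting that $mm^T$ is a Lipschitz transformation of the bounded Donsker class $\cM$ and is therefore itself Donsker.
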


\begin{rmk}[Estimating $\eta$ does not affect the asymptotic variance]
	\label{rmk:estimating-eta-not-affecting-variance}
	\normalfont
	Rather than specifying the convergence in terms of $\hat\eta$ itself, \cref{asu:nuisance-conv-general} specifies the convergence of $\hat\eta$ in terms of functions of $m(\theta, \eta)$. Notably, as long as the convergence rates are $o_P(1)$, the fact that $\hat\eta$ is estimated does not affect the asymptotic variance \eqref{eq:def-asymp-var}. In fact, the two-step estimator $\hat\theta$ is asymptotically as efficient as its counterpart that solves $\PP_n \{ m(\theta, \eta')\} = 0$, i.e., with the limit $\eta'$ plugged into the estimating equation. This generalizes \citet[][Theorem 6.2]{newey1994large} and \citet[][Section 5.2]{lok2021estimating} to infinite dimensional nuisance parameter, and \cref{thm:general-normality-cf} below generalizes \citet{newey1994asymptotic} to cross-fitted nuisance parameters.
\end{rmk}

Next we describe a cross-fitted version of the two-step estimator, $\check\theta$, and establish its asymptotic normality. In the first step, we take a $K$-fold equally-sized random partition $(B_k)_{k=1}^K$ of observation indices $[n]$, and denote by $B_k^c = [n] \setminus B_k$ for $k \in [K]$. For each $k \in [K]$, an estimator $\hat\eta_k$ for the nuisance parameter $\eta$ is obtained using $(O_i)_{i \in B_k^c}$. In the second step, $K^{-1} \sum_{k=1}^K \PP_{n,k} m(\theta, \hat\eta_k) = 0$ is solved for $\check\theta$. The $\check\beta$ in Algorithm \ref{algo:estimator-cf} is such an estimator with Steps 1-2 being the first stage and Step 3 being the second stage. We make the following assumptions.

\begin{asu}
	\label{asu:nuisance-conv-general-cf}
	Suppose there exists $\eta' \in \cT$ such that the following hold.
	\begin{asulist}
		\item \label{asu:nuisance-conv-PPee-sup-cf} For each $k\in[K]$, $\sup_{\theta \in \Theta} | \PP m(\theta, \hat\eta_k) - \PP m(\theta, \eta') | = o_P(1)$;
		\item \label{asu:nuisance-conv-ee-l2-cf} For each $k\in[K]$, $ \| m(\theta^\star, \hat\eta_k)  - m(\theta^\star, \eta') \|^2  = o_P(1)$.
		\item \label{asu:nuisance-conv-PPee-deriv-cf} For each $k\in[K]$, $| \PP \{\partial_\theta m(\theta^\star, \hat\eta_k)\} - \PP \{\partial_\theta m(\theta^\star, \eta')\} | = o_P(1)$;
		\item \label{asu:nuisance-conv-PPee-meat-cf} For each $k\in[K]$, $| \PP \{m(\theta^\star, \hat\eta_k) m(\theta^\star, \hat\eta_k)^T\} - \PP \{m(\theta^\star, \eta') m(\theta^\star, \eta')^T \} | = o_P(1)$.
	\end{asulist}
\end{asu}



\begin{thm}[Asymptotic normality of $\check\theta$]
	\label{thm:general-normality-cf}
	Suppose \cref{asu:unique-zero,asu:nuisance-conv-general-cf,asu:reg-general} hold. Then $\sqrt{n}(\check\theta - \theta^\star) \dto N (0, \Sigma)$ as $n\to\infty$ with $\Sigma$ defined in \eqref{eq:def-asymp-var}. In addition, $\Sigma$ can be consistently estimated by
	\begin{align*}
		\Big[\frac{1}{K}\PP_{n,k} \{ \partial_\theta m(\check\theta, \hat\eta_k) \} \Big]^{-1} \Big[ \frac{1}{K}\PP_{n,k}  \{ m(\check\theta, \hat\eta_k)^{\otimes 2} \} \Big] \Big[\frac{1}{K}\PP_{n,k} \{ \partial_\theta m(\check\theta, \hat\eta_k) \} \Big]^{-1, T}.
	\end{align*}
\end{thm}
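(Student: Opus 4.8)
The plan is to follow the same blueprint as the proof of \cref{thm:general-normality}, but to remove the need for the $P$-Donsker condition (\cref{asu:donsker}) by a conditioning argument that exploits the sample splitting — the classical payoff of cross-fitting. Throughout, the structural fact that does the work is that \cref{asu:unique-zero} forces $m(\theta,\eta)$ to be globally robust, so that for every $k$ and every realization of $\hat\eta_k \in \cT$ one has $\PP\{m(\theta^\star,\hat\eta_k)\} = 0 = \PP\{m(\theta^\star,\eta')\}$ exactly, conditionally on the complementary fold; there is no ``drift'' bias term to control, which is precisely why no rate condition on $\hat\eta_k$ is needed.

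First I would establish consistency, $\check\theta \pto \theta^\star$. Write $M_n(\theta) := K^{-1}\sum_{k=1}^{K}\PP_{n,k}m(\theta,\hat\eta_k)$ and $M(\theta):=\PP\{m(\theta,\eta')\}$. By the standard Z-estimator consistency argument it suffices to show $\sup_{\theta\in\Theta}|M_n(\theta)-M(\theta)| = o_P(1)$ together with well-separatedness of the zero $\theta^\star$, the latter following from \cref{asu:unique-zero}, compactness of $\Theta$, and continuity of $M$ in \cref{asu:reg-general}. I would split the supremum into $K^{-1}\sum_k \sup_\theta|(\PP_{n,k}-\PP)m(\theta,\hat\eta_k)|$ plus $\sup_\theta|K^{-1}\sum_k\PP m(\theta,\hat\eta_k)-M(\theta)|$; the second term is $o_P(1)$ by part (i) of \cref{asu:nuisance-conv-general-cf}, and for the first I would condition on $(O_i)_{i\in B_k^c}$, under which $\hat\eta_k$ is deterministic and $\theta\mapsto m(\theta,\hat\eta_k)$ is a fixed, uniformly bounded, continuous (indeed Lipschitz) function on the compact set $\Theta$ by \cref{asu:reg-general}, so that a conditional uniform law of large numbers gives $\sup_\theta|(\PP_{n,k}-\PP)m(\theta,\hat\eta_k)| = o_P(1)$, and unconditioning preserves this.

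Next, given consistency, I would Taylor-expand $0 = K^{-1}\sum_k\PP_{n,k}m(\check\theta,\hat\eta_k)$ about $\theta^\star$, obtaining $0 = K^{-1}\sum_k\PP_{n,k}m(\theta^\star,\hat\eta_k) + \hat J_n(\check\theta-\theta^\star)$ with $\hat J_n = K^{-1}\sum_k\PP_{n,k}\partial_\theta m(\tilde\theta_k,\hat\eta_k)$ and $\tilde\theta_k$ on the segment between $\check\theta$ and $\theta^\star$. For the Jacobian I would show $\hat J_n\pto \PP\{\partial_\theta m(\theta^\star,\eta')\}$ via the four-term decomposition: (i) $(\PP_{n,k}-\PP)\partial_\theta m(\theta^\star,\hat\eta_k)$, handled by a conditional LLN at the fixed point $\theta^\star$ using boundedness; (ii) $(\PP_{n,k}-\PP)\{\partial_\theta m(\tilde\theta_k,\hat\eta_k)-\partial_\theta m(\theta^\star,\hat\eta_k)\}$, controlled by a conditional modulus-of-continuity that vanishes because $\tilde\theta_k\pto\theta^\star$ and $\theta\mapsto\partial_\theta m(\theta,\hat\eta_k)$ is (conditionally) a fixed Glivenko--Cantelli family; (iii) $\PP\{\partial_\theta m(\tilde\theta_k,\hat\eta_k)-\partial_\theta m(\theta^\star,\hat\eta_k)\}$, which is $o_P(1)$ by dominated convergence; and (iv) $\PP\{\partial_\theta m(\theta^\star,\hat\eta_k)\}-\PP\{\partial_\theta m(\theta^\star,\eta')\}$, which is $o_P(1)$ by part (iii) of \cref{asu:nuisance-conv-general-cf}. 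For the leading term I would use global robustness to write $K^{-1}\sum_k\PP_{n,k}m(\theta^\star,\hat\eta_k) = K^{-1}\sum_k(\PP_{n,k}-\PP)m(\theta^\star,\hat\eta_k)$, then add and subtract $m(\theta^\star,\eta')$: the $\eta'$ part telescopes to $(\PP_n-\PP)m(\theta^\star,\eta')$, and the remainder $K^{-1}\sum_k(\PP_{n,k}-\PP)\{m(\theta^\star,\hat\eta_k)-m(\theta^\star,\eta')\}$ is $o_P(n^{-1/2})$ by the cross-fitting lemma: conditional on $(O_i)_{i\in B_k^c}$ it has mean zero and conditional variance at most $\|m(\theta^\star,\hat\eta_k)-m(\theta^\star,\eta')\|^2/|B_k| = o_P(1)\cdot O(n^{-1})$ by part (ii) of \cref{asu:nuisance-conv-general-cf}, so conditional Chebyshev finishes it. Multiplying by $\sqrt n$, applying the ordinary CLT to $\sqrt n(\PP_n-\PP)m(\theta^\star,\eta')$ (recall $\PP m(\theta^\star,\eta') = 0$), and using invertibility of $\PP\{\partial_\theta m(\theta^\star,\eta')\}$ together with Slutsky yields $\sqrt n(\check\theta-\theta^\star)\dto N(0,\Sigma)$ with $\Sigma$ as in \eqref{eq:def-asymp-var}. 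Consistency of the sandwich variance estimator then follows from the same toolkit: its bread is exactly the Jacobian limit above, its meat $K^{-1}\sum_k\PP_{n,k}m(\check\theta,\hat\eta_k)^{\otimes 2}\pto\PP\{m(\theta^\star,\eta')^{\otimes 2}\}$ comes from the analogous four-term decomposition using part (iv) of \cref{asu:nuisance-conv-general-cf}, boundedness of $mm^T$ in \cref{asu:reg-general}, and the conditional LLN, and the continuous mapping theorem concludes.

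I expect the main obstacle to be the empirical-process control in the absence of \cref{asu:donsker}. Cross-fitting disposes cleanly of the randomness in $\hat\eta_k$ — conditioning on the complementary fold freezes it, and part (ii) of \cref{asu:nuisance-conv-general-cf} then bounds the deviation from $\eta'$ through a second-moment estimate rather than an entropy bound — but the randomness in $\check\theta$ (and in $\tilde\theta_k$) still must be absorbed by a uniform-in-$\theta$ argument. Making the conditional Glivenko--Cantelli and modulus-of-continuity statements rigorous, uniformly over the arbitrary realized nuisance value, is where the boundedness and smoothness-in-$\theta$ content of \cref{asu:reg-general} is indispensable, and is the step that requires the most care.
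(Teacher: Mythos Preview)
Your proposal is correct and follows essentially the same approach as the paper: consistency via well-separatedness plus a conditional Glivenko--Cantelli argument with $\hat\eta_k$ frozen, a Taylor expansion whose Jacobian is controlled by the same decomposition (the paper merges your terms (i)--(ii) into one conditional GC step but the content is identical), and the key cross-fitting step for the leading term via a conditional second-moment/Chebyshev bound using \cref{asu:nuisance-conv-ee-l2-cf}. Your closing diagnosis is also accurate: the delicate point is precisely making the uniform-in-$\theta$ empirical-process statements go through conditionally, which the paper does by invoking Example~19.8 of \citet{van2000asymptotic} for the class $\{m(\theta,\hat\eta_k):\theta\in\Theta\}$ with $\hat\eta_k$ fixed, relying on the boundedness parts of \cref{asu:reg-general}.
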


It is evident from comparing \cref{thm:general-normality,thm:general-normality-cf} that cross-fitting removes the Donsker class requirement. 

\begin{rmk}
	\label{rmk:discuss-cf-convergence-assumptions}
	\normalfont
	\citet[][Theorem 3.3]{chernozhukov2018double} established the asymptotic normality of cross-fitted estimators from Neyman-orthogonalized estimating equations. Because we focus on globally robust estimating equations rather than Neyman orthogonality, we require less stringent assumptions than \citet{chernozhukov2018double}. In particular, \cref{asu:nuisance-conv-PPee-sup-cf,asu:nuisance-conv-ee-l2-cf,asu:nuisance-conv-PPee-deriv-cf,asu:nuisance-conv-PPee-meat-cf} are each weaker than or same as the $r_N$ term in their Assumption 3.4(c), the $r_N'$ term in their Assumption 3.2(c), the $\lambda_N'$ term in their Assumption 3.2(c), and the $r_N'$ term in their Assumption 3.4(c), respectively.
\end{rmk}

\section{Simulation}
\label{sec:simulation}

We conducted simulations to validate (i) the efficiency gain of the proposed estimators compared to the ones in the literature, (ii) the efficiency gain when the serial independence working assumption fails, and (iii) the benefits of cross-fitting when the ML estimator is potentially non-Donsker. Throughout the simulations, we implemented an additional small sample correction technique for the variance estimator similar to \citet{mancl2001} and \citet{qian2021estimating}, to improve the confidence interval coverage probability when the sample size is small. Each setting was simulated for 1000 replicates.

\subsection{Continuous Outcome} 
\label{subsec:simulation-continuous}

We set the total number of decision points to $T = 10$.
For each individual, their observations $(Z_t, A_t, Y_{t+1})$ is generated sequentially for $t \in [T]$. $Z_t \sim \text{Unif}[-2, 2]$ is generated independently of the past observations. $A_t$ is generated from a Bernoulli distribution with a constant randomization probability $p_t(H_t) = 0.5$. The proximal outcome is generated by $Y_{t+1} = \mu_t(H_t, A_t) + \epsilon_t$, where $\mu_t(H_t, A_t) = A_t (\beta_0 + \beta_1 Z_t) + \mu_t(H_t, 0)$ and $\mu_t(H_t, 0)$ can take one of four forms: Linear, where $\mu_t(H_t, 0) = \alpha_0 + \alpha_1 t + \alpha_2 Z_t$; simple nonlinear, where $\mu_t(H_t, 0) = \alpha_0 + \lambda_1 \{ q_{2,2}(Z_t/6 + 1/2) + q_{2,2}(t/T) \}$ with $q_{2,2}$ being the probability density function of $\text{Beta}(2,2)$ distribution; periodic, where $\mu_t(H_t, 0) = \alpha_0 + \lambda_1 \{\sin(t) + \sin(Z_t)\}$; step function, where $\mu_t(H_t, 0) = \alpha_0 + \lambda_1 \{ \indic(t \text{ is even}) + \indic(\lfloor 10Z_t \rfloor \text{ is even}) \}$ with $\lfloor \cdot \rfloor$ being the floor function. The error terms $(\epsilon_t)_{t \in [T]}$ is generated from a multivariate normal distribution with mean zero, an exponential correlation matrix $\text{Corr}(\epsilon_t, \epsilon_u) = \rho^{|t - u| / 2}$, and time-varying variance $\text{Var}(\epsilon_t) = (t-1)\lambda_2 + \lambda_3$. We set $\beta_0 = 0.5$, $\beta_1 = 0.2$, and $\alpha_0 = \alpha_1 = \alpha_2 = 1$. The tuning parameters $\lambda_1$, $\lambda_2$, $\lambda_3$, and $\rho$ characterize the nonlinearity in $\mu_t(H_t, 0)$, the nonstationarity over time in the variance of the outcome, the magnitude of the variance of the outcome, and within-subject correlation in the outcome, respectively.

We aim to estimate the marginal CEE with $S_t = \emptyset$ and $g$ being the identity link, with true value equal to $\beta_0$. We set $\rho = 0.5$, $\lambda_3 = 1$, and varied $\lambda_1 \in [0,3]$, and $\lambda_2 \in [0,3]$. We set the control variables used in fitting $\hat\mu_t$ as either $t$ alone or $(t, Z_t)$. In addition to WCLS, we included three variations of $\hat\beta$: GAM, where Step 1 of Algorithm \ref{algo:estimator-ncf} uses a generalized additive model with penalized splines \citep{wood2011fast}; RF, where Step 1 uses a fast implementation of random forests \citep{wright2017ranger}; SL, where Step 1 uses the super learner ensemble where the individual learners include sample mean, generalized linear model, GAM, multivariate spline, random forest, XGBoost, and neural network \citep{polley2021superlearner}. For all three estimators, Step 3 of Algorithm \ref{algo:estimator-ncf} uses $t$-specific empirical averages to approximate the two expectations. We also included the cross-fitting versions of GAM, RF, and SL ($\check\beta$ from Algorithm \ref{algo:estimator-cf} where the nuisance parameters are fitted with GAM, RF, SL, respectively). Finally, to investigate the theoretical limit of efficiency gain, we included an oracle estimator, which is the $\beta$ that solves $\PP_n \sum_{t=1}^T d_t^\star(S_t) \phi_t(\beta, \mu_t^\star) = 0$.

The left panels of \cref{fig:simulation-cont-control_pattern} and \cref{fig:simulation-cont-error_var_pattern} show that the mean squared error (MSE) of all estimators decreases to 0 as sample size $n$ increases in all scenarios. This demonstrates the robustness of the estimators because they are consistent even when the working model for Step 1 of Algorithm \ref{algo:estimator-ncf} is misspecified (e.g., when $t$ alone is included in the control variable). 

The middle panels of \cref{fig:simulation-cont-control_pattern} and \cref{fig:simulation-cont-error_var_pattern} show that all estimators have close to nominal confidence interval coverage (95\%) except for the non-cross-fitting version of RF and SL. The lower-than-nominal coverage of RF and SL is due to random forests potentially not satisfying the Donsker condition required by \cref{thm:general-normality}, and the coverage issue is fixed by using cross-fitting.


In the right panel of \cref{fig:simulation-cont-control_pattern}, as the nonlinearity in $\mu_t(H_t, 0)$ increases (i.e., larger $\lambda_1$), the relative efficiency of the proposed estimators against WCLS increases. When the true $\mu_t(H_t, 0)$ takes a linear form, the performance of GAM, RF, and SL is almost identical to WCLS. When the true $\mu_t(H_t, 0)$ takes a nonlinear form, GAM, RF, and SL are more efficient than WCLS, even when the working model for Step 1 of Algorithm \ref{algo:estimator-ncf} is misspecified. When the true $\mu_t(H_t, 0)$ takes a simple nonlinear and periodic form, GAM, RF, and SL are close to Oracle in terms of relative efficiency when $(t,Z_t)$ are included in the control variables. In the right panel of \cref{fig:simulation-cont-error_var_pattern}, as the nonstationarity over time in variance of the outcome increases (i.e., larger $\lambda_2$), the relative efficiency of the proposed estimators against WCLS increases. Even when the true $\mu_t(H_t, 0)$ is linear, the proposed estimator shows a substantial efficiency gain. Across all settings, the proposed estimator shows meaningful efficiency gain with a relative efficiency up to 3. The efficiency gain is partially due to the proposed estimators approximating the nonlinear $\mu_t(H_t, A_t)$ with ML methods, and partially due to the estimating equations being optimally weighted based on the decision-point-specific variance(which is more pronounced is \cref{fig:simulation-cont-error_var_pattern}). 

Based on the simulation results, SL with cross-fitting is the best performing across various scenarios, which is intuitive because the ensemble combines the strengths of different ML methods for different scenarios, and cross-fitting ensures adequate confidence interval coverage without requiring the Donsker condition, allowing a larger class of ML methods to be used in the ensemble. Computation-wise, SL with cross-fitting is the slowest, but the speed is tolerable for data applications where there is no need for Monte Carlo repetitions. Notably, substantial efficiency gain is obtained by the proposed estimator even when the serial independence condition in \cref{thm:efficiency-bound} does not hold (as $\rho = 0.5$). This demonstrates the efficiency gain of the proposed estimator in realistic scenarios.

\begin{figure}
	\centering
	\includegraphics[width = \textwidth]{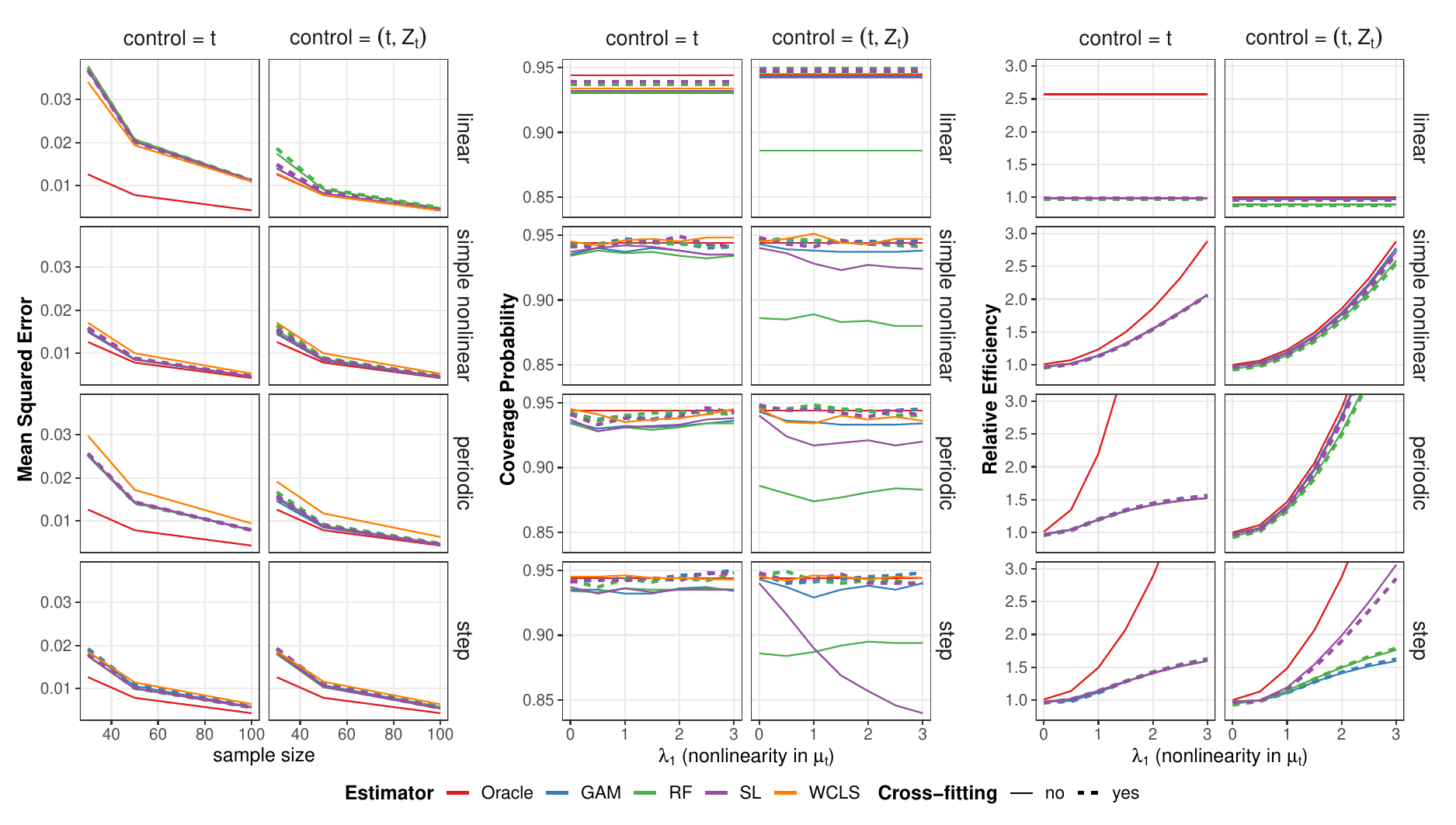}
	\small
	\caption{Simulation results for continuous outcome where we varied $\lambda_1$, the nonlinearity in $\mu_t$. The other tuning parameters were fixed at $\lambda_2 = 0$, $\lambda_3 = 1$, $\rho = 0.5$. In the left panel, $\lambda_1$ is fixed at 1 and the sample size $n$ varies from 30 to 100. In the middle and the right panels, the sample size $n$ is fixed at 100 and $\lambda_1$ varies from 0 to 3. When $\mu_t(H_t, 0)$ takes linear form, $\lambda_1$ does not enter the generative model and thus the top middle and top right panels are flat lines.}
	\label{fig:simulation-cont-control_pattern}
\end{figure}

\begin{figure}
	\centering
	\includegraphics[width = \textwidth]{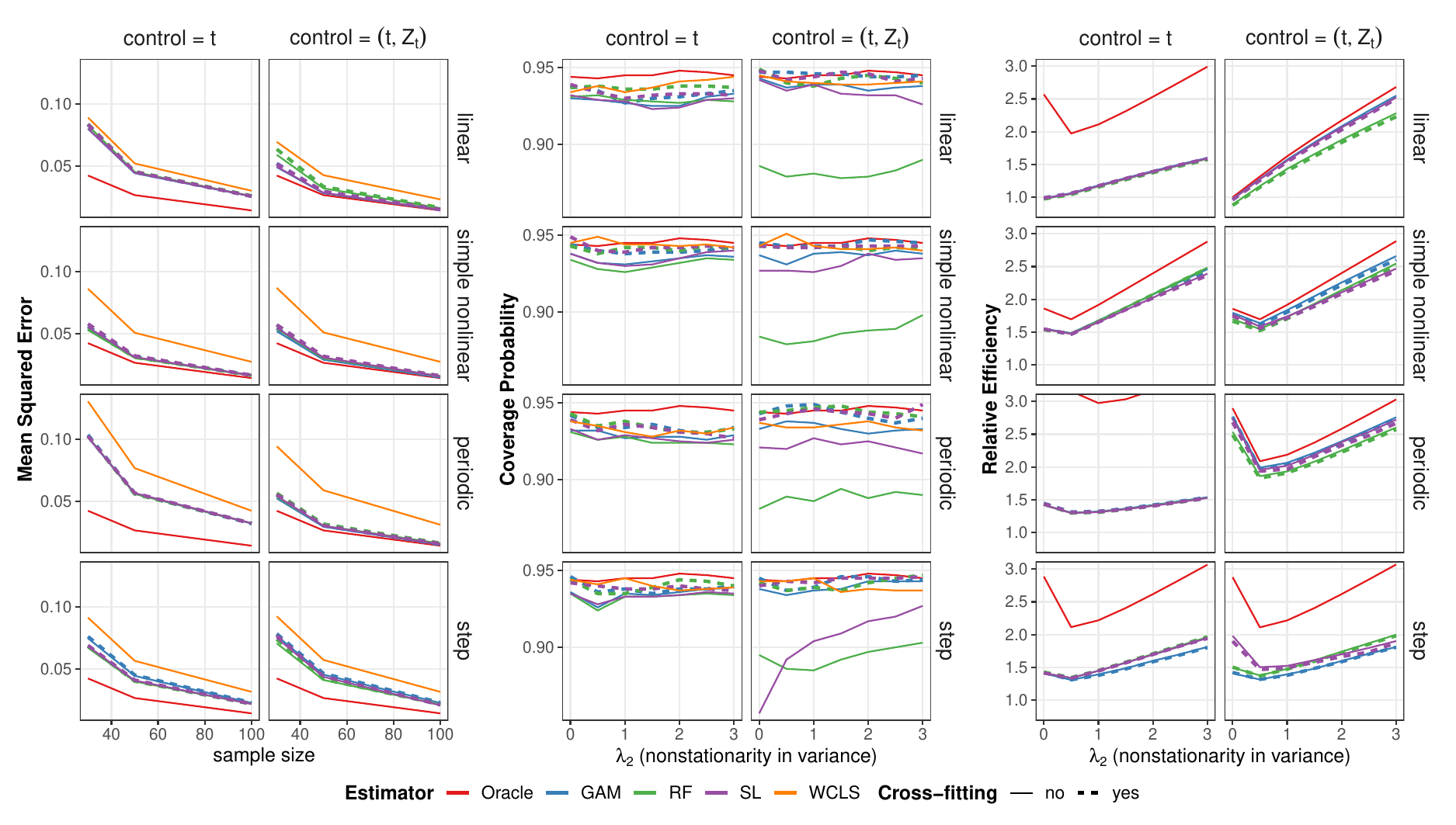}
	\small
	\caption{Simulation results for continuous outcome where we varied $\lambda_2$, the nonstationarity over time in the variance of the outcomes. The other tuning parameters were fixed at $\lambda_1 = 2$, $\lambda_3 = 1$, $\rho = 0.5$. In the left panel, $\lambda_2$ is fixed at 1 and the sample size $n$ varies from 30 to 100. In the middle and the right panels, the sample size $n$ is fixed at 100 and $\lambda_2$ varies from 0 to 3.}
	\label{fig:simulation-cont-error_var_pattern}
\end{figure}

\subsection{Binary Outcome}
\label{subsec:simulation-binary}

The generation of covariate $Z_t$ and treatment $A_t$ for the binary outcome simulation is the same as the continuous outcome generative model in \cref{subsec:simulation-continuous}. The binary proximal outcome $Y_{t,1}$ is generated from $\text{Bernoulli}(\mu_t(H_t, A_t))$, where $\mu_t(H_t, A_t) = \exp\{A_t (\beta_0 + \beta_1 Z_t)\} \times \mu_t(H_t, 0)$ and $\mu_t(H_t, 0)$ can take one of four forms: Log-linear, where $\log\mu_t(H_t, 0) = \alpha_0 + \alpha_1 t/T + \alpha_2 (Z_t/6 + 1/2) + \rho Y_{t-1, 1} + \alpha_3 (t-1)/T$; simple nonlinear, where $\log\mu_t(H_t, 0) = \alpha_0 + 2 \times (1-\lambda) +  2/3 \times \lambda\{ q_{2,2}(Z_t/6 + 1/2) + q_{2,2}(t/T) + \rho Y_{t-1, 1}\} + \alpha_3 (t-1)/T$ with $q_{2,2}$ being the probability density function of $\text{Beta}(2,2)$ distribution; periodic, where $\log \mu_t(H_t, 0) = \alpha_0 + 2 \times (1-\lambda) +  1/2 \times \lambda \{\sin(t / 5) + \sin(Z_t) + 2\} + \rho Y_{t-1, 1} + \alpha_3 (t-1)/T$; step function, where $\log\mu_t(H_t, 0) = \alpha_0 + 2 \times (1-\lambda) + \lambda \{ \indic(\lfloor t/5 \rfloor \text{ is even}) + \indic(\lfloor 2Z_t \rfloor \text{ is even}) \} + \rho Y_{t-1, 1} + \alpha_3 (t-1)/T$ with $\lfloor \cdot \rfloor$ being the floor function.  We set $\beta_0 = 0.225$, $\beta_1 = 0.025$, $\alpha_0 = -2.5$, $\alpha_1 = \alpha_2 = 1$, and $\alpha_3 = 0.05$. The tuning parameters $\lambda$ and $\rho$ capture different magnitudes of nonlinearity in $\mu_t(H_t, 0)$ and within-subject correlation in the outcome, respectively.  The particular form of $\mu_t(H_t, 0)$ and the parameter values are chosen so that $\mu_t(H_t, A_t) \in (0,1)$.

We aim to estimate the marginal CEE with $S_t = \emptyset$ and $g$ being the log link, with true value equal to $\beta_0$.
We set $\lambda \in [0,1]$, $\rho = 0.1$. The choice of control variables and the implementations of $\hat\beta$ and $\check\beta$ are the same as \cref{subsec:simulation-continuous}, except that Step 3 of Algorithm \ref{algo:estimator-ncf} uses a pooled-across-time multivariate spline to approximate the two expectations. 


The simulation results are presented in \cref{fig:simulation-binary-control_pattern}. The conclusions are qualitatively similar to the continuous outcome simulation (\cref{subsec:simulation-continuous}). All estimators are consistent and robust against misspecified Step 1 working model; all estimators have good confidence interval coverage except for non-cross-fitting RF and SL; all the proposed estimators improve efficiency over EMEE and the efficiency gain increases as the nonlinearity parameter $\lambda$ increases, with the oracle being the most efficient; and using cross-fitting does not affect the efficiency of the proposed estimator. The efficiency gain by the proposed estimator here is smaller in magnitude compared to the continuous outcome simulation, possibly due to the limited variability and nonlinearity under the constraint $\mu_t(H_t, A_t) \in (0,1)$. Nonetheless, the proposed estimator still provides meaningful efficiency gain with a relative efficiency up to 1.16.


\begin{figure}
	\centering
	\includegraphics[width = \textwidth]{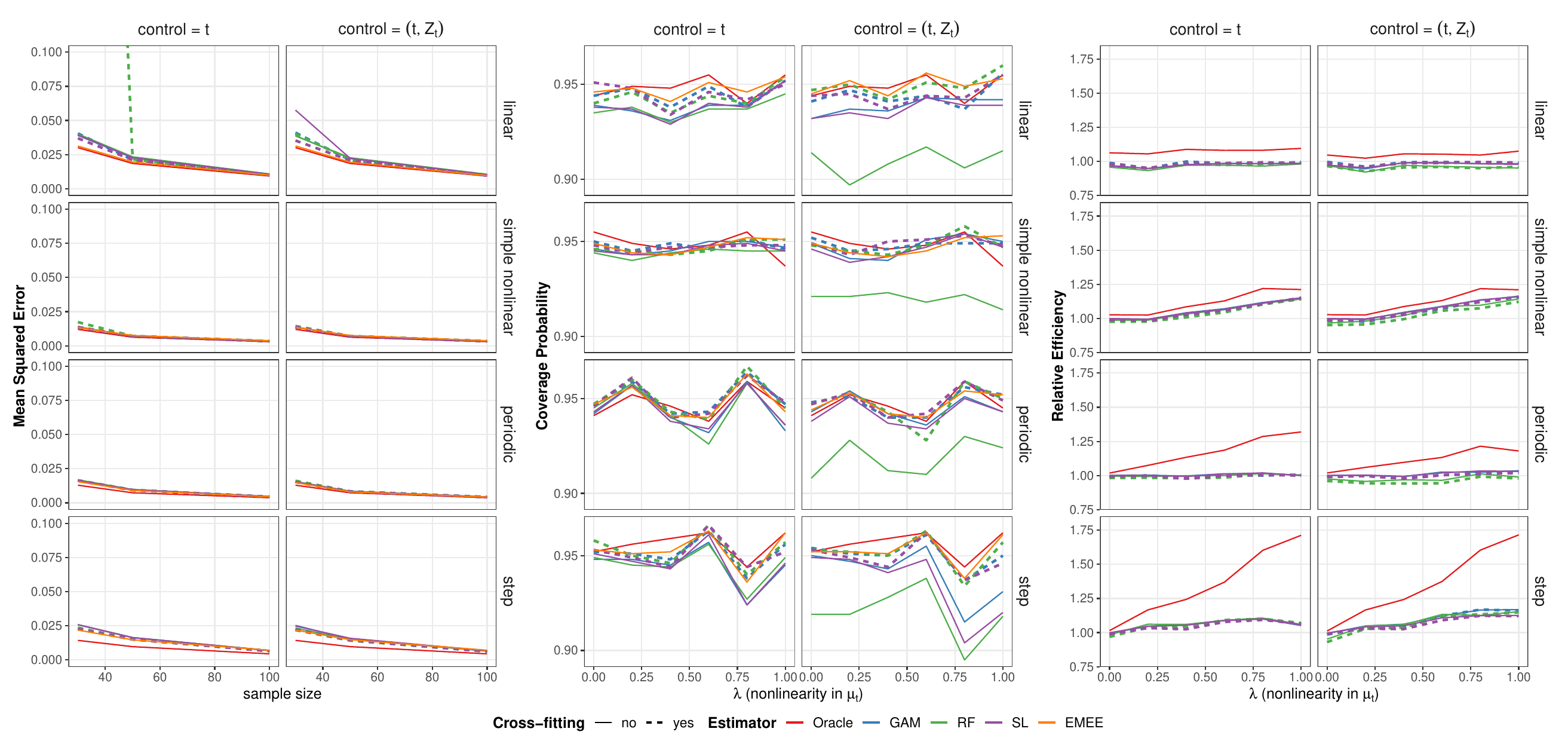}
	\small
	\caption{Simulation results for binary outcome where we varied $\lambda$, the nonlinearity in $\mu_t$. The other tuning parameter was fixed at $\rho = 0.1$. In the left panel, $\lambda$ is fixed at 0.8 and the sample size $n$ varies from 30 to 100. In the middle and the right panels, the sample size $n$ is fixed at 100 and $\lambda$ varies from 0 to 1. The large Mean Squared Error for cross-fitting RF is due to not finding the most optimal RF algorithm parameters on small sample size after sample splitting. We elaborate in more details in \ref{sec:discussion}. }
	\label{fig:simulation-binary-control_pattern}
\end{figure}

\subsection{Count Outcome}
\label{subsec:simulation-count}

The generation of covariate $Z_t$ and treatment $A_t$ for the binary outcome simulation is the same as the continuous outcome generative model in \cref{subsec:simulation-continuous}. The count proximal outcome $Y_{t,1}$ is generated from $\text{Poisson}(\mu_t(H_t, A_t))$, where $\mu_t(H_t, A_t) = \exp\{A_t (\beta_0)\} \times \mu_t(H_t, 0)$ and $\mu_t(H_t, 0)$ can take one of four forms: Log-linear, where $\log\mu_t(H_t, 0) = \alpha_0 + \alpha_1 t + \rho Y_{t-1, 1}$; simple nonlinear, where $\log\mu_t(H_t, 0) = \alpha_2 + \lambda q_{2,2}(t/T) + \rho Y_{t-1, 1}$ with $q_{2,2}$ being the probability density function of $\text{Beta}(2,2)$ distribution; periodic, where $\log \mu_t(H_t, 0) = \alpha_2 + \lambda \sin(t) + \rho Y_{t-1, 1}$; step function, where $\log\mu_t(H_t, 0) = \alpha_2 + \lambda \indic(t \text{ is even}) + \rho Y_{t-1, 1}$.  We set $\beta_0 = 0.1$, $\alpha_0 = -5$, $\alpha_1 = 0.8$, and $\alpha_2 = 0.5$. The tuning parameters $\lambda$ and $\rho$ capture different magnitudes of nonlinearity in $\mu_t(H_t, 0)$ and within-subject correlation in the outcome, respectively.

We aim to estimate the marginal CEE with $S_t = \emptyset$ and $g$ being the log link, with true value equal to $\beta_0$.
We set $\lambda \in [0,1]$, $\rho = 0.01$. The choice of control variables and the implementations of $\hat\beta$ and $\check\beta$ are the same as \cref{subsec:simulation-continuous}, except that SL-based $\hat\beta$ and $\check\beta$ are replaced by STACK-based ones, where Step 1 of Algorithm \ref{algo:estimator-ncf} uses a variation of stacking ensemble that takes a weighted average of the individual learners that include generalized linear model, GAM, multivariate spline, and random forest. After each individual learner is trained, a poisson regression regressing the outcome on the predictions from individual learners is fitted to determine the weight for each individual learner. More details for the STACK-based estimator is provide in the Appendix \tq{section number}. We implemented the STACK-based estimators because the super learner package does not tailor to count outcomes. Step 3 of Algorithm \ref{algo:estimator-ncf} uses a pooled-across-time multivariate spline to approximate the two expectations. 

The simulation results are presented in \cref{fig:simulation-count-control_pattern}. Qualitatively, the conclusions are similar to the continuous outcome simulation (\cref{subsec:simulation-continuous}). All estimators are consistent and robust against misspecified Step 1 working model; the confidence interval coverage of the cross-fitting version of GAM, RL, and STACK is better than that of the non-cross-fitting version; all the proposed estimators improve efficiency over EMEE and the efficiency gain increases as the nonlinearity parameter $\lambda$ increases, with the oracle being the most efficient; and using cross-fitting does not affect the efficiency of the proposed estimator. Across all settings, the proposed estimator shows substantial efficiency gain with a relative efficiency up to 1.75.


\begin{figure}
	\centering
	\includegraphics[width = \textwidth]{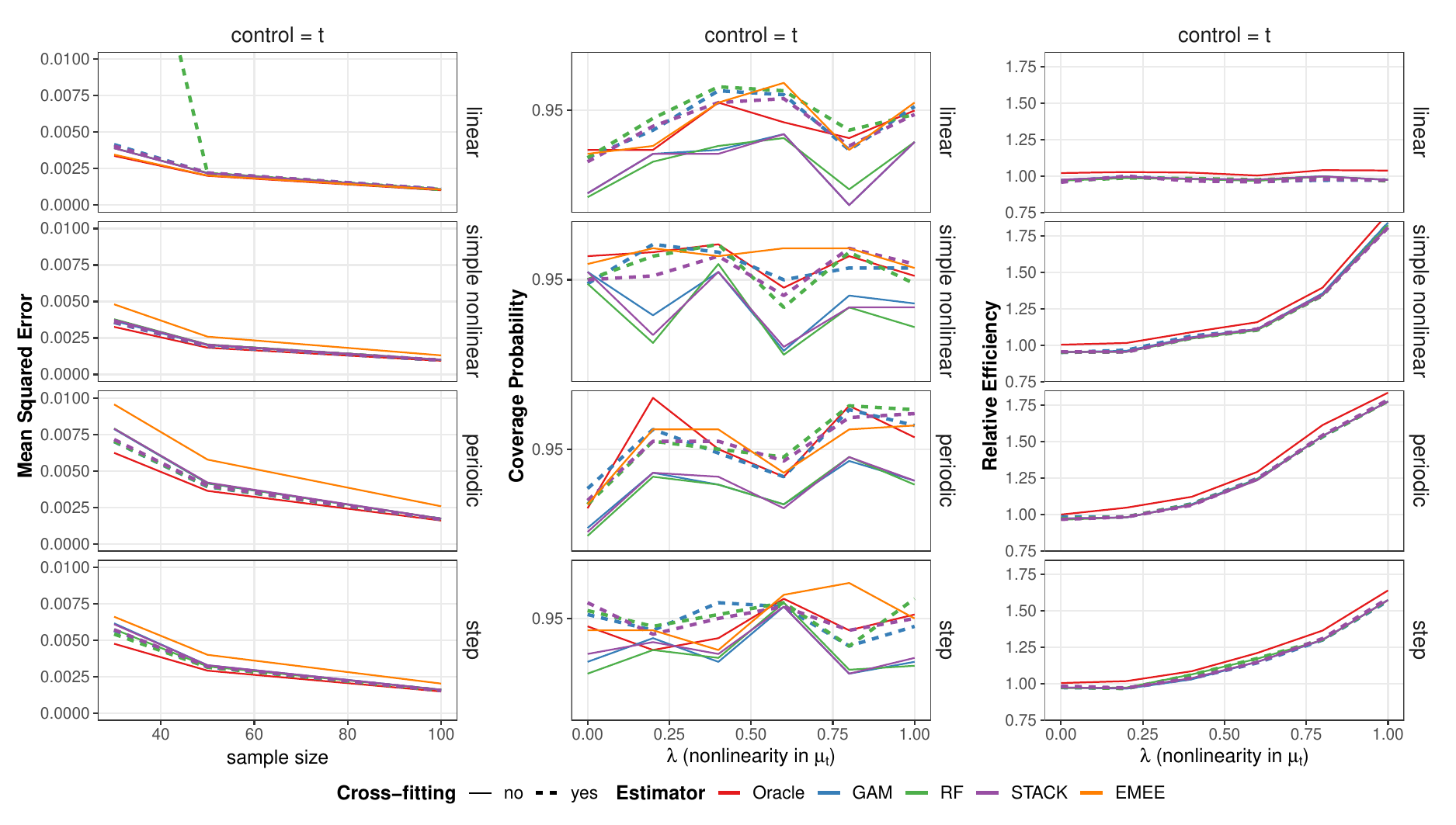}
	\small
	\caption{Simulation results for count outcome where we varied $\lambda$, the nonlinearity in $\mu_t$. The other tuning parameter was fixed at $\rho = 0.01$. In the left panel, $\lambda$ is fixed at 0.8 and the sample size $n$ varies from 30 to 100. In the middle and the right panels, the sample size $n$ is fixed at 100 and $\lambda$ varies from 0 to 1. The large Mean Squared Error for cross-fitting RF is due to not finding the most optimal RF algorithm parameters on small sample size after sample splitting. We elaborate in more details in \ref{sec:discussion}. }
	\label{fig:simulation-count-control_pattern}
\end{figure}

\section{Application: Drink Less MRT}
\label{sec:application}

Drink Less is a smartphone app aimed at reducing harmful alcohol consumption \citep{garnett2019development, garnett2021refining}. An MRT was conducted to assess the effect of push notifications on user engagement \citep{bell2020notifications}. 349 participants were randomized daily at 8pm for 30 days to either receive an engagement prompt (0.6 probability) or nothing (0.4 probability). The prompts encouraged monitoring of drinking habits via the app. Researchers observed a substantial increase in app engagement within the first hour after the notification and a smaller effect up to 24 hours after the notification \citep{bell2023notifications}.

We illustrate the efficiency gain of our methods using the Drink Less data set through analyzing the CEE of the push notification. We considered the continuous proximal outcome to be the seconds of app interaction within one hour of the decision point, i.e., between 8pm and 9pm, binary proximal outcome to be whether the user interacts with the app within one hour of the decision point, and count proximal outcome to be the number of times user interacts with the app within one hour of the decision point. We considered a marginal analysis ($S_t = \emptyset$) and a moderation analysis ($S_t = t$, the decision point index). For each analysis, we applied the proposed two-stage estimator using super learner with cross-fitting (SL.CF), generalized additive model with cross-fitting (GAM.CF) and without cross-fitting (GAM), and random forest with cross-fitting (RF.CF). For count outcome, SL.CF is replaced by stacking with cross-fitting (STACK.CF). Super learner (or stacking) without cross-fitting and random forest without cross-fitting were not included due to their lower-than-nominal coverage probability (\cref{sec:simulation}). For comparison, we also applied the WCLS for continuous proximal outcome and the EMEE for binary and count proximal outcomes.

\cref{fig:analysis-result} shows the estimated CEE parameters along with the 95\% confidence intervals and the estimated relative efficiency against the original WCLS/EMEE method. When the moderator is the empty set, $\beta_0$ represents the marginal effect. $\beta_0$ and $\beta_1$ represent the intercept and the slope of the moderator in the CEE model. The estimated relative efficiency is the ratio between the estimated variances of two estimators, and a value greater than 1 indicates that the proposed two-stage estimator is more efficient than the original WCLS/EMEE method. The point estimates of all estimators are close to each other. For continuous proximal outcome (left column of \cref{fig:analysis-result}), GAM and RL.CF are always more efficient than WCLS with relative efficiency ranging between 1.14 and 1.40, which roughly translates to savings in sample size between 12\% ($= 1 - 1/1.14$) and 29\% ($= 1 - 1/1.21$) if one were to power a study using the proposed estimator instead of the WCLS. SL.CF and GAM.CF have comparable efficiency to WCLS. For binary and count outcomes (middle and right columns of \cref{fig:analysis-result}), all the proposed estimators have similar efficiency and are more efficient than EMEE in all settings. The efficiency gain for these two outcomes is also more pronounced, with relative efficiency up to 2.15 (RF.CF in the right middle panel), which roughly translates to 53\% savings in sample size. This analysis demonstrates the efficiency gain achieved by the proposed method in real world applications.


\begin{figure}
	\centering
	\includegraphics[width = \textwidth]{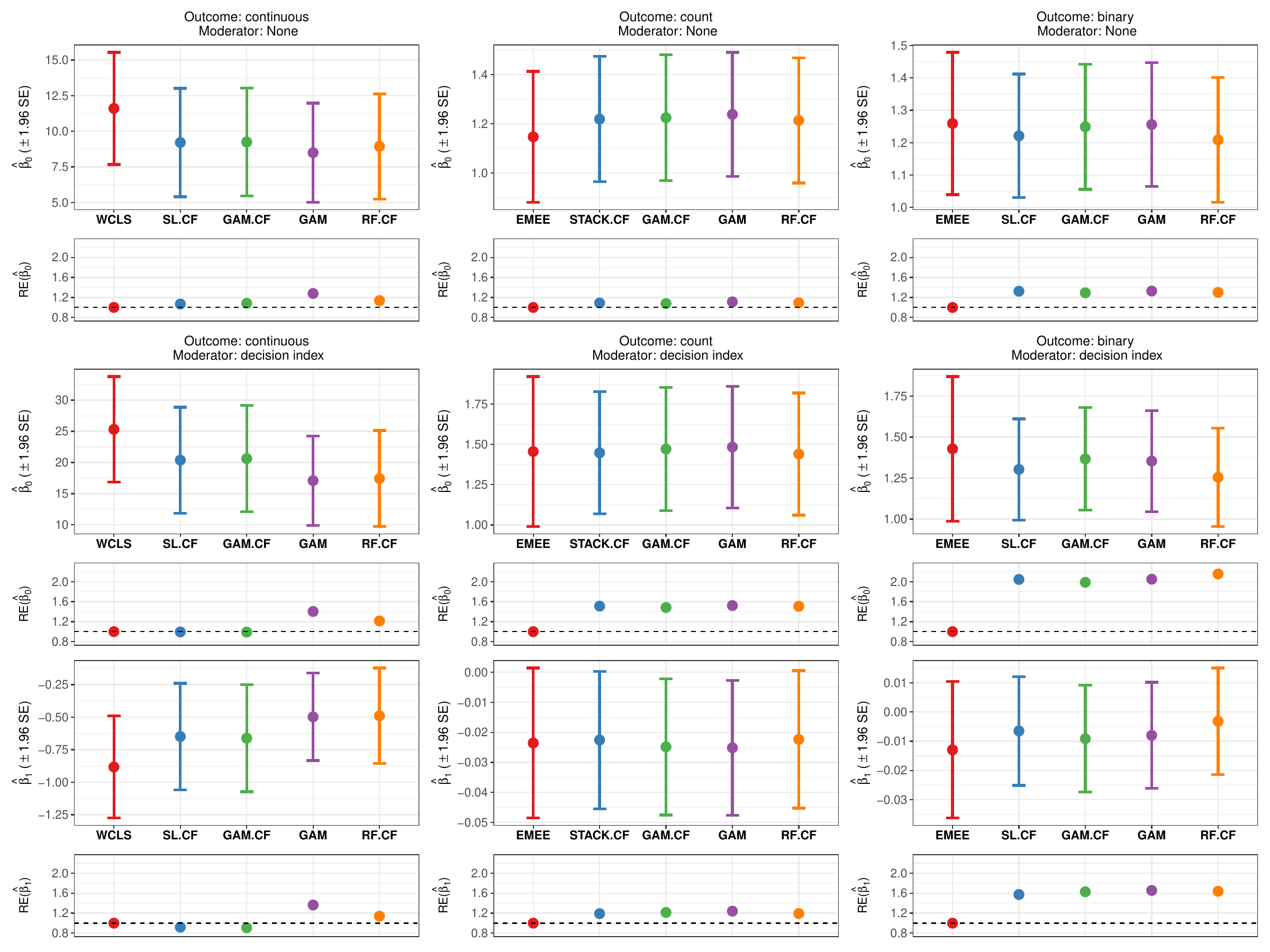}
	\small
	\caption{Analysis of Drink Less MRT for continuous, count, and binary proximal outcomes comparing the original WCLS/EMEE estimator and the proposed two-stage estimators with two moderators. In the top panel, moderator is the empty set. In the bottom panel, moderator is the decision point index.}
	\label{fig:analysis-result}
\end{figure}

\section{Discussion}
\label{sec:discussion}

In this paper, we studied the semiparametric efficient estimation of CEE. We derived a semiparametric efficiency bound under working assumptions and proposed two-stage estimators that achieves the efficiency bound. Our unified framework handles CEE with identity or link function and applies to continuous, binary, or count longitudinal outcomes. We proved the asymptotic normality of the proposed estimators (with and without cross-fitting) and established a general asymptotic theory for a class of globally robust Z-estimators. Substantial efficiency gain over widely-used estimators for CEE in the literature was demonstrated through simulation studies and real world applications. The efficiency improvement of the proposed estimator comes from two sources: nonparametrically estimating an optimal nuisance control model, and optimally weighting across decision points. The proposed estimator can be applied to primary and secondary analyses of MRTs and improve power or reduce the required sample size.

When using random forest with default parameters to estimate the nuisance model, we noticed some numerical instability when sample size is small. Tuning the parameters for random forest improved the numerical stability and brought it on par with the performance of other estimators. 
For other nonparametric methods we used, there is little improvement from parameter tuning. A practical recommendation is to always incorporate tuning parameter selection when using machine learning algorithms to fit the nuisance parameters.

The semiparametric efficiency bound is derived assuming working assumptions about serial dependence. Future work can consider the semiparametric efficient estimation without such working assumptions, which may benefit from explicitly modeling the correlation across decision points and can be feasible when the sample size is much larger than the total number of decision points.

The asymptotic regime in the paper allows the sample size $n$ to diverge with the total number of decision points $T$ fixed. This is appropriate for many MRTs including the Drink Less MRT where $n$ is greater than $T$. For MRTs where $T$ is much greater than $n$ (e.g., \citealt{battalio2021sense2stop}), however, it may be more appropriate to consider the asymptotic regime with $T\to\infty$. We leave this for future work.

We have considered CEEs with the shortest possible excursion of length 1, which are the most commonly used in MRT analysis. Future work can extend our framework to CEE with longer excursions such as those considered in \citet{shi2023meta}. Such an extension will be nontrivial because the identifiability of a CEE can depend on the distribution of future availability indicators \citep{guo2021discussion}. Techniques such as those used in \citet{murphy2001marginal} and \citet{chamberlain1992comment} may be used for deriving the semiparametric efficiency bound for such settings.

\begin{acks}[Acknowledgments]
The authors would like to thank Dr. Edward Kennedy and Dr. Bin Nan for helpful comments. The authors also thank Dr. Elizabeth Williamson, Dr. Henry WW Potts, Dr. Claire Garnett, and Dr. Olga Perski, for their important contributions to the Drink Less MRT.
\end{acks}
\begin{funding}
The second author was supported by a PhD studentship funded by the MRC Network of Hubs for Trials Methodology Research (MR/L004933/2-R18).
\end{funding}



\bibliographystyle{imsart-nameyear} 
\bibliography{mhealth-ref}       







\newpage
\begin{appendices}

\section{Proof of \texorpdfstring{\cref{thm:efficiency-bound}}{Theorem 4.1}}
\label{A-sec:proof-efficiency-bound}

\begin{lem}[CEE model as an unconditional moment restriction]
	\label{A-lem:moment-restriction}
	The CEE model \cref{eq:linear-model-cee} is equivalent to the unconditional moment restriction
	\begin{align}
		\EE\bigg\{ \frac{A_t - p_t}{p_t(1-p_t)} I_t U_t(\beta^\star) \nu_t(S_t) \bigg\} = 0 \quad \text{for all functions $\nu_t$ and for } t \in [T]. \label{A-eq:unconditional-moment-restriction}
	\end{align}
\end{lem}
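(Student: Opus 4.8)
\textit{Proof sketch.} The plan is to verify the equivalence by peeling off conditional expectations in the moment on the left of \eqref{A-eq:unconditional-moment-restriction}, first on $(H_t, A_t)$, then on $H_t$, then on $(S_t, I_t)$, and identifying the resulting quantity with the left side of \eqref{eq:linear-model-cee}. Throughout I adopt the standard convention that $\{(A_t - p_t)/[p_t(1-p_t)]\}\,I_t$ is read as $0$ on the event $\{I_t = 0\}$ (where $A_t \equiv 0$ and $p_t = 0$), so the moment effectively lives on $\{I_t = 1\}$, where \cref{asu:positivity} gives $p_t \in (\tau, 1-\tau)$. Write $m_t(h_t, a) := \EE(Y_{t+1} \mid H_t = h_t, A_t = a)$.

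First I would condition on $(H_t, A_t)$. Since $A_t$, $p_t = p_t(H_t)$, $I_t$, and $S_t$ are all functions of $(H_t, A_t)$, the tower property reduces the moment to $\EE\{(A_t - p_t)/[p_t(1-p_t)]\, I_t\, \nu_t(S_t)\, \EE[U_t(\beta^\star) \mid H_t, A_t]\}$, and a one-line computation from the definition of $U_t$ gives $\EE[U_t(\beta^\star) \mid H_t, A_t] = m_t(H_t, A_t) - A_t \gamma_t(S_t; \beta^\star)$ when $g$ is the identity link and $e^{-A_t \gamma_t(S_t;\beta^\star)} m_t(H_t, A_t)$ when $g$ is the log link. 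Next I would condition on $H_t$: on $\{I_t = 1\}$, $A_t \mid H_t \sim \mathrm{Bernoulli}(p_t)$, and for any $h:\{0,1\}\to\RR$ one has the elementary identity
\begin{align*}
\EE\big[ \tfrac{A_t - p_t}{p_t(1-p_t)}\, h(A_t) \mid H_t \big] = h(1) - h(0).
\end{align*}
Applying this with $h(A_t) = \nu_t(S_t)\,\EE[U_t(\beta^\star)\mid H_t, A_t]$ collapses the moment to $\EE[I_t\, \nu_t(S_t)\,\{m_t(H_t,1) - m_t(H_t,0) - \gamma_t(S_t;\beta^\star)\}]$ for the identity link and to $\EE[I_t\, \nu_t(S_t)\,\{e^{-\gamma_t(S_t;\beta^\star)} m_t(H_t,1) - m_t(H_t,0)\}]$ for the log link.

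Finally I would condition on $(S_t, I_t)$. Using $\EE\{m_t(H_t,a) \mid S_t, I_t=1\} = \EE\{\EE(Y_{t+1}\mid H_t, A_t=a) \mid S_t, I_t=1\}$, the moment equals $\EE\{I_t\, \nu_t(S_t)\, r_t(S_t)\}$, where for the identity link $r_t(S_t)$ is exactly the left side of \eqref{eq:linear-model-cee} minus $\gamma_t(S_t;\beta^\star)$, and for the log link $r_t(S_t) = e^{-\gamma_t(S_t;\beta^\star)}\EE\{\EE(Y_{t+1}\mid H_t, A_t=1)\mid S_t, I_t=1\} - \EE\{\EE(Y_{t+1}\mid H_t, A_t=0)\mid S_t, I_t=1\}$; in the log case $r_t(S_t)=0$ is equivalent to \eqref{eq:linear-model-cee} because both conditional means of $Y_{t+1}$ are strictly positive, so taking logs and rearranging is reversible. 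Hence, if \eqref{eq:linear-model-cee} holds then $r_t(S_t) = 0$ almost surely on $\{I_t=1\}$ and the moment vanishes for every $\nu_t$; conversely, if the moment vanishes for every $\nu_t$, taking $\nu_t = r_t$ (truncated to a bounded function if necessary, then passing to the limit by monotone convergence) gives $\EE\{I_t\, r_t(S_t)^2\} = 0$, hence $r_t(S_t) = 0$ almost surely on $\{I_t=1\}$, which is precisely \eqref{eq:linear-model-cee}.

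The computation is routine; the only steps that need care are the $\{I_t=0\}$ convention (harmless because of the $I_t$ factor) and, for the log link, checking that the relevant conditional expectations of $Y_{t+1}$ are strictly positive so that the passage between the logarithmic CEE identity and $r_t(S_t)=0$ is a genuine equivalence rather than a one-way implication — this is where non-negativity of $Y_{t+1}$ together with a strictly positive conditional mean is implicitly used. I do not anticipate any other obstacle.
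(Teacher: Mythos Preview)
Your proposal is correct, and the forward direction (the CEE model implies the unconditional moment vanishes for every $\nu_t$) follows the same iterated-expectation strategy as the paper. The converse, however, is handled differently. The paper introduces an intermediate \emph{conditional} moment restriction $\EE\{(A_t-p_t)/[p_t(1-p_t)]\,U_t(\beta^\star)\mid S_t, I_t=1\}=0$, shows it is equivalent to the CEE model, and then passes from the unconditional to the conditional restriction by first treating discrete $S_t$ (choosing $\nu_t=\indic(S_t=z_j)$) and invoking Chamberlain's multinomial approximation to extend to general $S_t$. You instead condition down to $\EE\{I_t\,\nu_t(S_t)\,r_t(S_t)\}=0$ and plug in $\nu_t=r_t$ to force $\EE\{I_t\,r_t(S_t)^2\}=0$, hence $r_t=0$ almost surely on $\{I_t=1\}$. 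Your argument is more elementary and self-contained---it avoids the external appeal to Chamberlain and the discrete-approximation detour---at the modest cost of needing the truncation/monotone-convergence remark to justify that $r_t$ (or an approximation to it) is an admissible test function. The paper's route, by contrast, only ever tests against indicators, which are trivially bounded, but then leans on a reference for the passage to general $S_t$.
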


\begin{proof}[Proof of \cref{A-lem:moment-restriction}]

We first show that the CEE model \cref{eq:linear-model-cee} is equivalent to the conditional moment restriction
\begin{align}
	\EE\bigg\{ \frac{A_t - p_t}{p_t(1-p_t)} U_t(\beta^\star) ~\bigg|~ S_t, I_t = 1 \bigg\} = 0 \quad \text{for } t \in [T], \label{A-eq:conditional-moment-restriction}
\end{align}
and then show that the conditional moment restriction \cref{A-eq:conditional-moment-restriction} is equivalent to the unconditional moment restriction \cref{A-eq:unconditional-moment-restriction}.

\textbf{Step 1:} Proving the equivalence between \cref{eq:linear-model-cee} and \cref{A-eq:conditional-moment-restriction}. Using the law of iterated expectation, the left hand side of \cref{A-eq:conditional-moment-restriction} is
\begin{align*}
	& ~~~\EE\bigg\{ \frac{A_t - p_t}{p_t(1-p_t)} U_t(\beta^\star) ~\bigg|~ S_t, I_t = 1 \bigg\} \\
	& = \EE\bigg[ \EE \bigg\{ \frac{A_t - p_t}{p_t(1-p_t)} U_t(\beta^\star) \bigg| H_t, A_t = 1 \bigg\} p_t ~\bigg|~ S_t, I_t = 1 \bigg] \\
	& ~~~ + \EE\bigg[ \EE \bigg\{ \frac{A_t - p_t}{p_t(1-p_t)} U_t(\beta^\star) \bigg| H_t, A_t = 0 \bigg\} (1-p_t) ~\bigg|~ S_t, I_t = 1 \bigg] \\
	& = \EE [ \EE\{ U_t(\beta^\star) \mid H_t, A_t = 1 \} \mid S_t, I_t = 1] - \EE [ \EE\{ U_t(\beta^\star) \mid H_t, A_t = 0 \} \mid S_t, I_t = 1].
\end{align*}
So the conditional moment restriction \cref{A-eq:conditional-moment-restriction} is equivalent to
\begin{align}
	\EE [ \EE\{ U_t(\beta^\star) \mid H_t, A_t = 1 \} \mid S_t, I_t = 1] - \EE [ \EE\{ U_t(\beta^\star) \mid H_t, A_t = 0 \} \mid S_t, I_t = 1] = 0 \quad \text{for } t \in [T], \label{A-eq:conditional-moment-restriction2}
\end{align}
If $g$ is identity, then $\EE\{ U_t(\beta^\star) \mid H_t, A_t = 1 \} = \EE(Y_{t+1} \mid H_t, A_t = 1) - \gamma_t(S_t; \beta^\star)$ and $\EE\{ U_t(\beta^\star) \mid H_t, A_t = 0 \} = \EE(Y_{t+1} \mid H_t, A_t = 0)$. If $g$ is log, then $\EE\{ U_t(\beta^\star) \mid H_t, A_t = 1 \} = \EE(Y_{t+1} \mid H_t, A_t = 1) \exp\{- \gamma_t(S_t; \beta^\star) \}$ and $\EE\{ U_t(\beta^\star) \mid H_t, A_t = 0 \} = \EE(Y_{t+1} \mid H_t, A_t = 0)$. In both cases, \cref{A-eq:conditional-moment-restriction2} is equivalent to the CEE model \cref{eq:linear-model-cee}. So we proved the equivalence between \cref{eq:linear-model-cee} and \cref{A-eq:conditional-moment-restriction}.

\textbf{Step 2:} Proving the equivalence between \cref{A-eq:conditional-moment-restriction} and \cref{A-eq:unconditional-moment-restriction}. That \cref{A-eq:conditional-moment-restriction} implies \cref{A-eq:unconditional-moment-restriction} follows immediately by the law of iterated expectation. To show that \cref{A-eq:unconditional-moment-restriction} implies \cref{A-eq:conditional-moment-restriction}, we first show this assuming that $S_t$ is discrete and takes value in $\{z_1, z_2, \ldots, z_m\}$. For any $j \in [m]$, letting $\nu_t(S_t) = \indic(S_t = z_j)$ in \cref{A-eq:unconditional-moment-restriction} implies that
\begin{align*}
	0 & = \EE\bigg\{ \frac{A_t - p_t}{p_t(1-p_t)} I_t U_t(\beta^\star) \indic(S_t = z_j) \bigg\} \\
	& = \EE\bigg\{ \frac{A_t - p_t}{p_t(1-p_t)} U_t(\beta^\star) \bigg| S_t = z_j, I_t = 1 \bigg\} P(S_t = z_j, I_t = 1),
\end{align*}
which implies that
\begin{align}
	\EE\bigg\{ \frac{A_t - p_t}{p_t(1-p_t)} U_t(\beta^\star) \bigg| S_t = z_j, I_t = 1 \bigg\} = 0. \label{A-eq:A-lem:moment-restriction:proofuse1}
\end{align}
\cref{A-eq:A-lem:moment-restriction:proofuse1} for all $j \in [m]$ combined with the multinomial approximation approach by \citet{chamberlain1987asymptotic} implies that \cref{A-eq:conditional-moment-restriction}.

This completes the proof.

\end{proof}

\begin{lem}[Projection on score functions for the treatment assignment probability]
	\label{A-lem:projection-analytic-form}
	Consider the estimating function for $\beta$:
	\begin{align}
		\xi(\beta) = \sum_{t=1}^T \xi_t(\beta) = \sum_{t=1}^T \frac{A_t - p_t}{p_t(1-p_t)} I_t U_t(\beta) \partial_\beta \gamma(S_t; \beta). \label{A-eq:primitive-ee-for-beta}
	\end{align}
	If (WA-1) in \cref{thm:efficiency-bound} holds, then
	\begin{align}
		\xi(\beta) - \sum_{u=1}^T \Big[\EE\{\xi(\beta) \mid H_u, A_u\} - \EE\{\xi(\beta) \mid H_u\}\Big] = \sum_{t=1}^T \phi_t(\beta), \label{A-eq:decomposition-xi-minus-projection}
	\end{align}
	with $\phi_t(\beta)$ defined in \cref{eq:phi-definition}.
\end{lem}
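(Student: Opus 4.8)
The plan is to compute, for each $u \in [T]$, the ``projection correction'' term $\EE\{\xi(\beta) \mid H_u, A_u\} - \EE\{\xi(\beta) \mid H_u\}$ explicitly, and show that summing these corrections and subtracting from $\xi(\beta)$ leaves exactly $\sum_{t=1}^T \phi_t(\beta)$. First I would split $\xi(\beta) = \sum_{t=1}^T \xi_t(\beta)$ and consider the contribution of each $\xi_t(\beta)$ to the term indexed by $u$. Three cases arise: $u > t$, $u = t$, and $u < t$.

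For $u > t$: since $\xi_t(\beta)$ is a function of $(H_t, A_t, \ldots)$ up through $Y_{t+1}$, and in particular is $H_u$-measurable together with its dependence on quantities realized at or before decision point $t < u$, actually $\xi_t(\beta)$ is a function of $H_{t+1} \subseteq H_u$ (it involves $H_t, A_t, I_t, Y_{t+1}$, all contained in $H_u$ when $u \ge t+1$). Hence $\EE\{\xi_t(\beta) \mid H_u, A_u\} = \EE\{\xi_t(\beta) \mid H_u\} = \xi_t(\beta)$, so the $u > t$ contributions to the correction vanish. For $u = t$: I would compute $\EE\{\xi_t(\beta) \mid H_t, A_t\}$ using that $p_t, I_t, S_t$ are $H_t$-measurable, and $\EE\{\xi_t(\beta) \mid H_t\}$ by averaging over $A_t \sim \mathrm{Bernoulli}(p_t)$; the difference, after algebra, should produce exactly the expression $\phi_t(\beta)$ in \eqref{eq:phi-definition} (this is essentially the content of the definition of $\phi_t$, which subtracts the conditional-on-$A_t$ expectations). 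For $u < t$: this is where (WA-1) is needed. The claim should be that $\EE\{\xi_t(\beta) \mid H_u, A_u\} = \EE\{\xi_t(\beta) \mid H_u\}$, i.e.\ $\xi_t(\beta) \perp A_u \mid H_u$. I would argue this by noting $\xi_t(\beta)$ is a function of $(S_t, I_t, A_t, p_t, Y_{t+1})$ and $H_t$-measurable randomization, and (WA-1) states precisely that $\{S_t, I_t, A_t, p_t(H_t), Y_{t+1}\} \perp A_u \mid H_u$ for $u < t$; combined with sequential ignorability (so that the intermediate $H_u$-to-$H_t$ evolution doesn't reintroduce dependence on $A_u$), the relevant conditional expectation does not depend on $A_u$, so the correction term vanishes.

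Putting the three cases together: the only surviving correction terms are the $u = t$ diagonal terms, each equal to $\xi_t(\beta) - \phi_t(\beta)$ (i.e.\ $\EE\{\xi_t \mid H_t, A_t\} - \EE\{\xi_t \mid H_t\}$ rearranges so that $\xi_t - (\text{that difference}) = \phi_t$), and therefore
\begin{align*}
\xi(\beta) - \sum_{u=1}^T \big[\EE\{\xi(\beta)\mid H_u, A_u\} - \EE\{\xi(\beta)\mid H_u\}\big] = \sum_{t=1}^T \xi_t(\beta) - \sum_{t=1}^T \big(\xi_t(\beta) - \phi_t(\beta)\big) = \sum_{t=1}^T \phi_t(\beta),
\end{align*}
which is \eqref{A-eq:decomposition-xi-minus-projection}.

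The main obstacle I anticipate is the $u < t$ case: carefully justifying that $\xi_t(\beta) \perp A_u \mid H_u$ from (WA-1) together with \cref{asu:sequential-ignorability}. The subtlety is that $\xi_t(\beta)$ depends not just on the tuple $\{S_t, I_t, A_t, p_t, Y_{t+1}\}$ listed in (WA-1) but one has to make sure that conditioning on $H_u$ (rather than on $H_t$) and integrating out the future is compatible with the stated independence — essentially one needs a ``tower'' argument: $\EE\{\xi_t(\beta) \mid H_u, A_u\} = \EE[\EE\{\xi_t(\beta)\mid H_t, A_t\}\mid H_u, A_u]$, then observe $\EE\{\xi_t(\beta)\mid H_t, A_t\}$ is a function of $(S_t, I_t, A_t, p_t, \ldots)$ whose $H_u, A_u$-conditional expectation, by (WA-1) and sequential ignorability applied iteratively from $u+1$ to $t$, does not involve $A_u$. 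The bookkeeping of which $\sigma$-algebra each factor is measurable with respect to, and the exact role of (WA-1) versus \cref{asu:sequential-ignorability}, is the delicate part; the $u = t$ computation and the $u > t$ vanishing are routine by comparison.
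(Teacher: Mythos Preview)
Your proposal is correct and follows exactly the same three-case decomposition ($u>t$, $u=t$, $u<t$) as the paper's proof. One simplification for your anticipated obstacle: in the $u<t$ case, $\xi_t(\beta)$ is already a function solely of the tuple $\{S_t, I_t, A_t, p_t(H_t), Y_{t+1}\}$ listed in (WA-1), so the conditional independence $\xi_t(\beta) \perp A_u \mid H_u$ follows directly from (WA-1) with no tower argument or appeal to sequential ignorability needed; also note a small slip in your $u=t$ description---the difference $\EE\{\xi_t\mid H_t,A_t\}-\EE\{\xi_t\mid H_t\}$ equals $\xi_t-\phi_t$, not $\phi_t$, as you correctly use in your final display.
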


\begin{proof}[Proof of \cref{A-lem:projection-analytic-form}]
	We first compute $\EE\{\xi_t(\beta) \mid H_u, A_u\} - \EE\{\xi_t(\beta) \mid H_u\}$ for each $(t,u)$-pair, with
	\begin{align*}
		\xi_t(\beta) = \frac{A_t - p_t}{p_t(1-p_t)} I_t U_t(\beta) \partial_\beta \gamma(S_t; \beta).
	\end{align*}

	For $u > t$, we have $\EE\{\xi_t(\beta) \mid H_u, A_u\} = \EE\{\xi_t(\beta) \mid H_u\} = \xi_t(\beta)$, so $\EE\{\xi_t(\beta) \mid H_u, A_u\} - \EE\{\xi_t(\beta) \mid H_u\} = 0$.

	For $u = t$, we have
	\begin{align}
		& ~~~ \EE\{\xi_t(\beta) \mid H_u, A_u\} \nonumber \\
		& = \frac{A_t - p_t}{p_t(1-p_t)} I_t \EE\{U_t(\beta)\mid H_t, A_t\} \partial_\beta \gamma(S_t; \beta) \nonumber \\
		& = I_t \bigg[\frac{A_t}{p_t} \EE\{U_t(\beta)\mid H_t, A_t = 1\} - \frac{1 - A_t}{1 - p_t} \EE\{U_t(\beta)\mid H_t, A_t = 0\} \bigg] \partial_\beta \gamma(S_t; \beta), \label{A-eq:A-lem:projection-analytic-form:proofuse1}
	\end{align}
	and
	\begin{align}
		& ~~~ \EE\{\xi_t(\beta) \mid H_u\} \nonumber \\
		& = \EE[\EE\{\xi_t(\beta) \mid H_t, A_t = 1\} p_t + \EE\{\xi_t(\beta) \mid H_t, A_t = 0\} (1 - p_t) \mid H_t] \nonumber \\
		& = I_t [\EE\{U_t(\beta)\mid H_t, A_t = 1\} - \EE\{U_t(\beta)\mid H_t, A_t = 0\} ] \partial_\beta \gamma(S_t; \beta). \label{A-eq:A-lem:projection-analytic-form:proofuse2}
	\end{align}
	\cref{A-eq:A-lem:projection-analytic-form:proofuse1} and \cref{A-eq:A-lem:projection-analytic-form:proofuse2} imply that when $u = t$,
	\begin{align*}
		& ~~~ \EE\{\xi_t(\beta) \mid H_u, A_u\} - \EE\{\xi_t(\beta) \mid H_u\} \\
		& = \frac{A_t - p_t}{p_t(1 - p_t)} I_t \Big[(1 - p_t) \EE\{\xi_t(\beta) \mid H_t, A_t = 1\} + p_t \EE\{\xi_t(\beta) \mid H_t, A_t = 0\} \Big] \partial_\beta \gamma(S_t; \beta).
	\end{align*}
	
	For $u < t$, under (WA-1) we have $\EE\{\xi_t(\beta) \mid H_u, A_u\} = \EE\{\xi_t(\beta) \mid H_u\}$, so $\EE\{\xi_t(\beta) \mid H_u, A_u\} - \EE\{\xi_t(\beta) \mid H_u\} = 0$.

	Putting the three cases together, we have
	\begin{align}
		& ~~~ \sum_{u=1}^T \Big[\EE\{\xi(\beta) \mid H_u, A_u\} - \EE\{\xi(\beta) \mid H_u\}\Big] \nonumber \\
		& = \sum_{t=1}^T \Big[\EE\{\xi_t(\beta) \mid H_t, A_t\} - \EE\{\xi_t(\beta) \mid H_t\}\Big] \nonumber \\
		& = \sum_{t=1}^T \frac{A_t - p_t}{p_t(1 - p_t)} I_t \Big[(1 - p_t) \EE\{\xi_t(\beta) \mid H_t, A_t = 1\} + p_t \EE\{\xi_t(\beta) \mid H_t, A_t = 0\} \Big] \partial_\beta \gamma(S_t; \beta).
	\end{align}
	\cref{A-eq:decomposition-xi-minus-projection} follows immediately. This completes the proof.
	
\end{proof}

With these lemmas, we are ready to prove \cref{thm:efficiency-bound}.

\begin{proof}[Proof of \cref{thm:efficiency-bound}]

\cref{A-lem:moment-restriction} implies that a score function for $\beta$ is $\xi(\beta)$ defined in \eqref{A-eq:primitive-ee-for-beta}. Because the efficient score must be orthogonal to the score functions for the treatment assignment probability, we replace $\xi(\beta)$ by itself minus its projection on the score functions for the treatment assignment probability. The projection is given by $\sum_{u=1}^T [\EE\{\xi(\beta) \mid H_u, A_u\} - \EE\{\xi(\beta) \mid H_u\}]$ \citep{robins1999testing}. \cref{A-lem:projection-analytic-form} implies that
\begin{align*}
	\xi(\beta) - \sum_{u=1}^T \Big[\EE\{\xi(\beta) \mid H_u, A_u\} - \EE\{\xi(\beta) \mid H_u\}\Big] = \sum_{t=1}^T \phi_t(\beta).
\end{align*}
Finally, due to (WA-2), Theorem 1 in \citep{chamberlain1992comment} implies that we can appropriately weight $\phi_t(\beta)$ to obtain the efficiency bound $\EE \{\partial_\beta \psi(\beta^\star)\} [\EE\{ \psi(\beta^\star)^{\otimes 2} \}]^{-1} \EE \{\partial_\beta \psi(\beta^\star)\}^T$.

\end{proof}


\section{Technical Details for the Asymptotic Theory for Globally Robust Z-Estimators}
\label{A-sec:general-lemmas}

\subsection{Notation and General Setup}
\label{A-subA-sec:setup}


Suppose the data observed is i.i.d. $O_i, 1 \leq i \leq n$. Let $\cO$ denote the support of $O_i$. Let $P$ denote the distribution of $O_i$. We use upper-case letters to denote random variables and lower-case letters to denote realizations. For any function $f(o)$, define $\PP(f) = \int f(o) dP(o)$, and $\PP_n (f) = \frac{1}{n} \sum_{i=1}^n f(O_i)$. Let $\| \cdot \|$ denote the $L_2$ norm on a functional space: $\| f \| = \{\int |f(o)|^2 dP(o)\}^{1/2}$. If the functional space is matrix-valued, let $\| \cdot \|$ denote $\| f \| = \{\int |f(o)|_F^2 dP(o)\}^{1/2}$, where $|\cdot|_F$ is the Frobenius norm. Let $| \cdot |$ denote the absolute value or the Euclidean norm (depending on whether the argument is a scalar, vector, or matrix). Let $+$ denote the Moore-Penrose inverse.

\subsection{Assumptions}
\label{A-subA-sec:asu-lemmas}

\begin{asu}[Globally robust estimating function]
	\label{A-asu:unique-zero}
	There exists $\beta_0 \in \Theta$, such that for each $\eta \in \cT$, $\beta_0$ is the unique solution to $\PP\{m(\beta,\eta)\} = 0$.
\end{asu}

\begin{asu}[Regularity conditions]
	\label{A-asu:reg-general}
	~
	\begin{asulist}
		\item \label{A-asu:reg-compact-param-space} Suppose the parameter space $\Theta$ of $\beta$ is compact.
		\item \label{A-asu:reg-bounded-obs} Suppose the support of $O$ is bounded.
		\item \label{A-asu:reg-cont-PPee} Suppose $\PP\{m(\beta, \eta')\}$ is a continuous function in $\beta$.
		\item \label{A-asu:reg-bounded-and-cont-differentiable-ee} Suppose $m(\beta,\eta)$ is continuously differentiable in $\beta$, and the class $\{m(\beta,\eta): \beta \in \Theta, \eta \in \cT\}$ is uniformly bounded and bounded by an integrable function.
		\item \label{A-asu:reg-dominated-ee-deriv} Suppose $\partial_\beta m(\beta,\eta) := \frac{\partial m(\beta,\eta)}{\partial\beta^T}$ is bounded by an integrable function.
		\item \label{A-asu:reg-dominated-ee-meat} Suppose $m(\beta,\eta) m(\beta,\eta)^T$ is bounded by an integrable function.
		\item \label{A-asu:reg-invertible-ee-deriv} Suppose $\PP\{\partial_\beta m(\beta_0,\eta')\}$ is invertible.
	\end{asulist}
\end{asu}

The following are additional assumptions needed for establishing the asymptotics for the non-cross-fitted estimator $\hat\beta$.

\begin{asu}[Convergence of nuisance parameter estimator]
	\label{A-asu:nuisance-conv-general}
	Suppose there exists $\eta' \in \cT$ such that the following hold.
	\begin{asulist}
		\item \label{A-asu:nuisance-conv-PPee-sup} $\sup_{\beta \in \Theta} | \PP m(\beta, \hat\eta) - \PP m(\beta, \eta') | = o_P(1)$;
		\item \label{A-asu:nuisance-conv-ee-l2} $\| m(\beta_0, \hat\eta) - m(\beta_0, \eta') \|^2  := \int |m(\beta_0, \hat\eta) - m(\beta_0, \eta')|^2 dP = o_P(1)$;
		\item \label{A-asu:nuisance-conv-PPee-deriv} $| \PP \{\partial_\beta m(\beta_0, \hat\eta)\} - \PP \{\partial_\beta m(\beta_0, \eta')\} | = o_P(1)$;
		\item \label{A-asu:nuisance-conv-PPee-meat} $| \PP \{m(\beta_0, \hat\eta) m(\beta_0, \hat\eta)^T\} - \PP \{m(\beta_0, \eta') m(\beta_0, \eta')^T \} | = o_P(1)$.
	\end{asulist}
\end{asu}

\begin{asu}[Donsker condition]
	\label{A-asu:donsker}
	Suppose $\cM := \{m(\beta, \eta): \beta \in \Theta, \eta \in \cT\}$ and $\{\partial_\beta m(\beta,\eta): \beta \in \Theta, \eta \in \cT\}$ are $P$-Donsker classes.
\end{asu}

The following are additional assumptions needed for establishing the asymptotics for the cross-fitted estimator $\check\beta$.

\begin{asu}[Convergence of nuisance parameter estimator (cross-fitting)]
	\label{A-asu:nuisance-conv-general-cf}
	Suppose there exists $\eta' \in \cT$ such that the following hold.
	\begin{asulist}
		\item \label{A-asu:nuisance-conv-PPee-sup-cf} For each $k\in[K]$, $\sup_{\beta \in \Theta} | \PP m(\beta, \hat\eta_k) - \PP m(\beta, \eta') | = o_P(1)$;
		\item \label{A-asu:nuisance-conv-ee-l2-cf} For each $k\in[K]$, $ \| m(\beta_0, \hat\eta_k)  - m(\beta_0, \eta') \|^2  = o_P(1)$.
		\item \label{A-asu:nuisance-conv-PPee-deriv-cf} For each $k\in[K]$, $| \PP \{\partial_\beta m(\beta_0, \hat\eta_k)\} - \PP \{\partial_\beta m(\beta_0, \eta')\} | = o_P(1)$;
		\item \label{A-asu:nuisance-conv-PPee-meat-cf} For each $k\in[K]$, $| \PP \{m(\beta_0, \hat\eta_k) m(\beta_0, \hat\eta_k)^T\} - \PP \{m(\beta_0, \eta') m(\beta_0, \eta')^T \} | = o_P(1)$.
	\end{asulist}
\end{asu}

\begin{asu}[Additional regularity conditions (cross-fitting)]
	\label{A-asu:reg-bounded-ee-deriv-and-meat}
	Suppose $\partial_\beta m(\beta,\eta) := \frac{\partial m(\beta,\eta)}{\partial\beta^T}$ and $m(\beta,\eta) m(\beta,\eta)^T$ are uniformly bounded.
\end{asu}

\subsection{Lemmas}

\begin{lem}[Well-separated zero]
	\label{A-lem:well-separated-zero}
	Suppose \cref{A-asu:unique-zero,A-asu:reg-compact-param-space,A-asu:reg-cont-PPee} hold, then the unique zero $\beta_0$ of the function $\PP\{m(\beta, \eta')\}$ is well-separated. That is, for any $\epsilon > 0$, there exists $\delta > 0$ such that $|\beta - \beta_0| > \epsilon$ implies $|\PP\{m(\beta, \eta')\}| > \delta$.
\end{lem}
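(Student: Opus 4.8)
The plan is to run a standard compactness argument, using only that $\Theta$ is compact, that $\beta \mapsto \PP\{m(\beta,\eta')\}$ is continuous, and that $\beta_0$ is the \emph{unique} zero. Fix $\epsilon > 0$ and set $K_\epsilon := \{\beta \in \Theta : |\beta - \beta_0| \geq \epsilon\}$. Since $\Theta$ is compact by \cref{A-asu:reg-compact-param-space} and $K_\epsilon$ is closed in $\Theta$, the set $K_\epsilon$ is compact. If $K_\epsilon = \emptyset$, then also $\{\beta \in \Theta : |\beta - \beta_0| > \epsilon\} = \emptyset$ and the claim holds vacuously for any $\delta > 0$; so I may assume $K_\epsilon \neq \emptyset$.

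Next I would invoke continuity. By \cref{A-asu:reg-cont-PPee}, $\beta \mapsto \PP\{m(\beta,\eta')\}$ is continuous on $\Theta$, hence $\beta \mapsto |\PP\{m(\beta,\eta')\}|$ is continuous as its composition with the Euclidean norm. A continuous real-valued function on a nonempty compact set attains its infimum, so there is $\beta^\dagger \in K_\epsilon$ with $|\PP\{m(\beta^\dagger,\eta')\}| = \inf_{\beta \in K_\epsilon} |\PP\{m(\beta,\eta')\}|$.

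Finally I would use uniqueness of the zero. Applying \cref{A-asu:unique-zero} with $\eta = \eta' \in \cT$, $\beta_0$ is the unique solution of $\PP\{m(\beta,\eta')\} = 0$ in $\Theta$. Since $\beta^\dagger \in K_\epsilon$ forces $|\beta^\dagger - \beta_0| \geq \epsilon > 0$, we have $\beta^\dagger \neq \beta_0$ and therefore $|\PP\{m(\beta^\dagger,\eta')\}| > 0$. Take $\delta := \tfrac{1}{2}|\PP\{m(\beta^\dagger,\eta')\}| > 0$. Then for any $\beta$ with $|\beta - \beta_0| > \epsilon$ we have $\beta \in K_\epsilon$, so $|\PP\{m(\beta,\eta')\}| \geq |\PP\{m(\beta^\dagger,\eta')\}| = 2\delta > \delta$, which is exactly the asserted conclusion.

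I do not anticipate a genuine obstacle here; the argument is a textbook consequence of compactness plus continuity plus uniqueness. The only two points that warrant a moment of care are handling the degenerate case $K_\epsilon = \emptyset$, and choosing $\delta$ strictly below the attained infimum so that the inequality in the statement is strict rather than merely non-strict.
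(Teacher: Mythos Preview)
Your proposal is correct and follows essentially the same compactness-plus-continuity-plus-uniqueness argument as the paper's proof. You are in fact slightly more careful than the paper in two respects: you explicitly handle the degenerate case $K_\epsilon = \emptyset$, and you take $\delta$ to be half the attained infimum so as to guarantee the \emph{strict} inequality $|\PP\{m(\beta,\eta')\}| > \delta$ demanded by the statement, whereas the paper sets $\delta$ equal to the infimum itself and so, strictly speaking, only obtains $\geq \delta$.
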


\begin{proof}[Proof of \cref{A-lem:well-separated-zero}]
	For any $\epsilon > 0$, consider $\delta := \inf_{\beta \in \Theta: d(\beta, \beta_0) \geq \epsilon}|\PP\{m(\beta, \eta')\}|$. Because the parameter space $\Theta$ is compact (\cref{A-asu:reg-compact-param-space}), $\{\beta: d(\beta, \beta_0) \geq \epsilon\} \cap \Theta$ is also compact. This combined with the fact that $\PP\{m(\beta, \eta')\}$ is a continuous function in $\beta$ (\cref{A-asu:reg-cont-PPee}) implies that the infimum is attained, i.e., $\inf_{\beta \in \Theta: d(\beta, \beta_0) \geq \epsilon}|\PP\{m(\beta, \eta')\}| = \min_{\beta \in \Theta: }|\PP\{m(\beta, \eta')\}|$. Because $\beta_0$ is the unique zero of $\PP\{m(\beta, \eta')\}$ (\cref{A-asu:unique-zero}), $\min_{\beta \in \Theta: d(\beta, \beta_0) \geq \epsilon}|\PP\{m(\beta, \eta')\}| > 0$. Therefore, we proved the lemma by constructing a particular $\delta > 0$.
\end{proof}

\begin{lem}
	\label{A-lem:sum-of-squares-function}
	Let $|\cdot|$ denote the Euclidean norm. For any $p$-dimensional vector-valued functions $\{a_t(x)\}_{t=1}^T$, $\{b_t(x)\}_{t=1}^T$, and $\{c_t(x)\}_{t=1}^T$, we have
	\begin{align*}
		\int\bigg| \sum_{t=1}^T \{a_t(x) + b_t(x) + c_t(x)\} \bigg|^2 dP(x) \leq 4 T^2  & \bigg\{\max_{1 \leq t \leq T} \int |a_t(x)|^2 dP(x) + \max_{1 \leq t \leq T} \int |b_t(x)|^2 dP(x) \\
        & + \max_{1 \leq t \leq T} \int |c_t(x)|^2 dP(x) \bigg\}.
	\end{align*}
\end{lem}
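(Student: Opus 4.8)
The plan is to treat the quantity inside the integral as the squared Euclidean norm of a sum of $3T$ vector-valued functions and apply the elementary inequality $|\sum_{j=1}^{N} v_j|^2 \le N \sum_{j=1}^{N} |v_j|^2$, valid for $v_1,\dots,v_N \in \RR^p$ (this is Cauchy--Schwarz, equivalently Jensen applied to the convex map $v \mapsto |v|^2$). Applying it pointwise in $x$ with $N = 3T$ gives
\begin{align*}
	\bigg| \sum_{t=1}^T \{a_t(x) + b_t(x) + c_t(x)\} \bigg|^2 \le 3T \sum_{t=1}^T \big\{ |a_t(x)|^2 + |b_t(x)|^2 + |c_t(x)|^2 \big\}.
\end{align*}

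Next I would integrate both sides against $P$ and use linearity to move the finite sum outside, obtaining $\int |\cdots|^2 \, dP \le 3T \sum_{t=1}^T \{ \int |a_t|^2 \, dP + \int |b_t|^2 \, dP + \int |c_t|^2 \, dP \}$. Then, since each summand over $t$ is at most the corresponding maximum over $t \in [T]$, I would replace $\sum_{t=1}^T \int |a_t|^2 \, dP$ by $T \max_{1\le t\le T} \int |a_t|^2 \, dP$, and likewise for $b_t$ and $c_t$. This yields the claimed bound with constant $3T^2$, which is absorbed into the stated $4T^2$.

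There is no substantive obstacle here; the only thing to watch is bookkeeping of the constant and the grouping. An equivalent route is to first set $A(x) = \sum_{t} a_t(x)$, $B(x) = \sum_t b_t(x)$, $C(x) = \sum_t c_t(x)$, use $|A+B+C|^2 \le 3(|A|^2+|B|^2+|C|^2)$, then bound $|A(x)|^2 \le T \sum_t |a_t(x)|^2$ and integrate; this also produces the factor $3T^2$. The lemma is a purely technical device: it will be invoked later to control $L_2$ norms of estimating-function remainders of the form $\sum_t d_t'(S_t)\phi_t(\cdot)$ that naturally decompose into a fixed number of sums over $t$.
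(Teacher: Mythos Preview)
Your argument is correct and uses the same elementary idea as the paper: bound the squared norm of a sum by a constant times the sum of squared norms, then replace each $t$-sum by $T$ times the maximum. The paper reaches $4T^2$ by expanding into a double sum over $(t,s)$ and twice applying $|z_1+z_2|^2 \le 2|z_1|^2 + 2|z_2|^2$, whereas your single application of $|\sum_{j=1}^{N} v_j|^2 \le N\sum_j |v_j|^2$ with $N=3T$ is more direct and even yields the sharper constant $3T^2$.
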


\begin{proof}[Proof of \cref{A-lem:sum-of-squares-function}]
	Using the fact that $2z_1^T z_2 \leq |z_1|^2 + |z_2|^2$ and $|z_1+z_2|^2 \leq 2|z_1|^2 + 2|z_2|^2$, we have
	\begin{align*}
		& ~~~ \int\bigg| \sum_{t=1}^T \{a_t(x) + b_t(x) + c_t(x)\} \bigg|^2 dP(x) \\
		& = \int \sum_{1 \leq t, s \leq T}\{a_t(x) + b_t(x) + c_t(x)\}^T \{a_s(x) + b_s(x) + c_s(x)\} dP(x) \\
		& \leq \int \sum_{1 \leq t, s \leq T} \frac{1}{2}\left\{|a_t(x) + b_t(x) + c_t(x)|^2 + |a_s(x) + b_s(x) + c_s(x)|^2\right\} dP(x) \\
		& \leq \int \sum_{1 \leq t, s \leq T} \frac{1}{2}\left\{(2 |a_t(x) + b_t(x)|^2 + 2 |c_t(x)|^2) + (2 |a_s(x) + b_s(x)|^2 + 2 |c_s(x)|^2)\right\} dP(x) \\
            & \leq \int \sum_{1 \leq t, s \leq T} \frac{1}{2}\left\{(4 |a_t(x)|^2 + 4 |b_t(x)|^2 + 4 |c_t(x)|^2) + (4 |a_s(x)|^2 + 4 |b_s(x)|^2 + 4 |c_s(x)|^2)\right\} dP(x) \\
		& = \sum_{1 \leq t, s \leq T} 2 \bigg\{\int |a_t(x)|^2 dP(x) + \int |b_t(x)|^2 dP(x) + \int |c_t(x)|^2 dP(x) \\
            & ~~~~~~ + \int |a_s(x)|^2 dP(x) + \int |b_s(x)|^2 dP(x) + \int |c_s(x)|^2 dP(x)\bigg\} \\
		& \leq \sum_{1 \leq t, s \leq T} 4  \bigg\{ \max_{1 \leq t \leq T} \int |a_t(x)|^2 dP(x) + \max_{1 \leq t \leq T} \int |b_t(x)|^2 dP(x) + \max_{1 \leq t \leq T} \int |c_t(x)|^2 dP(x)\bigg\} \\
		& = 4 T^2 \bigg\{\max_{1 \leq t \leq T} \int |a_t(x)|^2 dP(x) + \max_{1 \leq t \leq T} \int |b_t(x)|^2 dP(x) + \max_{1 \leq t \leq T} \int |c_t(x)|^2 dP(x) \bigg\}.
	\end{align*}
	This completes the proof.
\end{proof}

\begin{lem}[Lemma 19.24 of \citet{van2000asymptotic}]
	\label{A-lem:vdv19.24}
	Suppose that $\mathcal{F}$ is a $P$-Donsker class of measurable functions and $\hat{f}_n$ is a sequence of random functions that take their values in $\mathcal{F}$ such that $\int\{ \hat{f}_n(x) - f_0(x) \}^2 dP(x)$ converges in probability to 0 for some $f_0 \in L_2(P)$. Then $\mathbb{G}_n(\hat{f}_n - f_0) \pto 0$ and hence $\mathbb{G} \hat{f}_n \rightsquigarrow \mathbb{G}_P(f_0)$.
\end{lem}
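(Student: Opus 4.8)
The plan is to reduce the statement to showing that $\mathbb{G}_n(\hat f_n - f_0) \pto 0$; everything else then follows from a one-line decomposition and Slutsky's lemma. Concretely, I would write $\mathbb{G}_n \hat f_n = \mathbb{G}_n f_0 + \mathbb{G}_n(\hat f_n - f_0)$. The first summand converges weakly to $\mathbb{G}_P(f_0) \sim N(0, \operatorname{Var}_P f_0)$ by the ordinary central limit theorem applied to the single, square-integrable function $f_0$, so once the second summand is shown to be $o_P(1)$ we obtain $\mathbb{G}_n \hat f_n \rightsquigarrow \mathbb{G}_P(f_0)$ as claimed.

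For the remainder term I would use two standard facts about Donsker classes. First, since $f_0$ is an $L_2(P)$-limit of elements of $\mathcal F$ (namely the $\hat f_n$), it lies in the $L_2(P)$-closure $\bar{\mathcal F}$; because the closed convex hull of a $P$-Donsker class is $P$-Donsker and subclasses of Donsker classes are Donsker, $\bar{\mathcal F}$ is $P$-Donsker, so after replacing $\mathcal F$ by $\bar{\mathcal F}$ we may assume $f_0 \in \mathcal F$. Second, a $P$-Donsker class is asymptotically equicontinuous in its intrinsic (variance) semimetric $\rho(f,g) := \{\operatorname{Var}_P(f-g)\}^{1/2}$: for every $\epsilon > 0$,
\[
\lim_{\delta \downarrow 0}\ \limsup_{n\to\infty}\ P^*\!\Big(\sup_{f,g\in\mathcal F,\ \rho(f,g)<\delta}\big|\mathbb{G}_n(f-g)\big|>\epsilon\Big)=0,
\]
which is the equicontinuity half of the empirical-process CLT, with outer probability $P^*$ absorbing the measurability of the supremum.

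The endgame is then a routine $\epsilon$-$\delta$ argument. Since $\operatorname{Var}_P(\hat f_n - f_0)\le \|\hat f_n - f_0\|^2 = o_P(1)$ by assumption, $\rho(\hat f_n, f_0)\pto 0$. Given $\epsilon,\eta>0$ I would pick $\delta>0$ making the $\limsup$ above strictly below $\eta/2$, then $N$ large enough that for $n\ge N$ both the probability in the display is $<\eta/2$ and $P(\rho(\hat f_n,f_0)\ge\delta)<\eta/2$; on the event $\{\rho(\hat f_n,f_0)<\delta\}$, since $\hat f_n, f_0\in\mathcal F$, the quantity $|\mathbb{G}_n(\hat f_n - f_0)|$ is bounded by the supremum inside that display, so $P^*(|\mathbb{G}_n(\hat f_n - f_0)|>\epsilon)<\eta$ for $n\ge N$. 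As $\epsilon,\eta$ were arbitrary, $\mathbb{G}_n(\hat f_n - f_0)\pto 0$, finishing the proof. The main obstacle, indeed the only substantive input, is invoking the equicontinuity fact correctly: it must be stated in the intrinsic semimetric (note $\rho(f,g)\le\|f-g\|$, so $L_2(P)$-consistency of $\hat f_n$ suffices to enter a $\rho$-ball of any fixed radius) and outer probabilities must be carried throughout; closing up $\mathcal F$ so that $f_0$ itself belongs to the class is the minor bookkeeping step that lets the equicontinuity bound be applied to the pair $(\hat f_n, f_0)$.
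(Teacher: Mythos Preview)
Your proof is correct and follows the standard argument from van der Vaart's textbook. Note that the paper itself does not prove this lemma at all---it is simply quoted verbatim as Lemma 19.24 of \citet{van2000asymptotic} and used as a black box in the proof of \cref{A-thm:general-normality}. Your sketch (decompose $\mathbb{G}_n\hat f_n=\mathbb{G}_n f_0+\mathbb{G}_n(\hat f_n-f_0)$, then control the remainder via asymptotic equicontinuity of the Donsker class in the intrinsic semimetric, after closing up $\mathcal F$ so $f_0\in\mathcal F$) is exactly the proof given in the cited reference, so there is nothing further to compare.
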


\subsection{Asymptotic Normality for Non-cross-fitted Estimator}
\label{A-subA-sec:normality-general}


\begin{lem}[Consistency]
	\label{A-lem:general-consistency}
	Suppose \cref{A-asu:unique-zero,A-asu:nuisance-conv-PPee-sup,A-asu:donsker,A-asu:reg-compact-param-space,A-asu:reg-bounded-obs,A-asu:reg-cont-PPee} hold, then $\hat\beta \pto \beta_0$ as $n\to\infty$.
\end{lem}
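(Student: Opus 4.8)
The plan is to derive consistency from the classical consistency theorem for Z-estimators (\citet[Theorem~5.9]{van2000asymptotic}), viewing $\beta\mapsto\PP_n\{m(\beta,\hat\eta)\}$ as a random criterion function whose deterministic limit is $\beta\mapsto\PP\{m(\beta,\eta')\}$. Three ingredients are needed. First, $\hat\beta$ is an (approximate) zero of the random criterion: Step~4 of Algorithm~\ref{algo:estimator-ncf} gives $\PP_n\{m(\hat\beta,\hat\eta)\}=0$, hence in particular $\PP_n\{m(\hat\beta,\hat\eta)\}=o_P(1)$. Second, the limiting criterion has a \emph{well-separated} zero at $\beta_0$; this is exactly \cref{A-lem:well-separated-zero}, which follows from \cref{A-asu:unique-zero} (global robustness forces $\beta_0$ to be the unique zero of $\PP\{m(\cdot,\eta')\}$) together with \cref{A-asu:reg-compact-param-space} and \cref{A-asu:reg-cont-PPee}. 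Third --- the substantive step --- the random criterion converges to the limiting criterion uniformly over $\Theta$; \cref{A-asu:reg-bounded-obs} enters here only to guarantee that all the relevant expectations are finite.

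For the uniform-convergence step I would split
\[
\sup_{\beta\in\Theta}\bigl|\PP_n\{m(\beta,\hat\eta)\}-\PP\{m(\beta,\eta')\}\bigr|
\;\le\;
\sup_{\beta\in\Theta}\bigl|(\PP_n-\PP)\{m(\beta,\hat\eta)\}\bigr|
\;+\;
\sup_{\beta\in\Theta}\bigl|\PP\{m(\beta,\hat\eta)\}-\PP\{m(\beta,\eta')\}\bigr| .
\]
The second term is $o_P(1)$ directly by \cref{A-asu:nuisance-conv-PPee-sup}. For the first term, \cref{A-asu:donsker} makes $\cM=\{m(\beta,\eta):\beta\in\Theta,\,\eta\in\cT\}$ a $P$-Donsker, hence $P$-Glivenko--Cantelli, class, so that $\|\PP_n-\PP\|_{\cM}=\sup_{\beta\in\Theta,\,\eta\in\cT}\bigl|(\PP_n-\PP)\{m(\beta,\eta)\}\bigr|=o_P(1)$; since $(\beta,\hat\eta)$ ranges within $\Theta\times\cT$ along every sample path, the first term is bounded by $\|\PP_n-\PP\|_{\cM}$ and is $o_P(1)$. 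Combining with the first ingredient and the triangle inequality,
\[
\bigl|\PP\{m(\hat\beta,\eta')\}\bigr|
\;\le\;\bigl|\PP_n\{m(\hat\beta,\hat\eta)\}\bigr|+\sup_{\beta\in\Theta}\bigl|\PP_n\{m(\beta,\hat\eta)\}-\PP\{m(\beta,\eta')\}\bigr|
\;=\;o_P(1) .
\]
Then \cref{A-lem:well-separated-zero} closes the argument: given $\epsilon>0$, choose $\delta>0$ so that $|\beta-\beta_0|>\epsilon$ implies $|\PP\{m(\beta,\eta')\}|>\delta$; then $P(|\hat\beta-\beta_0|>\epsilon)\le P(|\PP\{m(\hat\beta,\eta')\}|>\delta)\to0$, i.e.\ $\hat\beta\pto\beta_0$.

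The hard part is the empirical-process term with the data-dependent plug-in $\hat\eta$: one cannot simply condition on $\hat\eta$ and invoke an $\eta$-wise law of large numbers, because $\hat\eta$ is computed from the same sample that enters $\PP_n$. The Donsker hypothesis is precisely what removes this difficulty, upgrading the pointwise-in-$\eta$ Glivenko--Cantelli property to uniformity over all of $\cT$; this is the only role of \cref{A-asu:donsker} in the consistency proof, which is why its cross-fitted counterpart in \cref{sec:theory} can omit it (there $\hat\eta_k$ is independent of the fold $B_k$ over which the empirical average is formed, so an ordinary conditional law of large numbers applies). Everything else is routine bookkeeping.
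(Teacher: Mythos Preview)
Your proof is correct and follows essentially the same approach as the paper's: both reduce to showing $|\PP\{m(\hat\beta,\eta')\}|=o_P(1)$ via the well-separated-zero lemma, split the discrepancy into an empirical-process term (handled by Donsker $\Rightarrow$ Glivenko--Cantelli over $\Theta\times\cT$) and a nuisance-drift term (handled by \cref{A-asu:nuisance-conv-PPee-sup}). Your added commentary on why the Donsker assumption is essential here---because $\hat\eta$ is computed from the same sample---is a nice clarification that the paper leaves implicit.
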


\begin{proof}[Proof of \cref{A-lem:general-consistency}]
	Consider an arbitrary $\epsilon > 0$. We need to prove $\lim_{n\to\infty} P(| \hat\beta - \beta_0 | > \epsilon) = 0$. Because \cref{A-asu:unique-zero,A-asu:reg-compact-param-space,A-asu:reg-cont-PPee} hold, by \cref{A-lem:well-separated-zero} there exists $\delta > 0$ such that
	\begin{align*}
		P(| \hat\beta - \beta_0 | > \epsilon) \leq P[ |\PP\{m(\hat\beta, \eta')\}| > \delta].
	\end{align*}
	Therefore, it suffices to prove that $|\PP\{m(\hat\beta, \eta')\}|$ converges in probability to 0.

	We have
	\begin{align}
		|\PP\{m(\hat\beta, \eta')\}|
		& = | \PP\{m(\hat\beta, \eta')\} - \PP_n\{m(\hat\beta, \hat\eta)\} | \nonumber \\
		& \leq | \PP\{m(\hat\beta, \eta')\} - \PP\{m(\hat\beta, \hat\eta)\} | + | \PP\{m(\hat\beta, \hat\eta)\} - \PP_n\{m(\hat\beta, \hat\eta)\} | \label{A-eq:A-lem:general-consistency:proofuse1}
	\end{align}
	Next we show that both terms in \cref{A-eq:A-lem:general-consistency:proofuse1} are $o_P(1)$.

	For the first term in \cref{A-eq:A-lem:general-consistency:proofuse1}, by \cref{A-asu:nuisance-conv-PPee-sup} we have
	\begin{align}
		| \PP\{m(\hat\beta, \eta')\} - \PP\{m(\hat\beta, \hat\eta)\} | \leq \sup_{\beta \in \Theta} | \PP\{m(\beta, \eta')\} - \PP\{m(\beta, \hat\eta)\} | = o_P(1), \label{A-eq:A-lem:general-consistency:proofuse2}
	\end{align}
	
	For the second term in \cref{A-eq:A-lem:general-consistency:proofuse1}, because $\cM$ is a $P$-Donsker class (\cref{A-asu:donsker}) and thus a $P$-Glivenko-Cantelli class, $\sup_{\beta \in \Theta, \eta \in \cT}| (\PP_n - \PP) m (\beta, \eta)| = o_P(1)$. Therefore, $| \PP\{m(\hat\beta, \hat\eta)\} - \PP_n\{m(\hat\beta, \hat\eta)\} | = | (\PP_n - \PP) m (\hat\beta, \hat\eta)| = o_P(1)$.

	Thus, we showed that both terms in \cref{A-eq:A-lem:general-consistency:proofuse1} are $o_P(1)$. This completes the proof.	
\end{proof}

\begin{lem}[Convergence of the derivative.]
	\label{A-lem:conv-ee-deriv}
	Suppose \cref{A-asu:nuisance-conv-PPee-deriv,A-asu:reg-dominated-ee-deriv,A-asu:donsker} hold. If a sequence of random variables $\tilde\beta_n$ satisfies $\tilde\beta_n \pto \beta_0$, then $\PP_n \{ \partial_\beta m(\tilde\beta_n, \hat\eta) \} \pto \PP \left\{ \partial_\beta m(\beta_0, \eta') \right\}$.
\end{lem}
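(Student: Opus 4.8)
The plan is to add and subtract two intermediate quantities and handle the resulting pieces in turn. Writing $\PP_n\{\partial_\beta m(\tilde\beta_n,\hat\eta)\}-\PP\{\partial_\beta m(\beta_0,\eta')\}$ as the sum of (I) the empirical-process fluctuation $(\PP_n-\PP)\{\partial_\beta m(\tilde\beta_n,\hat\eta)\}$, (II) a parameter-continuity term $\PP\{\partial_\beta m(\tilde\beta_n,\hat\eta)\}-\PP\{\partial_\beta m(\beta_0,\hat\eta)\}$, and (III) the nuisance term $\PP\{\partial_\beta m(\beta_0,\hat\eta)\}-\PP\{\partial_\beta m(\beta_0,\eta')\}$, I would show that each is $o_P(1)$.

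Term (III) is immediate: it is exactly the object assumed to be $o_P(1)$ in \cref{A-asu:nuisance-conv-PPee-deriv}. For term (I), the class $\{\partial_\beta m(\beta,\eta):\beta\in\Theta,\ \eta\in\cT\}$ is $P$-Donsker by \cref{A-asu:donsker}, hence $P$-Glivenko--Cantelli, so $\sup_{\beta\in\Theta,\,\eta\in\cT}|(\PP_n-\PP)\partial_\beta m(\beta,\eta)|=o_P(1)$; since $(\tilde\beta_n,\hat\eta)$ takes values in $\Theta\times\cT$, term (I) is bounded by this supremum and is $o_P(1)$. One could instead try \cref{A-lem:vdv19.24} with $\hat f_n=\partial_\beta m(\tilde\beta_n,\hat\eta)$, but that route also demands the $L_2$ statement $\|\partial_\beta m(\tilde\beta_n,\hat\eta)-\partial_\beta m(\beta_0,\eta')\|\pto 0$, so the Glivenko--Cantelli argument is the more economical one.

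The real content is term (II). Setting $h(\beta,\eta):=\PP\{\partial_\beta m(\beta,\eta)\}$, I would first establish the modulus bound $\rho(\delta):=\sup_{\eta\in\cT}\ \sup_{|\beta-\beta_0|\le\delta}|h(\beta,\eta)-h(\beta_0,\eta)|\to 0$ as $\delta\downarrow 0$; given this, for any $\epsilon>0$ pick $\delta$ with $\rho(\delta)<\epsilon$ and conclude $P(|h(\tilde\beta_n,\hat\eta)-h(\beta_0,\hat\eta)|>\epsilon)\le P(|\tilde\beta_n-\beta_0|>\delta)\to 0$ by the hypothesis $\tilde\beta_n\pto\beta_0$. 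The modulus bound itself would follow by bounding $|h(\beta,\eta)-h(\beta_0,\eta)|\le\PP|\partial_\beta m(\cdot,\beta,\eta)-\partial_\beta m(\cdot,\beta_0,\eta)|$, using continuous differentiability of $m$ in $\beta$ (\cref{A-asu:reg-bounded-and-cont-differentiable-ee}) for pointwise convergence of the integrand as $\beta\to\beta_0$, together with the fixed $P$-integrable envelope on $\partial_\beta m$ from \cref{A-asu:reg-dominated-ee-deriv}, so that dominated convergence applies.

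The main obstacle is to make this last convergence uniform over the nuisance set $\cT$, which may be infinite-dimensional: continuity of each individual map $h(\cdot,\eta)$ does not by itself deliver the uniform modulus $\rho(\delta)\to 0$. I expect to close this by noting that in all the examples of \cref{sec:methods} the map $\beta\mapsto\partial_\beta m(\cdot,\beta,\eta)$ is Lipschitz with a Lipschitz factor that is $P$-integrable and does not depend on $\eta$, so that $|h(\beta,\eta)-h(\beta_0,\eta)|\le C|\beta-\beta_0|$ uniformly in $\eta\in\cT$; this is only a mild strengthening of \cref{A-asu:reg-dominated-ee-deriv}. Absent such a condition, one argues instead directly from equicontinuity in $\beta$ of the family $\{\partial_\beta m(\cdot,\beta,\eta):\eta\in\cT\}$. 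Either way, the integrable-domination condition does the essential work, and once term (II) is controlled the three pieces combine to give the claimed convergence.
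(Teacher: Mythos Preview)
Your decomposition into terms (I)--(III) and your handling of (I) via the Glivenko--Cantelli property implied by \cref{A-asu:donsker} and of (III) directly via \cref{A-asu:nuisance-conv-PPee-deriv} match the paper's proof exactly. For (II) the paper invokes the dominated convergence theorem for convergence in probability in one line (using $\tilde\beta_n\pto\beta_0$ together with the integrable envelope from \cref{A-asu:reg-dominated-ee-deriv}), whereas your modulus-of-continuity argument is a more explicit unpacking of the same step---including the uniformity-in-$\eta$ issue that the paper's citation leaves implicit.
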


\begin{proof}[Proof of \cref{A-lem:conv-ee-deriv}]
	We have
	\begin{align}
		& ~~~~ \PP_n \{ \partial_\beta m(\tilde\beta_n, \hat\eta) \} - \PP \big\{ \partial_\beta m(\beta_0, \eta') \big\} \nonumber \\
		& =  \PP_n \{ \partial_\beta m(\tilde\beta_n, \hat\eta) \} - \PP \big\{ \partial_\beta m(\tilde\beta_n, \hat\eta) \big\} + \PP \big\{ \partial_\beta m(\tilde\beta_n, \hat\eta) \big\} -  \PP \big\{ \partial_\beta m(\beta_0, \eta') \big\} \nonumber \\
		& \leq \sup_{\beta \in \Theta, \eta \in \cT} | (\PP_n - \PP) \partial_\beta m(\beta, \eta) | \nonumber \\
		& ~~~~ + \bigg[\PP \big\{ \partial_\beta m(\tilde\beta_n, \hat\eta) \big\} -  \PP \big\{ \partial_\beta m(\beta_0, \hat\eta) \big\}\bigg] + \bigg[\PP \big\{ \partial_\beta m(\beta_0, \hat\eta) \big\} -  \PP \big\{ \partial_\beta m(\beta_0, \eta') \big\}\bigg].
		\label{A-eq:A-lem:conv-ee-deriv:proofuse1}
	\end{align}
	We now control each of the three terms in \cref{A-eq:A-lem:conv-ee-deriv:proofuse1}.

	For the first term in \cref{A-eq:A-lem:conv-ee-deriv:proofuse1}, \cref{A-asu:donsker} implies that $\{\partial_\beta m(\beta,\eta): \beta \in \Theta, \eta \in \cT\}$ is a $P$-Glivenko-Cantelli class and thus
	\begin{align}
		\sup_{\beta \in \Theta, \eta \in \cT} | (\PP_n - \PP) \partial_\beta m(\beta, \eta) | = o_P(1). \label{A-eq:A-lem:conv-ee-deriv:proofuse2}
	\end{align}	

	For the second term in \cref{A-eq:A-lem:conv-ee-deriv:proofuse1}, it follows from the fact that $\tilde\beta_n \pto \beta_0$ (lemma assumption), the dominatedness of $\partial_\beta m(\beta,\eta)$ (\cref{A-asu:reg-dominated-ee-deriv}), and the dominated convergence theorem for convergence in probability (see, e.g., \citet[][Result (viii) of Chapter 3.2]{chung2001course}) that
	\begin{align}
		\PP \big\{ \partial_\beta m(\tilde\beta_n, \hat\eta) \big\} - \PP \big\{ \partial_\beta m(\beta_0, \hat\eta) \big\} = o_P(1). \label{A-eq:A-lem:conv-ee-deriv:proofuse3}
	\end{align}

	For the third term in \cref{A-eq:A-lem:conv-ee-deriv:proofuse1}, \cref{A-asu:nuisance-conv-PPee-deriv} states that
	\begin{align}
		\PP \big\{ \partial_\beta m(\beta_0, \hat\eta) \big\} - \PP \big\{ \partial_\beta m(\beta_0, \hat\eta) \big\} = o_P(1). \label{A-eq:A-lem:conv-ee-deriv:proofuse4}
	\end{align}

	Plugging \cref{A-eq:A-lem:conv-ee-deriv:proofuse4,A-eq:A-lem:conv-ee-deriv:proofuse3,A-eq:A-lem:conv-ee-deriv:proofuse2} into \cref{A-eq:A-lem:conv-ee-deriv:proofuse1} yields
	\begin{align*}
		\PP_n \big\{ \partial_\beta m(\tilde\beta_n, \hat\eta) \big\} - \PP \big\{ \partial_\beta m(\beta_0, \eta') \big\} = o_P(1).
	\end{align*}
	The proof is thus completed.
\end{proof}

\begin{lem}[Convergence of the ``meat'' term.]
	\label{A-lem:conv-ee-meat}
	Suppose \cref{A-asu:nuisance-conv-PPee-meat,A-asu:donsker,A-asu:reg-bounded-and-cont-differentiable-ee,A-asu:reg-dominated-ee-meat} hold. If a sequence of random variables $\tilde\beta_n$ satisfies $\tilde\beta_n \pto \beta_0$, then $\PP_n \{ m(\tilde\beta_n, \hat\eta) m(\tilde\beta_n, \hat\eta)^T \} \pto \PP \{ m(\beta_0, \eta') m(\beta_0, \eta')^T \}$.
\end{lem}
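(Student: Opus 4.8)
The plan is to mirror the proof of \cref{A-lem:conv-ee-deriv}, decomposing the quantity of interest as
\begin{align*}
	\PP_n \{ m(\tilde\beta_n, \hat\eta) m(\tilde\beta_n, \hat\eta)^T \} - \PP \{ m(\beta_0, \eta') m(\beta_0, \eta')^T \}
	&= (\PP_n - \PP) \{ m(\tilde\beta_n, \hat\eta) m(\tilde\beta_n, \hat\eta)^T \} \\
	&\quad + \big[ \PP \{ m(\tilde\beta_n, \hat\eta) m(\tilde\beta_n, \hat\eta)^T \} - \PP \{ m(\beta_0, \hat\eta) m(\beta_0, \hat\eta)^T \} \big] \\
	&\quad + \big[ \PP \{ m(\beta_0, \hat\eta) m(\beta_0, \hat\eta)^T \} - \PP \{ m(\beta_0, \eta') m(\beta_0, \eta')^T \} \big],
\end{align*}
and showing that each of the three summands is $o_P(1)$.

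For the empirical-process term, I would first argue that the (coordinatewise) class $\{ m(\beta,\eta) m(\beta,\eta)^T : \beta \in \Theta, \eta \in \cT \}$ is $P$-Glivenko--Cantelli. Indeed, $\cM$ is $P$-Donsker by \cref{A-asu:donsker} and uniformly bounded by \cref{A-asu:reg-bounded-and-cont-differentiable-ee}, so by standard preservation results each entry $m_j m_k$ lies in a $P$-Donsker, hence $P$-Glivenko--Cantelli, class. Therefore $\sup_{\beta \in \Theta, \eta \in \cT} \big| (\PP_n - \PP) \{ m(\beta,\eta) m(\beta,\eta)^T \} \big| = o_P(1)$, which bounds the first summand since $(\tilde\beta_n, \hat\eta)$ takes values in $\Theta \times \cT$.

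For the $\beta$-perturbation term, I would invoke $\tilde\beta_n \pto \beta_0$ (the lemma hypothesis), continuity of $\beta \mapsto m(\beta,\eta) m(\beta,\eta)^T$ (which follows from $m$ being continuously differentiable in $\beta$, \cref{A-asu:reg-bounded-and-cont-differentiable-ee}), and the integrable domination of $m(\beta,\eta) m(\beta,\eta)^T$ from \cref{A-asu:reg-dominated-ee-meat}; applying the dominated convergence theorem for convergence in probability, exactly as in the proof of \cref{A-lem:conv-ee-deriv}, yields $o_P(1)$. The third summand is precisely \cref{A-asu:nuisance-conv-PPee-meat}. Summing the three bounds completes the argument. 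The main obstacle is the first step: verifying that the matrix-valued product class inherits the Glivenko--Cantelli property from $\cM$, which is exactly why this lemma additionally assumes the uniform boundedness in \cref{A-asu:reg-bounded-and-cont-differentiable-ee} --- products of Donsker (or Glivenko--Cantelli) classes need not remain well behaved without it.
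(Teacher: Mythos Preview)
Your proposal is correct and matches the paper's approach essentially line for line: the paper also reduces to the three-term decomposition from \cref{A-lem:conv-ee-deriv} and notes that the only new ingredient is showing the product class $\{m(\beta,\eta)m(\beta,\eta)^T\}$ is $P$-Glivenko--Cantelli, which it obtains (as you do) from \cref{A-asu:donsker} plus the uniform boundedness in \cref{A-asu:reg-bounded-and-cont-differentiable-ee} via a Lipschitz-preservation argument for Donsker classes.
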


\begin{proof}[Proof of \cref{A-lem:conv-ee-meat}]
	The proof is identical to the proof of \cref{A-lem:conv-ee-deriv}, except that we need to establish the following:
	\begin{itemize}
	 	\item $m(\beta, \eta) m(\beta, \eta)^T$ is continuous: this is \cref{A-asu:reg-bounded-and-cont-differentiable-ee};
	 	\item $m(\beta, \eta) m(\beta, \eta)^T$ is bounded by an integrable function: this is \cref{A-asu:reg-dominated-ee-meat};
	 	\item $\PP \{m(\beta_0, \eta) m(\beta_0, \hat\eta)^T\} - \PP \{m(\beta_0, \eta) m(\beta_0, \eta')^T\} = o_P(1)$: this is \cref{A-asu:nuisance-conv-PPee-meat};
	 	\item $m(\beta, \eta) m(\beta, \eta)^T$ takes value in a $P$-Glivenko-Cantelli class.
	\end{itemize}
	We establish the last bullet point now. Because $m(\beta,\eta)$ is uniformly bounded (\cref{A-asu:reg-bounded-and-cont-differentiable-ee}), $\cM$ is a $P$-Donsker class (\cref{A-asu:donsker}), and the product $fg$ is a Lipschitz transformation, these imply that $\cM\times\cM$ is a $P$-Donsker class (see, e.g., \citet[][Example 19.20]{van2000asymptotic}). Thus, $\{m(\beta, \eta) m(\beta, \eta)^T: \beta\in\Theta, \eta \in \cT\} \subset \cM\times\cM$ is a $P$-Donsker class and thus a $P$-Glivenko-Cantelli class. This completes the proof.
\end{proof}

\begin{thm}[Asymptotic Normality]
	\label{A-thm:general-normality}
	Suppose \cref{A-asu:unique-zero,A-asu:nuisance-conv-general,A-asu:donsker,A-asu:reg-general} hold. Then
	\begin{align*}
		\sqrt{n}(\hat\beta - \beta_0) \dto N ( 0, V) \quad \text{as } n\to\infty,
	\end{align*}
	where the asymptotic variance is
	\begin{align}
		V = \PP\{\partial_\beta m(\beta_0, \eta')\}^{-1} ~ \PP\{ m(\beta_0, \eta') m(\beta_0, \eta')^T \} ~ \PP\{\partial_\beta m(\beta_0, \eta')\}^{-1, T}. \label{A-eq:def-asymp-var}
	\end{align}	
	In addition,
	\begin{align*}
		\PP_n\{\partial_\beta m(\hat\beta, \hat\eta)\}^{-1} ~ \PP_n\{ m(\hat\beta, \hat\eta) m(\hat\beta, \hat\eta)^T \} ~ \PP_n\{\partial_\beta m(\hat\beta, \hat\eta)\}^{-1, T}
	\end{align*}
	is a consistent estimator for $V$.	
\end{thm}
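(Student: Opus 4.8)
The plan is to run the classical Z-estimator argument, using global robustness to annihilate the bias term that would otherwise force a convergence rate condition on $\hat\eta$. Consistency $\hat\beta \pto \beta_0$ is already available from \cref{A-lem:general-consistency}, whose hypotheses form a subset of those assumed here. Since $m(\beta,\eta)$ is continuously differentiable in $\beta$ (\cref{A-asu:reg-bounded-and-cont-differentiable-ee}) and $\hat\beta$ solves $\PP_n\{m(\beta,\hat\eta)\}=0$, a row-by-row mean-value expansion around $\beta_0$ yields
\begin{align*}
	0 = \PP_n\{m(\beta_0,\hat\eta)\} + \PP_n\{\partial_\beta m(\tilde\beta_n,\hat\eta)\}\,(\hat\beta-\beta_0),
\end{align*}
where $\tilde\beta_n$ denotes an intermediate point on the segment between $\hat\beta$ and $\beta_0$ (one per coordinate, each $\pto\beta_0$), hence
\begin{align*}
	\sqrt{n}(\hat\beta-\beta_0) = -\big[\PP_n\{\partial_\beta m(\tilde\beta_n,\hat\eta)\}\big]^{-1}\,\sqrt{n}\,\PP_n\{m(\beta_0,\hat\eta)\}.
\end{align*}
By \cref{A-lem:conv-ee-deriv} (applied coordinatewise) the bracketed matrix converges in probability to $\PP\{\partial_\beta m(\beta_0,\eta')\}$, which is invertible by \cref{A-asu:reg-invertible-ee-deriv}, so its inverse converges in probability to $\PP\{\partial_\beta m(\beta_0,\eta')\}^{-1}$ by the continuous mapping theorem.

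The heart of the argument is the stochastic term $\sqrt{n}\,\PP_n\{m(\beta_0,\hat\eta)\}$. Decompose it as $\mathbb{G}_n\{m(\beta_0,\hat\eta)\}+\sqrt{n}\,\PP\{m(\beta_0,\hat\eta)\}$, where $\mathbb{G}_n := \sqrt{n}(\PP_n-\PP)$. Global robustness (\cref{A-asu:unique-zero}) gives $\PP\{m(\beta_0,\eta)\}=0$ for every fixed $\eta\in\cT$, so $\PP\{m(\beta_0,\hat\eta)\}=0$ almost surely and the second summand is identically zero --- this is exactly where no convergence rate for $\hat\eta$ is required. For the first summand, split
\begin{align*}
	\mathbb{G}_n\{m(\beta_0,\hat\eta)\} = \mathbb{G}_n\{m(\beta_0,\eta')\} + \mathbb{G}_n\{m(\beta_0,\hat\eta)-m(\beta_0,\eta')\}.
\end{align*}
The first piece satisfies $\mathbb{G}_n\{m(\beta_0,\eta')\}\dto N\big(0,\PP\{m(\beta_0,\eta')m(\beta_0,\eta')^T\}\big)$ by the multivariate central limit theorem, using $\PP\{m(\beta_0,\eta')\}=0$ and the square-integrable envelope from \cref{A-asu:reg-dominated-ee-meat}. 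The second piece vanishes in probability by \cref{A-lem:vdv19.24}: the class $\cM$ in \cref{A-asu:donsker} is $P$-Donsker and $\|m(\beta_0,\hat\eta)-m(\beta_0,\eta')\|^2=o_P(1)$ by \cref{A-asu:nuisance-conv-ee-l2}. Combining with Slutsky's theorem gives $\sqrt{n}(\hat\beta-\beta_0)\dto N(0,V)$ with $V$ as in \eqref{A-eq:def-asymp-var}.

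For the variance estimator, apply \cref{A-lem:conv-ee-deriv} with $\tilde\beta_n=\hat\beta$ (legitimate since $\hat\beta\pto\beta_0$) to obtain $\PP_n\{\partial_\beta m(\hat\beta,\hat\eta)\}\pto \PP\{\partial_\beta m(\beta_0,\eta')\}$, and \cref{A-lem:conv-ee-meat} with $\tilde\beta_n=\hat\beta$ to obtain $\PP_n\{m(\hat\beta,\hat\eta)m(\hat\beta,\hat\eta)^T\}\pto \PP\{m(\beta_0,\eta')m(\beta_0,\eta')^T\}$. Since matrix inversion is continuous at the invertible limit $\PP\{\partial_\beta m(\beta_0,\eta')\}$, the continuous mapping theorem delivers consistency of the plug-in sandwich estimator for $V$.

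I expect the step requiring the most care to be the control of $\mathbb{G}_n\{m(\beta_0,\hat\eta)-m(\beta_0,\eta')\}$: this is the only place the Donsker restriction \cref{A-asu:donsker} is genuinely invoked, and it is precisely this term whose handling is relaxed by the independence between $\hat\eta_k$ and the evaluation fold in the cross-fitted counterpart \cref{A-thm:general-normality-cf}. A minor wrinkle is that for vector-valued $m$ the mean-value expansion must be carried out row by row, producing a distinct intermediate point per coordinate; since each still converges to $\beta_0$, applying \cref{A-lem:conv-ee-deriv} coordinatewise suffices and the conclusion is unaffected.
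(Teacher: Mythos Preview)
Your proposal is correct and follows essentially the same approach as the paper's proof: mean-value expansion, consistency via \cref{A-lem:general-consistency}, derivative convergence via \cref{A-lem:conv-ee-deriv}, the decomposition $\PP_n\{m(\beta_0,\hat\eta)\}=(\PP_n-\PP)\{m(\beta_0,\hat\eta)\}+\PP\{m(\beta_0,\hat\eta)\}$ with the second term vanishing by global robustness, control of the empirical process remainder via \cref{A-lem:vdv19.24}, CLT, Slutsky, and consistency of the sandwich estimator via \cref{A-lem:conv-ee-deriv,A-lem:conv-ee-meat}. Your explicit remark about the row-by-row mean-value expansion for vector-valued $m$ is a nice bit of care that the paper glosses over.
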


\begin{proof}[Proof of \cref{A-thm:general-normality}]
	Because $m(\beta,\eta)$ is continuously differentiable in $\beta$ (\cref{A-asu:reg-bounded-and-cont-differentiable-ee}), the Lagrange mean value theorem implies that
	\begin{align}
		0 = \PP_n\{ m(\hat\beta, \hat\eta)\} & = \PP_n\{ m(\beta_0, \hat\eta) \} + \left[ \frac{\partial}{\partial \beta^T} \PP_n\{m(\tilde\beta, \hat\eta)\} \right](\hat\beta - \beta_0) \nonumber \\
		& = \PP_n\{ m(\beta_0, \hat\eta) \} + \PP_n\{\partial_\beta m(\tilde\beta, \hat\eta)\} (\hat\beta - \beta_0), \label{A-eq:A-thm:general-normality:proofuse1}
	\end{align}
	where $\tilde\beta$ is between $\hat\beta$ and $\beta_0$.
	Because \cref{A-lem:general-consistency} implies that $\hat\beta \pto \beta_0$ and thus $\tilde\beta \pto \beta_0$, \cref{A-lem:conv-ee-deriv} implies that
	\begin{align}
		\PP_n\{ \partial_\beta m(\tilde\beta, \hat\eta)\} \pto \PP\{\partial_\beta m(\beta_0, \eta')\}. \label{A-eq:A-thm:general-normality:proofuse2}
	\end{align}
	Because $\PP\{\partial_\beta m(\beta_0,\eta')\}$ is invertible (\cref{A-asu:reg-invertible-ee-deriv}), \cref{A-eq:A-thm:general-normality:proofuse2} implies that $\PP_n\{ \partial_\beta m(\tilde\beta, \hat\eta)\}$ is invertible with probability approaching 1. Therefore, \cref{A-eq:A-thm:general-normality:proofuse1,A-eq:A-thm:general-normality:proofuse2} imply that with probability approaching 1 we have
	\begin{align}
		\sqrt{n} (\hat\beta - \beta_0) = - [ \PP\{\partial_\beta m(\beta_0, \eta')\} ]^{-1} [ \sqrt{n} \PP_n\{ m(\beta_0, \hat\eta)\} ]. \label{A-eq:A-thm:general-normality:proofuse3}
	\end{align}
	
	Consider the second term in \cref{A-eq:A-thm:general-normality:proofuse3}. We have
	\begin{align}
		\PP_n \{ m(\beta_0, \hat\eta) \} & = (\PP_n - \PP)\{ m(\beta_0, \hat\eta) \} + \PP\{ m(\beta_0, \hat\eta) \} \nonumber \\
		& = (\PP_n - \PP)\{ m(\beta_0, \hat\eta) \}, \label{A-eq:A-thm:general-normality:proofuse4}
	\end{align}
	where the last equality follows from the fact that $\PP\{ m(\beta_0, \hat\eta) \} = 0$ (\cref{A-asu:unique-zero}). Because of \cref{A-asu:nuisance-conv-ee-l2}, we can invoke \cref{A-lem:vdv19.24} to get
	\begin{align*}
		\mathbb{G}_n\{m(\beta_0, \hat\eta) - m(\beta_0, \eta')\} \pto 0,
	\end{align*}
	or equivalently
	\begin{align}
		(\PP_n - \PP)\{ m(\beta_0, \hat\eta) \} = (\PP_n - \PP)\{ m(\beta_0, \eta') \} + o_P(n^{-1/2}). \label{A-eq:A-thm:general-normality:proofuse5}
	\end{align}
	Plugging \cref{A-eq:A-thm:general-normality:proofuse5} into \cref{A-eq:A-thm:general-normality:proofuse4} and we get
	\begin{align}
		\sqrt{n} \PP_n \{ m(\beta_0, \hat\eta) \} & = \sqrt{n}(\PP_n - \PP)\{ m(\beta_0, \eta') \} + o_P(1) \nonumber \\
		& \dto N \Big(0, ~\PP\{ m(\beta_0, \eta') m(\beta_0, \eta')^T \} \Big), \label{A-eq:A-thm:general-normality:proofuse6}
	\end{align}
	where the last step follows from the Lindeberg-Feller Central Limit Theorem and the fact that $\PP\{ m(\beta_0, \eta')\} = 0$ (\cref{A-asu:unique-zero}).

	Therefore, plugging \cref{A-eq:A-thm:general-normality:proofuse6} into \cref{A-eq:A-thm:general-normality:proofuse3}, it follows from Slutsky's Theorem and the continuous mapping theorem that $\sqrt{n} (\hat\beta - \beta_0) \dto N(0, V)$ with $V$ defined in \cref{A-eq:def-asymp-var}. Consistency of the variance estimator follows immediately from \cref{A-lem:conv-ee-deriv,A-lem:conv-ee-meat} and the continuous mapping theorem. This completes the proof.
\end{proof}

\subsection{Asymptotic Normality for Cross-fitted Estimator}
\label{A-subA-sec:normality-general-cf}


\begin{lem}[Consistency (cross-fitting)]
	\label{A-lem:general-consistency-cf}
	Suppose \cref{A-asu:unique-zero,A-asu:nuisance-conv-PPee-sup-cf,A-asu:reg-compact-param-space,A-asu:reg-cont-PPee,A-asu:reg-bounded-and-cont-differentiable-ee} hold, then $\check\beta \pto \beta_0$ as $n\to\infty$.
\end{lem}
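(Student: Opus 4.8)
The plan is to adapt the proof of \cref{A-lem:general-consistency} to the cross-fitted estimator, replacing the Glivenko--Cantelli step taken over the joint, infinite-dimensional class $\{m(\beta,\eta):\beta\in\Theta,\eta\in\cT\}$ (which used \cref{A-asu:donsker}) by a fold-wise conditioning argument. Since \cref{A-asu:unique-zero,A-asu:reg-compact-param-space,A-asu:reg-cont-PPee} hold, \cref{A-lem:well-separated-zero} gives, for every $\epsilon>0$, a $\delta>0$ with $|\beta-\beta_0|>\epsilon\Rightarrow|\PP\{m(\beta,\eta')\}|>\delta$, so $P(|\check\beta-\beta_0|>\epsilon)\le P(|\PP\{m(\check\beta,\eta')\}|>\delta)$ and it suffices to prove $\PP\{m(\check\beta,\eta')\}\pto 0$. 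Using that $\check\beta$ satisfies $K^{-1}\sum_{k=1}^K\PP_{n,k}\{m(\check\beta,\hat\eta_k)\}=0$, write
\begin{align*}
\PP\{m(\check\beta,\eta')\}=\frac1K\sum_{k=1}^K\Big[\big(\PP\{m(\check\beta,\eta')\}-\PP\{m(\check\beta,\hat\eta_k)\}\big)+(\PP-\PP_{n,k})\{m(\check\beta,\hat\eta_k)\}\Big],
\end{align*}
and since $K$ is fixed it suffices to show each summand is $o_P(1)$.

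For the bias term, \cref{A-asu:nuisance-conv-PPee-sup-cf} gives $|\PP\{m(\check\beta,\eta')\}-\PP\{m(\check\beta,\hat\eta_k)\}|\le\sup_{\beta\in\Theta}|\PP m(\beta,\eta')-\PP m(\beta,\hat\eta_k)|=o_P(1)$ for each $k$. For the empirical-process term, I would bound $|(\PP-\PP_{n,k})\{m(\check\beta,\hat\eta_k)\}|\le\sup_{\beta\in\Theta}|(\PP-\PP_{n,k})m(\beta,\hat\eta_k)|$ and exploit cross-fitting: conditionally on the auxiliary data $(O_i)_{i\in B_k^c}$ used to form $\hat\eta_k$, the nuisance estimate is a fixed function and $\PP_{n,k}$ is the empirical measure of $|B_k|$ i.i.d.\ copies of $O$ that are independent of it. For each fixed $\eta$, the class $\{m(\beta,\eta):\beta\in\Theta\}$ is indexed by the compact set $\Theta$ with $\beta\mapsto m(\beta,\eta)$ continuous and uniformly bounded (\cref{A-asu:reg-compact-param-space,A-asu:reg-bounded-and-cont-differentiable-ee}), hence $P$-Glivenko--Cantelli; thus conditionally $\sup_{\beta\in\Theta}|(\PP-\PP_{n,k})m(\beta,\hat\eta_k)|\to 0$, and passing to the unconditional statement yields $\sup_{\beta\in\Theta}|(\PP-\PP_{n,k})m(\beta,\hat\eta_k)|=o_P(1)$. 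Summing over the $K$ folds gives $\PP\{m(\check\beta,\eta')\}\pto 0$, hence $\check\beta\pto\beta_0$.

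The main obstacle is this empirical-process term: without \cref{A-asu:donsker} one cannot take a supremum over $\eta\in\cT$, and cross-fitting is precisely what circumvents this, by freezing $\hat\eta_k$ so that only a finite-dimensional $\beta$-parametric class must be controlled. The one genuinely delicate point is the passage from the conditional Glivenko--Cantelli statement to the unconditional $o_P(1)$: because $\hat\eta_k$ itself varies with $n$, pointwise-in-$\eta$ convergence of the conditional supremum is not enough, so one instead checks that the expectation of $\sup_{\beta\in\Theta}|(\PP-\PP_{n,k})m(\beta,\eta)|$ tends to $0$ uniformly over $\eta\in\cT$; this follows from the uniform boundedness of the estimating-function class together with an equicontinuity bound on $\beta\mapsto m(\beta,\eta)$ that is uniform in $\eta$ (e.g.\ a uniform bound on $\partial_\beta m$), which bounds the conditional bracketing entropy uniformly. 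The remaining compactness and continuity bookkeeping is routine and parallels \cref{A-lem:general-consistency}.
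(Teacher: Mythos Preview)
Your proposal is correct and follows essentially the same route as the paper: reduce to $\PP\{m(\check\beta,\eta')\}\pto 0$ via the well-separated zero, split over the $K$ folds, control the bias piece with \cref{A-asu:nuisance-conv-PPee-sup-cf}, and control the empirical-process piece by freezing $\hat\eta_k$ via conditioning on $(O_i)_{i\in B_k^c}$ so that only the finite-dimensional $\beta$-indexed class $\{m(\beta,\hat\eta_k):\beta\in\Theta\}$ needs to be handled.

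The only notable difference is in how the empirical-process term is dispatched. The paper first establishes the \emph{pointwise} (in $\beta$) rate $|(\PP-\PP_{n,k})m(\beta,\hat\eta_k)|=O_P(n^{-1/2})$ by a direct conditional second-moment calculation plus Chebyshev and Lemma~6.1 of \citet{chernozhukov2018double}, and then upgrades to the supremum over $\beta$ by invoking Example~19.8 of \citet{van2000asymptotic} (compact $\Theta$, continuity, uniform boundedness) together with the independence of $\hat\eta_k$ from the fold being averaged. You instead go straight to the uniform statement via the conditional Glivenko--Cantelli property and then worry explicitly about the conditional-to-unconditional passage; your observation that this passage requires uniformity of the Glivenko--Cantelli rate over $\eta\in\cT$ (e.g.\ via a uniform envelope or a uniform bracketing bound) is exactly the subtlety the paper handles more tersely by appealing to independence. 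Both arguments land in the same place, and neither requires the Donsker condition on $\cT$.
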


\begin{proof}[Proof of \cref{A-lem:general-consistency-cf}]
	Consider an arbitrary $\epsilon > 0$. We need to prove $\lim_{n\to\infty} P(| \check\beta - \beta_0 | > \epsilon) = 0$. Because \cref{A-asu:unique-zero,A-asu:reg-general} hold, by \cref{A-lem:well-separated-zero} there exists $\delta > 0$ such that
	\begin{align*}
		P(| \check\beta - \beta_0 | > \epsilon) \leq P[ |\PP\{m(\check\beta, \eta')\}| > \delta].
	\end{align*}
	Therefore, it suffices to prove that $|\PP\{m(\check\beta, \eta')\}|$ converges in probability to 0.

	We have
	\begin{align}
		|\PP\{m(\check\beta, \eta')\}|
		& = | \PP\{m(\check\beta, \eta')\} - \frac{1}{K} \sum_{k=1}^K \PP_{n,k}\{m(\check\beta, \hat\eta_k)\} | \nonumber \\
		& \leq \frac{1}{K} \sum_{k=1}^K | \PP\{m(\check\beta, \eta')\} - \PP_{n,k}\{m(\check\beta, \hat\eta_k)\} |.
		\label{A-eq:A-lem:general-consistency-cf:proofuse1}
	\end{align}
	Because $K$ is fixed, to show $|\PP\{m(\check\beta, \eta')\}| = o_P(1)$ it suffices to show that for each $k \in [K]$, $| \PP\{m(\check\beta, \eta')\} - \PP_{n,k}\{m(\check\beta, \hat\eta_k)\} | = o_P(1)$. 

	We have
	\begin{align}
		& | \PP\{m(\check\beta, \eta')\} - \PP_{n,k}\{m(\check\beta, \hat\eta_k)\} | \nonumber \\
		\leq & | \PP\{m(\check\beta, \eta')\} - \PP\{m(\check\beta, \hat\eta_k)\} | + | \PP\{m(\check\beta, \hat\eta_k)\} - \PP_{n,k}\{m(\check\beta, \hat\eta_k)\} | . \label{A-eq:A-lem:general-consistency-cf:proofuse2}
	\end{align}
	Next we show that both terms in \cref{A-eq:A-lem:general-consistency-cf:proofuse2} are $o_P(1)$.

	For the first term in \cref{A-eq:A-lem:general-consistency-cf:proofuse2}, it being $o_P(1)$ follows immediately from \cref{A-asu:nuisance-conv-PPee-sup-cf}.

	For the second term in \cref{A-eq:A-lem:general-consistency-cf:proofuse2}, we have
	\begin{align}
		| \PP\{m(\check\beta, \hat\eta_k)\} - \PP_{n,k}\{m(\check\beta, \hat\eta_k)\} | \leq \sup_{\beta \in \Theta} | \PP\{m(\beta, \hat\eta_k)\} - \PP_{n,k}\{m(\beta, \hat\eta_k)\} |.
	\end{align}
	We first show that for fixed $\beta \in \Theta$, $| \PP\{m(\beta, \hat\eta_k)\} - \PP_{n,k}\{m(\beta, \hat\eta_k)\} | = o_P(1)$, using a simple analysis of the variance. Because $\hat\eta_k$ is fixed when conditioning on $(O_i)_{i \in B_k^c}$, we have
	\begin{align}
		\EE [ \PP\{m(\beta, \hat\eta_k)\} - \PP_{n,k}\{m(\beta, \hat\eta_k)\} \mid (O_i)_{i \in B_k^c} ] = 0. \label{A-eq:A-lem:general-consistency-cf:proofuse3}
	\end{align}
	We also have
	\begin{align}
		& \quad \EE [ | \PP\{m(\beta, \hat\eta_k)\} - \PP_{n,k}\{m(\beta, \hat\eta_k)\} |^2 \mid (O_i)_{i \in B_k^c} ] \nonumber \\
		& = |\PP\{m(\beta, \hat\eta_k)\}|^2 - 2 |\PP\{m(\beta, \hat\eta_k)\}|^2 + \EE [ |\PP_{n,k}\{m(\beta, \hat\eta_k)\} |^2 \mid (O_i)_{i \in B_k^c} ] \nonumber \\
		& = - |\PP\{m(\beta, \hat\eta_k)\}|^2 + \EE \bigg\{ \frac{1}{N^2} \sum_{j_1,j_2 \in B_k} m(O_{j_1}; \beta, \hat\eta_k) m(O_{j_2}; \beta, \hat\eta_k) \Big\vert (O_i)_{i \in B_k^c} \bigg\} \nonumber \\
		& = - \frac{1}{N} |\PP\{m(\beta, \hat\eta_k)\}|^2 + \frac{1}{N} \PP\{ |m(\beta, \hat\eta_k)|^2 \} \nonumber \\
		& \leq \frac{1}{N} \PP\{ |m(\beta, \hat\eta_k)|^2 \} \nonumber \\
		& = O_P(n^{-1}), \label{A-eq:A-lem:general-consistency-cf:proofuse4}
	\end{align}
	where the last equality follows from \cref{A-asu:reg-bounded-and-cont-differentiable-ee}. \cref{A-eq:A-lem:general-consistency-cf:proofuse3,A-eq:A-lem:general-consistency-cf:proofuse4} and Chebyshev's inequality implies that conditional on $(O_i)_{i \in B_k^c}$, $| \PP\{m(\beta, \hat\eta_k)\} - \PP_{n,k}\{m(\beta, \hat\eta_k)\} | = O_P(n^{-1/2})$. Lemma 6.1 in \citet{chernozhukov2018double} then implies that unconditionally,
	\begin{align}
		| \PP\{m(\beta, \hat\eta_k)\} - \PP_{n,k}\{m(\beta, \hat\eta_k)\} | = O_P(n^{-1/2}). \label{A-eq:A-lem:general-consistency-cf:proofuse5}
	\end{align}
	By \cref{A-asu:reg-compact-param-space},\cref{A-asu:reg-bounded-and-cont-differentiable-ee} and Example 19.8 in \citet{van2000asymptotic}, for a fixed $\hat{\eta}_k$, $\{m(\beta, \hat{\eta}_k) : \beta \in \Theta\}$ is $P$-Glivenko-Cantelli. Because $\hat\eta_k$ is independent with the observations being averaged over in $\PP_{n,k}$, \cref{A-eq:A-lem:general-consistency-cf:proofuse5} and property of $P$-Glivenko-Cantelli class implies that
	\begin{align*}
		\sup_{\beta \in \Theta} | \PP\{m(\beta, \hat\eta_k)\} - \PP_{n,k}\{m(\beta, \hat\eta_k)\} | = O_P(n^{-1/2}),
	\end{align*}
	which further implies that
	\begin{align*}
		| \PP\{m(\check\beta, \hat\eta_k)\} - \PP_{n,k}\{m(\check\beta, \hat\eta_k)\} | = O_P(n^{-1/2}).
	\end{align*}

	Therefore, we showed that both terms in \cref{A-eq:A-lem:general-consistency-cf:proofuse2} are $o_P(1)$. This completes the proof.
\end{proof}

\begin{lem}[Convergence of the derivative (cross-fitting)]
	\label{A-lem:conv-ee-deriv-cf}
	Suppose \cref{A-asu:reg-bounded-ee-deriv-and-meat,A-asu:reg-dominated-ee-deriv,A-asu:nuisance-conv-PPee-deriv-cf,A-asu:reg-compact-param-space,A-asu:reg-bounded-and-cont-differentiable-ee} hold. If a sequence of random variables $\tilde\beta_n$ satisfies $\tilde\beta_n \pto \beta_0$, then
	\begin{align*}
		\frac{1}{K}\sum_{k=1}^K \PP_{n,k} \{ \partial_\beta m(\tilde\beta_n, \hat\eta_k) \} \pto \PP \{ \partial_\beta m(\beta_0, \eta') \}.
	\end{align*}
\end{lem}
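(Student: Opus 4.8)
The plan is to mirror the proof of \cref{A-lem:conv-ee-deriv} line for line, with one structural change: every appeal there to the Donsker (hence Glivenko--Cantelli) property of the class $\{\partial_\beta m(\beta,\eta):\beta\in\Theta,\eta\in\cT\}$ — which is \emph{not} assumed in the cross-fitted setting — is replaced by the ``condition on the complementary fold, then un-condition'' device already used in the proof of \cref{A-lem:general-consistency-cf}. Since $K$ is fixed, it suffices to show that for each $k\in[K]$ one has $\PP_{n,k}\{\partial_\beta m(\tilde\beta_n,\hat\eta_k)\}-\PP\{\partial_\beta m(\beta_0,\eta')\}=o_P(1)$, and the first step is to split this into three pieces:
\begin{align*}
	& \PP_{n,k}\{\partial_\beta m(\tilde\beta_n,\hat\eta_k)\} - \PP\{\partial_\beta m(\beta_0,\eta')\} \\
	& \quad = \underbrace{(\PP_{n,k}-\PP)\{\partial_\beta m(\tilde\beta_n,\hat\eta_k)\}}_{\mathrm{I}} + \underbrace{\PP\{\partial_\beta m(\tilde\beta_n,\hat\eta_k)\}-\PP\{\partial_\beta m(\beta_0,\hat\eta_k)\}}_{\mathrm{II}} \\
	& \qquad\quad + \underbrace{\PP\{\partial_\beta m(\beta_0,\hat\eta_k)\}-\PP\{\partial_\beta m(\beta_0,\eta')\}}_{\mathrm{III}}.
\end{align*}

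Terms $\mathrm{II}$ and $\mathrm{III}$ carry over unchanged from the non-cross-fitted proof. Term $\mathrm{III}$ is $o_P(1)$ directly by \cref{A-asu:nuisance-conv-PPee-deriv-cf}. For term $\mathrm{II}$ I would invoke the dominated-convergence theorem for convergence in probability, exactly as for the second term in the proof of \cref{A-lem:conv-ee-deriv}: $\beta\mapsto\partial_\beta m(\beta,\eta)$ is continuous by \cref{A-asu:reg-bounded-and-cont-differentiable-ee}, it is dominated by a fixed integrable function by \cref{A-asu:reg-dominated-ee-deriv}, and $\tilde\beta_n\pto\beta_0$ by hypothesis, so $\mathrm{II}=o_P(1)$.

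The substantive step is term $\mathrm{I}$, which in \cref{A-lem:conv-ee-deriv} relied on \cref{A-asu:donsker}. Since $\tilde\beta_n\in\Theta$ it is enough to show $\sup_{\beta\in\Theta}|(\PP_{n,k}-\PP)\{\partial_\beta m(\beta,\hat\eta_k)\}|=o_P(1)$, and I would reproduce the argument from the proof of \cref{A-lem:general-consistency-cf}. Fix $\beta\in\Theta$: conditioning on $(O_i)_{i\in B_k^c}$ freezes $\hat\eta_k$, so $(\PP_{n,k}-\PP)\{\partial_\beta m(\beta,\hat\eta_k)\}$ is a centered average of $|B_k|=n/K$ conditionally i.i.d. terms whose conditional variance is $O(n^{-1})$ by the uniform boundedness in \cref{A-asu:reg-bounded-ee-deriv-and-meat}; Chebyshev's inequality gives the conditional rate $O_P(n^{-1/2})$, and Lemma 6.1 of \citet{chernozhukov2018double} promotes it to an unconditional $O_P(n^{-1/2})$. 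To pass from pointwise to uniform over $\beta$, I would note that for each frozen $\hat\eta_k$ the class $\{\partial_\beta m(\beta,\hat\eta_k):\beta\in\Theta\}$ is $P$-Glivenko--Cantelli — $\Theta$ is compact (\cref{A-asu:reg-compact-param-space}), $\beta\mapsto\partial_\beta m(\beta,\eta)$ is continuous (\cref{A-asu:reg-bounded-and-cont-differentiable-ee}), and there is an integrable envelope (\cref{A-asu:reg-dominated-ee-deriv}), so Example 19.8 of \citet{van2000asymptotic} applies — and, since $\hat\eta_k$ is independent of the observations averaged in $\PP_{n,k}$, this upgrades the pointwise statement to $\sup_{\beta\in\Theta}|(\PP_{n,k}-\PP)\{\partial_\beta m(\beta,\hat\eta_k)\}|=o_P(1)$, hence $\mathrm{I}=o_P(1)$. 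Summing $\mathrm{I}+\mathrm{II}+\mathrm{III}$ and then averaging over $k\in[K]$ completes the proof.

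The step I expect to be the main obstacle is making the uniform-in-$\beta$ Glivenko--Cantelli control rigorous for the \emph{$n$-dependent, random} classes $\{\partial_\beta m(\cdot,\hat\eta_k)\}$ — one must ensure that the $o_P(1)$ rate does not secretly depend on $\hat\eta_k$. This is exactly why \cref{A-asu:reg-bounded-ee-deriv-and-meat} is imposed (it supplies a constant, $\eta$-free envelope) and why the conditioning-then-unconditioning device of \cref{A-lem:general-consistency-cf} takes the place of the Donsker assumption that was available in \cref{A-lem:conv-ee-deriv}.
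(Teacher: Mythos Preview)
Your proposal is correct and essentially identical to the paper's own proof: the same three-term decomposition, the same conditioning-then-unconditioning device with Chebyshev and Lemma~6.1 of \citet{chernozhukov2018double} for term $\mathrm{I}$, the same appeal to Example~19.8 of \citet{van2000asymptotic} for the Glivenko--Cantelli upgrade, and the same dominated-convergence and direct-assumption arguments for terms $\mathrm{II}$ and $\mathrm{III}$.
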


\begin{proof}[Proof of \cref{A-lem:conv-ee-deriv-cf}]
	Because $K$ is finite, it suffices to show that
	\begin{align*}
		\PP_{n,k} \{ \partial_\beta m(\tilde\beta_n, \hat\eta_k) \} - \PP \{ \partial_\beta m(\beta_0, \eta') \} = o_P(1).
	\end{align*}
	We have
	\begin{align}
		& \quad |\PP_{n,k} \{ \partial_\beta m(\tilde\beta_n, \hat\eta_k) \} - \PP \{ \partial_\beta m(\beta_0, \eta') \}| \nonumber \\
		& \leq |\PP_{n,k} \{ \partial_\beta m(\tilde\beta_n, \hat\eta_k) \} - \PP \{\partial_\beta m(\tilde\beta, \hat\eta_k) \} |
		+ | \PP \{\partial_\beta m(\tilde\beta, \hat\eta_k) \} - \PP \{ \partial_\beta m(\beta_0, \hat\eta_k) \} | \nonumber \\
		& \quad + | \PP \{ \partial_\beta m(\beta_0, \hat\eta_k) \} - \PP \{ \partial_\beta m(\beta_0, \eta') \} |. \label{A-eq:A-lem:conv-ee-deriv-cf:proofuse1}
	\end{align}
	Next we show that all three terms in \cref{A-eq:A-lem:conv-ee-deriv-cf:proofuse1} are $o_P(1)$.
	
	For the first term in \cref{A-eq:A-lem:conv-ee-deriv-cf:proofuse1}, using the same moment bounding technique as in \cref{A-eq:A-lem:general-consistency-cf:proofuse3,A-eq:A-lem:general-consistency-cf:proofuse4} in the proof of \cref{A-lem:general-consistency-cf}, we have that for fixed $\beta \in \Theta$,
	\begin{align*}
		\EE [ |\PP_{n,k} \{ \partial_\beta m(\beta, \hat\eta_k) \} - \PP \{\partial_\beta m(\beta, \hat\eta_k) \} |^2 \mid (O_i)_{i \in B_k^c} ] \leq \frac{1}{N} \PP \{|\partial_\beta m(\beta, \hat\eta_k)|^2\} = O_P(n^{-1}),
	\end{align*}
	where the last equality follows from \cref{A-asu:reg-bounded-ee-deriv-and-meat}. Therefore, Chebyshev's inequality and Lemma 6.1 in \citet{chernozhukov2018double} implies that
	\begin{align}
		\PP_{n,k} \{ \partial_\beta m(\beta, \hat\eta_k) \} - \PP \{\partial_\beta m(\beta, \hat\eta_k) \} = O_P(n^{-1/2}). \label{A-eq:A-lem:conv-ee-deriv-cf:proofuse2}
	\end{align}
	\cref{A-asu:reg-bounded-and-cont-differentiable-ee} implies that $\partial_\beta m(\beta, \eta)$ is continuous. By \cref{A-asu:reg-compact-param-space}, \cref{A-asu:reg-dominated-ee-deriv}, and Example 19.8 in \citet{van2000asymptotic}, for a fixed $\hat{\eta}_k$, $\{\partial_\beta m(\beta, \hat{\eta}_k) : \beta \in \Theta\}$ is $P$-Glivenko-Cantelli. Because $\hat\eta_k$ is independent with the observations being averaged over in $\PP_{n,k}$, \cref{A-eq:A-lem:conv-ee-deriv-cf:proofuse2} and the property of $P$-Glivenko-Cantelli class implies that
	\begin{align*}
		\sup_{\beta \in \Theta} | \PP_{n,k} \{ \partial_\beta m(\beta, \hat\eta_k) \} - \PP \{\partial_\beta m(\beta, \hat\eta_k) \} | = O_P(n^{-1/2}),
	\end{align*}
	which further implies that
	\begin{align*}
		| \PP_{n,k} \{ \partial_\beta m(\tilde\beta, \hat\eta_k) \} - \PP \{\partial_\beta m(\tilde\beta, \hat\eta_k) \} | = O_P(n^{-1/2}).
	\end{align*}
	This shows that the first term in \cref{A-eq:A-lem:conv-ee-deriv-cf:proofuse1} is $o_P(1)$.

	For the second term in \cref{A-eq:A-lem:conv-ee-deriv-cf:proofuse1}, it follows from the fact that $\tilde\beta_n \pto \beta_0$ (lemma assumption), the dominatedness of $\partial_\beta m(\beta,\eta)$ (\cref{A-asu:reg-dominated-ee-deriv}), and the dominated convergence theorem for convergence in probability (see, e.g., \citet[][Result (viii) of Chapter 3.2]{chung2001course}) that
	\begin{align*}
		\PP \{\partial_\beta m(\tilde\beta, \hat\eta_k) \} - \PP \{ \partial_\beta m(\beta_0, \hat\eta_k) \} = o_P(1).
	\end{align*}

	For the third term \cref{A-eq:A-lem:conv-ee-deriv-cf:proofuse1}, it being $o_P(1)$ is stated in \cref{A-asu:nuisance-conv-PPee-deriv-cf}.

	Therefore, we showed that all three terms in \cref{A-eq:A-lem:conv-ee-deriv-cf:proofuse1} are $o_P(1)$. This completes the proof.
\end{proof}

\begin{lem}[Convergence of the ``meat'' term (cross-fitting)]
	\label{A-lem:conv-ee-meat-cf}
	Suppose \cref{A-asu:reg-bounded-ee-deriv-and-meat,A-asu:reg-dominated-ee-meat,A-asu:nuisance-conv-PPee-meat-cf,A-asu:reg-compact-param-space,A-asu:reg-bounded-and-cont-differentiable-ee} hold. If a sequence of random variables $\tilde\beta_n$ satisfies $\tilde\beta_n \pto \beta_0$, then
	\begin{align*}
		\frac{1}{K} \sum_{k=1}^K \PP_{n,k} \{ m(\tilde\beta_n, \hat\eta_k) m(\tilde\beta_n, \hat\eta_k)^T \} \pto \PP \{ m(\beta_0, \eta') m(\beta_0, \eta')^T \}
	\end{align*}
\end{lem}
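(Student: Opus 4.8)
The plan is to transcribe the proof of \cref{A-lem:conv-ee-deriv-cf} almost verbatim, with the matrix-valued integrand $\partial_\beta m(\beta,\eta)$ replaced throughout by $m(\beta,\eta)m(\beta,\eta)^T$. Since $K$ is finite, it suffices to show for each fixed $k\in[K]$ that $\PP_{n,k}\{m(\tilde\beta_n,\hat\eta_k)m(\tilde\beta_n,\hat\eta_k)^T\} - \PP\{m(\beta_0,\eta')m(\beta_0,\eta')^T\} = o_P(1)$, and I would split this difference exactly as in \cref{A-eq:A-lem:conv-ee-deriv-cf:proofuse1}:
\begin{align*}
& \big[\PP_{n,k}\{m(\tilde\beta_n,\hat\eta_k)m(\tilde\beta_n,\hat\eta_k)^T\} - \PP\{m(\tilde\beta_n,\hat\eta_k)m(\tilde\beta_n,\hat\eta_k)^T\}\big] \\
& \qquad + \big[\PP\{m(\tilde\beta_n,\hat\eta_k)m(\tilde\beta_n,\hat\eta_k)^T\} - \PP\{m(\beta_0,\hat\eta_k)m(\beta_0,\hat\eta_k)^T\}\big] \\
& \qquad + \big[\PP\{m(\beta_0,\hat\eta_k)m(\beta_0,\hat\eta_k)^T\} - \PP\{m(\beta_0,\eta')m(\beta_0,\eta')^T\}\big],
\end{align*}
then show each bracket is $o_P(1)$.

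For the first bracket I would condition on $(O_i)_{i\in B_k^c}$ so that $\hat\eta_k$ is fixed; then, for each fixed $\beta$, the centered average $\PP_{n,k}\{m(\beta,\hat\eta_k)m(\beta,\hat\eta_k)^T\} - \PP\{m(\beta,\hat\eta_k)m(\beta,\hat\eta_k)^T\}$ has conditional mean zero and conditional second moment bounded by $N^{-1}\PP\{|m(\beta,\hat\eta_k)m(\beta,\hat\eta_k)^T|_F^2\} = O_P(n^{-1})$, where the finiteness uses the uniform boundedness of $m(\beta,\eta)m(\beta,\eta)^T$ in \cref{A-asu:reg-bounded-ee-deriv-and-meat}, just as in \cref{A-eq:A-lem:general-consistency-cf:proofuse3,A-eq:A-lem:general-consistency-cf:proofuse4}. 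Chebyshev's inequality together with Lemma 6.1 of \citet{chernozhukov2018double} upgrades this to the unconditional rate $O_P(n^{-1/2})$ pointwise in $\beta$. To make it uniform over $\beta\in\Theta$, I would argue that, for fixed $\hat\eta_k$, the class $\{m(\beta,\hat\eta_k)m(\beta,\hat\eta_k)^T : \beta\in\Theta\}$ is $P$-Glivenko--Cantelli, using compactness of $\Theta$ (\cref{A-asu:reg-compact-param-space}), continuity of $m(\beta,\eta)m(\beta,\eta)^T$ in $\beta$ (implied by continuous differentiability of $m$ in $\beta$, \cref{A-asu:reg-bounded-and-cont-differentiable-ee}), integrable domination of $m(\beta,\eta)m(\beta,\eta)^T$ (\cref{A-asu:reg-dominated-ee-meat}), and Example 19.8 of \citet{van2000asymptotic} applied entrywise. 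Since $\hat\eta_k$ is independent of the observations averaged in $\PP_{n,k}$, combining the pointwise rate with the Glivenko--Cantelli uniformity gives $\sup_{\beta\in\Theta}|\PP_{n,k}\{m(\beta,\hat\eta_k)m(\beta,\hat\eta_k)^T\} - \PP\{m(\beta,\hat\eta_k)m(\beta,\hat\eta_k)^T\}| = O_P(n^{-1/2})$, so evaluating at $\tilde\beta_n$ shows the first bracket is $o_P(1)$.

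For the second bracket, since $\tilde\beta_n\pto\beta_0$ and $m(\beta,\eta)m(\beta,\eta)^T$ is continuous in $\beta$ and integrably dominated (\cref{A-asu:reg-bounded-and-cont-differentiable-ee,A-asu:reg-dominated-ee-meat}), the dominated convergence theorem for convergence in probability (e.g., Result (viii) of Chapter 3.2 of \citet{chung2001course}) shows it is $o_P(1)$. For the third bracket, being $o_P(1)$ is precisely \cref{A-asu:nuisance-conv-PPee-meat-cf}. Summing the three $o_P(1)$ brackets and then averaging over the finitely many $k$ completes the proof. The only point needing slightly more care than in \cref{A-lem:conv-ee-deriv-cf} is the Glivenko--Cantelli property of the product class $\{m(\beta,\hat\eta_k)m(\beta,\hat\eta_k)^T : \beta\in\Theta\}$ rather than $\{\partial_\beta m(\beta,\hat\eta_k) : \beta\in\Theta\}$; I expect to handle this entrywise, observing that each entry of $m(\beta,\hat\eta_k)m(\beta,\hat\eta_k)^T$ is a continuous function of $\beta$ (a finite sum of products of continuous entries of $m$) that is integrably dominated on the compact set $\Theta$, so that Example 19.8 of \citet{van2000asymptotic} applies to each entry. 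This is the Glivenko--Cantelli analogue of the product-class argument used for the Donsker property in the proof of \cref{A-lem:conv-ee-meat}.
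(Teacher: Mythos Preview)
Your proposal is correct and mirrors the paper's own proof, which simply says the argument is identical to that of \cref{A-lem:conv-ee-deriv-cf} with $\partial_\beta m$ replaced by $mm^T$, the only additional check being the $P$-Glivenko--Cantelli property of $\{m(\beta,\hat\eta_k)m(\beta,\hat\eta_k)^T:\beta\in\Theta\}$ via continuity (\cref{A-asu:reg-bounded-and-cont-differentiable-ee}), compactness (\cref{A-asu:reg-compact-param-space}), integrable domination (\cref{A-asu:reg-dominated-ee-meat}), and Example~19.8 of \citet{van2000asymptotic}. Your entrywise justification of this Glivenko--Cantelli step is slightly more explicit than the paper's, but the approach is the same.
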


\begin{proof}[Proof of \cref{A-lem:conv-ee-meat-cf}]
	The proof is almost identical to the proof of \cref{A-lem:conv-ee-deriv-cf} and thus omitted, except that we need to verify that for a fixed $\hat{\eta}_k$, $\{m(\beta, \hat{\eta}_k) m(\beta, \hat{\eta}_k)^T \beta \in \Theta\}$ is $P$-Glivenko-Cantelli. \cref{A-asu:reg-bounded-and-cont-differentiable-ee} implies that $m(\beta, \hat{\eta}_k) m(\beta, \hat{\eta}_k)^T $ is continuous. By \cref{A-asu:reg-compact-param-space}, \cref{A-asu:reg-dominated-ee-meat}, and Example 19.8 in \citet{van2000asymptotic}, for a fixed $\hat{\eta}_k$, $\{m(\beta, \hat{\eta}_k) m(\beta, \hat{\eta}_k)^T \beta \in \Theta\}$ is $P$-Glivenko-Cantelli.
\end{proof}

\begin{thm}[Asymptotic Normality (cross-fitting)]
	\label{A-thm:general-normality-cf}
	Suppose \cref{A-asu:unique-zero,A-asu:nuisance-conv-general-cf,A-asu:reg-general,A-asu:reg-bounded-ee-deriv-and-meat} hold. Then
	\begin{align}
		\sqrt{n}(\check\beta - \beta_0) \dto N ( 0, V) \quad \text{as } n\to\infty, \label{A-eq:asymp-normality-cf}
	\end{align}
	where the asymptotic variance is
	\begin{align}
		V = \PP\{\partial_\beta m(\beta_0, \eta')\}^{-1} ~ \PP\{ m(\beta_0, \eta') m(\beta_0, \eta')^T \} ~ \PP\{\partial_\beta m(\beta_0, \eta')\}^{-1, T}. \label{A-eq:def-asymp-var-cf}
	\end{align}	
	In addition,
	\begin{align*}
		\PP_n\{\partial_\beta m(\check\beta, \hat\eta)\}^{-1} ~ \PP_n\{ m(\check\beta, \hat\eta) m(\check\beta, \hat\eta)^T \} ~ \PP_n\{\partial_\beta m(\check\beta, \hat\eta)\}^{-1, T}
	\end{align*}
	is a consistent estimator for $V$.	
\end{thm}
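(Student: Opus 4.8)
The plan is to mirror the argument for the non-cross-fitted case (\cref{A-thm:general-normality}), substituting the sample-splitting device for the Donsker condition. First I would invoke \cref{A-lem:general-consistency-cf} to get $\check\beta \pto \beta_0$, then Taylor-expand the defining equation $0 = K^{-1}\sum_{k=1}^K \PP_{n,k}\{m(\check\beta,\hat\eta_k)\}$ about $\beta_0$ via the coordinatewise mean value theorem (legitimate since $m(\beta,\eta)$ is $C^1$ in $\beta$ by \cref{A-asu:reg-bounded-and-cont-differentiable-ee}), obtaining
\begin{align*}
0 = \frac{1}{K}\sum_{k=1}^K \PP_{n,k}\{m(\beta_0,\hat\eta_k)\} + \Big[\frac{1}{K}\sum_{k=1}^K \PP_{n,k}\{\partial_\beta m(\tilde\beta,\hat\eta_k)\}\Big](\check\beta - \beta_0),
\end{align*}
with $\tilde\beta$ between $\check\beta$ and $\beta_0$, hence $\tilde\beta \pto \beta_0$. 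By \cref{A-lem:conv-ee-deriv-cf} the bracketed matrix converges in probability to $\PP\{\partial_\beta m(\beta_0,\eta')\}$, which is invertible by \cref{A-asu:reg-invertible-ee-deriv}; so with probability tending to one
\begin{align*}
\sqrt{n}(\check\beta - \beta_0) = -\big[\PP\{\partial_\beta m(\beta_0,\eta')\}\big]^{-1}\Big[\sqrt{n}\,\frac{1}{K}\sum_{k=1}^K \PP_{n,k}\{m(\beta_0,\hat\eta_k)\}\Big] + o_P(1).
\end{align*}

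The heart of the proof is to show that $\sqrt{n}\,K^{-1}\sum_k \PP_{n,k}\{m(\beta_0,\hat\eta_k)\}$ behaves like $\sqrt{n}(\PP_n-\PP)\{m(\beta_0,\eta')\}$. I would decompose, for each fold $k$,
\begin{align*}
\PP_{n,k}\{m(\beta_0,\hat\eta_k)\} = (\PP_{n,k}-\PP)\{m(\beta_0,\eta')\} + (\PP_{n,k}-\PP)\{m(\beta_0,\hat\eta_k) - m(\beta_0,\eta')\} + \PP\{m(\beta_0,\hat\eta_k)\},
\end{align*}
where the last term is identically zero by global robustness (\cref{A-asu:unique-zero}). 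Since the $(B_k)_{k=1}^K$ form an equal-sized partition of $[n]$, averaging the first term over $k$ returns exactly $(\PP_n-\PP)\{m(\beta_0,\eta')\}$. The remaining empirical-process remainder $(\PP_{n,k}-\PP)\{m(\beta_0,\hat\eta_k) - m(\beta_0,\eta')\}$ I would control by conditioning on $(O_i)_{i\in B_k^c}$: then $\hat\eta_k$ is frozen and the $N = n/K$ summands in $\PP_{n,k}$ are i.i.d., so this quantity has conditional mean $0$ and conditional variance at most $N^{-1}\|m(\beta_0,\hat\eta_k)-m(\beta_0,\eta')\|^2 = o_P(N^{-1})$ by \cref{A-asu:nuisance-conv-ee-l2-cf}. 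Chebyshev's inequality then gives a conditional $o_P(n^{-1/2})$ bound, and Lemma 6.1 of \citet{chernozhukov2018double} transfers it to an unconditional one; summing over the finitely many folds preserves $o_P(n^{-1/2})$. This is precisely the step where cross-fitting replaces the Donsker hypothesis of \cref{A-thm:general-normality}, and it is the part I expect to require the most care.

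Putting these together, $\sqrt{n}\,K^{-1}\sum_k \PP_{n,k}\{m(\beta_0,\hat\eta_k)\} = \sqrt{n}(\PP_n-\PP)\{m(\beta_0,\eta')\} + o_P(1)$, which converges in distribution to $N(0,\PP\{m(\beta_0,\eta')m(\beta_0,\eta')^T\})$ by the Lindeberg--Feller central limit theorem (using $\PP\{m(\beta_0,\eta')\} = 0$). Substituting into the linearization and applying Slutsky's theorem and the continuous mapping theorem yields \eqref{A-eq:asymp-normality-cf} with $V$ as in \eqref{A-eq:def-asymp-var-cf}. Finally, for the variance estimator I would apply \cref{A-lem:conv-ee-deriv-cf,A-lem:conv-ee-meat-cf} with $\tilde\beta_n = \check\beta$, together with the continuous mapping theorem, to conclude consistency of the cross-fitted sandwich $\big[K^{-1}\sum_k \PP_{n,k}\{\partial_\beta m(\check\beta,\hat\eta_k)\}\big]^{-1}\big[K^{-1}\sum_k \PP_{n,k}\{m(\check\beta,\hat\eta_k)^{\otimes 2}\}\big]\big[K^{-1}\sum_k \PP_{n,k}\{\partial_\beta m(\check\beta,\hat\eta_k)\}\big]^{-1,T}$ for $V$.
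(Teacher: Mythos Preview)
Your proposal is correct and follows essentially the same route as the paper's proof: consistency via \cref{A-lem:general-consistency-cf}, mean-value expansion, derivative convergence via \cref{A-lem:conv-ee-deriv-cf}, and the key cross-fitting step of controlling $\PP_{n,k}\{m(\beta_0,\hat\eta_k)\} - \PP_{n,k}\{m(\beta_0,\eta')\}$ by conditioning on $(O_i)_{i\in B_k^c}$, bounding the conditional second moment by $N^{-1}\|m(\beta_0,\hat\eta_k)-m(\beta_0,\eta')\|^2$, and applying Chebyshev plus Lemma~6.1 of \citet{chernozhukov2018double}. Your three-term decomposition is a slightly more explicit bookkeeping of the same quantity the paper handles directly (since $\PP\{m(\beta_0,\hat\eta_k)\} = \PP\{m(\beta_0,\eta')\} = 0$ collapses your remainder to exactly the difference the paper bounds), but the argument is otherwise identical.
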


\begin{proof}[Proof of \cref{A-thm:general-normality-cf}]
	Because $m(\beta,\eta)$ is continuously differentiable in $\beta$ (\cref{A-asu:reg-bounded-and-cont-differentiable-ee}), the Lagrange mean value theorem implies that
	\begin{align}
		0 & = \frac{1}{K}\sum_{k=1}^K \PP_{n,k}\{ m(\check\beta, \hat\eta_k)\} \nonumber \\
		& = \frac{1}{K}\sum_{k=1}^K \PP_{n,k}\{ m(\beta_0, \hat\eta_k)\} + \left[ \frac{\partial}{\partial \beta^T} \frac{1}{K}\sum_{k=1}^K \PP_{n,k}\{ m(\tilde\beta, \hat\eta_k)\} \right](\check\beta - \beta_0), \label{A-eq:A-thm:general-normality-cf:proofuse1}
	\end{align}
	where $\tilde\beta$ is between $\check\beta$ and $\beta_0$. 
	Because \cref{A-lem:general-consistency-cf} implies that $\check\beta \pto \beta_0$ and thus $\tilde\beta \pto \beta_0$, \cref{A-lem:conv-ee-deriv-cf} implies that
	\begin{align}
		\frac{1}{K}\sum_{k=1}^K \PP_{n,k}\{ \partial_\beta m(\tilde\beta, \hat\eta_k)\} \pto \PP\{\partial_\beta m(\beta_0, \eta')\}. \label{A-eq:A-thm:general-normality-cf:proofuse2}
	\end{align}
	Because $\PP\{\partial_\beta m(\beta_0,\eta')\}$ is invertible (\cref{A-asu:reg-invertible-ee-deriv}), \cref{A-eq:A-thm:general-normality-cf:proofuse2} implies that $\frac{1}{K}\sum_{k=1}^K \PP_{n,k}\{ \partial_\beta m(\tilde\beta, \hat\eta_k)\}$ is invertible with probability approaching 1. Therefore, \cref{A-eq:A-thm:general-normality-cf:proofuse1,A-eq:A-thm:general-normality-cf:proofuse2} imply that with probability approaching 1 we have
	\begin{align}
		\sqrt{n} (\check\beta - \beta_0) = - [ \PP\{\partial_\beta m(\beta_0, \eta')\} ]^{-1} \bigg[ \sqrt{n} \frac{1}{K}\sum_{k=1}^K \PP_{n,k}\{ m(\beta_0, \hat\eta_k)\} \bigg]. \label{A-eq:A-thm:general-normality-cf:proofuse3}
	\end{align}

	Next we study the asymptotic distribution of the second term in \cref{A-eq:A-thm:general-normality-cf:proofuse3}. We will first show that for each $k \in [K]$,
	\begin{align}
		\PP_{n,k}\{ m(\beta_0, \hat\eta_k)\} - \PP_{n,k} \{m(\beta_0, \eta')\} = o_P(n^{-1/2}). \label{A-eq:A-thm:general-normality-cf:proofuse4}
	\end{align}
	Because of \cref{A-asu:unique-zero}, we have
	\begin{align}
		\EE [ \PP_{n,k}\{ m(\beta_0, \hat\eta_k)\} \mid (O_i)_{i \in B_k^c}] = \EE [ \PP_{n,k}\{ m(\beta_0, \eta')\} \mid (O_i)_{i \in B_k^c} ] = 0. \label{A-eq:A-thm:general-normality-cf:proofuse5}
	\end{align}
	We also have
	\begin{align}
		& \quad \EE [ \| \PP_{n,k}\{ m(\beta_0, \hat\eta_k)\} - \PP_{n,k}\{ m(\beta_0, \eta')\} \|^2 \mid (O_i)_{i \in B_k^c} ] \nonumber \\
		& = \EE \bigg[ \frac{1}{N^2} \sum_{j_1, j_2\in B_k} \{ m(O_{j_1}; \beta_0, \hat\eta_k) - m(O_{j_1}; \beta_0, \eta') \} \{ m(O_{j_2}; \beta_0, \hat\eta_k) - m(O_{j_2}; \beta_0, \eta') \} \Big\vert (O_i)_{i \in B_k^c} \bigg] \nonumber \\
		& = \frac{1}{N} \EE \{ | m(\beta_0, \hat\eta_k) - m(\beta_0, \eta') |^2 \mid (O_i)_{i \in B_k^c} \} \label{A-eq:A-thm:general-normality-cf:proofuse6} \\
		& = o_P(n^{-1}), \label{A-eq:A-thm:general-normality-cf:proofuse7}
	\end{align}
	where \cref{A-eq:A-thm:general-normality-cf:proofuse6} follows from the fact that the cross-product term has expectation zero if $j_1 \neq j_2$ due to \cref{A-asu:unique-zero}, and \cref{A-eq:A-thm:general-normality-cf:proofuse7} follows from \cref{A-asu:nuisance-conv-ee-l2-cf}. \cref{A-eq:A-thm:general-normality-cf:proofuse5,A-eq:A-thm:general-normality-cf:proofuse7} and Chebyshev's inequality imply that \cref{A-eq:A-thm:general-normality-cf:proofuse4} holds when conditional on $(O_i)_{i \in B_k^c}$, and Lemma 6.1 in \citet{chernozhukov2018double} implies that \cref{A-eq:A-thm:general-normality-cf:proofuse4} holds unconditionally. Thus we established \cref{A-eq:A-thm:general-normality-cf:proofuse4}.

	\cref{A-eq:A-thm:general-normality-cf:proofuse4} implies that
	\begin{align}
		\sqrt{n} \frac{1}{K}\sum_{k=1}^K \PP_{n,k}\{ m(\beta_0, \hat\eta_k)\} & = \sqrt{n}\PP_n \{m(\beta_0, \eta')\} + o_P(1) \nonumber \\
		& \dto N(0, \PP \{m(\beta_0, \eta') m(\beta_0, \eta')^T\}), \label{A-eq:A-thm:general-normality-cf:proofuse8}
	\end{align}
	where the convergence in distribution follows from the Lindeberg-Feller Central Limit Theorem. Plugging \cref{A-eq:A-thm:general-normality-cf:proofuse8} into \cref{A-eq:A-thm:general-normality-cf:proofuse3}, the desired asymptotic normality result \cref{A-eq:asymp-normality-cf,A-eq:def-asymp-var-cf} follows from Slutsky's theorem.

	Consistency of the variance estimator follows immediately from \cref{A-lem:conv-ee-deriv-cf,A-lem:conv-ee-meat-cf} and the continuous mapping theorem. This completes the proof.
\end{proof}

\section{Asymptotic Normality of \texorpdfstring{$\hat\beta$}{beta-hat} and \texorpdfstring{$\check\beta$}{beta-check} for CEE}
\label{A-sec:continuous-can}


The proofs for iddentity link and log link are almost identical. Therefore, for succinctness here we provide the proof for identity link.


The true parameter value $\beta_0$ satisfies \cref{eq:identify-cee}, which implies
\begin{align}
	\PP \{ \PP ( Y_{t+1} \mid H_t, A_t = 1 ) - \PP (Y_{t+1} \mid H_t, A_t = 0 ) \mid S_t, I_t = 1 \} = \gamma_t(S_t;\beta_0) . \label{A-eq:beta0-continuous}
\end{align}

Consider the following estimating function for $\beta$:
\begin{align}
	\mc(\beta, \eta) = \sum_{t=1}^T d_t(S_t; \mu_t, \betainit) \phic_t(\beta, \mu_t), \label{A-eq:ee-continuous}
\end{align}
where 
\begin{align*}
    \phic_t(\beta, \mu_t) = \frac{A_t - p_t}{p_t (1 - p_t)} I_t ( Y_{t+1} - (A_t + p_t - 1) \gamma_t(S_t; \beta) - (1 - p_t)\mu_t(H_t, 1) - p_t \mu_t(H_t, 0)) \partial_{\beta} \gamma_t(S_t;\beta),
\end{align*}
and $\mu_t(H_t, 0)$ and $\mu_t(H_t, 1)$ are nuisance functions that take value in $\RR$.

\subsection{Assumptions}
\label{A-subA-sec:asu-continuous}

The assumptions used for establishing the consistency and asymptotic normality of $\hat\beta$ for the continuous outcome case are stated in \cref{subsec:proposed-estimator}, and for reference we restate them below.

\begin{asu}[Unique zero]
	\label{A-asu:continuous-unique-zero}
	Given the limit function $\{\mu_t'(h_t, a_t): t \in [T]\}$ and $\{d_t'(s_t): t \in [T]\}$, $\PP\{\mc(\beta, \eta')\} = 0$ as a function of $\beta$ has a unique zero.
\end{asu}

\begin{asu}[Convergence of nuisance parameter estimator]
	\label{A-asu:continuous-nuisance-converge}
	Suppose for each $t$, $\hat\mu_t$ converges in $L_2$ to some limit. In other words, there exists a function $\mu'_t(h_t, a_t)$ such that
	\begin{align*}
		\| \hat\mu_t - \mu_t' \|^2 & = \int |\hat\mu_t(h_t, a_t) - \mu_t'(h_t, a_t) |^2 dP(o) = o_P(1).
	\end{align*}
	When considering the cross-fitted estimator $\check\beta$, this assumption is revised by replacing $\hat\mu_t$ with $\hat\mu_{kt}$.
\end{asu}

\begin{asu}[Convergence of d term]
	\label{A-asu:continuous-d-converge}
	Suppose for each $t$, $\hat d_t(s_t; \mu'_t, \beta_0)$ converges in $L_2$ to some limit. In other words, there exists a function $d'_t(s_t)$ such that
	\begin{align*}
		\| \hat d_t - d_t' \|^2 & = \int |\hat d_t(s_t;\mu_t',\beta_0) - d'_t(s_t) |_F^2 dP(o) = o_P(1).
	\end{align*}
	When considering the cross-fitted estimator $\check\beta$, this assumption is revised by replacing $\hat d_t$ with $\hat d_{kt}$.
\end{asu}

\begin{asu}[Regularity conditions]
	\label{A-asu:continuous-regularity}
	~
	\begin{asulist}
		\item \label{A-asu:continuous-compact-param-space} Suppose the parameter space $\Theta$ of $\beta$ is compact.
		\item \label{A-asu:continuous-bounded-obs} Suppose the support of $O$ is bounded.
		\item \label{A-asu:continuous-bounded-gamma} Suppose for each $t$, $\hat \mu_t$ is bounded over $o \in \cO$.
            \item \label{A-asu:continuous-bounded-gamma-prime}
        Suppose for each $t$, $\mu_t'$ is bounded over $o \in \cO$.
		\item \label{A-asu:continuous-invertible-deriv} Suppose $\PP\{\partial_\beta \mc(\beta_0, \eta')\}$ is invertible.
            \item \label{A-asu:continuous-continuous-d-hat}
        Suppose $\hat d_t(s_t; \mu_t, \betainit)$ is continuous in $\mu_t$ and $\betainit$. 
            \item \label{A-asu:continuous-uniform-integrable}
        Suppose for each $t$, $\hat d_t(s_t; \hat \mu_t, \betainithat)$ is uniformly bounded in $L_2$.
            \item \label{A-asu:continuous-hat-tilde-d-bounded}
        Suppose for each $t$, $\hat d_t(s_t; \mu_t', \beta_0)$ is bounded over $o \in \cO$.
	\end{asulist}
\end{asu}

\begin{asu}[Donsker condition]
	\label{A-asu:continuous-donsker-nuisance}
	Suppose for each $t$, the estimator $\hat\mu_t$ and $\hat d_t$ take values in a Donsker class.
\end{asu}

\subsection{Lemmas}
\label{A-subA-sec:lemma-continuous}

\begin{lem}
	\label{A-lem:continuous-beta0-is-zero}
	$\PP\{\mc(\beta_0, \eta)\} = 0$ for any $\eta \in \cT$.
\end{lem}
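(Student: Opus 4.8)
The plan is to prove the stronger, term-by-term statement
$\PP\{ d_t(S_t;\mu_t,\betainit)\,\phic_t(\beta_0,\mu_t) \} = 0$ for every $t\in[T]$ and every nuisance value
$\eta = (\{\mu_t\}_{t=1}^T,\betainit)\in\cT$; summing over $t$ then gives $\PP\{\mc(\beta_0,\eta)\}=0$ by the definition \cref{A-eq:ee-continuous}. Fix such a $t$ and $\eta$. The structural observation driving the proof is that $d_t(S_t;\mu_t,\betainit)$, $\partial_\beta\gamma_t(S_t;\beta_0)$, $I_t$, $p_t=p_t(H_t)$, $\gamma_t(S_t;\beta_0)$, $\mu_t(H_t,0)$ and $\mu_t(H_t,1)$ are all $H_t$-measurable, so the only randomness in $\phic_t(\beta_0,\mu_t)$ that is not $H_t$-measurable enters through $A_t$ and $Y_{t+1}$.

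First I would compute $\EE\{\phic_t(\beta_0,\mu_t)\mid H_t\}$ on the event $\{I_t=1\}$ by conditioning further on $A_t$ and using $\EE(Y_{t+1}\mid H_t,A_t=a)=\mu_t^\star(H_t,a)$ for $a\in\{0,1\}$. Writing $p=p_t$, $\gamma=\gamma_t(S_t;\beta_0)$ and $m_a=\mu_t^\star(H_t,a)$, and substituting $A_t+p_t-1\in\{p,\,-(1-p)\}$, the $A_t=1$ branch (probability $p$) contributes $m_1 - p\gamma - (1-p)\mu_t(H_t,1) - p\,\mu_t(H_t,0)$ and the $A_t=0$ branch (probability $1-p$) contributes $-\{m_0 + (1-p)\gamma - (1-p)\mu_t(H_t,1) - p\,\mu_t(H_t,0)\}$. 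Adding these, the $\mu_t$-terms cancel exactly and I obtain $\EE\{\phic_t(\beta_0,\mu_t)\mid H_t\} = \partial_\beta\gamma_t(S_t;\beta_0)\,(m_1 - m_0 - \gamma)$ on $\{I_t=1\}$; on $\{I_t=0\}$ the factor $I_t$ makes it zero, and positivity (the convention $p_t(H_t)\in(0,1)$ on $\{I_t=1\}$) keeps everything well-defined. This cancellation, which hinges on the precise coefficient $(A_t+p_t-1)$ in front of $\gamma_t$ in the definition of $\phic_t$, is the only genuinely computational step and is where I expect the main (though routine) effort to lie.

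Next, since $d_t(S_t;\mu_t,\betainit)$, $\partial_\beta\gamma_t(S_t;\beta_0)$ and $I_t$ are $(S_t,I_t)$-measurable, the tower property gives
\begin{align*}
\PP\{ d_t(S_t;\mu_t,\betainit)\,\phic_t(\beta_0,\mu_t) \}
&= \PP\big[\, d_t(S_t;\mu_t,\betainit)\,\partial_\beta\gamma_t(S_t;\beta_0)\, I_t \\
&\qquad\quad \times \EE\{\mu_t^\star(H_t,1)-\mu_t^\star(H_t,0)-\gamma_t(S_t;\beta_0)\mid S_t,I_t\}\,\big].
\end{align*}
On $\{I_t=1\}$ the identification identity \cref{A-eq:beta0-continuous} states exactly that $\EE\{\mu_t^\star(H_t,1)-\mu_t^\star(H_t,0)\mid S_t,I_t=1\}=\gamma_t(S_t;\beta_0)$, so the inner conditional expectation vanishes; on $\{I_t=0\}$ the factor $I_t$ vanishes. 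Hence the whole expression is $0$, which proves the term-by-term statement and therefore \cref{A-lem:continuous-beta0-is-zero}. I would close by remarking that the argument never uses any particular value of $\mu_t$ or $\betainit$, so it is uniform over $\eta\in\cT$ — this is precisely the global robustness being asserted — and that the log-link case is handled by the same computation with $Y_{t+1}-(A_t+p_t-1)\gamma_t$ replaced by the corresponding multiplicative expression.
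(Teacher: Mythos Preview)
Your proposal is correct and follows essentially the same route as the paper's proof: conditioning first on $H_t$ and splitting over $A_t\in\{0,1\}$ to show the $\mu_t$-terms cancel and leave $I_t\{\mu_t^\star(H_t,1)-\mu_t^\star(H_t,0)-\gamma_t(S_t;\beta_0)\}\partial_\beta\gamma_t(S_t;\beta_0)$, then conditioning on $(S_t,I_t)$ and invoking \cref{A-eq:beta0-continuous}. Your write-up is slightly more explicit about measurability than the paper's, but the argument is the same.
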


\begin{proof}[Proof of \cref{A-lem:continuous-beta0-is-zero}]
	For any $\beta \in \Theta$, $\eta \in \cT$, we have
	\begin{align}
		&\PP \{\mc(\beta, \eta) \} \nonumber \\
		& = \sum_t \PP \bigg[ \PP \Big\{ d_t(S_t; \mu_t, \betainit) I_t \frac{A_t - p_t}{p_t (1 - p_t)} ( Y_{t+1} - (A_t + p_t - 1) \gamma_t(S_t; \beta) - (1 - p_t)\mu_t(H_t, 1)\nonumber \\
  & ~~~ - p_t \mu_t(H_t, 0)) \partial_\beta \gamma_t(S_t; \beta) \mid H_t \Big\} \bigg] \nonumber \\
		& = \sum_t \PP \bigg[ I_t \PP \Big\{ d_t(S_t; \mu_t, \betainit) \frac{A_t - p_t}{p_t (1 - p_t)} ( Y_{t+1} - (A_t + p_t - 1) \gamma_t(S_t; \beta) - (1 - p_t)\mu_t(H_t, 1)  \nonumber \\
  & ~~~ - p_t \mu_t(H_t, 0)) \partial_\beta \gamma_t(S_t; \beta) \mid H_t, A_t = 1 \Big\} p_t(H_t) \bigg] \nonumber \\
		& ~~~+ \sum_t \PP \bigg[ I_t \PP \Big\{ d_t(S_t; \mu_t, \betainit) \frac{A_t - p_t}{p_t (1 - p_t)} ( Y_{t+1} - (A_t + p_t - 1) \gamma_t(S_t; \beta) - (1 - p_t)\mu_t(H_t, 1) \nonumber \\
  & ~~~~~~ - p_t \mu_t(H_t, 0)) \partial_\beta \gamma_t(S_t; \beta) \mid H_t, A_t = 0 \Big\} \Big\{1 - p_t(H_t)\Big\} \bigg] \nonumber \\
		& = \sum_t \PP \bigg[ I_t \PP \Big\{ d_t(S_t; \mu_t, \betainit) ( Y_{t+1} - p_t \gamma_t(S_t; \beta) - (1 - p_t)\mu_t(H_t, 1) - p_t \mu_t(H_t, 0)) \partial_\beta \gamma_t(S_t; \beta) \mid H_t, A_t = 1 \Big\} \bigg] \nonumber \\
		& ~~~- \sum_t \PP \bigg[ I_t \PP \Big\{ d_t(S_t; \mu_t, \betainit) ( Y_{t+1} - (p_t - 1) \gamma_t(S_t; \beta) - (1 - p_t)\mu_t(H_t, 1) \nonumber \\
  & ~~~~~~ - p_t \mu_t(H_t, 0)) \partial_\beta \gamma_t(S_t; \beta) \mid H_t, A_t = 0 \Big\} \bigg] \nonumber \\
		& = \sum_t \PP \bigg[ d_t(S_t; \mu_t, \betainit) I_t \Big\{ \PP ( Y_{t+1} \mid H_t, A_t = 1 ) - \PP (Y_{t+1} \mid H_t, A_t = 0 ) - \gamma_t(S_t; \beta) \Big\} \partial_\beta \gamma_t(S_t; \beta) \bigg] \label{A-eq:continuous-beta0-is-zero:proofuse1} \\
		& = \sum_t \PP \bigg[ d_t(S_t; \mu_t, \betainit) I_t \PP \Big\{ \PP ( Y_{t+1} \mid H_t, A_t = 1 ) - \PP (Y_{t+1} \mid H_t, A_t = 0 ) - \gamma_t(S_t; \beta) \mid S_t, I_t \Big\} \partial_\beta \gamma_t(S_t; \beta) \bigg] \nonumber \\
		& = \sum_t \PP \bigg[ d_t(S_t; \mu_t, \betainit) I_t \PP \Big\{ \PP ( Y_{t+1} \mid H_t, A_t = 1 ) - \PP (Y_{t+1} \mid H_t, A_t = 0 ) - \gamma_t(S_t; \beta) \mid S_t, I_t = 1 \Big\} \partial_\beta \gamma_t(S_t; \beta) \bigg], \label{A-eq:continuous-beta0-is-zero:proofuse2}
	\end{align}
	where \cref{A-eq:continuous-beta0-is-zero:proofuse1} follows from the fact that $\mu_t(H_t, 0)$ and $\mu_t(H_t, 1)$ got canceled, \cref{A-eq:continuous-beta0-is-zero:proofuse2} follows from the fact that $I_t$ can only take two values 0 and 1. Therefore, replacing $\beta$ by $\beta_0$ and the lemma result follows from the modeling assumption \cref{A-eq:beta0-continuous}.
\end{proof}

\begin{lem}
    \label{A-lem:continuous-ee-betainit-converge}
    Suppose \cref{A-asu:continuous-unique-zero,A-asu:continuous-donsker-nuisance,A-asu:continuous-compact-param-space,A-asu:continuous-bounded-obs} hold. Then $\betainithat \pto \beta_0$ as $n \to \infty$.

    \begin{proof}
        By \cref{A-lem:general-consistency}, it suffices to verify \cref{A-asu:unique-zero,A-asu:nuisance-conv-PPee-sup,A-asu:donsker,A-asu:reg-compact-param-space,A-asu:reg-bounded-obs,A-asu:reg-cont-PPee}.

        Verify \cref{A-asu:unique-zero}: This follows from \cref{A-asu:continuous-unique-zero} and \cref{A-lem:continuous-beta0-is-zero} by setting $d_t = 1$.

        Verify \cref{A-asu:reg-compact-param-space}: This is assumed in \cref{A-asu:continuous-compact-param-space}.
        
        Verify \cref{A-asu:reg-bounded-obs}: This is assumed in \cref{A-asu:continuous-bounded-obs}. 
        
        Verify \cref{A-asu:reg-cont-PPee}: This holds as $\phi_t(\mu_t, \beta)$ is linear in $\beta$. 

        Verify \cref{A-asu:nuisance-conv-PPee-sup}: This follows from \cref{A-lem:continuous-ee-sup-conv} by setting $d_t = 1$.

        Verify \cref{A-asu:donsker}: This is because $\mu_t(\cdot)$ and $d_t(\cdot)$ are from Donsker classes (\cref{A-asu:continuous-donsker-nuisance}), $m(\beta,\eta)$ is a polynomial of $\mu_t$ and $d_t$ and thus Lipschitz, and a Lipschitz transformation of Donsker classes is still Donsker \citep[][Theorem 9.31]{kosorok2008introduction}.
    \end{proof}
\end{lem}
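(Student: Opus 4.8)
The plan is to recognize $\betainithat$ as a special case of the general two-stage Z-estimator $\hat\theta$ analyzed in \cref{A-sec:general-lemmas}, namely the one whose estimating function is $m(\beta,\eta) = \sum_{t=1}^T \phic_t(\beta,\mu_t)$ --- that is, $\mc(\beta,\eta)$ with $d_t \equiv 1$ --- and then to invoke the general consistency result \cref{A-lem:general-consistency}. All that will remain is to verify, for this particular $m$, its six hypotheses \cref{A-asu:unique-zero,A-asu:nuisance-conv-PPee-sup,A-asu:donsker,A-asu:reg-compact-param-space,A-asu:reg-bounded-obs,A-asu:reg-cont-PPee}.

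Three of these I expect to be immediate: \cref{A-asu:reg-compact-param-space} and \cref{A-asu:reg-bounded-obs} are exactly \cref{A-asu:continuous-compact-param-space} and \cref{A-asu:continuous-bounded-obs}, and the continuity of $\beta \mapsto \PP\{m(\beta,\eta')\}$ required by \cref{A-asu:reg-cont-PPee} holds because $\phic_t(\beta,\mu_t)$ depends on $\beta$ only through $\gamma_t(S_t;\beta)$ and $\partial_\beta\gamma_t(S_t;\beta)$ while all other factors are bounded on the bounded support of $O$. For the globally-robust-unique-zero condition \cref{A-asu:unique-zero}, I would combine \cref{A-lem:continuous-beta0-is-zero} specialized to $d_t\equiv1$ --- which gives $\PP\{m(\beta_0,\eta)\}=0$ for every $\eta\in\cT$ --- with the uniqueness in \cref{A-asu:continuous-unique-zero} (again with $d_t\equiv1$).

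The two substantive conditions are the uniform plug-in control \cref{A-asu:nuisance-conv-PPee-sup} and the Donsker condition \cref{A-asu:donsker}. For the former, I would exploit that $\PP\{m(\beta,\mu)\}$ is affine in each coordinate $\mu_t$, so $\PP\{m(\beta,\hat\mu)\}-\PP\{m(\beta,\mu')\}$ is a finite sum of terms linear in $\hat\mu_t-\mu_t'$ whose multiplying factors (built from $p_t$, $I_t$, $\partial_\beta\gamma_t(S_t;\beta)$ and the known randomization probabilities) are bounded uniformly over the compact $\Theta$; Cauchy--Schwarz then bounds $\sup_\beta|\PP\{m(\beta,\hat\mu)\}-\PP\{m(\beta,\mu')\}|$ by a constant times $\sum_t\|\hat\mu_t-\mu_t'\|=o_P(1)$ under \cref{A-asu:continuous-nuisance-converge}; this is \cref{A-lem:continuous-ee-sup-conv} read with $d_t\equiv1$. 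For the Donsker condition, $\hat\mu_t$ lies in a Donsker class by \cref{A-asu:continuous-donsker-nuisance}; for fixed $\beta$ the maps $\mu_t\mapsto\phic_t(\beta,\mu_t)$ and $\mu_t\mapsto\partial_\beta\phic_t(\beta,\mu_t)$ are polynomial, hence Lipschitz on the bounded domain, and $\{\gamma_t(\cdot;\beta),\partial_\beta\gamma_t(\cdot;\beta):\beta\in\Theta\}$ are finite-dimensional smooth (hence Donsker) classes; since finite sums and products of uniformly bounded Donsker classes, and Lipschitz images of them, remain Donsker \citep[][Theorem 9.31]{kosorok2008introduction}, both $\{m(\beta,\eta)\}$ and $\{\partial_\beta m(\beta,\eta)\}$ will be $P$-Donsker.

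I expect the main obstacle to be the uniform-in-$\beta$ verification of \cref{A-asu:nuisance-conv-PPee-sup}: one must dominate an infinite-dimensional perturbation $\hat\mu_t\to\mu_t'$ simultaneously over the whole parameter set rather than pointwise. The key simplification is that $m$ is linear in each $\mu_t$ and every other factor is bounded, which collapses the supremum to the $L_2$ rate of \cref{A-asu:continuous-nuisance-converge} times a $\beta$-uniform constant; this is the one step that genuinely uses the structure of $\phic_t$ rather than generic Z-estimation machinery. Once all six hypotheses are checked, \cref{A-lem:general-consistency} delivers $\betainithat\pto\beta_0$.
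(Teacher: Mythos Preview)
Your proposal is correct and follows essentially the same route as the paper: cast $\betainithat$ as the Z-estimator with $d_t\equiv1$, invoke \cref{A-lem:general-consistency}, and verify its six hypotheses by the same combination of \cref{A-lem:continuous-beta0-is-zero}, \cref{A-lem:continuous-ee-sup-conv}, and the Lipschitz-preservation of Donsker classes. One small remark: your direct argument for \cref{A-asu:nuisance-conv-PPee-sup} via Cauchy--Schwarz and $\|\hat\mu_t-\mu_t'\|$ tacitly appeals to \cref{A-asu:continuous-nuisance-converge}, which is not among the lemma's stated hypotheses; the cleaner route (which your citation of \cref{A-lem:continuous-ee-sup-conv} already points to) is to note from \cref{A-eq:continuous-beta0-is-zero:proofuse1} that $\PP\{m(\beta,\eta)\}$ with $d_t\equiv1$ does not depend on $\mu_t$ at all, so the supremum is identically zero.
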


\begin{lem}
    \label{A-lem:continuous-ee-phi-converge}
	Suppose \cref{A-asu:continuous-compact-param-space,A-asu:continuous-bounded-obs,A-asu:continuous-nuisance-converge,A-asu:continuous-bounded-gamma,A-asu:continuous-bounded-gamma-prime} and [Positivity assumption made in the main paper that requires $p_t(H_t)$ to be bounded away from 0 or 1] hold. Then for $\phic(\beta, \eta) := \sum_{t=1}^T \phic_t(\beta, \mu_t)$, we have
	\begin{align*}
		\|\phic(\beta_0, \hat\eta) - \phic(\beta_0, \eta') \|^2 = o_P(1).
	\end{align*}
	Under the same assumptions, the lemma statement also holds with $\hat\eta$ replaced by the cross-fitting version, $\hat\eta_k$.
\end{lem}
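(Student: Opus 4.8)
The plan is to exploit the fact that $\phic_t(\beta,\mu_t)$ depends on the nuisance $\mu_t$ only through the affine term $-(1-p_t)\mu_t(H_t,1)-p_t\mu_t(H_t,0)$, so that taking the difference of $\phic_t$ at the \emph{fixed} value $\beta=\beta_0$ cancels everything involving $Y_{t+1}$ and $\gamma_t$ and leaves a quantity that is linear in the nuisance estimation error:
\begin{align*}
  \phic_t(\beta_0,\hat\mu_t)-\phic_t(\beta_0,\mu_t') = -\frac{A_t-p_t}{p_t(1-p_t)} I_t \big[(1-p_t)\{\hat\mu_t(H_t,1)-\mu_t'(H_t,1)\}+p_t\{\hat\mu_t(H_t,0)-\mu_t'(H_t,0)\}\big]\partial_\beta\gamma_t(S_t;\beta_0).
\end{align*}
I would then apply \cref{A-lem:sum-of-squares-function} with $a_t$ and $b_t$ the two nuisance-difference pieces above and $c_t=0$, which bounds $\|\phic(\beta_0,\hat\eta)-\phic(\beta_0,\eta')\|^2$ by $4T^2\{\max_{t\in[T]}\|a_t\|^2+\max_{t\in[T]}\|b_t\|^2\}$; since $T$ is fixed, it suffices to handle a single $t$.

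For a fixed $t$, the scalar prefactor multiplying $\hat\mu_t(H_t,1)-\mu_t'(H_t,1)$, namely $\frac{A_t-p_t}{p_t(1-p_t)} I_t (1-p_t)\partial_\beta\gamma_t(S_t;\beta_0)$ (and its analogue with $p_t$), is uniformly bounded: positivity (\cref{asu:positivity}) bounds $1/\{p_t(1-p_t)\}$, $|I_t|\le 1$, $0\le p_t,1-p_t\le1$, and $\partial_\beta\gamma_t(S_t;\beta_0)$ is bounded because $\gamma_t$ is a fixed smooth function evaluated on the bounded support of $O$ (\cref{A-asu:continuous-bounded-obs}) at $\beta_0$ in the compact set $\Theta$ (\cref{A-asu:continuous-compact-param-space}). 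Hence $\|a_t\|^2\le C\,\EE\big[I_t|\hat\mu_t(H_t,1)-\mu_t'(H_t,1)|^2\big]$ for a constant $C$ free of $n$, and likewise $\|b_t\|^2\le C\,\EE\big[I_t|\hat\mu_t(H_t,0)-\mu_t'(H_t,0)|^2\big]$.

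The only step that is not pure bookkeeping is passing from these argument-fixed expectations to $\|\hat\mu_t-\mu_t'\|^2$, the quantity assumed $o_P(1)$ in \cref{A-asu:continuous-nuisance-converge}: the former conditions on $\{I_t=1\}$ and fixes the treatment argument, whereas the latter integrates over the joint law of $O$. I would handle this by a positivity-based change of measure: on $\{I_t=1\}$ one has $|\hat\mu_t(H_t,1)-\mu_t'(H_t,1)|^2=\EE[\tfrac{\indic(A_t=1)}{p_t}|\hat\mu_t(H_t,A_t)-\mu_t'(H_t,A_t)|^2\mid H_t,I_t=1]$ with $p_t>\tau$, so
\begin{align*}
  \EE\big[I_t|\hat\mu_t(H_t,1)-\mu_t'(H_t,1)|^2\big] \le \tfrac1\tau\,\EE\big[I_t\indic(A_t=1)|\hat\mu_t(H_t,A_t)-\mu_t'(H_t,A_t)|^2\big] \le \tfrac1\tau\,\|\hat\mu_t-\mu_t'\|^2,
\end{align*}
and symmetrically for the $A_t=0$ piece using $1-p_t>\tau$; boundedness of $\hat\mu_t$ and $\mu_t'$ (\cref{A-asu:continuous-bounded-gamma,A-asu:continuous-bounded-gamma-prime}) makes all quantities finite and ensures the nuisance functions matter only where $I_t=1$. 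Combining the displays gives $\|a_t\|^2,\|b_t\|^2\le (C/\tau)\|\hat\mu_t-\mu_t'\|^2=o_P(1)$, whence $\|\phic(\beta_0,\hat\eta)-\phic(\beta_0,\eta')\|^2\le 8T^2(C/\tau)\max_{t\in[T]}\|\hat\mu_t-\mu_t'\|^2=o_P(1)$. For the cross-fitted version the argument is verbatim with $\hat\mu_t$ replaced by $\hat\mu_{kt}$: the prefactor does not involve the fold index, and the cross-fitting form of \cref{A-asu:continuous-nuisance-converge} supplies $\|\hat\mu_{kt}-\mu_t'\|^2=o_P(1)$ for each of the fixed number $K$ of folds. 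The main (mild) obstacle is precisely this reweighting via positivity; everything else is mechanical.
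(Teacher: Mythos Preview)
Your proof is correct and follows essentially the same route as the paper: cancel the $Y_{t+1}$ and $\gamma_t$ terms at $\beta_0$ so the difference is linear in $\hat\mu_t-\mu_t'$, bound the scalar prefactors using positivity and bounded support, and reduce to $\|\hat\mu_t-\mu_t'\|^2=o_P(1)$ termwise (the paper uses the cruder bound $|\sum_t a_t|^2\le T\sum_t|a_t|^2$ rather than \cref{A-lem:sum-of-squares-function}, but this is cosmetic). In fact your treatment is slightly more careful: the paper asserts the final $o_P(1)$ directly ``from the convergence of $\hat\mu_t$,'' whereas you explicitly supply the positivity-based reweighting that passes from the argument-fixed errors $\EE[I_t|\hat\mu_t(H_t,a)-\mu_t'(H_t,a)|^2]$ to $\|\hat\mu_t-\mu_t'\|^2=\int|\hat\mu_t(h_t,a_t)-\mu_t'(h_t,a_t)|^2\,dP(o)$, which is the step the paper leaves implicit.
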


\begin{proof}[Proof of \cref{A-lem:continuous-ee-phi-converge}]
	We present the proof for the non-cross-fitting version. The proof for the cross-fitting version is identical with $\hat\eta$, $\hat\mu_t$ replaced by $\hat\eta_k$, $\hat\mu_{t,k}$ respectively. By \cref{A-eq:ee-continuous}, we have $\phic(\beta,\eta) = \sum_{t=1}^T \phic_t(\beta, \mu_t)$ where 
	\begin{align*}
		\phic_t(\beta, \mu_t) = \frac{A_t - p_t}{p_t (1 - p_t)} I_t ( Y_{t+1} - (A_t + p_t - 1) \gamma_t(S_t; \beta) - (1 - p_t)\mu_t(H_t, 1) - p_t \mu_t(H_t, 0)) \partial_{\beta} \gamma_t(S_t;\beta).
	\end{align*}

	Using this definition we have
	\begin{align}
		& ~~~ \|\phic(\beta_0, \hat\eta) - \phic(\beta_0, \eta') \|^2 \nonumber \\
		& = \int |\phic(\beta_0, \hat\eta) - \phic(\beta_0, \eta')|^2 dP \nonumber \\
		& = \int \bigg|\sum_{t=1}^T \left\{ \phic_t(\beta_0, \hat\mu_t) - \phic_t(\beta_0, \mu_t') \right\} \bigg|^2 dP \nonumber \\
		& = \int \bigg|\sum_{t=1}^T \left[ \left\{ \phic_t(\beta_0, \hat\mu_t) - \phic_t(\beta_0, \mu_t') \right\} \right] \bigg|^2 dP \nonumber \\
		& \leq T \left\{\max_{1\leq t \leq T} \int \left| \phic_t(\beta_0, \hat\mu_t) - \phic_t(\beta_0, \mu_t') \right|^2 dP \right\}, \nonumber
	\end{align}
	Therefore, it suffices to show that for all $1 \leq t \leq T$,
	\begin{align}
		\int \left| \phic_t(\beta_0, \hat\mu_t) - \phic_t(\beta_0, \mu_t') \right|^2 dP = o_P(1), \label{A-eq:continuous-ee-phi-converge:proofuse1}
	\end{align}

	To show \cref{A-eq:continuous-ee-phi-converge:proofuse1}, we define
    \begin{align*}
        \psi_t(x, y) := (1 - p_t(H_t)) x + p_t(H_t)y, 
    \end{align*}
        and then we have
	\begin{align}
		& ~~~ \int \left| \phic_t(\beta_0, \hat\mu_t) - \phic_t(\beta_0, \mu_t') \right|^2 dP \nonumber \\
		& = \int \left| - \frac{A_t - p_t}{p_t (1 - p_t)} I_t ((1 - p_t)\hat \mu_t(h_t, 1) + p_t \hat \mu_t(h_t, 0) - (1 - p_t) \mu_t'(h_t, 1) - p_t \mu_t'(h_t, 0)) \partial_{\beta} \gamma_t(S_t;\beta_0) \right|^2 dP \nonumber \\
		& \leq C \int | \psi_t(\hat\mu_t(h_t, 1), \hat\mu_t(h_t, 0)) - \psi_t(\mu_t'(h_t, 1), \mu_t'(h_t, 0)) |^2 dP \label{A-eq:continuous-ee-phi-converge:proofuse3} \\
		& = o_P(1), \label{A-eq:continuous-ee-phi-converge:proofuse4}
	\end{align}
	where \cref{A-eq:continuous-ee-phi-converge:proofuse3} follows from \cref{A-asu:continuous-bounded-obs,A-asu:continuous-compact-param-space}, $A_t \in \{0, 1\}$, and Assumption [positivity assumption in the main paper]. Because $\psi_t$ is a weighted average of $x$ and $y$, that $p_t$ is bounded away from $0$ or $1$ (Assumption [positivity assumption in the main paper]), $\hat \mu_t$ and $\mu_t'$ are bounded (\cref{A-asu:continuous-bounded-gamma,A-asu:continuous-bounded-gamma-prime}), we have that $\psi_t(x)$ is bounded. Therefore, the last line \cref{A-eq:continuous-ee-phi-converge:proofuse4} follows from the convergence of $\hat\mu_t$ (\cref{A-asu:continuous-nuisance-converge}). This establishes \cref{A-eq:continuous-ee-phi-converge:proofuse1}.

	Having \cref{A-eq:continuous-ee-phi-converge:proofuse1} established, the proof is completed.
\end{proof}

\begin{lem}
    \label{A-lem:continuous-ee-d-converge}
    Suppose \cref{A-asu:continuous-nuisance-converge,A-asu:continuous-unique-zero,A-asu:continuous-donsker-nuisance,A-asu:continuous-compact-param-space,A-asu:continuous-bounded-obs,A-asu:continuous-continuous-d-hat,A-asu:continuous-uniform-integrable} hold. Then $\| \hat d_t(s_t; \hat \mu_t, \betainithat) - \hat d_t (s_t; \mu_t', \beta_0)\|^2 = o_p(1)$.
    Under the same assumptions, the lemma statement also holds with $\hat d_t$ replaced by the cross-fitting version, $\hat d_{t,k}$.

    \begin{proof}
        We present the proof for the non-cross-fitting version. The proof for the cross-fitting version is identical with $\hat\mu_t$, $\hat d_t$, $\betainithat$ replaced by $\hat\mu_{t,k}$, $\hat d_{t,k}$, $\betainithat_k$ respectively. We first show that $\hat d_t(s_t; \hat \mu_t, \betainithat) \pto \hat d_t (s_t; \mu_t', \beta_0)$. This follows from that $\| \hat \mu_t - \mu_t' \|^2 = o_p(1)$ (\cref{A-asu:continuous-nuisance-converge}) implies $\hat \mu_t \pto \mu_t'$, $\betainithat \pto \beta_0$ (\cref{A-lem:continuous-ee-betainit-converge}), and $\hat d_t$ is continuous in $\mu_t$ and $\betainit$ (\cref{A-asu:continuous-continuous-d-hat}). Because of the $L_2$ dominatedness of $\hat d_t(s_t; \hat \mu_t, \betainithat)$ (\cref{A-asu:continuous-uniform-integrable}), the dominated convergence theorem \citet[][Theorem 4.1.4 of Chapter 4.1]{chung2001course}
        implies that $\hat d_t(s_t; \hat \mu_t, \betainithat)$ converges to $\hat d_t(s_t; \mu_t', \beta_0)$ in $L_2$.
    \end{proof}
    
\end{lem}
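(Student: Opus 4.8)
The plan is to avoid bounding the difference of the two fitted conditional-expectation objects directly, and instead argue in two moves: first establish pointwise-in-$s_t$ convergence in probability,
\[
\hat d_t(s_t;\hat\mu_t,\betainithat)\pto \hat d_t(s_t;\mu_t',\beta_0),
\]
and then upgrade this to $\|\hat d_t(\cdot;\hat\mu_t,\betainithat)-\hat d_t(\cdot;\mu_t',\beta_0)\|^2=o_P(1)$ via a dominated-convergence argument for convergence in probability.

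For the first move, \cref{A-asu:continuous-nuisance-converge} gives $\|\hat\mu_t-\mu_t'\|^2=o_P(1)$ and hence $\hat\mu_t\pto\mu_t'$ in $L_2$, while \cref{A-lem:continuous-ee-betainit-converge}, whose hypotheses are precisely the assumptions \cref{A-asu:continuous-unique-zero,A-asu:continuous-donsker-nuisance,A-asu:continuous-compact-param-space,A-asu:continuous-bounded-obs} also invoked here, gives $\betainithat\pto\beta_0$. Since $\hat d_t(s_t;\cdot,\cdot)$ is continuous in its nuisance and $\beta$ arguments (\cref{A-asu:continuous-continuous-d-hat}), the continuous-mapping principle, applied with the $L_2$ topology on the nuisance coordinate and the Euclidean topology on $\beta$, yields the claimed pointwise convergence. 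For the second move, \cref{A-asu:continuous-uniform-integrable} supplies the required $L_2$ domination, i.e.\ uniform integrability in $n$ of $|\hat d_t(\cdot;\hat\mu_t,\betainithat)|_F^2$, so that the pointwise convergence in probability of the squared differences integrates up: the dominated convergence theorem for convergence in probability \citep[Theorem~4.1.4]{chung2001course} gives $\int|\hat d_t(s_t;\hat\mu_t,\betainithat)-\hat d_t(s_t;\mu_t',\beta_0)|_F^2\,dP=o_P(1)$, which is the assertion. The cross-fitted statement follows by the identical argument with $(\hat\mu_t,\betainithat,\hat d_t)$ replaced by $(\hat\mu_{kt},\betainithat_k,\hat d_{kt})$ for each fixed $k\in[K]$, using the cross-fitting analogues $\hat\mu_{kt}\pto\mu_t'$ and $\betainithat_k\pto\beta_0$.

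The main obstacle is making the continuous-mapping step rigorous: because $\hat d_t$ is itself data-dependent and varies with $n$, knowing that each realization of $\hat d_t(s_t;\cdot,\cdot)$ is continuous and that $(\hat\mu_t,\betainithat)\to(\mu_t',\beta_0)$ does not by itself force $\hat d_t(\cdot;\hat\mu_t,\betainithat)-\hat d_t(\cdot;\mu_t',\beta_0)\to 0$; one really needs the family $\{\hat d_t\}_n$ to be stochastically equicontinuous near $(\mu_t',\beta_0)$, which is what \cref{A-asu:continuous-continuous-d-hat} should be read as encoding. A secondary point is that $\hat d_t(\cdot;\mu_t',\beta_0)$ must be a genuine $L_2$ function, in particular the conditional second-moment matrix inverted in its definition must be nonsingular and bounded below; this is implicit in the definition of $\hat d_t$ (Step~3 of Algorithm~\ref{algo:estimator-ncf}).
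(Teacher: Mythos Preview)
Your proposal is correct and follows essentially the same route as the paper: both argue pointwise convergence in probability via $\hat\mu_t\pto\mu_t'$ (from \cref{A-asu:continuous-nuisance-converge}), $\betainithat\pto\beta_0$ (from \cref{A-lem:continuous-ee-betainit-converge}), and continuity of $\hat d_t$ (\cref{A-asu:continuous-continuous-d-hat}), and then upgrade to $L_2$ convergence using the $L_2$ boundedness of \cref{A-asu:continuous-uniform-integrable} and the dominated convergence theorem for convergence in probability, citing the same result in Chung. Your closing remarks on stochastic equicontinuity and invertibility correctly flag where both arguments are somewhat informal, but they do not depart from the paper's approach.
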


\begin{lem}
	\label{A-lem:continuous-ee-sup-conv}
	Suppose \cref{A-asu:continuous-bounded-obs,A-asu:continuous-d-converge,A-asu:continuous-compact-param-space} hold. Then for $\mc(\beta, \eta)$ defined in \cref{A-eq:ee-continuous}, we have
	\begin{align*}
		\sup_{\beta \in \Theta} | \PP \mc(\beta, \hat\eta) - \PP \mc(\beta, \eta') | = o_P(1).
	\end{align*}
	Under the same assumptions, the lemma statement also holds with $\hat\eta$ replaced by the cross-fitting version, $\hat\eta_k$.
\end{lem}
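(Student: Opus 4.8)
The plan is to reduce the claim to the $L_2$-convergence of the $\hat d_t$'s, by first observing that in expectation the $\mu_t$-part of $\phic_t$ vanishes. Fix the realized nuisance $\hat\eta$ and view it as a fixed element of $\cT$; then the conditional-expectation manipulations in the proof of \cref{A-lem:continuous-beta0-is-zero} go through verbatim for an arbitrary $\beta$ (not just $\beta_0$), since nothing there used $\beta=\beta_0$ until the very last display. Carrying out that computation at a generic $\beta$ gives
\begin{align*}
\PP\{\mc(\beta,\eta)\} = \sum_{t=1}^{T} \PP\Big[ d_t(S_t)\, I_t \big\{ \PP(Y_{t+1}\mid H_t, A_t = 1) - \PP(Y_{t+1}\mid H_t, A_t = 0) - \gamma_t(S_t;\beta) \big\} \partial_\beta \gamma_t(S_t;\beta) \Big],
\end{align*}
where $d_t(\cdot)$ denotes the $d$-component of $\eta$ with its $\mu_t$ and $\betainit$ arguments already substituted. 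In particular $\PP\{\mc(\beta,\eta)\}$ depends on $\eta$ only through $\{d_t\}_{t\in[T]}$, which is exactly why the hypotheses involve only \cref{A-asu:continuous-d-converge} together with the two regularity conditions and do not require any convergence of $\hat\mu_t$.

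I would then subtract the two expectations. Writing $g_t(\beta) := I_t\{\PP(Y_{t+1}\mid H_t, A_t=1) - \PP(Y_{t+1}\mid H_t, A_t=0) - \gamma_t(S_t;\beta)\}\partial_\beta\gamma_t(S_t;\beta)$, the triangle inequality and Cauchy--Schwarz give
\begin{align*}
\big| \PP\{\mc(\beta,\hat\eta)\} - \PP\{\mc(\beta,\eta')\} \big| \le \sum_{t=1}^{T} \big| \PP\big[ \{\hat d_t(S_t) - d_t'(S_t)\}\, g_t(\beta) \big] \big| \le \sum_{t=1}^{T} \|\hat d_t - d_t'\| \, \|g_t(\beta)\|.
\end{align*}
Since the support of $O$ is bounded (\cref{A-asu:continuous-bounded-obs}) the conditional means $\PP(Y_{t+1}\mid H_t,A_t=a)$ are uniformly bounded, and since $\Theta$ is compact (\cref{A-asu:continuous-compact-param-space}) the pre-specified $\gamma_t$ and $\partial_\beta\gamma_t$ are bounded over $\Theta$ and the support of $S_t$; hence $\sup_{\beta\in\Theta}\|g_t(\beta)\| \le C_t$ for a finite constant $C_t$, exactly in the spirit of the bound \cref{A-eq:continuous-ee-phi-converge:proofuse3}. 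Taking the supremum over $\beta$ and summing over the finitely many $t$,
\begin{align*}
\sup_{\beta\in\Theta} \big| \PP\{\mc(\beta,\hat\eta)\} - \PP\{\mc(\beta,\eta')\} \big| \le \sum_{t=1}^{T} C_t \, \|\hat d_t - d_t'\|,
\end{align*}
and the right-hand side is $o_P(1)$ by \cref{A-asu:continuous-d-converge} (together with \cref{A-lem:continuous-ee-d-converge} when $\hat d_t$ is the version carrying the estimated $\hat\mu_t$ and $\betainithat$).

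For the cross-fitted statement I would run the identical argument with $\hat d_{kt}$ in place of $\hat d_t$; the only change is that $\|\hat d_{kt} - d_t'\| = o_P(1)$ now comes from the cross-fitting form of \cref{A-asu:continuous-d-converge} (and \cref{A-lem:continuous-ee-d-converge}), and everything else is unchanged because $K$ is fixed. The only part beyond routine bookkeeping is the first step: checking that the cancellation of $\mu_t(H_t,\cdot)$ inside $\PP\{\mc(\beta,\eta)\}$ is genuinely uniform in $\beta$, i.e. that re-reading the chain \cref{A-eq:continuous-beta0-is-zero:proofuse1}--\cref{A-eq:continuous-beta0-is-zero:proofuse2} with $\beta$ in place of $\beta_0$ indeed leaves those terms cancelling. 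Once that identity is established, the remaining estimate is just the Cauchy--Schwarz bound above.
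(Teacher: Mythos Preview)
Your proposal is correct and follows essentially the same approach as the paper: both use the identity from \cref{A-eq:continuous-beta0-is-zero:proofuse1} (valid for arbitrary $\beta$) to see that $\PP\{\mc(\beta,\eta)\}$ depends on $\eta$ only through $\{d_t\}$, then apply Cauchy--Schwarz together with the uniform-in-$\beta$ boundedness of the remaining factor. The only cosmetic difference is that the paper writes the triangle-inequality split $\|\hat d_t(\cdot;\hat\mu_t,\betainithat)-d_t'\| \le \|\hat d_t(\cdot;\hat\mu_t,\betainithat)-\hat d_t(\cdot;\mu_t',\beta_0)\| + \|\hat d_t(\cdot;\mu_t',\beta_0)-d_t'\|$ out explicitly as two separate terms (invoking \cref{A-lem:continuous-ee-d-converge} and \cref{A-asu:continuous-d-converge} respectively), whereas you bundle this into a single step.
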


\begin{proof}[Proof of \cref{A-lem:continuous-ee-sup-conv}]
	We present the proof for the non-cross-fitting version. The proof for the cross-fitting version is identical with $\hat\eta$, $\hat\mu_t$, $\hat d_t$ replaced by $\hat\eta_k$, $\hat\mu_{t,k}$, $\hat d_{t,k}$, respectively. Here, we write out the dependency of $d_t$ on $O$. Let
	\begin{align*}
		f_t(\beta, O) & = I_t \left( \PP ( Y_{t+1} \mid H_t, A_t = 1 ) - \PP (Y_{t+1} \mid H_t, A_t = 0 ) - \gamma_t(S_t; \beta) \right) \nonumber \\
            g_t(\beta, O; d_t, \mu_t, \betainit) & = d_t(S_t; \mu_t, \betainit) \partial_{\beta} \gamma_t(S_t;\beta).
	\end{align*}
	and \cref{A-eq:continuous-beta0-is-zero:proofuse1} implies that
	\begin{align*}
		\PP\{\mc(\beta, \eta)\} = \PP\{ \sum_t f_t(\beta, O) g_t(\beta, O; d_t, \mu_t, \betainit) \}.
	\end{align*}
	Therefore, we have
	\begin{align}
		& ~ \sup_{\beta \in \Theta} | \PP\{\mc(\beta, \hat\eta)\} - \PP\{\mc(\beta, \eta')\} | \nonumber \\
		& = \sup_{\beta \in \Theta}  \left| \PP \left\{ \sum_t f_t(\beta, O) g_t(\beta, O; \hat d_t,\hat \mu_t, \betainithat) \right\} - \PP \left\{ \sum_t f_t(\beta, O) g_t(\beta, O; d') \right\} \right| \nonumber \\
		& \leq \sum_t \sup_{\beta \in \Theta} \left| \PP \left[ f_t(\beta, O) \{g_t(\beta, O; \hat d_t, \hat \mu_t, \betainithat) - g_t(\beta, O; d_t')\} \right] \right| \nonumber \\
		& \leq \sum_t \sup_{\beta \in \Theta} \| f_t(\beta, \cdot) \| \cdot \|g_t(\beta, \cdot; \hat d_t, \hat \mu_t, \betainithat) - g_t(\beta, \cdot; d_t')\|, \label{A-eq:A-thm:continuous-normality:proofuse1} \\
            & \leq \sum_t \sum_{\beta \in \Theta} \| f_t(\beta, \cdot) \| \cdot \left(\|g_t(\beta, \cdot; \hat d_t, \hat \mu_t, \betainithat) - g_t(\beta, \cdot; \hat d_t, \mu_t', \beta_0) \| + \| g_t(\beta, \cdot; \hat d_t, \mu_t', \beta_0) - g_t(\beta, \cdot; d_t')\| \right) \nonumber
	\end{align}
	where the last two lines follow from Cauchy-Schwarz inequality. It remains to control the two terms from the last line. 
	To control the first term $\|g_t(\beta, \cdot; \hat d_t, \hat \mu_t, \betainithat) - g_t(\beta, \cdot; \hat d_t, \mu_t', \beta_0) \|$, we have
	\begin{align}
		& \|g_t(\beta, \cdot; \hat d_t, \hat \mu_t, \betainithat) - g_t(\beta, \cdot; \hat d_t, \mu_t', \beta_0) \|^2 \nonumber \\
           & = \int \left| g_t(\beta, \cdot; \hat d_t, \hat \mu_t, \betainithat) - g_t(\beta, \cdot; \hat d_t, \mu_t', \beta_0) \right|^2 dP(o) \nonumber \\
           & \leq C \int \left| \hat d_t(s_t; \hat \mu_t, \betainithat) - \hat d_t(s_t; \mu_t', \beta_0) \right|_F^2 dP(o) \label{A-eq:A-thm:continuous-normality:proofuse3.5} \\
		& = o_P(1), \label{A-eq:A-thm:continuous-normality:proofuse3}
	\end{align}
	where \cref{A-eq:A-thm:continuous-normality:proofuse3.5} follows from that all the observations are bounded (\cref{A-asu:continuous-bounded-obs}) and that the Frobenius norm is compatible with the Euclidean norm, and \cref{A-eq:A-thm:continuous-normality:proofuse3} follows from \cref{A-lem:continuous-ee-d-converge}. 
    To control the second term $\| g_t(\beta, \cdot; \hat d_t, \mu_t', \beta_0) - g_t(\beta, \cdot; d_t') \|$, we have
    \begin{align}
		& \| g_t(\beta, \cdot; \hat d_t, \mu_t', \beta_0) - g_t(\beta, \cdot; d_t') \|^2 \nonumber \\
           & = \int \left| g_t(\beta, \cdot; \hat d_t, \mu_t', \beta_0) - g_t(\beta, \cdot; d_t') \right|^2 dP(o) \nonumber \\
           & \leq C \int \left| \hat d_t(s_t; \mu_t', \beta_0) - d_t'(s_t) \right|_F^2 dP(o) \label{A-eq:A-thm:continuous-normality:proofuse4.5} \\
		& = o_P(1), \label{A-eq:A-thm:continuous-normality:proofuse4}
	\end{align}
    where \cref{A-eq:A-thm:continuous-normality:proofuse4.5} follows from that all the observations are bounded (\cref{A-asu:continuous-bounded-obs}) and that the Frobenius norm is compatible with the Euclidean norm, and \cref{A-eq:A-thm:continuous-normality:proofuse4} follows from \cref{A-asu:continuous-d-converge}. $\sup_{\beta \in \Theta} \| f_t(\beta, \cdot) \|$ is bounded because of \cref{A-asu:continuous-compact-param-space,A-asu:continuous-bounded-obs}.
	\cref{A-eq:A-lem:general-consistency:proofuse2,A-eq:A-thm:continuous-normality:proofuse3} and the boundedness of $\sup_{\beta \in \Theta} \| f_t(\beta, \cdot) \|$ imply \cref{A-asu:nuisance-conv-PPee-sup}.
\end{proof}

\begin{lem}
	\label{A-lem:continuous-ee-l2-conv}
	Suppose \cref{A-asu:continuous-bounded-gamma-prime,A-asu:continuous-bounded-obs,A-asu:continuous-uniform-integrable,A-asu:continuous-d-converge} and [Positivity assumption made in the main paper that requires $p_t(H_t)$ to be bounded away from 0 or 1] hold. Then for $\mc(\beta, \eta)$ defined in \cref{A-eq:ee-continuous}, we have
	\begin{align*}
		\|\mc(\beta_0, \hat\eta) - \mc(\beta_0, \eta') \|^2 = o_P(1).
	\end{align*}
	Under the same assumptions, the lemma statement also holds with $\hat\eta$, $\betainithat$, $\hat d$ replaced by the cross-fitting version, $\hat\eta_k$, $\betainithat_k$, $\hat d_k$, respectively.
\end{lem}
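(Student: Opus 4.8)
The plan is to reduce the claim to the two $L_2$-convergence facts already available for the constituent pieces of $\mc$ — \cref{A-lem:continuous-ee-phi-converge} for the $\phic$-part and \cref{A-lem:continuous-ee-d-converge} together with \cref{A-asu:continuous-d-converge} for the $d$-part — by telescoping the product $\hat d_t\,\phic_t$. Recalling $\mc(\beta_0,\hat\eta)=\sum_{t=1}^T \hat d_t(S_t;\hat\mu_t,\betainithat)\,\phic_t(\beta_0,\hat\mu_t)$ and $\mc(\beta_0,\eta')=\sum_{t=1}^T d_t'(S_t)\,\phic_t(\beta_0,\mu_t')$, I would write, for each $t$,
\begin{align*}
	& \hat d_t(S_t;\hat\mu_t,\betainithat)\,\phic_t(\beta_0, \hat\mu_t) - d_t'(S_t)\,\phic_t(\beta_0, \mu_t') \\
	& \qquad = \underbrace{\hat d_t(S_t;\hat\mu_t,\betainithat)\big\{\phic_t(\beta_0, \hat\mu_t)-\phic_t(\beta_0, \mu_t')\big\}}_{a_t} \\
	& \qquad\quad + \underbrace{\big\{\hat d_t(S_t;\hat\mu_t,\betainithat) - \hat d_t(S_t;\mu_t',\beta_0)\big\}\phic_t(\beta_0, \mu_t')}_{b_t} \\
	& \qquad\quad + \underbrace{\big\{\hat d_t(S_t;\mu_t',\beta_0)-d_t'(S_t)\big\}\phic_t(\beta_0, \mu_t')}_{c_t},
\end{align*}
so that $\mc(\beta_0,\hat\eta)-\mc(\beta_0,\eta')=\sum_{t=1}^T(a_t+b_t+c_t)$, and then invoke \cref{A-lem:sum-of-squares-function} to obtain $\|\mc(\beta_0,\hat\eta)-\mc(\beta_0,\eta')\|^2\le 4T^2\{\max_t\|a_t\|^2+\max_t\|b_t\|^2+\max_t\|c_t\|^2\}$. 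It then suffices to show each maximum is $o_P(1)$.

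For $b_t$ and $c_t$ the argument is routine. First I would observe that $\phic_t(\beta_0,\mu_t')$ is uniformly bounded on $\cO$: the support of $O$ is bounded (\cref{A-asu:continuous-bounded-obs}), $p_t$ and $1-p_t$ are bounded away from $0$ by positivity, $\Theta$ is compact and $\beta_0$ fixed (so $\partial_\beta\gamma_t(S_t;\beta_0)$ is bounded), and $\mu_t'$ is bounded (\cref{A-asu:continuous-bounded-gamma-prime}). Consequently $\|b_t\|^2\le C\,\|\hat d_t(S_t;\hat\mu_t,\betainithat)-\hat d_t(S_t;\mu_t',\beta_0)\|^2=o_P(1)$ by \cref{A-lem:continuous-ee-d-converge}, and $\|c_t\|^2\le C\,\|\hat d_t(S_t;\mu_t',\beta_0)-d_t'(S_t)\|^2=o_P(1)$ by \cref{A-asu:continuous-d-converge}.

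The hard part will be $a_t$, where the factor that is small in $L_2$, namely $R_t:=\phic_t(\beta_0,\hat\mu_t)-\phic_t(\beta_0,\mu_t')$, is multiplied by $\hat d_t(S_t;\hat\mu_t,\betainithat)$, which is controlled only through an $L_2$ envelope $g$ (i.e.\ $|\hat d_t(S_t;\hat\mu_t,\betainithat)|\le g$, $g\in L_2(P)$, by \cref{A-asu:continuous-uniform-integrable}) and not by a deterministic constant, so a naive Cauchy–Schwarz split leaves an $O_P(1)$ term. The two facts I would lean on are: (i) $\|R_t\|^2=\int|R_t|^2\,dP=o_P(1)$, which is exactly the intermediate bound \cref{A-eq:continuous-ee-phi-converge:proofuse1} established in the proof of \cref{A-lem:continuous-ee-phi-converge}; and (ii) $R_t$ is moreover bounded by a fixed constant on $\cO$, since $R_t=-\tfrac{A_t-p_t}{p_t(1-p_t)}I_t\{(1-p_t)(\hat\mu_t(H_t,1)-\mu_t'(H_t,1))+p_t(\hat\mu_t(H_t,0)-\mu_t'(H_t,0))\}\partial_\beta\gamma_t(S_t;\beta_0)$ with every factor bounded. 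Given (i)--(ii), I would bound $\|a_t\|^2=\int|\hat d_t(S_t;\hat\mu_t,\betainithat)|^2R_t^2\,dP$ by truncating at a level $M$: on $\{|\hat d_t(S_t;\hat\mu_t,\betainithat)|\le M\}$ the integrand is $\le M^2R_t^2$, contributing $M^2\|R_t\|^2=o_P(1)$ for fixed $M$; on the complement it is $\le(\sup|R_t|)^2\,g^2\,\indic\{g>M\}$, contributing at most $C\int g^2\,\indic\{g>M\}\,dP$, which is deterministic and tends to $0$ as $M\to\infty$ because $g\in L_2(P)$. A standard $\epsilon$--$M$ argument then gives $\|a_t\|^2=o_P(1)$, and combining the three bounds proves the lemma. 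The cross-fitting version follows verbatim, replacing $\hat\eta,\hat\mu_t,\hat d_t,\betainithat$ by $\hat\eta_k,\hat\mu_{kt},\hat d_{kt},\betainithat_k$ and citing the cross-fitting statements of \cref{A-lem:continuous-ee-phi-converge,A-lem:continuous-ee-d-converge}.
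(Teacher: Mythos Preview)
Your proposal is correct and follows essentially the same approach as the paper: the same three-term telescoping of $\hat d_t\phic_t - d_t'\phic_t$, the same appeal to \cref{A-lem:sum-of-squares-function}, and the same handling of $b_t$ and $c_t$ via boundedness of $\phic_t(\beta_0,\mu_t')$ together with \cref{A-lem:continuous-ee-d-converge} and \cref{A-asu:continuous-d-converge}. The only difference is that for $a_t$ the paper simply reads \cref{A-asu:continuous-uniform-integrable} as giving a pointwise bound on $\hat d_t$ and pulls out a constant directly, whereas you interpret it as an $L_2$ envelope and add a truncation argument; your version is more careful under the weaker reading, but the structure is the same.
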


\begin{proof}[Proof of \cref{A-lem:continuous-ee-l2-conv}]
	We present the proof for the non-cross-fitting version. The proof for the cross-fitting version is identical with $\hat\eta$, $\hat\mu_t$, $\hat d_t$ replaced by $\hat\eta_k$, $\hat\mu_{t,k}$, $\hat d_{t,k}$, respectively. 

        Here, we write out the dependence of $\phi_t$ on the observation $O$.

	Using the definition in \cref{A-eq:ee-continuous} we have
	\begin{align}
		& ~~~ \|\mc(\beta_0, \hat\eta) - \mc(\beta_0, \eta') \|^2 \nonumber \\
		& = \int |\mc(\beta_0, \hat\eta, \hat d; \betainithat, \hat \eta) - \mc(\beta_0, \eta', d')|^2 dP \nonumber \\
		& = \int \bigg|\sum_{t=1}^T \left\{ \hat d_t(S_t; \hat \mu_t, \betainithat) \phi_t(\beta_0, \hat \mu_t, o) -  d_t'(s_t) \phi_t(\beta_0, \mu_t', o) \right\} \bigg|^2 dP \nonumber \\
		& = \int \bigg|\sum_{t=1}^T \bigg[ \left\{ \hat d_t(s_t; \hat \mu_t, \betainithat) \phi_t(\beta_0, \hat \mu_t, o) - \hat d_t(s_t; \hat \mu_t, \betainithat) \phi_t(\beta_0, \mu_t', o) \right\} \nonumber \\
            & ~~~ + \left\{ \hat d_t(s_t; \hat \mu_t, \betainithat) \phi_t(\beta_0, \mu_t', o) - \hat d_t(s_t; \mu_t', \beta_0) \phi_t(\beta_0, \mu_t', o) \right\} \nonumber \\
            & ~~~ + \left\{ \hat d_t(s_t; \mu_t', \beta_0) \phi_t(\beta_0, \mu_t', o) -  d_t'(s_t) \phi_t(\beta_0, \mu_t', o) \right\} \bigg] \bigg|^2 dP \nonumber \\
		& \leq 4 T^2 \bigg\{\max_{1\leq t \leq T} \int \left| \hat d_t(s_t; \hat \mu_t, \betainithat) \phi_t(\beta_0, \hat \mu_t, o) - \hat d_t(s_t; \hat \mu_t, \betainithat) \phi_t(\beta_0, \mu_t', o) \right|^2 dP \nonumber \\
            & ~~~ + \max_{1\leq t \leq T} \int \left|  \hat d_t(s_t; \hat \mu_t, \betainithat) \phi_t(\beta_0, \mu_t', o) - \hat d_t(s_t; \mu_t', \beta_0) \phi_t(\beta_0, \mu_t', o) \right|^2 dP \nonumber \\
            & ~~~ + \max_{1\leq t \leq T} \int \left|  \hat d_t(s_t; \mu_t', \beta_0) \phi_t(\beta_0, \mu_t', o) -  d_t'(s_t) \phi_t(\beta_0, \mu_t', o)\right|^2 dP \bigg\}, \nonumber
	\end{align}
	where the last inequality follows from \cref{A-lem:sum-of-squares-function}. Therefore, it suffices to show that for all $1 \leq t \leq T$,
	\begin{align}
		\int \left| \hat d_t(s_t; \hat \mu_t, \betainithat) \phi_t(\beta_0, \hat \mu_t, o) - \hat d_t(s_t; \hat \mu_t, \betainithat) \phi_t(\beta_0, \mu_t', o)  \right|^2 dP = o_P(1), \label{A-eq:continuous-ee-l2-conv:proofuse1} \\
            \int \left| \hat d_t(s_t; \hat \mu_t, \betainithat) \phi_t(\beta_0, \mu_t', o) - \hat d_t(s_t; \mu_t', \beta_0) \phi_t(\beta_0, \mu_t', o) \right|^2 dP = o_P(1), \label{A-eq:continuous-ee-l2-conv:proofuse2} \\
		\int \left| \hat d_t(s_t; \mu_t', \beta_0) \phi_t(\beta_0, \mu_t', o) -  d_t'(s_t) \phi_t(\beta_0, \mu_t', o) \right|^2 dP = o_P(1). \label{A-eq:continuous-ee-l2-conv:proofuse3}
	\end{align}


	To show \cref{A-eq:continuous-ee-l2-conv:proofuse1}, we 
    have
    \begin{align}
            & \int \left| \hat d_t(s_t; \hat \mu_t, \betainithat) \phi_t(\beta_0, \hat \mu_t, o) - \hat d_t(s_t; \hat \mu_t, \betainithat) \phi_t(\beta_0, \mu_t', o) \right|^2 dP \nonumber \\
            & \leq C \int \left| \phi_t(\beta_0, \hat \mu_t, o) - \phi_t(\beta_0, \mu_t', o) \right|^2 dP \label{A-eq:continuous-ee-l2-conv:proofuse3.5} \\
            & = o_p(1) \label{A-eq:continuous-ee-l2-conv:proofuse3.6}
        \end{align}
        where \cref{A-eq:continuous-ee-l2-conv:proofuse3.5} follows from that $\hat d_t(s_t; \hat \mu_t, \betainithat)$ is bounded (\cref{A-asu:continuous-uniform-integrable}) and that the Frobenius norm is compatible with the Euclidean norm, and \cref{A-eq:continuous-ee-l2-conv:proofuse3.6} follows from the convergence of $\phi_t(\beta_0, \hat \mu_t)$ (\cref{A-lem:continuous-ee-phi-converge}). 
    
    To show \cref{A-eq:continuous-ee-l2-conv:proofuse2}, we have 
    \begin{align}
            & \int \left| \hat d_t(s_t; \hat \mu_t, \betainithat) \phi_t(\beta_0, \mu_t', o) - \hat d_t(s_t; \mu_t', \beta_0) \phi_t(\beta_0, \mu_t', o) \right|^2 dP \nonumber \\
            & \leq C \int \left| \hat d_t(s_t; \hat \mu_t, \betainithat) - \hat d_t(s_t; \mu_t', \beta_0) \right|_F^2 dP \label{A-eq:continuous-ee-l2-conv:proofuse4.5} \\
            & = o_p(1) \label{A-eq:continuous-ee-l2-conv:proofuse4}
        \end{align}
     where \cref{A-eq:continuous-ee-l2-conv:proofuse4.5} follows from that all the observations and $\mu_t'$ are bounded (\cref{A-asu:continuous-bounded-obs,A-asu:continuous-bounded-gamma-prime}), $A_t \in \{0, 1\}$, $p_t$ is bounded away from $0$ or $1$ (Assumption [positivity]) (thus $\phi_t(\beta_0, \mu_t')$ is bounded), and that the Frobenius norm is compatible with the Euclidean norm, and \cref{A-eq:continuous-ee-l2-conv:proofuse4} follows from the convergence of $\hat d_t(s_t; \hat \mu_t, \betainithat)$ (\cref{A-lem:continuous-ee-d-converge}).

	To show \cref{A-eq:continuous-ee-l2-conv:proofuse3}, we have
	\begin{align}
            & \int \left| \hat d_t(s_t; \mu_t', \beta_0) \phi_t(\beta_0, \mu_t', o) -  d_t'(s_t) \phi_t(\beta_0, \mu_t', o) \right|^2 dP \nonumber \\
            & \leq C \int \left| \hat d_t(s_t; \mu_t', \beta_0) -  d_t'(s_t) \right|_F^2 dP \label{A-eq:continuous-ee-l2-conv:proofuse5.5} \\
            & = o_p(1) \label{A-eq:continuous-ee-l2-conv:proofuse5}
	\end{align}
    where \cref{A-eq:continuous-ee-l2-conv:proofuse5.5} follows from that all the observations and $\mu_t'$ are bounded (\cref{A-asu:continuous-bounded-obs,A-asu:continuous-bounded-gamma-prime}), $A_t \in \{0, 1\}$, $p_t$ is bounded away from $0$ or $1$ (Assumption [positivity]) (thus $\phi_t(\beta_0, \mu_t')$ is bounded), and that the Frobenius norm is compatible with the Euclidean norm, and \cref{A-eq:continuous-ee-l2-conv:proofuse5} follows from the convergence of $\hat d_t(s_t; \mu_t', \beta_0)$ (\cref{A-asu:continuous-d-converge}).  

	Having both \cref{A-eq:continuous-ee-l2-conv:proofuse1} and \cref{A-eq:continuous-ee-l2-conv:proofuse2} established, the proof is completed.
\end{proof}

\begin{lem}
	\label{A-lem:continuous-PPee-deriv-conv}
	Suppose \cref{A-asu:continuous-unique-zero,A-asu:continuous-bounded-obs,A-asu:continuous-compact-param-space,A-asu:continuous-bounded-gamma,A-asu:continuous-bounded-gamma-prime,A-asu:continuous-hat-tilde-d-bounded,A-asu:continuous-d-converge,A-asu:continuous-nuisance-converge,A-asu:continuous-donsker-nuisance,A-asu:continuous-continuous-d-hat,A-asu:continuous-uniform-integrable} and [Positivity assumption made in the main paper that requires $p_t(H_t)$ to be bounded away from 0 or 1] hold. Then for $\mc(\beta, \eta)$ defined in \cref{A-eq:ee-continuous}, we have
	\begin{align*}
		\left| \PP \big\{ \partial_\beta \mc(\beta_0, \hat\eta) \big\} - \PP \big\{ \partial_\beta \mc(\beta_0, \eta') \big\} \right| = o_P(1).
	\end{align*}
	Under the same assumptions, the lemma statement also holds with $\hat\eta$ replaced by the cross-fitting version, $\hat\eta_k$.
\end{lem}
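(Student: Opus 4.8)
The plan is to reduce the claim to a bound of the type in \cref{A-asu:nuisance-conv-PPee-deriv} and then invoke the convergence facts already established. First I would observe that $d_t(S_t;\mu_t,\betainit)$ does not depend on $\beta$, so $\partial_\beta\mc(\beta,\eta)=\sum_{t=1}^T d_t(S_t;\mu_t,\betainit)\,\partial_\beta\phic_t(\beta,\mu_t)$, and that $\mu_t$ enters $\phic_t(\beta,\mu_t)$ only through the scalar $(1-p_t)\mu_t(H_t,1)+p_t\mu_t(H_t,0)$. Differentiating in $\beta$ then shows that $\partial_\beta\phic_t(\beta,\mu_t)$ equals a term not involving $\mu_t$ minus $\frac{A_t-p_t}{p_t(1-p_t)}I_t\big\{(1-p_t)\mu_t(H_t,1)+p_t\mu_t(H_t,0)\big\}\,\partial_\beta^2\gamma_t(S_t;\beta)$, i.e.\ $\partial_\beta\phic_t(\beta,\cdot)$ is affine in $\mu_t$ with a coefficient that is bounded on $\cO\times\Theta$ by \cref{A-asu:continuous-bounded-obs,A-asu:continuous-compact-param-space} and the positivity condition ($p_t$ bounded away from $0$ and $1$). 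Writing $B_t:=\partial_\beta\phic_t(\beta_0,\hat\mu_t)$ and $B_t':=\partial_\beta\phic_t(\beta_0,\mu_t')$, it therefore suffices to bound $\big|\PP\{\hat d_t(S_t;\hat\mu_t,\betainithat)\,B_t-d_t'(S_t)\,B_t'\}\big|$ for each of the finitely many $t\in[T]$.

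For each such $t$ I would use the decomposition
\begin{align*}
	\hat d_t(S_t;\hat\mu_t,\betainithat)\,B_t-d_t'(S_t)\,B_t' &= \big\{\hat d_t(S_t;\hat\mu_t,\betainithat)-\hat d_t(S_t;\mu_t',\beta_0)\big\}\,B_t+\hat d_t(S_t;\mu_t',\beta_0)\,(B_t-B_t') \\
	&\quad +\big\{\hat d_t(S_t;\mu_t',\beta_0)-d_t'(S_t)\big\}\,B_t',
\end{align*}
and apply $|\PP\{A(O)B(O)\}|\le\|A\|\,\|B\|$ (Cauchy--Schwarz, with the Frobenius norm on the matrix factors) to each of the three pieces. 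The first piece is at most $\|\hat d_t(\cdot;\hat\mu_t,\betainithat)-\hat d_t(\cdot;\mu_t',\beta_0)\|\cdot\|B_t\|$, where the first factor is $o_P(1)$ by \cref{A-lem:continuous-ee-d-converge} and $\|B_t\|$ is bounded using \cref{A-asu:continuous-bounded-obs,A-asu:continuous-bounded-gamma,A-asu:continuous-compact-param-space} and positivity; the second piece is at most $\|\hat d_t(\cdot;\mu_t',\beta_0)\|\cdot\|B_t-B_t'\|$, where $\|\hat d_t(\cdot;\mu_t',\beta_0)\|$ is bounded by \cref{A-asu:continuous-hat-tilde-d-bounded} and $\|B_t-B_t'\|\le C\sum_{a\in\{0,1\}}\|\hat\mu_t(\cdot,a)-\mu_t'(\cdot,a)\|=o_P(1)$ by the form of $\partial_\beta\phic_t$ obtained above and \cref{A-asu:continuous-nuisance-converge}; and the third piece is at most $\|\hat d_t(\cdot;\mu_t',\beta_0)-d_t'(\cdot)\|\cdot\|B_t'\|$, where the first factor is $o_P(1)$ by \cref{A-asu:continuous-d-converge} and $\|B_t'\|$ is bounded using \cref{A-asu:continuous-bounded-obs,A-asu:continuous-bounded-gamma-prime,A-asu:continuous-compact-param-space} and positivity. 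Summing the resulting $o_P(1)$ bounds over the finitely many $t$ yields $\big|\PP\{\partial_\beta\mc(\beta_0,\hat\eta)\}-\PP\{\partial_\beta\mc(\beta_0,\eta')\}\big|=o_P(1)$.

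I expect the middle piece to be the main obstacle, as it is the only step that requires tracking how $\partial_\beta\phic_t$ depends on the infinite-dimensional nuisance $\mu_t$; this uses the explicit form of $\partial_\beta\phic_t$ obtained above and the boundedness of $\partial_\beta^2\gamma_t(S_t;\beta)$ over the compact $\Theta$ and bounded covariate support, after which the reduction to $L_2$-convergence of $\hat\mu_t$ proceeds exactly as in the proof of \cref{A-lem:continuous-ee-phi-converge} (and it is vacuous when $\gamma_t$ is linear in $\beta$, since then $\partial_\beta\phic_t$ does not involve $\mu_t$). Finally, the cross-fitted version is proved by the identical argument with $\hat\eta,\hat\mu_t,\hat d_t,\betainithat$ replaced by $\hat\eta_k,\hat\mu_{t,k},\hat d_{t,k},\betainithat_k$, using \cref{A-asu:continuous-nuisance-converge,A-asu:continuous-d-converge} and \cref{A-lem:continuous-ee-d-converge} in their cross-fitting forms.
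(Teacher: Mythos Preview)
Your proposal is correct and follows essentially the same three-term decomposition as the paper's proof: change $\hat d_t(\cdot;\hat\mu_t,\betainithat)$ to $\hat d_t(\cdot;\mu_t',\beta_0)$, then change $\partial_\beta\phic_t(\beta_0,\hat\mu_t)$ to $\partial_\beta\phic_t(\beta_0,\mu_t')$, then change $\hat d_t(\cdot;\mu_t',\beta_0)$ to $d_t'(\cdot)$, bounding each piece by Cauchy--Schwarz together with \cref{A-lem:continuous-ee-d-converge}, \cref{A-asu:continuous-nuisance-converge}, and \cref{A-asu:continuous-d-converge} respectively. Your treatment of the middle term is in fact slightly more explicit than the paper's, which simply cites \cref{A-lem:continuous-ee-phi-converge}; you correctly identify that what is needed is $L_2$-convergence of $\partial_\beta\phic_t(\beta_0,\hat\mu_t)$ and derive this directly from the affine dependence of $\partial_\beta\phic_t$ on $\mu_t$.
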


\begin{proof}[Proof of \cref{A-lem:continuous-PPee-deriv-conv}]
	We present the proof for the non-cross-fitting version. The proof for the cross-fitting version is identical with $\hat\eta$, $\hat\mu_t$, $\hat d_t$ replaced by $\hat\eta_k$, $\hat\mu_{t,k}$, $\hat d_{t,k}$, respectively. Here, we explicily write out the dependency of $\mc$ on $\mu$, $d$, and $\betainit$. We have
	\begin{align*}
		& \left| \PP \big\{ \partial_\beta \mc(\beta_0, \hat\eta) \big\} - \PP \big\{ \partial_\beta \mc(\beta_0, \eta') \big\} \right| \nonumber \\
            & \leq \left| \PP \big\{ \partial_\beta \mc(\beta_0, \hat\mu, \hat d(S_t; \betainithat, \hat \mu)) \big\} - \PP \big\{ \partial_\beta \mc(\beta_0, \hat \mu, \hat d(S_t;\beta_0, \mu')) \big\} \right| \nonumber \\
            & ~~~ + \left| \PP \big\{ \partial_\beta \mc(\beta_0, \hat \mu, \hat d(S_t;\beta_0, \mu')) \big\} - \PP \big\{ \partial_\beta \mc(\beta_0, \mu', \hat d(S_t;\beta_0, \mu')) \big\} \right| \nonumber \\
            & ~~~ + \left| \PP \big\{ \partial_\beta \mc(\beta_0, \mu', \hat d(S_t; \beta_0, \mu')) \big\} - \PP \big\{ \partial_\beta \mc(\beta_0, \mu', d') \big\} \right| \nonumber
        \end{align*}
        Thus, it remains to show that
        \begin{align}
            \left| \PP \big\{ \partial_\beta \mc(\beta_0, \hat\mu, \hat d(S_t; \betainithat, \hat \mu)) \big\} - \PP \big\{ \partial_\beta \mc(\beta_0, \hat \mu, \hat d(S_t; \beta_0, \mu')) \big\} \right|  & = o_p(1) \label{A-eq:A-lem:continuous-PPee-deriv-conv:proofuse1} \\
            \left| \PP \big\{ \partial_\beta \mc(\beta_0, \hat \mu, \hat d(S_t; \beta_0, \mu')) \big\} - \PP \big\{ \partial_\beta \mc(\beta_0, \mu', \hat d(S_t;\beta_0, \mu')) \big\} \right|  & = o_p(1) \label{A-eq:A-lem:continuous-PPee-deriv-conv:proofuse3} \\
            \left| \PP \big\{ \partial_\beta \mc(\beta_0, \eta', \hat d(S_t; \beta_0, \mu')) \big\} - \PP \big\{ \partial_\beta \mc(\beta_0, \eta', d') \big\} \right| & = o_p(1) \label{A-eq:A-lem:continuous-PPee-deriv-conv:proofuse2} 
        \end{align}
        
        To show \cref{A-eq:A-lem:continuous-PPee-deriv-conv:proofuse1}, we have
        \begin{align}
            & \left| \PP \big\{ \partial_\beta \mc(\beta_0, \hat\eta, \hat d(S_t; \betainithat, \hat \mu)) \big\} - \PP \big\{ \partial_\beta \mc(\beta_0, \hat \eta, \hat d(S_t; \beta_0, \mu')) \big\} \right| \nonumber \\
            & \leq \int \left| \partial_\beta \mc(\beta_0, \hat\eta, \hat d(S_t; \betainithat, \hat \mu)) - \partial_\beta \mc(\beta_0, \hat \eta, \hat d(S_t; \beta_0, \mu')) \right| dP \nonumber \\
            & = \int \left| \sum_{t=1}^T \hat d_t(s_t; \hat\mu_t, \betainithat) \partial_\beta \phic_t(\beta_0, \hat \mu_t) - \hat d_t(s_t; \mu_t', \beta_0) \partial_\beta \phic_t(\beta_0, \hat \mu_t) \right| dP \nonumber \\
            & \leq T \left\{\max_{1\leq t \leq T} \int \left| \hat d_t(s_t; \hat\mu_t, \betainithat) \partial_\beta \phic_t(\beta_0, \hat \mu_t) - \hat d_t(s_t; \mu_t', \beta_0) \partial_\beta \phic_t(\beta_0, \hat \mu_t) \right| dP \right\}. \nonumber 
        \end{align}
        
        It suffices to show that for each $1\leq t \leq T$, 
        \begin{align*}
            \int \left| \hat d_t(s_t; \hat\mu_t, \betainithat) \partial_\beta \phic_t(\beta_0, \hat \mu_t) - \hat d_t(s_t; \mu_t', \beta_0) \partial_\beta \phic_t(\beta_0, \hat \mu_t)  \right| dP = o_p(1)
        \end{align*}
        Because $A_t \in \{0, 1\}$, that all the observations and $\hat \mu_t$ are bounded (\cref{A-asu:continuous-bounded-obs,A-asu:continuous-bounded-gamma}), that $p_t$ is bounded away from $0$ or $1$ (Assumption [positivity]), and that the parameter space $\Theta$ is compact (\cref{A-asu:continuous-compact-param-space}) (thus $\partial_\beta \phic_t(\beta_0, \hat \mu_t)$ is bounded). Because the Frodenius norm is compatible with the Euclidean norm, we have 
        \begin{align}
            & \int \left| \hat d_t(s_t; \hat\mu_t, \betainithat) \partial_\beta \phic_t(\beta_0, \hat \mu_t) - \hat d_t(s_t; \mu_t', \beta_0) \partial_\beta \phic_t(\beta_0, \hat \mu_t)  \right| dP \nonumber \\
            & \leq C \left(\int |\hat d_t(s_t; \hat\mu_t, \betainithat) - \hat d_t(s_t; \mu_t', \beta_0) |_F^2 dP\right) \nonumber \\
            & \leq C \left(\int |\hat d_t(s_t; \hat\mu_t, \betainithat) - \hat d_t(s_t; \mu_t', \beta_0) |_F^2 dP\right)^{\frac{1}{2}} \label{A-eq:A-lem:continuous-PPee-deriv-conv:proofuse3.4} \\
            & = o_p(1)\label{A-eq:A-lem:continuous-PPee-deriv-conv:proofuse4}
        \end{align}
        where \cref{A-eq:A-lem:continuous-PPee-deriv-conv:proofuse3.4} follows from the Cauchy-Schwarz inequality, and \cref{A-eq:A-lem:continuous-PPee-deriv-conv:proofuse4} follows from the convergence of $\hat d_t(s_t;\hat \mu_t,\betainithat)$ (\cref{A-lem:continuous-ee-d-converge}).  

    To show \cref{A-eq:A-lem:continuous-PPee-deriv-conv:proofuse3}, we have
    \begin{align}
        & \left| \PP \big\{ \partial_\beta \mc(\beta_0, \hat \eta, \hat d(S_t; \beta_0, \mu')) \big\} - \PP \big\{ \partial_\beta \mc(\beta_0, \eta', \hat d(S_t; \beta_0, \mu')) \big\} \right| \nonumber \\
        & \leq \int \left| \partial_\beta \mc(\beta_0, \hat \eta, \hat d(S_t; \beta_0, \mu')) - \partial_\beta \mc(\beta_0, \eta', \hat d(S_t; \beta_0, \mu')) \right| dP \nonumber \\
        & = \int \left| \sum_{t=1}^T \hat d_t(s_t; \mu_t', \beta_0) \partial_\beta \phic_t(\beta_0, \hat \mu_t) - \hat d_t(s_t; \mu_t', \beta_0) \partial_\beta \phic_t(\beta_0, \mu_t') \right| dP \nonumber \\
        & \leq T \left\{\max_{1\leq t \leq T} \int \left| \hat d_t(s_t; \mu_t', \beta_0) \partial_\beta \phic_t(\beta_0, \hat \mu_t) - \hat d_t(s_t; \mu_t', \beta_0) \partial_\beta \phic_t(\beta_0, \mu_t') \right| dP \right\}. \nonumber 
    \end{align}

    It suffices to show that for each $1\leq t \leq T$, 
        \begin{align*}
            \int \left| \hat d_t(s_t; \mu_t', \beta_0) \partial_\beta \phic_t(\beta_0, \hat \mu_t) - \hat d_t(s_t; \mu_t', \beta_0) \partial_\beta \phic_t(\beta_0, \mu_t')  \right| dP = o_p(1)
        \end{align*}
        Because $\hat d_t(s_t; \mu_t', \beta_0)$ is bounded (\cref{A-asu:continuous-hat-tilde-d-bounded}), and then we have 
        \begin{align}
            & \int \left| \hat d_t(s_t; \mu_t', \beta_0) \partial_\beta \phic_t(\beta_0, \hat \mu_t) - \hat d_t(s_t; \mu_t', \beta_0) \partial_\beta \phic_t(\beta_0, \mu_t') \right| dP \nonumber \\
            & \leq C \left(\int |\partial_\beta \phic_t(\beta_0, \hat \mu_t) - \partial_\beta \phic_t(\beta_0, \mu_t') |^2 dP\right) \nonumber \\
            & \leq C \left(\int |\partial_\beta \phic_t(\beta_0, \hat \mu_t) - \partial_\beta \phic_t(\beta_0, \mu_t') |^2 dP\right)^{\frac{1}{2}} \label{A-eq:A-lem:continuous-PPee-deriv-conv:proofuse5} \\
            & = o_p(1)\label{A-eq:A-lem:continuous-PPee-deriv-conv:proofuse6}
        \end{align}
        where \cref{A-eq:A-lem:continuous-PPee-deriv-conv:proofuse5} follows from the Cauchy-Schwarz inequality, and \cref{A-eq:A-lem:continuous-PPee-deriv-conv:proofuse6} follows from the convergence of $\phic_t(\beta_0, \hat \mu_t)$ (\cref{A-lem:continuous-ee-phi-converge}).  

       To show \cref{A-eq:A-lem:continuous-PPee-deriv-conv:proofuse2}, we have
        \begin{align}
            & \left| \PP \big\{ \partial_\beta \mc(\beta_0, \eta', \hat d(S_t; \beta_0, \mu')) \big\} -  \PP \big\{ \partial_\beta \mc(\beta_0, \eta', d') \big\} \right| \nonumber \\
            & = \int \left| \partial_\beta \mc(\beta_0, \eta', \hat d(S_t; \beta_0, \mu')) - \partial_\beta \mc(\beta_0, \eta', d') \right| dP \nonumber \\
            & = \int \left| \sum_{t=1}^T \hat d_t(s_t; \mu_t', \beta_0) \partial_\beta \phic_t(\beta_0, \mu_t') - d_t'(s_t) \partial_\beta \phic_t(\beta_0, \mu_t') \right| dP \nonumber \\
            & \leq T \left\{\max_{1\leq t \leq T} \int \left| \hat d_t(s_t; \mu_t', \beta_0) \partial_\beta \phic_t(\beta_0, \mu_t') - d_t'(s_t) \partial_\beta \phic_t(\beta_0, \mu_t') \right| dP \right\} \nonumber 
        \end{align}
        It suffices to show that for each $1\leq t \leq T$, 
        \begin{align*}
            \int \left| \hat d_t(s_t; \mu_t', \beta_0) \partial_\beta \phic_t(\beta_0, \mu_t') - d_t'(s_t) \partial_\beta \phic_t(\beta_0, \mu_t') \right| dP = o_p(1)
        \end{align*}
        Because $A_t \in \{0, 1\}$, that all the observations and $\mu_t'$ are bounded (\cref{A-asu:continuous-bounded-obs,A-asu:continuous-bounded-gamma-prime}), that $p_t$ is bounded away from $0$ or $1$ (Assumption [positivity]), and that the parameter space $\Theta$ is compact (\cref{A-asu:continuous-compact-param-space}) (thus $\partial_\beta \phic_t(\beta_0, \mu_t')$ is bounded). Because the Fredenius norm is compatible with the Euclidean norm, we have 
        \begin{align}
            & \int \left| \hat d_t(s_t; \mu_t', \beta_0) \partial_\beta \phic_t(\beta_0, \mu_t') - d_t'(s_t) \partial_\beta \phic_t(\beta_0, \mu_t') \right| dP \nonumber \\
            & \leq C \int \left| \hat d_t(s_t; \mu_t', \beta_0) - d_t'(s_t) \right|_F dP \nonumber \\
            & \leq C \left( \int \left| \hat d_t(s_t; \mu_t', \beta_0) - d_t'(s_t) \right|_F^2 dP \right)^{\frac{1}{2}} \label{A-eq:A-lem:continuous-PPee-deriv-conv:proofuse7} \\
            & = o_p(1) \label{A-eq:A-lem:continuous-PPee-deriv-conv:proofuse8}
        \end{align}
        where \cref{A-eq:A-lem:continuous-PPee-deriv-conv:proofuse7} follows from the Cauchy-Schwarz inequality, and \cref{A-eq:A-lem:continuous-PPee-deriv-conv:proofuse8} follows from the convergence of $\hat d_t(s_t;\mu_t', \beta_0)$ (\cref{A-asu:continuous-d-converge}). This completes the proof.
\end{proof}

\begin{lem}
	\label{A-lem:continuous-PPee-meat-conv}
	Suppose \cref{A-asu:continuous-unique-zero,A-asu:continuous-bounded-obs,A-asu:continuous-compact-param-space,A-asu:continuous-bounded-gamma,A-asu:continuous-bounded-gamma-prime,A-asu:continuous-hat-tilde-d-bounded,A-asu:continuous-d-converge,A-asu:continuous-nuisance-converge,A-asu:continuous-donsker-nuisance,A-asu:continuous-continuous-d-hat,A-asu:continuous-uniform-integrable} and [Positivity assumption made in the main paper that requires $p_t(H_t)$ to be bounded away from 0 or 1] hold. Then for $\mc(\beta, \eta)$ defined in \cref{A-eq:ee-continuous}, we have
	\begin{align*}
		\left| \PP \big\{ \mc(\beta_0, \hat\eta) \mc(\beta_0, \hat\eta)^T \big\} - \PP \big\{ \mc(\beta_0, \eta') \mc(\beta_0, \eta')^T \big\} \right| = o_P(1).
	\end{align*}
	Under the same assumptions, the lemma statement also holds with $\hat\eta$ replaced by the cross-fitting version, $\hat\eta_k$.
\end{lem}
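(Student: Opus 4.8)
The plan is to reduce the statement to the $L_2$ convergence $\|\mc(\beta_0,\hat\eta) - \mc(\beta_0,\eta')\|^2 = o_P(1)$ already established in \cref{A-lem:continuous-ee-l2-conv}, combined with $O_P(1)$ control of $\|\mc(\beta_0,\hat\eta)\|$ and $\|\mc(\beta_0,\eta')\|$, in the same spirit as the treatment of the derivative term in \cref{A-lem:continuous-PPee-deriv-conv}. First I would apply the elementary identity $uu^T - vv^T = (u-v)u^T + v(u-v)^T$ with $u = \mc(\beta_0,\hat\eta)$ and $v = \mc(\beta_0,\eta')$, together with $|xy^T|_F = |x|\,|y|$, to obtain the pointwise bound
\begin{align*}
	\big| \mc(\beta_0,\hat\eta)\mc(\beta_0,\hat\eta)^T - \mc(\beta_0,\eta')\mc(\beta_0,\eta')^T \big|_F \leq \big| \mc(\beta_0,\hat\eta) - \mc(\beta_0,\eta') \big|\, \Big( \big| \mc(\beta_0,\hat\eta) \big| + \big| \mc(\beta_0,\eta') \big| \Big).
\end{align*}
Since the matrix norm $|\cdot|$ is dominated by the Frobenius norm, integrating over $P$ and applying the Cauchy--Schwarz inequality gives
\begin{align*}
	\Big| \PP\{\mc(\beta_0,\hat\eta)\mc(\beta_0,\hat\eta)^T\} - \PP\{\mc(\beta_0,\eta')\mc(\beta_0,\eta')^T\} \Big| \leq \| \mc(\beta_0,\hat\eta) - \mc(\beta_0,\eta') \|\, \Big( \| \mc(\beta_0,\hat\eta) \| + \| \mc(\beta_0,\eta') \| \Big),
\end{align*}
so it suffices to show the first factor is $o_P(1)$ and the parenthesised factor is $O_P(1)$.

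The first factor is $o_P(1)$ by \cref{A-lem:continuous-ee-l2-conv}, whose hypotheses are contained in those of the present lemma. For the parenthesised factor I would bound each summand of $\mc$ separately. For the limit, $\|\mc(\beta_0,\eta')\| \leq \sum_{t=1}^T \|d_t'(S_t)\,\phic_t(\beta_0,\mu_t')\|$; here $\phic_t(\beta_0,\mu_t')$ is uniformly bounded on $\cO$ because $O$ is bounded (\cref{A-asu:continuous-bounded-obs}), $\mu_t'$ is bounded (\cref{A-asu:continuous-bounded-gamma-prime}), $A_t\in\{0,1\}$, $p_t$ is bounded away from $0$ and $1$ by positivity, and $\Theta$ is compact (\cref{A-asu:continuous-compact-param-space}); moreover $d_t'$ is bounded on $\cO$, being the $L_2$ limit of $\hat d_t(\cdot\,;\mu_t',\beta_0)$, which is bounded by \cref{A-asu:continuous-hat-tilde-d-bounded}. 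Hence $\|\mc(\beta_0,\eta')\|$ is a finite constant. For the estimated version, $\|\mc(\beta_0,\hat\eta)\| \leq \sum_{t=1}^T \|\hat d_t(S_t;\hat\mu_t,\betainithat)\,\phic_t(\beta_0,\hat\mu_t)\| \leq C \sum_{t=1}^T \|\hat d_t(S_t;\hat\mu_t,\betainithat)\|$, where $\phic_t(\beta_0,\hat\mu_t)$ is uniformly bounded on $\cO$ by the items just listed together with the boundedness of $\hat\mu_t$ (\cref{A-asu:continuous-bounded-gamma}), and $\|\hat d_t(S_t;\hat\mu_t,\betainithat)\| = O_P(1)$ by \cref{A-asu:continuous-uniform-integrable}. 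Therefore the parenthesised factor is $O_P(1)$, and the product is $o_P(1)$, which proves the lemma.

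The argument presents no serious obstacle; the only point requiring care is that $\hat d_t$ is controlled only in $L_2$, not pointwise, so each product $\hat d_t\,\phic_t$ must be bounded by factoring out the pointwise bound on $\phic_t$ and keeping $\hat d_t$ inside the $L_2$ norm --- precisely the device already used in \cref{A-lem:continuous-PPee-deriv-conv} and \cref{A-lem:continuous-ee-l2-conv}. The cross-fitted case follows verbatim after replacing $\hat\eta$, $\hat\mu_t$, $\hat d_t$, $\betainithat$ by $\hat\eta_k$, $\hat\mu_{t,k}$, $\hat d_{t,k}$, $\betainithat_k$ and invoking the cross-fitting analogues of the cited results.
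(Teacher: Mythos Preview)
Your argument is correct and takes a genuinely different route from the one the paper indicates. The paper says the proof is ``almost identical'' to \cref{A-lem:continuous-PPee-deriv-conv}, meaning a three-way telescoping decomposition of $\PP\{\mc\mc^T\}$ through the intermediate nuisance values $(\hat\mu,\hat d(\cdot;\mu',\beta_0))$ and $(\mu',\hat d(\cdot;\mu',\beta_0))$, bounding each difference term-by-term over $t$. You instead use the algebraic identity $uu^T - vv^T = (u-v)u^T + v(u-v)^T$ followed by Cauchy--Schwarz, which reduces the whole statement to the $L_2$ convergence $\|\mc(\beta_0,\hat\eta)-\mc(\beta_0,\eta')\| = o_P(1)$ already packaged in \cref{A-lem:continuous-ee-l2-conv}, plus an easy $O_P(1)$ bound on $\|\mc(\beta_0,\hat\eta)\|$ and $\|\mc(\beta_0,\eta')\|$. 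Your approach is shorter and avoids repeating the telescoping bookkeeping; the paper's approach has the minor advantage of being structurally parallel to the derivative lemma, so both proofs can literally share text. One small refinement: for $\|\mc(\beta_0,\eta')\|<\infty$ you do not need $d_t'$ to be pointwise bounded (which would require the bound in \cref{A-asu:continuous-hat-tilde-d-bounded} to be uniform in $n$); it suffices that $d_t'\in L_2$, which follows directly from \cref{A-asu:continuous-d-converge}, combined with the pointwise bound on $\phic_t(\beta_0,\mu_t')$.
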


\begin{proof}[Proof of \cref{A-lem:continuous-PPee-meat-conv}]
	The proof is almost identical to the proof of \cref{A-lem:continuous-PPee-deriv-conv} and is thus omitted.
\end{proof}

\subsection{Asymptotic Normality}
\label{A-subA-sec:normality-continuous}

\begin{thm}[Asymptotic normality for continuous outcome case.]
	\label{A-thm:continuous-normality}
	Consider $\mc(\beta,\eta)$ defined in \cref{A-eq:ee-continuous}. Suppose \cref{A-asu:continuous-unique-zero,A-asu:continuous-regularity,A-asu:continuous-d-converge,A-asu:continuous-nuisance-converge,A-asu:continuous-donsker-nuisance} and \tq{[Positivity in main paper]} hold, then we have
	\begin{align*}
		\sqrt{n}(\hat\beta - \beta_0) \dto N ( 0, V) \quad \text{as } n\to\infty,
	\end{align*}
	where
	\begin{align} 
		V = \PP\Big\{\sum_{t=1}^T d_t'(S_t) \partial_\beta \phi_t(\beta_0, \mu_t') \Big\}^{-1} \PP \Big[ \Big\{ \sum_{t=1}^T d_t'(S_t) \phi_t(\beta_0, \mu_t') \Big\}^{\otimes 2} \Big] \PP\Big\{\sum_{t=1}^T d_t'(S_t) \partial_\beta \phi_t(\beta_0, \mu_t') \Big\}^{-1, T}. \label{A-eq:continuous-avar}
	\end{align}	
	In addition,
	\begin{align*}
		& ~~ \bigg[\PP_n\Big\{\sum_{t=1}^T \hat{d}_t(S_t; \hat\mu_t, \betainithat) \partial_\beta \phi_t(\hat\beta, \hat\mu_t) \Big\}\bigg]^{-1} \PP_n \Big[ \Big\{ \sum_{t=1}^T \hat{d}_t(S_t; \hat\mu_t, \betainithat) \phi_t(\hat\beta, \hat\mu_t) \Big\}^{\otimes 2} \Big] \\
		& \times \bigg[\PP_n\Big\{\sum_{t=1}^T \hat{d}_t(S_t; \hat\mu_t, \betainithat) \partial_\beta \phi_t(\hat\beta, \hat\mu_t) \Big\}\bigg]^{-1, T}.
	\end{align*}
	is a consistent estimator for the asymptotic variance $V$.	
\end{thm}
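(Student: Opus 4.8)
The plan is to obtain \cref{A-thm:continuous-normality} as a direct application of the general result \cref{A-thm:general-normality}, instantiated with $m(\beta,\eta) = \mc(\beta,\eta)$ as defined in \cref{A-eq:ee-continuous}. The entire task then reduces to checking that \cref{A-asu:unique-zero,A-asu:nuisance-conv-general,A-asu:donsker,A-asu:reg-general} hold for this choice of estimating function, and that the resulting sandwich formula matches \cref{A-eq:continuous-avar}. Each required ingredient has already been assembled in the preceding lemmas, so the proof is mostly a matter of assembling them in order.

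First I would verify \cref{A-asu:unique-zero}: \cref{A-lem:continuous-beta0-is-zero} establishes $\PP\{\mc(\beta_0,\eta)\}=0$ for every $\eta\in\cT$ (global robustness), and uniqueness of the zero of $\PP\{\mc(\beta,\eta')\}$ is exactly \cref{A-asu:continuous-unique-zero}. Next, the four parts of \cref{A-asu:nuisance-conv-general} are supplied verbatim by \cref{A-lem:continuous-ee-sup-conv} (the $\sup_{\beta}$ bound on $|\PP\mc(\beta,\hat\eta)-\PP\mc(\beta,\eta')|$), \cref{A-lem:continuous-ee-l2-conv} (the $L_2$ bound on $\mc(\beta_0,\hat\eta)-\mc(\beta_0,\eta')$), \cref{A-lem:continuous-PPee-deriv-conv} (convergence of the Jacobian term), and \cref{A-lem:continuous-PPee-meat-conv} (convergence of the meat term). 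For \cref{A-asu:donsker}, I would argue exactly as in \cref{A-lem:continuous-ee-betainit-converge}: $\hat\mu_t$ and $\hat d_t$ take values in Donsker classes by \cref{A-asu:continuous-donsker-nuisance}; $\mc(\beta,\eta)$ and $\partial_\beta\mc(\beta,\eta)$ are polynomials in these nuisance functions and in the smooth (pre-specified) maps $\gamma_t(S_t;\beta),\partial_\beta\gamma_t(S_t;\beta)$, hence Lipschitz transformations over the compact $\Theta$ and the bounded support of $O$; and a Lipschitz image of finitely many Donsker classes is Donsker. Finally, for \cref{A-asu:reg-general} I would dispatch (i)--(ii) via \cref{A-asu:continuous-compact-param-space,A-asu:continuous-bounded-obs}; (iii) because for the identity link $\mc(\beta,\eta')$ is linear in $\beta$ through $\gamma_t$ and $\partial_\beta\gamma_t$; (iv)--(vi) and (viii) from the bounded support of $O$, the positivity assumption keeping $p_t(H_t)$ bounded away from $0$ and $1$, and the boundedness of $\hat\mu_t$, $\mu_t'$, and $\hat d_t$ (\cref{A-asu:continuous-bounded-gamma,A-asu:continuous-bounded-gamma-prime,A-asu:continuous-uniform-integrable,A-asu:continuous-hat-tilde-d-bounded}); and (vii) is \cref{A-asu:continuous-invertible-deriv}. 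With all four assumption blocks verified, \cref{A-thm:general-normality} gives $\sqrt n(\hat\beta-\beta_0)\dto N(0,V)$ with $V=\PP\{\partial_\beta\mc(\beta_0,\eta')\}^{-1}\PP\{\mc(\beta_0,\eta')\mc(\beta_0,\eta')^T\}\PP\{\partial_\beta\mc(\beta_0,\eta')\}^{-1,T}$; since $\mc(\beta,\eta')=\sum_{t=1}^T d_t'(S_t)\phi_t(\beta,\mu_t')$, this is precisely \cref{A-eq:continuous-avar}, and the consistency of the stated plug-in estimator is the corresponding conclusion of \cref{A-thm:general-normality}.

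The main obstacle I anticipate is the regularity block \cref{A-asu:reg-general}, and within it the control of the factor $d_t(S_t;\mu_t,\betainit)$, which contains the inverse of a conditional second-moment matrix. To produce integrable (indeed bounded) envelopes for $\mc$, $\partial_\beta\mc$, and $\mc\mc^T$ on the bounded support of $O$, one must use \cref{A-asu:continuous-uniform-integrable} and \cref{A-asu:continuous-hat-tilde-d-bounded} to keep this inverse under control along the sequence of estimators; everything else in the verification is routine once boundedness of the observations and of $p_t(H_t)^{-1}$, $(1-p_t(H_t))^{-1}$ is invoked.
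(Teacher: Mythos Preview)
Your proposal is correct and follows essentially the same approach as the paper: apply \cref{A-thm:general-normality} with $m=\mc$ and verify \cref{A-asu:unique-zero,A-asu:nuisance-conv-general,A-asu:donsker,A-asu:reg-general} using \cref{A-lem:continuous-beta0-is-zero}, \cref{A-asu:continuous-unique-zero}, \cref{A-lem:continuous-ee-sup-conv,A-lem:continuous-ee-l2-conv,A-lem:continuous-PPee-deriv-conv,A-lem:continuous-PPee-meat-conv}, the Lipschitz-transformation-of-Donsker argument, and the regularity assumptions in \cref{A-asu:continuous-regularity}. Your treatment of the regularity block is in fact more careful than the paper's, which simply appeals to the form of $\mc$ and \cref{A-asu:continuous-bounded-obs}.
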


\begin{proof}[Proof of \cref{A-thm:continuous-normality}]
	By \cref{A-thm:general-normality}, it suffices to verify \cref{A-asu:unique-zero,A-asu:nuisance-conv-general,A-asu:donsker,A-asu:reg-general}.

	Verify \cref{A-asu:unique-zero}: This follows from \cref{A-asu:continuous-unique-zero} and \cref{A-lem:continuous-beta0-is-zero}.

	Verify \cref{A-asu:reg-general}: \cref{A-asu:reg-compact-param-space} is assumed in \cref{A-asu:continuous-compact-param-space}. \cref{A-asu:reg-bounded-obs} is assumed in \cref{A-asu:continuous-bounded-obs}. \cref{A-asu:reg-cont-PPee} holds as $\mc(\beta, \eta)$ is linear in $\beta$. \cref{A-asu:reg-bounded-and-cont-differentiable-ee,A-asu:reg-dominated-ee-deriv,A-asu:reg-dominated-ee-meat} hold because of the form of $\mc(\beta,\eta)$ and \cref{A-asu:continuous-bounded-obs}. \cref{A-asu:reg-invertible-ee-deriv} is assumed in \cref{A-asu:continuous-invertible-deriv}.
	
	Verify \cref{A-asu:nuisance-conv-PPee-sup}: This follows from \cref{A-lem:continuous-ee-sup-conv}.
	
	Verify \cref{A-asu:nuisance-conv-ee-l2}: This follows from \cref{A-lem:continuous-ee-l2-conv}.

	Verify \cref{A-asu:nuisance-conv-PPee-deriv}: This follows from \cref{A-lem:continuous-PPee-deriv-conv}.

	Verify \cref{A-asu:nuisance-conv-PPee-meat}: This follows from \cref{A-lem:continuous-PPee-meat-conv}.
	
	Verify \cref{A-asu:donsker}: This is because $\mu_t(\cdot)$ and $d_t(\cdot)$ are from Donsker classes (\cref{A-asu:continuous-donsker-nuisance}), $\mc(\beta,\eta)$ is a polynomial of $\mu_t$ and $d_t$ and thus Lipschitz, and a Lipschitz transformation of Donsker classes is still Donsker \citep[][Theorem 9.31]{kosorok2008introduction}.

	Thus, we have verified all assumptions in \cref{A-thm:general-normality}. This completes the proof.
\end{proof}


\begin{thm}[Asymptotic normality for continuous outcome case (cross-fitting)]
	\label{A-thm:continuous-normality-cf}
	Consider $\mc(\beta,\eta)$ defined in \cref{A-eq:ee-continuous}. Suppose \cref{A-asu:continuous-unique-zero,A-asu:continuous-nuisance-converge,A-asu:continuous-regularity} and [Positivity in main paper] hold, then we have
	\begin{align*}
		\sqrt{n}(\check\beta - \beta_0) \dto N ( 0, V) \quad \text{as } n\to\infty,
	\end{align*}
	where
	\begin{align*}
		V = \PP\{\partial_\beta \mc(\beta_0, \eta')\}^{-1} ~ \PP\{ \mc(\beta_0, \eta') \mc(\beta_0, \eta')^T \} ~ \PP\{\partial_\beta \mc(\beta_0, \eta')\}^{-1, T}.
	\end{align*}	
	In addition,
	\begin{align*}
		& \bigg[\frac{1}{K}\sum_{k=1}^K \PP_{n,k} \{ \partial_\beta \mc(\check\beta_n, \hat\eta_k) \}\bigg]^{-1}
		~ \bigg[\frac{1}{K} \sum_{k=1}^K \PP_{n,k} \{ \mc(\check\beta_n, \hat\eta_k) \mc(\check\beta_n, \hat\eta_k)^T \}\bigg] \\
		\times & \bigg[\frac{1}{K}\sum_{k=1}^K \PP_{n,k} \{ \partial_\beta \mc(\check\beta_n, \hat\eta_k) \}\bigg]^{-1, T}
	\end{align*}
	is a consistent estimator for the asymptotic variance $V$.	
\end{thm}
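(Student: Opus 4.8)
The plan is to deduce this theorem from the general cross-fitted result \cref{A-thm:general-normality-cf}, applied to the estimating function $\mc(\beta,\eta)$ defined in \eqref{A-eq:ee-continuous}. Thus the whole argument reduces to verifying the four hypotheses of that theorem: \cref{A-asu:unique-zero}, \cref{A-asu:nuisance-conv-general-cf}, \cref{A-asu:reg-general}, and \cref{A-asu:reg-bounded-ee-deriv-and-meat}. This closely parallels the proof of the non-cross-fitted version \cref{A-thm:continuous-normality}, with one structural simplification and one new ingredient: the Donsker condition \cref{A-asu:donsker} is no longer needed, but it must be replaced by the uniform boundedness condition \cref{A-asu:reg-bounded-ee-deriv-and-meat} on $\partial_\beta\mc$ and the outer product $\mc\mc^T$, which holds because both are polynomials in the observations and in the (bounded) nuisance functions $\mu_t,\hat d_t$, evaluated over the compact parameter space $\Theta$ and the bounded support $\cO$ of $O$, with the centering-and-weighting factors $(A_t-p_t)/\{p_t(1-p_t)\}$ bounded by positivity.

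First I would dispatch the easy hypotheses. \cref{A-asu:unique-zero} follows from \cref{A-lem:continuous-beta0-is-zero}, which gives $\PP\{\mc(\beta_0,\eta)\}=0$ for every $\eta\in\cT$, together with the uniqueness assumption \cref{A-asu:continuous-unique-zero}. The regularity conditions \cref{A-asu:reg-general} are checked term by term exactly as in \cref{A-thm:continuous-normality}: compactness and boundedness come from \cref{A-asu:continuous-compact-param-space,A-asu:continuous-bounded-obs}; continuity and continuous differentiability in $\beta$ hold because $\mc$ is affine in $\beta$; the dominatedness conditions follow from the explicit polynomial form of $\mc$ combined with boundedness of $O$ and positivity; and invertibility of $\PP\{\partial_\beta\mc(\beta_0,\eta')\}$ is \cref{A-asu:continuous-invertible-deriv}.

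The substantive step is \cref{A-asu:nuisance-conv-general-cf}, whose four parts must hold for each fold $k\in[K]$. Each is the cross-fitted counterpart of a lemma already proved: \cref{A-asu:nuisance-conv-PPee-sup-cf} from the $\hat\eta_k$-version of \cref{A-lem:continuous-ee-sup-conv}; \cref{A-asu:nuisance-conv-ee-l2-cf} from the $\hat\eta_k$-version of \cref{A-lem:continuous-ee-l2-conv}; \cref{A-asu:nuisance-conv-PPee-deriv-cf} from the $\hat\eta_k$-version of \cref{A-lem:continuous-PPee-deriv-conv}; and \cref{A-asu:nuisance-conv-PPee-meat-cf} from the $\hat\eta_k$-version of \cref{A-lem:continuous-PPee-meat-conv}. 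Each of those lemmas was stated to remain valid after replacing $\hat\eta,\hat\mu_t,\hat d_t,\betainithat$ by their fold-$k$ analogues $\hat\eta_k,\hat\mu_{kt},\hat d_{kt},\betainithat_k$, so this step is largely bookkeeping once one records that the first-stage pilot $\betainithat_k$, fit on $B_k^c$, is consistent for $\beta_0$ (the argument of \cref{A-lem:continuous-ee-betainit-converge} applied to the subsample $B_k^c$), and that $\hat\mu_{kt}$ and $\hat d_{kt}$ converge in $L_2$ within each fold.

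Assembling these verifications, \cref{A-thm:general-normality-cf} delivers $\sqrt n(\check\beta-\beta_0)\dto N(0,V)$ with $V$ in the stated sandwich form, and consistency of the proposed sandwich variance estimator follows from \cref{A-lem:conv-ee-deriv-cf,A-lem:conv-ee-meat-cf} and the continuous mapping theorem. I expect the main obstacle to be the fold-wise verification of \cref{A-asu:nuisance-conv-general-cf}: one must confirm that the cross-fitted analogues of \cref{A-lem:continuous-ee-sup-conv,A-lem:continuous-ee-l2-conv,A-lem:continuous-PPee-deriv-conv,A-lem:continuous-PPee-meat-conv} genuinely go through using only boundedness and $L_2$-convergence of the within-fold nuisance estimators, so that no global Donsker requirement re-enters through the first-stage fits. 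The bulk of the work there is routine $L_2$ bookkeeping — bounding products via Cauchy--Schwarz and invoking boundedness of $O$ and positivity — and it is precisely because these estimates are term-by-term that sample splitting lets us dispense with the Donsker condition required in \cref{A-thm:continuous-normality}.
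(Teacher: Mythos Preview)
Your proposal is correct and follows essentially the same route as the paper: reduce to \cref{A-thm:general-normality-cf} and verify \cref{A-asu:unique-zero,A-asu:nuisance-conv-general-cf,A-asu:reg-general,A-asu:reg-bounded-ee-deriv-and-meat} one by one, invoking \cref{A-lem:continuous-beta0-is-zero} for global robustness, the explicit (affine-in-$\beta$) form of $\mc$ together with \cref{A-asu:continuous-bounded-obs,A-asu:continuous-compact-param-space} and positivity for the boundedness and regularity conditions, and the cross-fitted versions of \cref{A-lem:continuous-ee-sup-conv,A-lem:continuous-ee-l2-conv,A-lem:continuous-PPee-deriv-conv,A-lem:continuous-PPee-meat-conv} for the four parts of \cref{A-asu:nuisance-conv-general-cf}. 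The paper's proof is slightly terser but structurally identical.
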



\begin{proof}[Proof of \cref{A-thm:continuous-normality-cf}]
	By \cref{A-thm:general-normality-cf}, it suffices to verify \cref{A-asu:unique-zero,A-asu:nuisance-conv-general-cf,A-asu:reg-general,A-asu:reg-bounded-ee-deriv-and-meat}.

	Verify \cref{A-asu:unique-zero}: This follows from \cref{A-asu:continuous-unique-zero} and \cref{A-lem:continuous-beta0-is-zero}.

	Verify \cref{A-asu:reg-general}: \cref{A-asu:reg-compact-param-space} is assumed in \cref{A-asu:continuous-compact-param-space}. \cref{A-asu:reg-bounded-obs} is assumed in \cref{A-asu:continuous-bounded-obs}. \cref{A-asu:reg-cont-PPee} holds as $\mc(\beta, \eta)$ is linear in $\beta$. \cref{A-asu:reg-bounded-and-cont-differentiable-ee,A-asu:reg-dominated-ee-deriv,A-asu:reg-dominated-ee-meat} hold because of the form of $\mc(\beta,\eta)$ and \cref{A-asu:continuous-bounded-obs}. \cref{A-asu:reg-invertible-ee-deriv} is assumed in \cref{A-asu:continuous-invertible-deriv}.
	
	Verify \cref{A-asu:nuisance-conv-PPee-sup-cf}: This follows from \cref{A-lem:continuous-ee-sup-conv}.
	
	Verify \cref{A-asu:nuisance-conv-ee-l2-cf}: This follows from \cref{A-lem:continuous-ee-l2-conv}.

	Verify \cref{A-asu:nuisance-conv-PPee-deriv-cf}: This follows from \cref{A-lem:continuous-PPee-deriv-conv}.

	Verify \cref{A-asu:nuisance-conv-PPee-meat-cf}: This follows from \cref{A-lem:continuous-PPee-meat-conv}.
	
	Verify \cref{A-asu:reg-bounded-ee-deriv-and-meat}: This holds because $\mc(\beta, \eta)$ is linear in $\beta$ and because of \cref{A-asu:continuous-bounded-obs,A-asu:continuous-compact-param-space}. 

	Thus, we have verified all assumptions in \cref{A-thm:general-normality-cf}. This completes the proof.
\end{proof}

\section{Implementation}
\label{A-sec:implementation}

Here we provide additional details on the implementation of Algorithms \ref{algo:estimator-ncf} and \ref{algo:estimator-cf}.

\begin{rmk}[Implementation of $\hat\mu_t$ and $\hat{d}_t$.]
	\normalfont
	If the sample size $n$ is large compared to the number of decision points $T$, $\mu_t$ and $d_t$ can be estimated separately for each $t$. If $n$ is small to moderate, one may consider pooling across $t \in [T]$ when estimating $\mu_t$ and $d_t$. Furthermore, $\partial_\beta \phi_t(\beta, \mu_t)$ has the following analytic form, which can be used in Step 3 of Algorithm \ref{algo:estimator-ncf}:
	\begin{align*}
		\partial_\beta \phi_t(\beta, \mu_t) = \begin{cases}
			-\frac{A_t - p_t}{p_t(1-p_t)} I_t (A_t + p_t - 1) S_t \partial_\beta \gamma_t(S_t;\beta) & \text{if $g$ is identity}, \\
			\frac{A_t - p_t}{p_t(1-p_t)} I_t \{-A_t e^{-A_t \gamma_t(S_t;\beta)}Y_{t+1} + (1-p_t)e^{-\gamma_t(S_t;\beta)}\mu_t(H_t, 1)\} S_t \partial_\beta \gamma_t(S_t;\beta) & \text{if $g$ is log}.
		\end{cases} \nonumber
	\end{align*}
 We define matrix inverse as the Moore-Penrose generalized inverse. We could rewrite $\phi_t(\beta)$ defined in \cref{eq:phi-definition} as
 \begin{align*}
     \phi_t(\beta) = R_t(\beta) S_t
 \end{align*}
 where $R_t(\beta)$ is defined as
 \begin{align*}
	R_t(\beta) := \frac{A_t - p_t}{p_t(1-p_t)} I_t \Big[U_t(\beta) - (1-p_t)\EE\{ U_t(\beta) \mid H_t, A_t = 1 \} - p_t\EE\{ U_t(\beta) \mid H_t, A_t = 0 \}\Big.
 \end{align*}
 and 
 \begin{align*}
	R_t(\beta) = \begin{cases}
		\frac{A_t - p_t}{p_t(1-p_t)} I_t \{Y_{t+1} - (A_t + p_t - 1)\gamma_t(S_t;\beta) - (1-p_t)\mu_t^\star(H_t, 1) - p_t \mu_t^\star(H_t, 0) \} & \text{if $g$ is identity}, \\
		\frac{A_t - p_t}{p_t(1-p_t)} I_t \{e^{-A_t \gamma_t(S_t;\beta)}Y_{t+1} - (1-p_t)e^{-\gamma_t(S_t;\beta)}\mu_t^\star(H_t, 1) - p_t \mu_t^\star(H_t, 0) \} & \text{if $g$ is log}.
	\end{cases} \nonumber
\end{align*}
 We write $R_t(\beta, \mu_t)$ to denote the above $R_t(\beta)$ with $\mu_t^\star$ replaced by some function $\mu_t(h_t, a_t)$.
 We show in Appendix that when $\partial_\beta \gamma_t(S_t;\beta) = S_t^T$, $d_t(S_t;\mu_t,\beta)$ could be simplified to
 \begin{align*}
 \tilde{d}_t(S_t; \mu_t, \beta) = 
	 		\EE\bigg\{\partial_\beta R_t(\beta, \mu_t) \mid S_t \bigg\} ~ \EE\bigg\{ R_t(\beta, \mu_t)^2 \mid S_t \bigg\}^{-1}.
\end{align*}
Note, this reduced form of $d_t$ does not involve $S_t S_t^T$, which could simplify the implementation of $\hat{d}_t$ and allow more flexible modeling (e.g., $S_t$ could be high-dimensional) in $\hat{d}_t$. Additionally, $\tilde{d}_t(S_t; \mu_t, \beta)$ has the following analytical form 
\begin{align*}
		\tilde{d}_t(S_t; \mu_t, \beta) = \begin{cases}
			 \frac{-1}{\EE\bigg\{ \frac{(A_t - p_t)^2}{p_t^2(1-p_t)^2} I_t^2 (Y_{t+1} - (A_t + p_t - 1)\gamma_t(S_t;\beta) - (1-p_t)\mu_t(H_t, 1) - p_t \mu_t(H_t, 0))^2 \mid S_t \bigg\}} & \text{if $g$ is identity}, \\
			\frac{\EE\bigg\{ -A_t e^{-A_t \gamma_t(S_t;\beta)}Y_{t+1} + (1-p_t)e^{-\gamma_t(S_t;\beta)}\mu_t(H_t, 1) \mid S_t \bigg\}}{\EE\bigg\{ \frac{(A_t - p_t)^2}{p_t^2(1-p_t)^2} I_t^2 (e^{-A_t \gamma_t(S_t;\beta)}Y_{t+1} - (1-p_t)e^{-\gamma_t(S_t;\beta)}\mu_t(H_t, 1) - p_t \mu_t(H_t, 0))^2 \mid S_t \bigg\}} & \text{if $g$ is log}.
		\end{cases} \nonumber
\end{align*}

\end{rmk}

\subsection{Technical Derivation}

Consider the following estimating function for $\beta$:
\begin{align}
    \sum_{t=1}^T d_t(S_t;\mu_t,\betainit)\phi_t(\beta,\mu_t). \label{A-eq:implementation-matrix-form}
\end{align}
where
\begin{align*}
	\phi_t(\beta, \mu_t) = \begin{cases}
		\frac{A_t - p_t}{p_t(1-p_t)} I_t \{Y_{t+1} - (A_t + p_t - 1)\gamma_t(S_t;\beta) - (1-p_t)\mu_t(H_t, 1) - p_t \mu_t(H_t, 0) \} S_t & \text{if $g$ is identity}, \\
		\frac{A_t - p_t}{p_t(1-p_t)} I_t \{e^{-A_t \gamma_t(S_t;\beta)}Y_{t+1} - (1-p_t)e^{-\gamma_t(S_t;\beta)}\mu_t(H_t, 1) - p_t \mu_t(H_t, 0) \} S_t & \text{if $g$ is log}.
	\end{cases} \nonumber
\end{align*}
and
\begin{align*}
    d_t(S_t;\mu_t,\beta) = \EE\{\partial_\beta \phi_t(\beta, \mu_t) \mid S_t \} ~ [\EE\{ \phi_t(\beta, \mu_t)^{\otimes 2} \mid S_t \}]^{+}.
\end{align*}
Here we use $^+$ to denote the Moore-Penrose generalized inverse.

\begin{asu}[Effect Linear in $\beta$]
    \label{A-asu:effect-linear-in-beta}
    Suppose for each $1 \leq t \leq T$, $\gamma_t(S_t;\beta)$ is linear in $\beta$.
\end{asu}

\begin{lem}
    \label{A-lem:d-scalar-implementation}
    Suppose \cref{A-asu:effect-linear-in-beta} hold, then the estimating function defined in \cref{A-eq:implementation-matrix-form} is equivalent as the following estimating function
    \begin{align}
        \sum_{t=1}^T \td_t(S_t;\mu_t,\betainit)\phi_t(\beta,\mu_t)
    \end{align}
    where
\begin{align*}
		& \tilde{d}_t(S_t; \mu_t, \beta) := \\
            & \begin{cases}
            \EE\bigg\{ -\frac{A_t - p_t}{p_t(1-p_t)} I_t (A_t + p_t - 1) \mid S_t \bigg\} \times \\
			 \EE\bigg\{ \bigg[\frac{(A_t - p_t)}{p_t(1-p_t)} I_t (Y_{t+1} - (A_t + p_t - 1)\gamma_t(S_t;\beta) - (1-p_t)\mu_t(H_t, 1) - p_t \mu_t(H_t, 0))\bigg]^{\otimes2} \mid S_t \bigg\}^+ \\
            & \text{if $g$ is identity}, \\
		\EE\bigg\{ -A_t e^{-A_t \gamma_t(S_t;\beta)}Y_{t+1} + (1-p_t)e^{-\gamma_t(S_t;\beta)}\mu_t(H_t, 1) \mid S_t \bigg\} \times \\
            ~~~ \EE\bigg\{ \bigg[\frac{(A_t - p_t)}{p_t(1-p_t)} I_t (e^{-A_t \gamma_t(S_t;\beta)}Y_{t+1} - (1-p_t)e^{-\gamma_t(S_t;\beta)}\mu_t(H_t, 1) - p_t \mu_t(H_t, 0))\bigg]^{\otimes2} \mid S_t \bigg\}^+ \\
            & \text{if $g$ is log},
		\end{cases} \nonumber
\end{align*} 
\end{lem}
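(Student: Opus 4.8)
The plan is to prove that the two estimating functions are identical termwise. Since each is a sum over $t$ of a fixed function of $\beta$ and of the data, it suffices to fix $t\in[T]$ and show $d_t(S_t;\mu_t,\betainit)\,\phi_t(\beta,\mu_t)=\td_t(S_t;\mu_t,\betainit)\,\phi_t(\beta,\mu_t)$. The starting point is \cref{A-asu:effect-linear-in-beta}: linearity of $\gamma_t$ in $\beta$ means $\partial_\beta\gamma_t(S_t;\beta)$ does not depend on $\beta$; writing it as $S_t^{T}$ (the moderator/feature vector, as in the implementation remark), inspection of the displayed cases of $\phi_t$ gives the scalar factorization $\phi_t(\beta,\mu_t)=R_t(\beta,\mu_t)\,S_t$, where $R_t$ is the scalar bracket factor defined in the implementation remark and depends on $\beta$ only through the scalar $\gamma_t(S_t;\beta)$. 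Hence, by the chain rule together with linearity, $\partial_\beta\phi_t(\beta,\mu_t)=S_t\,\partial_\beta R_t(\beta,\mu_t)=\{\partial_\gamma R_t(\beta,\mu_t)\}\,S_tS_t^{T}$ and $\phi_t(\beta,\mu_t)^{\otimes2}=R_t(\beta,\mu_t)^2\,S_tS_t^{T}$, where $\partial_\gamma R_t:=\partial R_t/\partial\gamma_t$.

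First I would take conditional expectations given $S_t$ and pull out the $S_t$-measurable matrix $S_tS_t^{T}$, obtaining $\EE\{\partial_\beta\phi_t\mid S_t\}=\EE\{\partial_\gamma R_t\mid S_t\}\,S_tS_t^{T}$ and $\EE\{\phi_t^{\otimes2}\mid S_t\}=\EE\{R_t^2\mid S_t\}\,S_tS_t^{T}$, both of rank at most one. Next I would apply the rank-one Moore--Penrose identity $(a\,vv^{T})^{+}=a^{-1}\|v\|^{-4}\,vv^{T}$ for $a\neq0$, $v\neq0$, and $(a\,vv^{T})^{+}=0$ otherwise, with $a=\EE\{R_t^2\mid S_t\}$ and $v=S_t$; substituting into $d_t=\EE\{\partial_\beta\phi_t\mid S_t\}\,[\EE\{\phi_t^{\otimes2}\mid S_t\}]^{+}$ and repeatedly using $S_tS_t^{T}S_tS_t^{T}=\|S_t\|^2S_tS_t^{T}$ and $S_t^{T}S_t=\|S_t\|^2$, all powers of $S_t$ telescope and one is left with $d_t\,\phi_t=\{\EE\{\partial_\gamma R_t\mid S_t\}\,[\EE\{R_t^2\mid S_t\}]^{+}\}\,R_tS_t$, i.e.\ $d_t\phi_t=\td_t\phi_t$ with the \emph{scalar} $\td_t=\EE\{\partial_\gamma R_t\mid S_t\}\,[\EE\{R_t^2\mid S_t\}]^{+}$. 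The last step records $\partial_\gamma R_t$ by differentiating the two cases of $R_t$: for the identity link $\partial_\gamma R_t=-\frac{A_t-p_t}{p_t(1-p_t)}I_t(A_t+p_t-1)$, which is already the asserted first factor of $\td_t$; for the log link $\partial_\gamma R_t=\frac{A_t-p_t}{p_t(1-p_t)}I_t\{-A_te^{-A_t\gamma_t(S_t;\beta)}Y_{t+1}+(1-p_t)e^{-\gamma_t(S_t;\beta)}\mu_t(H_t,1)\}$, after which a short simplification of $\EE\{\partial_\gamma R_t\mid S_t\}$ and $\EE\{R_t^2\mid S_t\}$ using the known randomization mechanism $P(A_t=1\mid H_t)=p_t$ yields the displayed forms of numerator and denominator.

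Two bookkeeping points need care. On the degenerate events $\{S_t=0\}$ or $\{\EE\{R_t^2\mid S_t\}=0\}$ the identity still holds trivially: in the first case $\phi_t=R_tS_t=0$, in the second $R_t=0$ almost surely given $S_t$ so again $\phi_t=0$ a.s., and in both cases the Moore--Penrose convention makes $d_t$ and $\td_t$ vanish as well, so $d_t\phi_t=0=\td_t\phi_t$; summing over $t$ then proves the lemma. I expect the main obstacle to be exactly this rank-one linear-algebra step combined with the transpose bookkeeping in $\phi_t=R_tS_t$ versus $\partial_\beta\phi_t=\{\partial_\gamma R_t\}S_tS_t^{T}$: one must verify the pseudo-inverse identity, confirm that all the $\|S_t\|^2$ factors cancel exactly, and be careful that the differentiation of $R_t$ in the log case handles the two distinct exponential terms $e^{-A_t\gamma_t}$ and $e^{-\gamma_t}$ correctly.
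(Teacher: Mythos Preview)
Your approach is correct and essentially the same as the paper's: both factor $\phi_t = R_t S_t$ with $R_t$ scalar, pull the $S_t$-measurable matrix $S_tS_t^{T}$ out of the two conditional expectations defining $d_t$, and then use rank-one Moore--Penrose algebra so that $d_t\phi_t$ collapses to $\td_t\phi_t$ with scalar $\td_t$. The only cosmetic difference is that you invoke the explicit formula $(a\,vv^{T})^{+}=a^{-1}\|v\|^{-4}vv^{T}$ and track the $\|S_t\|^{2}$ cancellations by hand, whereas the paper writes $(S_tS_t^{T})^{+}=(S_t^{T})^{+}S_t^{+}$, uses that $S_t^{T}$ has full row rank to replace $(S_t^{T})^{+}$ by a right inverse, and obtains $S_tS_t^{T}(S_tS_t^{T})^{+}S_t=S_t$ directly; you also handle the degenerate events $S_t=0$ and $\EE\{R_t^{2}\mid S_t\}=0$ explicitly, which the paper omits.
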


\begin{proof}
    Let
    \begin{align*}
        W_t & = \frac{A_t-p_t}{p_t(1-p_t)} I_t
    \end{align*}
    and 
     \begin{align*}
	R_t(\beta, \mu_t) = \begin{cases}
		Y_{t+1} - (A_t + p_t - 1)\gamma_t(S_t;\beta) - (1-p_t)\mu_t(H_t, 1) - p_t \mu_t(H_t, 0) & \text{if $g$ is identity}, \\
		e^{-A_t \gamma_t(S_t;\beta)}Y_{t+1} - (1-p_t)e^{-\gamma_t(S_t;\beta)}\mu_t(H_t, 1) - p_t \mu_t(H_t, 0) & \text{if $g$ is log}.
	\end{cases} \nonumber
\end{align*}
    \cref{A-eq:implementation-matrix-form} implies that, for any $\beta \in \Theta$, we have
    \begin{align}
    & \sum_{t=1}^T d_t(S_t;\mu_t,\betainit)\phi_t(\beta,\mu_t) \nonumber \\
    & = \sum_{t=1}^T d_t(S_t;\mu_t,\betainit) W_t R_t \partial_{\beta} \gamma_t(S_t;\beta) \nonumber \\
        & = \sum_{t=1}^T \td_t(S_t;\mu_t,\beta) ~\partial_{\beta} \gamma_t(S_t;\beta) \partial_{\beta} \gamma_t(S_t;\beta)^T ~ \left( \partial_{\beta} \gamma_t(S_t;\beta) \partial_{\beta} \gamma_t(S_t;\beta)^T \right)^+ ~ W_t R_t \partial_{\beta} \gamma_t(S_t;\beta) \label{A-eq:A-lem:d-scalar-implementation-proofuse1} \\
        & = \sum_{t=1}^T \td_t(S_t;\mu_t,\beta) ~ \partial_{\beta} \gamma_t(S_t;\beta) \partial_{\beta} \gamma_t(S_t;\beta)^T ~ \left( \partial_{\beta} \gamma_t(S_t;\beta)^T \right)^+ \partial_{\beta} \gamma_t(S_t;\beta)^+ ~ W_t R_t \partial_{\beta} \gamma_t(S_t;\beta) \label{A-eq:A-lem:d-scalar-implementation-proofuse2} \\
        & = \sum_{t=1}^T \td_t(S_t;\mu_t,\beta) ~\partial_{\beta} \gamma_t(S_t;\beta) \partial_{\beta} \gamma_t(S_t;\beta)^T ~ \left( \partial_{\beta} \gamma_t(S_t;\beta)^T \right)^{-1}_R \partial_{\beta} \gamma_t(S_t;\beta)^+ ~ W_t R_t \partial_{\beta} \gamma_t(S_t;\beta) \label{A-eq:A-lem:d-scalar-implementation-proofuse3} \\
        & = \sum_{t=1}^T \td_t(S_t;\mu_t,\beta) ~\partial_{\beta} \gamma_t(S_t;\beta) \partial_{\beta} \gamma_t(S_t;\beta)^+ ~ W_t R_t \partial_{\beta} \gamma_t(S_t;\beta) \nonumber \\
        & = \sum_{t=1}^T \td_t(S_t;\mu_t,\beta) W_t R_t \partial_{\beta} \gamma_t(S_t;\beta) \label{A-eq:A-lem:d-scalar-implementation-proofuse4} 
    \end{align}
    where \cref{A-eq:A-lem:d-scalar-implementation-proofuse1} follows from the specific form of $\phi_t(\beta, \mu_t)$ and that $\gamma_t(S_t;\beta)$ is linear in $\beta$ (\cref{A-asu:effect-linear-in-beta}); \cref{A-eq:A-lem:d-scalar-implementation-proofuse2} follows from the property of Moore-Penrose inverse; \cref{A-eq:A-lem:d-scalar-implementation-proofuse3} follows from that $\gamma_t(S_t;\beta)$ is linear in $\beta$ (\cref{A-asu:effect-linear-in-beta}) and $\beta$ is a $p$-dimensional vector that $\left\{\partial_{\beta} \gamma_t(S_t;\beta)\right\}^T$ is a row vector and thus has full row rank; \cref{A-eq:A-lem:d-scalar-implementation-proofuse4} follows from the property of Moore-Penrose inverse. This completes the proof.
\end{proof}

\section{Bias corrected variance estimator for globally robust estimating equations}
\label{A-sec:small-sample-correction}

\subsection{Set up}

Consider a class of globally robust estimating functions that can be written as
\begin{align}
	m(\beta, \eta) = D(\eta)^T r(\beta, \eta). \label{A-eq:decompose-ee}
\end{align}

The proposed estimating equations for $\hat\beta$ can be written in this form. For notation simplicity, we define $\mu_{t1} := \mu_t(H_t, 1)$, $\mu_{t0} := \mu_t(H_t, 0)$, $p_t = p_t(H_t)$, $\tp_t = \tp_t(S_t)$, $f_t = f_t(S_t)$, $d_t = d_t(S_t)$, $W_t = W_t\{\tp_t(S_t)\}$. For identity link, the estimating function is
\begin{align*}
	\mc(\beta, \eta) & = \sum_{t=1}^T d_t I_t W_t \{ Y_{t+1} - \mu_{t1}(1 - p_t) - \mu_{t0} p_t - (A_t + p_t - 1) f_t^T\beta\} (A_t - \tp_t) f_t \\
	& = \sum_{t=1}^T D_t(\eta)^T r_t(\beta, \eta),
\end{align*}
where $D_t(\eta) = d_t I_t W_t (A_t - \tp_t) f_t^T$ and $r_t(\beta, \eta) = Y_{t+1} - \mu_{t1}(1 - p_t) - \mu_{t0} p_t - (A_t - p_t - 1) f_t^T\beta$. For log link, the estimating function is
\begin{align*}
	\mb(\beta, \eta) & = \sum_{t=1}^T d_t I_t W_t \{ e^{-A_t f_t^T\beta} Y_{t+1} - e^{-f_t^T\beta}\mu_{t1}(1 - p_t) - \mu_{t0} p_t\} (A_t - \tp_t) f_t \\
	& = \sum_{t=1}^T D_t(\eta)^T r_t(\beta, \eta),
\end{align*}
where $D_t(\eta) = d_t I_t W_t (A_t - \tp_t) f_t^T$ and $r_t(\beta, \eta) = e^{-A_t f_t^T\beta} Y_{t+1} - e^{-f_t^T\beta}\mu_{t1}(1 - p_t) - \mu_{t0} p_t$. For both cases, let $D(\eta) = \{D_t(\eta)\}_c$ and $r(\eta) = \{r_t(\beta, \eta)\}_c$, and we have \cref{A-eq:decompose-ee} hold. (Subscript $c$ means column stacking.)

In the following, we derive the bias correction for the variance estimator for this class of estimating equations. To make the derivation rigorous, we make the following assumptions. Note that these assumptions are only for the purpose of deriving the bias correction, and the correction formula will be applied even when these assumptions do not hold.

\begin{asu}
	\label{A-asu:ssc-exogeneous}
	Assume that $I_t$ is exogeneous, the randomization probability $p_t(H_t)$ is a constant, and $\tp_t(S_t)$ is set to a constant, and $A_t$ is independent with $\{Y_{s+1}: t \leq s \leq T\}$.
\end{asu}

Even though the last part of \cref{A-asu:ssc-exogeneous} is quite restrictive, it does hold under the setting where $A_t$ has no effect on all future proximal outcomes, which is a plausible scenario under the null. Therefore, the small sample correction we derived is particularly useful to improve type I error control under small sample size.

\subsection{Derivation of the small sample correction with fixed \texorpdfstring{$\eta$}{eta}}

For the globally robust estimating equation $\mathbb{P}_{n}D(\eta)^{T}r(\beta,\eta)=0$, we have established in the paper that
\begin{align}
	\hat{\beta}-\beta_0 = -\left\{ \PP\partial_{\beta}m(\beta_0,\eta)\right\} ^{-1}\PP_n m(\beta_0,\eta)+o_P(n^{-1/2}). \label{A-eq:ssc-proofuse-1}
\end{align}
This implies that the asymptotic variance of $\hat{\beta}$ is
\begin{align}
	\left[\PP\partial_{\beta}m(\beta_0,\eta)\right]^{-1}\PP\left[m(\beta_0,\eta)m(\beta_0,\eta)^T \right]\left[\PP\partial_{\beta}m(\beta_0,\eta)\right]^{-1,T}. \label{A-eq:ssc-proofuse-2}
\end{align}
and can be estimated by
\begin{align}
	\left[\PP_n \partial_{\beta}m(\hat{\beta},\eta)\right]^{-1}\PP_n \left[m(\hat{\beta},\eta)m(\hat{\beta},\eta)^T \right]\left[\PP_n \partial_{\beta}m(\hat{\beta},\eta)\right]^{-1,T}. \label{A-eq:ssc-proofuse-3}
\end{align}

Define $M_{\beta_0, \eta} := \PP\partial_{\beta}m(\beta_0, \eta)$ and
\begin{align}
	H_{ij}(\beta, \eta): = \frac{1}{n}\frac{\partial r_i (\beta, \eta)}{\partial\beta^T }M_{\beta, \eta}^{-1}D_j^T. \label{A-eq:ssc-proofuse-4}
\end{align}
Then \cref{A-eq:ssc-proofuse-1} implies that
\begin{align}
	\hat{\beta}-\beta_0 = -M_{\beta_0}^{-1}\PP_n m(\beta_0)+o_P(n^{-1/2}). \label{A-eq:ssc-proofuse-5} 
\end{align}

The middle term in the asymptotic variance \cref{A-eq:ssc-proofuse-2} is
\begin{align*}
	\PP\left[m(\beta_0,\eta)m(\beta_0,\eta)^T \right] = \PP\left[D(\eta)^T r(\beta_0,\eta)r(\beta_0,\eta)^T D(\eta)\right].
\end{align*}
In estimating the asymptotic variance (\cref{A-eq:ssc-proofuse-3}), we used $\PP_n \left[D(\eta)^T r(\hat{\beta},\eta)r(\hat{\beta},\eta)^T D(\eta)\right]$ to approximate $\PP\left[D(\eta)^T r(\beta_0,\eta)r(\beta_0,\eta)^T D(\eta)\right]$. Following the nationale in \citet{mancl2001}, we derive the small sample correction by replacing $r(\hat{\beta},\eta)$ in \cref{A-eq:ssc-proofuse-3} with a better approximation of $r(\beta_0,\eta)$. 

We make the following additional working assumption to facilitate a closed-form small sample correction formula.
\begin{asu}[Conditional independence]
    \label{A-asu:zopt-reg-independence}
    For any $1 \leq u < t \leq T$, $\{Y_{t+1}, A_t, \gamma_t(H_t), p_t(H_t)\} \perp \{Y_{u+1}, A_u, \gamma_u(H_u), p_u(H_u), I_u\} \mid S_t, I_t = 1$.
\end{asu}
Because of \cref{A-asu:zopt-reg-independence}, we have
\begin{align}
	& ~~~~ \PP\left[D(\eta)^T r(\beta_0,\eta)r(\beta_0,\eta)^T D(\eta)\right] \nonumber \\
	& = \PP\left[ \PP \{ D(\eta)^T r(\beta_0,\eta)r(\beta_0,\eta)^T D(\eta) \mid I_1, \ldots, I_T, A_1, \ldots, A_T\} \right] \nonumber \\
	& = \PP \left[ D(\eta)^T \PP \{ r(\beta_0,\eta)r(\beta_0,\eta)^T \mid I_1, \ldots, I_T, A_1, \ldots, A_T\} D(\eta) \right].
\end{align}
From now on in this subsection, we will use $\EE$ to denote expectations conditional on $\{I_t, A_t: 1 \leq t \leq T\}$, and we omit writing this conditional set for notation simplicity. We will derive the small sample correction that provides a better approximation of $\EE\{ r(\beta_0,\eta)r(\beta_0,\eta)^T \}$. For notation simplicity, we will also ignore writing $\eta$ in this subsection because we will always consider a fixed $\eta$. 

By Lagrange Mean Value Theorem we have
\begin{align*}
	r(\hat\beta) = r(\beta_0) + \frac{\partial r_i (\tilde{\beta})}{\partial\beta^T }(\hat\beta - \beta_0)\qquad \text{for some } \tilde\beta \in (\beta_0, \hat\beta).
\end{align*}
Therefore,
\begin{align}
	& ~~~~ \EE \{r_i (\hat{\beta})r_i (\hat{\beta})^T \} \nonumber \\
	&  = \EE \bigg\{ r_i (\beta_0)r_i (\beta_0)^T  \bigg\}+\EE \bigg[ r_i (\beta_0)(\hat{\beta}-\beta_0)^T \bigg \{ \frac{\partial r_i (\tilde{\beta})}{\partial\beta^T } \bigg \}^T  \bigg] \nonumber \\
	& + \EE \bigg\{ \frac{\partial r_i (\tilde{\beta})}{\partial\beta^T }(\hat{\beta}-\beta_0)r_i (\beta_0)^T  \bigg\} + \EE \bigg[  \frac{\partial r_i (\tilde{\beta})}{\partial\beta^T }(\hat{\beta}-\beta_0)(\hat{\beta}-\beta_0)^T \bigg\{ \frac{\partial r_i (\tilde{\beta})}{\partial\beta^T } \bigg\}^T  \bigg] \nonumber \\
	&  = B_1 + B_2 + B_3 + B_4 \label{A-eq:ssc-proofuse-5.1}
\end{align}

By \cref{A-eq:ssc-proofuse-5}, we have
\begin{align}
	B_{2} & = \EE \bigg[ r_i (\beta_0)(\hat{\beta}-\beta_0)^T \bigg\{ \frac{\partial r_i (\tilde{\beta})}{\partial\beta^T }\bigg\}^T  \bigg]\nonumber \\
	&  = \EE \bigg[ r_i (\beta_0)\bigg\{ -M_{\beta_0}^{-1}\PP_n m(\beta_0)+o_P(n^{-1/2})\bigg\}^T \bigg\{ \frac{\partial r_i (\beta_0)}{\partial\beta^T }+o_P(1)\bigg\}^T  \bigg] \nonumber \\
	&  = \frac{1}{n}\EE \bigg[ r_i (\beta_0)\bigg\{ -M_{\beta_0}^{-1}m_i (\beta_0)\bigg\}^T \bigg\{ \frac{\partial r_i (\beta_0)}{\partial\beta^T }\bigg\}^T  \bigg]+o_P(1) \label{A-eq:ssc-proofuse-6} \\
	&  = -\EE \bigg[ r_i (\beta_0)r_i (\beta_0)^T \frac{1}{n} (\eta)M_{\beta_0}^{-1,T}\bigg\{ \frac{\partial r_i (\beta_0)}{\partial\beta^T }\bigg\}^T  \bigg]+o_P(1) \nonumber \\
	&  = -\EE \{ r_i (\beta_0)r_i (\beta_0)^T H_{ii}(\beta_0)^T \} + o_P(1). \label{A-eq:ssc-proofuse-7}
\end{align}
Here \cref{A-eq:ssc-proofuse-6} is because $\PP\{D_{j}(\eta)^T r_{j}(\beta_0, \eta)\} = 0$ and the independence between data from participant $i$ and $j$. 

Similarly, we have
\begin{align}
	B_3 &  = \EE \bigg[ \frac{\partial r_i (\tilde{\beta})}{\partial\beta^T }(\hat{\beta}-\beta_0)r_i (\beta_0)^T  \bigg]\\
 	&  = -\EE \{ H_{ii}(\beta_0)r_i (\beta_0)r_i (\beta_0)^T \} + o_P(1). \label{A-eq:ssc-proofuse-8}
\end{align}

Lastly, we have
\begin{align}
	B_4 &  = \EE \bigg[ \frac{\partial r_i (\tilde{\beta})}{\partial\beta^T }(\hat{\beta}-\beta_0)(\hat{\beta}-\beta_0)^T \bigg\{ \frac{\partial r_i (\tilde{\beta})}{\partial\beta^T }\bigg\}^T  \bigg] \nonumber \\
	 & = \EE \bigg[ \bigg\{ \frac{\partial r_i (\beta_0)}{\partial\beta^T }+o_P(1)\bigg\} \bigg\{ -M_{\beta_0}^{-1}\PP_n m(\beta_0)+o_P(n^{-1/2})\bigg\} \nonumber \\
	 & ~~ \times \bigg\{ -M_{\beta_0}^{-1}\PP_n m(\beta_0)+o_P(n^{-1/2})\bigg\}^T \bigg\{ \frac{\partial r_i (\beta_0)}{\partial\beta^T }+o_P(1)\bigg\}^T  \bigg] \nonumber \\
	 & = \EE \bigg[ \frac{\partial r_i (\beta_0)}{\partial\beta^T }M_{\beta_0}^{-1}\PP_n m(\beta_0)\PP_n m(\beta_0)^T M_{\beta_0}^{-1,T}\bigg\{ \frac{\partial r_i (\beta_0)}{\partial\beta^T }\bigg\}^T  \bigg]+o_P(1) \nonumber \\
	 & = \frac{1}{n^{2}}\EE \bigg[ \frac{\partial r_i (\beta_0)}{\partial\beta^T }M_{\beta_0}^{-1}\sum_{j,k = 1}^n m_j (\beta_0)m_k(\beta_0)^T M_{\beta_0}^{-1,T}\bigg\{ \frac{\partial r_i (\beta_0)}{\partial\beta^T }\bigg\}^T  \bigg]+o_P(1) \nonumber \\
	 & = \frac{1}{n^{2}}\EE \bigg[ \frac{\partial r_i (\beta_0)}{\partial\beta^T }M_{\beta_0}^{-1}\sum_{j = 1}^n m_j(\beta_0)m_j(\beta_0)^T M_{\beta_0}^{-1,T}\bigg\{ \frac{\partial r_i (\beta_0)}{\partial\beta^T }\bigg\}^T  \bigg]+o_P(1) \label{A-eq:ssc-proofuse-9} \\
	 & = \frac{1}{n^{2}}\sum_{j = 1}^n \EE \bigg[ \frac{\partial r_i (\beta_0)}{\partial\beta^T }M_{\beta_0}^{-1}D_{j}^T r_{j}(\beta_0)r_{j}(\beta_0)^T D_{j}M_{\beta_0}^{-1,T}\bigg\{ \frac{\partial r_i (\beta_0)}{\partial\beta^T }\bigg\}^T  \bigg]+o_P(1) \nonumber \\
	 & = \frac{1}{n^{2}}\sum_{j = 1}^n \EE \bigg[ \frac{\partial r_i (\beta_0)}{\partial\beta^T }M_{\beta_0}^{-1}D_{j}^T r_{j}(\beta_0)r_{j}(\beta_0)^T D_{j}M_{\beta_0}^{-1,T}\bigg\{ \frac{\partial r_i (\beta_0)}{\partial\beta^T }\bigg\}^T  \bigg]+o_P(1) \nonumber \\
	 & = \sum_{j = 1}^n \EE \{ H_{ij}(\beta_0)r_{j}(\beta_0)r_{j}(\beta_0)^T H_{ij}(\beta_0)^T \} + o_P(1). \label{A-eq:ssc-proofuse-10}
\end{align}
Here, \cref{A-eq:ssc-proofuse-9} is because $\PP\{D_{j}(\eta)^T r_{j}(\beta_0, \eta)\} = 0$ and the independence between data from participant $i$ and $j$.  

Plugging \cref{A-eq:ssc-proofuse-7,A-eq:ssc-proofuse-8,A-eq:ssc-proofuse-10} into \cref{A-eq:ssc-proofuse-5.1} and we have
\begin{align}
	\EE \{ r_i (\hat{\beta})r_i (\hat{\beta})^T  \} & = \EE \{ r_i (\beta_0)r_i (\beta_0)^T  \} -\EE \{ r_i (\beta_0)r_i (\beta_0)^T H_{ii}(\beta_0)^T  \} \nonumber\\
	& ~~ - \EE \{ H_{ii}(\beta_0)r_i (\beta_0)r_i (\beta_0)^T  \} + \sum_{j = 1}^n \EE \{ H_{ij}(\beta_0)r_{j}(\beta_0)r_{j}(\beta_0)^T H_{ij}(\beta_0)^T  \}+o_P(1) \nonumber \\
	& = \EE [ \{ \mathrm{Ind}_i -H_{ii}(\beta_0) \} r_i (\beta_0)r_i (\beta_0)^T \{ \mathrm{Ind}_i -H_{ii}(\beta_0)\}^T  ] \nonumber \\
	& ~~ + \sum_{j\in[n]\setminus\{i\}}\EE \{ H_{ij}(\beta_0)r_{j}(\beta_0)r_{j}(\beta_0)^T H_{ij}(\beta_0)^T \} + o_P(1), \label{A-eq:ssc-proofuse-11}
\end{align}
where $\mathrm{Ind}_i$ is an identity matrix of the same dimension as $H_{ii}$.
To derive a tractable small sample correction formula, we assume that the contribution to the bias of the sum $\sum_{j\in[n]\setminus\{i\}}$ in \cref{A-eq:ssc-proofuse-11} is negligible. (A similar assumption was made in \citet{mancl2001}.) Then \cref{A-eq:ssc-proofuse-11} implies
\begin{align}
	\EE \{ r_i (\hat{\beta})r_i (\hat{\beta})^T \} & \approx \EE [ \{ \mathrm{Ind}_i -H_{ii}(\beta_0) \} r_i (\beta_0)r_i (\beta_0)^T \{ \mathrm{Ind}_i -H_{ii}(\beta_0) \}^T ] \nonumber \\
	& = \{ \mathrm{Ind}_i -H_{ii}(\beta_0) \} \EE \{ r_i (\beta_0)r_i (\beta_0)^T \} \{ \mathrm{Ind}_i -H_{ii}(\beta_0) \}^T. \label{A-eq:ssc-proofuse-12}
\end{align}
Here, \cref{A-eq:ssc-proofuse-12} follows because $\EE$ denotes the conditional expectation given $\{I_t, A_t: 1 \leq t \leq T\}$.

\cref{A-eq:ssc-proofuse-12} implies that 
\begin{align*}
	\EE \{r_i (\beta_0)r_i (\beta_0)^T \}\approx\{ \mathrm{Ind}_i -H_{ii}(\beta_0)\} ^{-1}\EE [r_i (\hat{\beta})r_i (\hat{\beta})^T ]\{ \mathrm{Ind}_i -H_{ii}(\beta_0)\} ^{-1,T}.
\end{align*}
This suggests that the meat term in the sandwich asymptotic variance $\PP[m(\beta_0,\eta)m(\beta_0,\eta)^T ]$ should be estimated using
\begin{align*}
	\PP_n [D_i^T \{ \mathrm{Ind}_i -H_{ii}(\beta_0)\} ^{-1}r_i (\hat{\beta})r_i (\hat{\beta})^T \{ \mathrm{Ind}_i -H_{ii}(\beta_0)\} ^{-1,T}D_i ]. \label{A-eq:ssc-proofuse-13}
\end{align*}

Therefore, the bias corrected estimator for the asymptotic variance of $\hat\beta$ in \cref{A-eq:ssc-proofuse-2} is
\begin{align*}
	\bigg[\PP_n \partial_{\beta}m(\hat{\beta},\eta)\bigg]^{-1} \ \PP_n \bigg[D_i^T \{ \mathrm{Ind}_i -H_{ii}(\beta_0)\} ^{-1}r_i (\hat{\beta})r_i (\hat{\beta})^T \{ \mathrm{Ind}_i -H_{ii}(\beta_0)\} ^{-1,T}D_i \bigg] \ \bigg[\PP_n \partial_{\beta}m(\hat{\beta},\eta)\bigg]^{-1,T}.
\end{align*}

\end{appendices}


\end{document}